\keywords{Markov decision process, probabilistic causality, probability-raising, binary classifier}
\newcommand{\MEC}[1]{\mathit{MEC}(#1)}
\newcommand{\effuncov}{\mathsf{eff}_{\mathsf{unc}}}
\newcommand{\effunc}{\effuncov}
\newcommand{\effcov}{\mathsf{eff}_{\mathsf{cov}}}
\newcommand{\freq}[2]{\mathit{freq}_{#1}(#2)}
\newcommand{\Nat}{\mathbb{N}}
\newcommand{\Real}{\mathbb{R}}
\newcommand{\Rat}{\mathbb{Q}}
\newcommand{\Rational}{\Rat}
\def\Pr{\mathrm{Pr}}
\newcommand{\eqdef}{\stackrel{\text{\rm \tiny def}}{=}}
\newcommand{\init}{\mathsf{init}}
\newcommand{\StAct}{\mathit{StAct}}
\newcommand{\SA}{\StAct}
\newcommand{\Act}{\mathit{Act}}
\newcommand{\Acc}{\mathit{Acc}}
\newcommand{\residual}[2]{\mathit{res}(#1,#2)}
\newcommand{\yes}{\mathit{yes}}
\newcommand{\no}{\mathit{no}}
\newcommand{\choice}{\mathit{choice}}
\newcommand{\sched}{\mathfrak{S}}
\newcommand{\tsched}{\mathfrak{T}}
\newcommand{\usched}{\mathfrak{U}}
\newcommand{\vsched}{\mathfrak{V}}
\newcommand{\Ssched}{\Sigma}
\newcommand{\Usched}{\Upsilon}
\newcommand{\last}{\mathit{last}}
\newcommand{\Cause}{\mathsf{Cause}}
\newcommand{\rCause}{\mathsf{rCause}}
\newcommand{\Effect}{\mathsf{Eff}}
\newcommand{\Eff}{\Effect}
\newcommand{\rEffect}{\mathsf{rEff}}
\newcommand{\rEff}{\rEffect}
\newcommand{\effect}{\mathsf{eff}}
\newcommand{\eff}{\effect}
\newcommand{\noeff}{\mathsf{noeff}}
\newcommand{\CanCause}{\mathsf{CanCause}}
\newcommand{\CanCau}{\CanCause}
\newcommand{\noeffc}{\noeff_{\mathsf{fp}}}
\newcommand{\noefffp}{\noeffc}
\newcommand{\noeffbot}{\noeff_{\mathsf{tn}}}
\newcommand{\noefftn}{\noeffbot}
\newcommand{\precision}{\operatorname{\mathit{precision}}}
\newcommand{\recall}{\operatorname{\mathit{recall}}}
\newcommand{\fscore}{\mathit{fscore}}
\newcommand{\relcov}{\recall}
\newcommand{\ratiocov}{\mathit{covrat}}
\newcommand{\covratio}{\ratiocov}
\newcommand{\covrat}{\ratiocov}
\newcommand{\ratio}[3]{\mathit{ratio}^{#1}_{#2}(#3)}
\newcommand{\wminMDP}[2]{#1_{[#2]}}
\newcommand{\wminMDPmax}[2]{#1^{\max}_{[#2]}}
\newcommand{\Until}{ \, \mathrm{U}\, }
\newcommand{\until}{\Until}
\newcommand{\PTIME}{\mathrm{P}}
\newcommand{\NP}{\mathrm{NP}}
\newcommand{\PFNP}{\mathrm{PF}^{\NP}}
\newcommand{\coNP}{\mathrm{coNP}}
\newcommand{\PSPACE}{\mathrm{PSPACE}}
\def\cA{\mathcal{A}}
\def\cC{\mathcal{C}}
\def\cD{\mathcal{D}}
\def\cE{\mathcal{E}}
\def\cG{\mathcal{G}}
\def\cK{\mathcal{K}}
\def\cM{\mathcal{M}}
\def\cN{\mathcal{N}}
\newcommand{\Ende}{\hfill $\lhd$}
\begin{document}

\title[Probability-raising causality in MDPs]{Foundations of probability-raising causality\texorpdfstring{\\}{} in Markov decision processes
  }
\titlecomment{This is the extended version of the conference version at FoSSaCS 2022 \cite{fossacs2022}}
\thanks{This work was funded by DFG grant 389792660 as part of TRR~248 -- CPEC (see \url{https://perspicuous-computing.science}, the Cluster of Excellence EXC 2050/1 (CeTI, project ID 390696704, as part of Germany’s Excellence Strategy), DFG-projects BA-1679/11-1 and BA-1679/12-1.}
\author[C.~Baier]{Christel Baier\lmcsorcid{0000-0002-5321-9343}}
\author[J.~Piribauer]{Jakob Piribauer\lmcsorcid{0000-0003-4829-0476}}
\author[R.~Ziemek]{Robin Ziemek\lmcsorcid{0000-0002-8490-1433}}
\address{Technische Universit\"at Dresden}
\email{christel.baier@tu-dresden.de,
	jakob.piribauer@tu-dresden.de, 
	robin.ziemek@tu-dresden.de}

\begin{abstract}
	This work introduces a novel cause-effect relation in Markov decision processes using the probability-raising principle.
	Initially,  sets of states as causes and effects are considered, which is subsequently extended to regular path properties as effects and then as causes.
	The paper lays the mathematical foundations and analyzes the algorithmic properties of these cause-effect relations.
	This includes algorithms for checking cause conditions given an effect and deciding the existence of probability-raising causes.
	As the definition allows for sub-optimal coverage properties, quality measures for causes inspired by concepts of statistical analysis are studied.
	These include recall, coverage ratio and f-score.
	The computational complexity for finding optimal causes with respect to these measures is analyzed.
\end{abstract}

\maketitle

\section{Introduction}

In  recent years, scientific and technological advancement in computer science and engineering led to an ever increasing influence of computer systems on our everyday lives.
A lot of decisions which were historically done by humans are now in the hands of intelligent systems.
At the same time, these systems grow more and more complex, and thus, harder to understand.
This poses a huge challenge in the development of reliable and trustworthy systems.
Therefore, an important task of computer science today is the development of comprehensive and versatile ways to understand modern software and cyber physical systems.

The area of formal verification aims to prove the correctness of a system with respect to a specification.
While the formal verification process can provide guarantees on the behavior of a system, such a result alone does not tell much about the inner workings of the system.
To give some additional insight, counterexamples, invariants or related certificates as a form of verifiable justification that a system does or does not behave according to a specification have been studied extensively (see e.g., \cite{MaPn95,CGP99,Namjoshi01}).
These kinds of certificates, however, do not allow us to understand the system behavior to a full extend.
In epistemic terms, the outcome of model checking applied to a system and a specification provides \emph{knowledge that} a system satisfies a specification (or not) in terms of	an assertion (whether the system satisfies the specification) and a justification (certificate or counterexample) to increase the belief in the result.
However, model checking usually does not provide \emph{understanding} on why a system behaves in a certain way. 
This establishes a desideratum for a more comprehensive understanding of \emph{why} a system satisfies or violates a specification.
Explications of the system are needed to assess \emph{how} different components influence its behavior and performance.
Causal relations between occurring events during the execution of a system can constitute a strong tool to form such an understanding.
Moreover, causality is fundamental for determining moral responsibility \cite{ChocklerH04, BrahamvanHees2012} or legal accountability \cite{FeigenbaumHJJWW11}, and ultimately fosters user acceptance through an increased level of transparency \cite{Miller17}.

The majority of prior work in this direction relies on causality notions based on Lewis' counterfactual principle \cite{Lewis1973} which states that the effect would not have occurred if the cause would not have happened.
A prominent formalization of the counterfactual principle is given by Halpern and Pearl \cite{HalpernP2001} via structural equation models.
This inspired formal definitions of causality and related notions of blameworthiness and responsibility in Kripke and game structures as well as reactive systems
(see, e.g., \cite{ChocklerHK2008,BeerBCOT12,Chockler16,YazdanpanahDas16,FriedenbergHalpern19,YazdanpanahDasJamAleLog19,IJCAI21,NorineATVA22}).

A lot of systems are to a certain extend influenced by probabilistic events.
Thus, a branch of formal methods is studying probabilistic models such as Markov chains (MCs) which are purely probabilistic or Markov decision processes (MDPs) which combine non-determinism and probabilistic choice.
This gives rise to another approach to the concept of causality in a probabilistic setting,
since the statement of counterfactual reasoning can be interpreted more gently in a probabilistic setting:
Instead of saying ``the effect would not have occurred, if the cause had not happened'', we can say
``the probability of the effect would have been lower if the cause would not have occurred''.
This interpretation leads to the widely accepted \emph{probability-raising principle} which also has its roots in philosophy \cite{Reichenbach56,Suppes70,Eells91,Hitchcock-Handbook} 
and has been refined by Pearl \cite{Pearl09} for causal and probabilistic reasoning in intelligent systems.
The different notions of probability-raising cause-effect relations discussed in the literature share the following two main principles:
\begin{description}
	\item [(C1)] Causes raise the probabilities for their effects,
	informally expressed by the requirement
	``$\Pr( \, \text{effect} \, | \, \text{cause} \, ) >
	\Pr( \, \text{effect} \, )$''.
	\item [(C2)] Causes must happen before their effects.
\end{description}
Despite the huge amount of work on probabilistic causation in other disciplines, research on probability-raising causality in the context of formal methods is comparably rare and has concentrated on the purely probabilistic setting in Markov chains (see, e.g., \cite{KleinbergM2009,Kleinberg2012,ISSE22} and the discussion of related work below).
To the best of our knowledge, probabilistic causation for probabilistic operational models with nondeterminism has not been studied before.

In this work, we formalize principles \textbf{(C1)} and \textbf{(C2)} for Markov decision processes.
We start in a basic setting by focusing on reachability properties where both effect and cause are sets of states.
Later, we naturally extend this framework by considering the effect to be an $\omega$-regular path property while causes can either still be state-based or $\omega$-regular co-safety path properties.

As the probability-raising is inherent in the MDP, we require \textbf{(C1)} under every scheduler.
Thus, the cause-effect relation holds for every resolution of non-deterministic choices.
We consider two natural ways to interpret condition \textbf{(C1)}:
On one hand, the probability-raising property can be required locally for each element of the cause.
This results in a strict property which requires that after each execution showing the cause the probability of effect has been raised.
Such causes are \emph{strict probability-raising (SPR) causes} in our framework. 
This interpretation is especially suited when the task is to identify system states that have to be avoided for lowering the effect probability.
On the other hand, we can treat the cause globally as a unit in \textbf{(C1)} leading to the notion of \emph{global probability-raising (GPR) cause}.
This way, the causal relation can also be formulated between properties without considering individual elements or executions to be causal.
Considering the cause as a whole is also better suited when further constraints are imposed on the candidates for cause sets.
E.g. if the set of non-terminal states of the given MDP is partitioned into sets of states $S_i$ under control of different agents $i$, $1\leq i \leq k$ and the task is to identify which agent's decisions might cause the effect, only the subsets of $S_1,\ldots,S_k$ are candidates for causes.
Furthermore, GPR causes are more appropriate when causes are used for monitoring purposes under partial observability as then the cause candidates are sets of indistinguishable states.

Despite the distinction between strict and global probability-raising causality, different causes can still vary substantially in how well they predict the effect.
Within Markov decision processes this intuitively coincides with how well the executions exhibiting the cause cover executions showing the effect.
However, solely focusing on broader coverage might lead to more trivial causal relations.
In order to take this trade off into account, we take inspiration from measures for binary classifiers used in statistical analysis (see, e.g., \cite{Powers-fscore}) to introduce quality measures for causes.
These allow us to compare different causes and to look for optimal causes: 
The \emph{recall} captures the probability that the effect is indeed preceded by the cause.
The \emph{coverage-ratio} is the fraction of the probability that cause and effect are observed and the probability of the effect without a cause.
The \emph{f-score}, a widely used quality measure for binary classifiers, is the harmonic mean of recall and precision, where the precision is the probability that the cause is followed by the effect.
Finally, we address the computation of quality measures which can be represented as algebraic functions.

\subsection*{Contributions.}
In this work we build mathematical and algorithmic foundations for probabilistic causation in Markov decision processes based on the principles \textbf{(C1)} and \textbf{(C2)}.
In the setting where the effect is represented as a set of terminal states, we introduce strict and global probability-raising cause sets in MDPs (Section~\ref{sec:SPR-GPR}).
Algorithms are provided to check whether given cause and effect sets satisfy (one of) the probability-raising conditions (Section~\ref{sec:SPR_check} and \ref{sec:check-GPR}) and to check the existence of a cause set for a given effect set (Section~\ref{sec:SPR_check}).
In order to evaluate the coverage properties of a cause, we subsequently introduce the above-mentioned quality measures (Section~\ref{sec:acc-measures}).
We give algorithms for computing these values for given cause-effect sets (Section~\ref{sec:comp-acc-measures-fixed-cause}) and characterize the computational complexity of finding optimal cause sets with respect to the different measures (Section~\ref{sec:opt-PR-causes}).
We then extend the setting to $\omega$-regular effects (Section~\ref{sub:states_as_causes}), and evaluate how established properties transfer to this setting.
We observe that in this extension SPR causes can be viewed as a collection of GPR causes.
Finally we discuss the case where  causes are also  path properties, namely,  $\omega$-regular co-safety properties and investigate how this more general perspective affects cause-effect relations (Section~\ref{sub:co-safety_as_causes}).
Here, the class of potential cause candidates is greatly increased.
Table \ref{tabelle:summary-of-results} summarizes our complexity results.

\newcommand{\polytime}{poly-time}
\newcommand{\polyspace}{poly-space}

\begin{table}[t]
	\begin{center} 
		\caption{Complexity results for MDPs and Markov chains (MC) 
			with fixed effects. The $\omega$-regular effects are given as deterministic Rabin automata, $\omega$-regular co-safety properties as causes are given as deterministic finite automata accepting good prefixes.
		}
		\begin{adjustbox}{max width=\textwidth}
			\begin{tabular}{c||c|c|c|c|c|c}
				\multicolumn{7}{c}{}\\
				\multicolumn{7}{c}{\bf{sets of states as causes and effects}}\\
				\multicolumn{1}{c}{\phantom{ GPR }} &
				\multicolumn{1}{c}{\phantom{ in $\PSPACE$ }} &
				\multicolumn{1}{c}{\phantom{covratio}} &
				\multicolumn{1}{c}{\phantom{ covratio }} &
				\multicolumn{1}{c}{\phantom{covratio}} &
				\multicolumn{1}{c}{\phantom{ recall = covratio }} &
				\multicolumn{1}{c}{\phantom{covratio}}
				\\
				%%%%%%%%%%%%%%%%%%%%%%%%%%%%%%%%
				&
				\multicolumn{4}{c|}{for fixed set $\Cause$}
				&
				\multicolumn{2}{c}{find optimal cause}
				\\
				%%%%%%%%%%%%%%%%%%%%%%%%%%%%%%%%
				
				&
				
				&
				\multicolumn{3}{c|}{compute quality values}
				\\[-2ex]
				%%%%%%%%%%%%%%%%%%%%%%%%%%%%%%%%
				&
				\multicolumn{1}{c}{%
					\begin{tabular}{c} \\[-2.5ex]
						check PR \\[-0.3ex] condition \\[0.5ex] \end{tabular}}
				&
				\multicolumn{3}{c}{}
				&
				\multicolumn{1}{c|}{%
					\begin{tabular}{r} \\[-2.5ex]
						covratio-optimal \\[-0.3ex] = recall-optimal \\[0.5ex] \end{tabular}}
				&
				\multicolumn{1}{c}{}
				\\[-3.8ex]
				%%%%%%%%%%%%%%%%%%%%%%%%%%%%%%%%
				&
				&
				\multicolumn{3}{c|}{(recall, covratio, f-score)} 
				&
				\phantom{recall = covratio} & f-score-optimal 
				\\[1ex]
				%%%%%%%%%%%%%%%%%%%%%%%%%%%%%%%%
				\hline
				\hline
				SPR &
				$\in \PTIME$ &
				\multicolumn{3}{c|}{\polytime} &
				\polytime
				&
				\begin{tabular}{c}
					\\[-2ex]
					\polyspace  \\
					\polytime \ for MC \\
					threshold problem $\in \NP\cap \coNP$ 
					
					\\[0.5ex]
				\end{tabular}
				\\
				%%%%%%%%%%%%%%%%%%%%%%%%%%%%%%%%
				\hline
				GPR &
				\begin{tabular}{c}
					\\[-2ex]
					$\in \coNP$  \\
					and $\in \PTIME$ for MC
					\\[0.5ex]
				\end{tabular}
				&
				\multicolumn{3}{c|}{\polytime} &
				\multicolumn{2}{c}{%
					\begin{tabular}{c}
						\\[-2ex]
						\polyspace \\
						threshold 
						problems $\in \Sigma^P_2$ and $\NP$-hard 
						\\
						and 
						$\NP$-complete for MC
						\\[0ex]
				\end{tabular}} \\
				%%%%%%%%%%%%%%%%
				%%%%%%%%%%%%%%%%
				%%%%%%%%%%%%%%%%
				%%%%%%%%%%%%%%%
				%%%%%%%%%%%%%%%
				%%%%%%%%%%%%%%%
				\multicolumn{7}{c}{}\\
				\multicolumn{7}{c}{\bf{sets of states as causes and $\omega$-regular effects}}\\
				\multicolumn{1}{c}{\phantom{ GPR }} &
				\multicolumn{1}{c}{\phantom{ in $\PSPACE$ }} &
				\multicolumn{1}{c}{\phantom{covratio}} &
				\multicolumn{1}{c}{\phantom{ covratio }} &
				\multicolumn{1}{c}{\phantom{covratio}} &
				\multicolumn{1}{c}{\phantom{ recall = covratio }} &
				\multicolumn{1}{c}{\phantom{covratio}}
				\\
				%%%%%%%%%%%%%%%%%%%%%%%%%%%%%%%%
				&
				\multicolumn{4}{c|}{for fixed set $\Cause$}
				&
				\multicolumn{2}{c}{find optimal cause}
				\\
				%%%%%%%%%%%%%%%%%%%%%%%%%%%%%%%%
				
				&
				
				&
				\multicolumn{3}{c|}{compute quality values}
				\\[-2ex]
				%%%%%%%%%%%%%%%%%%%%%%%%%%%%%%%%
				&
				\multicolumn{1}{c}{%
					\begin{tabular}{c} \\[-2.5ex]
						check PR \\[-0.3ex] condition \\[0.5ex] \end{tabular}}
				&
				\multicolumn{3}{c}{}
				&
				\multicolumn{1}{c|}{%
					\begin{tabular}{r} \\[-2.5ex]
						covratio-optimal \\[-0.3ex] = recall-optimal \\[0.5ex] \end{tabular}}
				&
				\multicolumn{1}{c}{}
				\\[-3.8ex]
				%%%%%%%%%%%%%%%%%%%%%%%%%%%%%%%%
				&
				&
				\multicolumn{3}{c|}{(recall, covratio, f-score)} 
				&
				\phantom{recall = covratio} & f-score-optimal 
				\\[1ex]
				%%%%%%%%%%%%%%%%%%%%%%%%%%%%%%%%
				\hline
				\hline
				SPR &
				\begin{tabular}{c}
					$\in \coNP$  \\
					and $\in \PTIME$ for MC 
				\end{tabular}
				&
				\multicolumn{3}{c|}{\polytime} &%
				\begin{tabular}{c}
					\\[-2ex]
					$\PFNP$ (as def. in \cite{Selman1994})\\
					threshold 
					problems $\in \coNP$ 
				\end{tabular} 
				&%
				\begin{tabular}{c}
					\\[-2ex]
					\polyspace \\
					threshold 
					problem $\in \Sigma^P_2$ 
				\end{tabular}   \\
				\hline
				GPR &
				\begin{tabular}{c}
					\\[-2ex]
					$\in \coNP$  \\
					and $\in \PTIME$ for MC 
					\\[0.5ex]
				\end{tabular}
				&
				\multicolumn{3}{c|}{\polytime} &
				\multicolumn{2}{c}{%
					\begin{tabular}{c}
						\\[-2ex]
						\polyspace \\
						threshold 
						problems $\in \Sigma^P_2$ \\
						$\NP$-hard  
				\end{tabular}}   \\
				%%%%%%%%%%%%%%%%
				%%%%%%%%%%%%%%%%
				%%%%%%%%%%%%%%%%
				%%%%%%%%%%%%%%%
				%%%%%%%%%%%%%%%
				%%%%%%%%%%%%%%%
				\multicolumn{7}{c}{}\\
				\multicolumn{7}{c}{\bf{$\omega$-regular co-safety properties as causes and $\omega$-regular effects}}\\
				\multicolumn{1}{c}{\phantom{ GPR }} &
				\multicolumn{1}{c}{\phantom{ in $\PSPACE$ }} &
				\multicolumn{1}{c}{\phantom{covratio}} &
				\multicolumn{1}{c}{\phantom{ covratio }} &
				\multicolumn{1}{c}{\phantom{covratio}} &
				\multicolumn{1}{c}{\phantom{ recall = covratio }} &
				\multicolumn{1}{c}{\phantom{covratio}}
				\\
				%%%%%%%%%%%%%%%%%%%%%%%%%%%%%%%%
				&
				\multicolumn{4}{c|}{for fixed cause property $\rCause$}
				&
				\multicolumn{2}{c}{find optimal cause}
				\\
				%%%%%%%%%%%%%%%%%%%%%%%%%%%%%%%%
				
				&
				
				&
				\multicolumn{3}{c|}{compute quality values}
				\\[-2ex]
				%%%%%%%%%%%%%%%%%%%%%%%%%%%%%%%%
				&
				\multicolumn{1}{c}{%
					\begin{tabular}{c} \\[-2.5ex]
						check PR \\[-0.3ex] condition \\[0.5ex] \end{tabular}}
				&
				\multicolumn{3}{c}{}
				&
				\multicolumn{1}{c|}{%
					\begin{tabular}{r} \\[-2.5ex]
						covratio-optimal \\[-0.3ex] = recall-optimal \\[0.5ex] \end{tabular}}
				&
				\multicolumn{1}{c}{}
				\\[-3.8ex]
				%%%%%%%%%%%%%%%%%%%%%%%%%%%%%%%%
				&
				&
				\multicolumn{3}{c|}{(recall, covratio, f-score)} 
				&
				\phantom{recall = covratio} & f-score-optimal 
				\\[1ex]
				%%%%%%%%%%%%%%%%%%%%%%%%%%%%%%%%
				\hline
				\hline
				SPR &
				\begin{tabular}{c}
					difficulty illustrated \\
					in Example \ref{ex:no-co-safety-SPR-check}
				\end{tabular} &
				\multicolumn{3}{c|}{\polytime} &
				\begin{tabular}{c}
					optimal cause known,\\
					computation unclear 
				\end{tabular}
				&
				open
				\\
				%%%%%%%%%%%%%%%%%%%%%%%%%%%%%%%%
				\hline
				GPR &
				\begin{tabular}{c}
					\\[-2ex]
					$\in \coNP$  \\
					and $\in \PTIME$ for MC
					\\[0.5ex]
				\end{tabular}
				&
				\multicolumn{3}{c|}{\polytime} &
				\begin{tabular}{c}
					in general, \\
					no
					optimal causes
				\end{tabular}
				& open  
			\end{tabular}
		\end{adjustbox}
	\end{center}
	\label{tabelle:summary-of-results}
\end{table}

%%%%%%%%%%%%%%%%%%%%%%%%%%%%%
%%%%%%%%%%%%%%%%%%%%%%%%%%%%%

\subsection*{Related work}

\label{sec:related-work}

Previous work in the direction of probabilistic causation in stochastic operational models has mainly concentrated on Markov chains.
Kleinberg \cite{KleinbergM2009,Kleinberg2012} introduced \emph{prima facie causes} in finite Markov chains where both causes and effects are formalized as PCTL state formulae, and thus they can be seen as sets of states as in our approach.
The correspondence of Kleinberg's PCTL constraints for prima facie causes and the strict probability-raising condition formalized using conditional probabilities has been worked out in the survey article \cite{ICALP21}.
Our notion of strict probability-raising causes interpreted in Markov chains corresponds to Kleinberg's prima facie causes with the exception of a minimality condition forbidding redundant elements in our definition.
\'{A}brah\'{a}m et al \cite{Abraham-QEST2018} introduces a hyperlogic for Markov chains and gives a formalization of probabilistic causation in Markov chains as a hyperproperty, which is consistent with Kleinberg's prima facie causes, and with strict probability-raising causes up to minimality.
Cause-effect relations in Markov chains where effects are $\omega$-regular properties and the causes are sets of paths have been introduced in \cite{ISSE22}.
These relations rely on a strict probability-raising condition, but use a probability threshold $p$ instead of directly requiring probability-raising.
Therefore, \cite{ISSE22} permits a non-strict inequality in the PR condition with the consequence that causes always exist, which is not the case for our notions.
However, a minimal good prefix of a co-safety strict probability-raising cause in a Markov chains corresponds to a probability-raising path in \cite{ISSE22}.

The survey article \cite{ICALP21} introduces notions of global probability-raising causes for Markov chains, where causes and effects can be path properties.
\cite{ICALP21}'s notion of \emph{reachability causes} in Markov chains directly corresponds to our notion of PR causes, the only difference being that \cite{ICALP21} deals with a relaxed minimality condition and requires that the cause set is reachable without visiting an effect state before. 
The latter is inherent in our approach as we suppose that all states are reachable and the effect states are terminal.
On the other hand if we restrict \cite{ICALP21}'s notion of global PR-cause to $\omega$-regular effects and co-safety causes, this corresponds to our notion of a co-safety GPR cause with the exception of the minimality condition.
The same can be said about the correspondence of local PR-causes from \cite{ICALP21} and co-safety SPR causes.

To the best of our knowledge, probabilistic causation in MDPs has not been studied before.
The only work in this direction we are aware of is the recent paper by Dimitrova et al \cite{DimFinkbeinerTorfah-ATVA2020} on a hyperlogic, called PHL, for MDPs.
While the paper focuses on the foundation of PHL, it contains an example illustrating how action causality can be formalized as a PHL formula.
Roughly, the presented formula expresses that taking a specific action $\alpha$ increases the probability for reaching effect states.
Thus, it also relies on the probability-raising principle, but compares the ``effect probabilities'' under different schedulers (which either schedule $\alpha$ or not) rather than comparing probabilities under the same scheduler as in our PR condition.
However, to some extent our notions of PR causes can reason about action causality as well.

There has also been work on causality-based explanations of counterexamples in probabilistic models \cite{LeitnerFischer-Leue-SAFECOMP11,LeitnerFischer-PhD15}.
The underlying causality notion of this work, however, relies on the non-probabilistic counterfactual principle rather than the probability-raising condition.
The same applies to the notions of forward and backward responsibility in stochastic games in extensive form introduced in the recent work \cite{IJCAI21}. 

%%%%%%%%%%%%%%%%%%%%%%%%%%%%%%%%%%%%%%%%%%%%%%%%%%

\section{Preliminaries}

\label{sec:prelim}
Throughout the paper, we will assume familiarity with basic concepts of Markov decision processes. 
Here, we present a brief summary of the notations used in the paper.
For more details, we refer to \cite{Puterman,BaierK2008,Kallenberg20}.

\subsection{Markov decision processes}
A \emph{Markov decision process (MDP)} constitutes a tuple $\cM =(S, \Act,P,\init)$ where
$S$ is a finite set of states,
$\Act$  a finite set of actions,
$\init \in S$  a unique initial state and
$P : S \times \Act \times S \to [0,1]$ the transition probability function such that $\sum_{t\in S}P(s, \alpha,t) \in \{0,1 \}$ for all states $s \in S$ and actions $\alpha \in \Act$.

For $\alpha\in \Act$ and $T\subseteq S$, $P(s,\alpha,T)$ is a shortform notation for $\sum_{t\in t} P(s,\alpha,t)$.
An action $\alpha$ is \emph{enabled} in state $s \in S$ if $\sum_{t\in S}P(s, \alpha,t)=1$. 
Then, $\Act(s) = \{\alpha \in \Act \mid \alpha \text{ is enabled in } s\}$.
A state $t$ is \emph{terminal} if $\Act(t) = \emptyset$.
A Markov chain (MC) is a special case of MDP where $\Act$ is a singleton (we then write $P(s,t)$ rather than $P(s,\alpha,t)$).
A \emph{path} in an MDP $\cM$ is a (finite or infinite) alternating sequence $\pi=s_0 \, \alpha_0 \, s_1 \, \alpha_1 \, s_2 \dots \in (S \times \Act)^*\times S \cup (S \times \Act)^\omega$ such that  $P(s_i, \alpha_{i},s_{i+1})>0$ for all indices $i$.
A path is called maximal if it is infinite or finite and ends in a terminal state.
If $\pi$ is a finite path in $\cM$ then $\last(\pi)$ denotes the last state of $\pi$.
That is, if $\pi = s_0\, \alpha_0 \ldots \alpha_{n-1}\, s_n$ then $\last(\pi)=s_n$.

A \emph{(randomized) scheduler} $\sched$ is a function that maps each finite non-maximal path $s_0 \alpha_0  \dots s_n$ to a distribution over $\Act(s_n)$. 
$\sched$ is called deterministic if $\sched(\pi)$ is a Dirac distribution for all finite non-maximal paths $\pi$.
If the chosen action only depends on the last state of the path, $\sched$ is called \emph{memoryless}. 
We write MR for the class of memoryless (randomized)  and MD for the class of memoryless deterministic schedulers. 
\emph{Finite-memory} schedulers are those that are representable by a finite-state automaton.
A path $\pi$ is said to be a $\sched$-path if $\sched(s_0\, \alpha_0 \ldots \alpha_{i-1} \, s_i)(\alpha_i) >0$ for each $i\in \{0,\ldots,n{-}1\}$. 
Given a path $\pi=s_0\, \alpha_0 \, \dots \, \alpha_{n-1} \, s_n$, the \emph{residual scheduler} $\residual{\sched}{\pi}$ of $\sched$ after $\pi$ is defined by $\residual{\sched}{\pi}(\zeta) = \sched (\pi \circ \zeta)$ for all finite paths $\zeta$ starting in $s_n$.
Here, $\pi \circ \zeta$ denotes the concatenation of the paths $\pi$ and $\zeta$.
Intuitively speaking, $\residual{\sched}{\pi}$ behaves like $\sched$ after $\pi$ has already been seen.

A scheduler $\sched$ of $\cM$ induces a (possibly infinite) Markov chain.
We write $\Pr^{\sched}_{\cM,s}$ for the standard probability measure on measurable sets of maximal paths in the Markov chain induced by $\sched$ with initial state $s$.
If $\varphi$ is a measurable set of maximal paths, then $\Pr^{\max}_{\cM,s}(\varphi)$ and $\Pr^{\min}_{\cM,s}(\varphi)$ denote the supremum resp. infimum of the probabilities for $\varphi$ under all schedulers.
We use the abbreviation $\Pr^{\sched}_{\cM}=\Pr^{\sched}_{\cM,\init}$ and notations $\Pr^{\max}_{\cM}$ and $\Pr^{\min}_{\cM}$ for extremal probabilities.
Analogous notations will be used for expectations.
So, if $f$ is a random variable, then, e.g., $\mathrm{E}^{\sched}_{\cM}(f)$ denotes the expectation of $f$ under $\sched$ and $\mathrm{E}^{\max}_{\cM}(f)$ its supremum over all schedulers.

We use LTL-like temporal modalities such as $\Diamond$ (eventually) and $\Until$ (until) to denote path properties.
For $X,T \subseteq S$ the formula $X \Until T$ is satisfied by paths $\pi = s_0 s_1 \dots $ such that there exists $j \geq 0$ such that for all $i<j: s_i \in X$ and $s_j \in T$
and $\Diamond T = S \Until T$.
It is well-known that $\Pr^{\min}_{\cM}(X \Until T)$ and
$\Pr^{\max}_{\cM}(X \Until T)$ and corresponding optimal MD-schedulers are computable in polynomial time.

We also use conditional probabilities in MDPs cf. \cite{TACAS14-condprob,Maercker-PhD20}.
For two measurable sets of maximal paths $\varphi$ and $\psi$ we have 
\[\Pr^{\max}_{\cM,s}(\varphi \mid \psi) = \max_{\sched}\ \Pr^\sched_{\cM,s}(\varphi \mid \psi) = \max_{\sched}\ \frac{\Pr^\sched_{\cM,s}(\varphi \wedge \psi)}{\Pr^\sched_{\cM,s}(\psi)},\]
where $\sched$ ranges over all scheduler for which $\Pr^\sched(\psi) > 0$.
We define $\Pr^{\min}_{\cM,s}(\varphi \mid \psi)$ likewise.
If both $\varphi$ and $\psi$ are reachability properties then maximal conditional probabilities are computable in polynomial time \cite{TACAS14-condprob}.
The proposed algorithm for maximal conditional probabilities $\Pr^{\max}_{\cM,\init}(\lozenge G \mid \lozenge F)$ relies on a model transformation generating a new MDP $\cN$ that distinguishes the modes "before $G$ and $F$" (where $\cN$ essentially behaves as $\cM$ with additional reset transitions from end components to the initial state), "before  $G$,  after $F$" (where $\cN$ maximizes the probability to reach $G$), "before $F$, after $G$"  (where $\cN$ maximizes the probability to reach $F$).
Essentially the reset transitions serve to ``discard'' paths that never reach $G$ and $F$.
For further details we refer to \cite{TACAS14-condprob, MBKK17, Maercker-PhD20}.

If $s\in S$ and $\alpha\in \Act(s)$, then $(s,\alpha)$ is said to be a state-action pair of $\cM$.
Given a scheduler $\sched$ for $\cM$, the expected frequencies
(i.e., expected number of occurrences in maximal paths) of 
state action-pairs $(s, \alpha)$, states $s\in S$ and state-sets $T \subseteq S$
under $\sched$ are defined by:
\begin{align*}
	\freq{\sched}{s,\alpha} & \ \ \eqdef \ \
	\mathrm{E}^{\sched}_{\cM}(\text{number of visits to $s$ in which $\alpha$ is taken}),\\
	\freq{\sched}{s} &\ \ \eqdef
	\sum\nolimits_{\alpha \in \Act(s)} \freq{\sched}{s,\alpha},
	\qquad	\freq{\sched}{T} \eqdef \sum\nolimits_{s \in T} \freq{\sched}{s}.
\end{align*}

An \emph{end component} (EC) of an MDP $\cM$ is a strongly connected sub-MDP containing at least one state-action pair. ECs will be often identified with the set of their state-action pairs. An EC $\cE$ is called maximal (abbreviated MEC) if there is no proper superset $\cE'$ of (the set of state-action pairs of) $\cE$ which is an EC.

\subsection{MR-scheduler in MDPs without ECs}
The following preliminary lemma is folklore (see, e.g., \cite[Theorem 9.16]{Kallenberg20}) and used in the paper in the following form.

\begin{lem}[From general schedulers to MR-schedulers in MDPs without ECs]
	\label{lem:from-general-to-MR-schedulers}
	Consider an MDP  $\cM=(S,\Act,P,\init)$ without end components.
	Then, for each scheduler  $\sched$ for $\cM$,
	there exists an MR-scheduler $\tsched$ such that:
	\begin{center}
		$\Pr^{\sched}_{\cM}(\Diamond t) \ = \
		\Pr^{\tsched}_{\cM}(\Diamond t)$
		\ \ for each terminal state $t$.
	\end{center}
\end{lem}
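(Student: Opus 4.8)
The plan is to use the expected state-action frequencies as a bridge between the general scheduler $\sched$ and a memoryless randomized scheduler that imitates it. Concretely, I would define $\tsched$ so that in every non-terminal state $s$ with $\freq{\sched}{s}>0$ it chooses action $\alpha$ with probability $\freq{\sched}{s,\alpha}/\freq{\sched}{s}$ (choosing arbitrarily when $\freq{\sched}{s}=0$). Intuitively, $\tsched$ replays the long-run action ratios of $\sched$ in each state without remembering history. It then suffices to show that $\sched$ and $\tsched$ induce identical expected frequencies of all state-action pairs, because equality of the terminal-reachability probabilities follows at once: for a terminal state $t\neq\init$ the number of entries into $t$ equals $\Pr^{\sched}_{\cM}(\Diamond t)$ (entered at most once per path), so
\[
\Pr^{\sched}_{\cM}(\Diamond t) \ = \ \sum_{s\in S}\sum_{\alpha\in\Act(s)} \freq{\sched}{s,\alpha}\, P(s,\alpha,t),
\]
and the case $t=\init$ is trivial as both probabilities equal $1$.

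First I would record the two consequences of having no end components that the argument needs: under any scheduler a terminal state is reached almost surely, and the expected number of visits to every state is finite, so all frequencies $\freq{\sched}{s,\alpha}$ are well-defined finite numbers. Next I would set up the flow (balance) equations by equating, for each non-terminal state $s$, expected departures with expected arrivals:
\[
\freq{\sched}{s} \ = \ [\, s=\init\,] \ + \ \sum_{s'\in S}\sum_{\alpha\in\Act(s')} \freq{\sched}{s',\alpha}\, P(s',\alpha,s),
\]
where $[\, s=\init\,]$ is $1$ if $s=\init$ and $0$ otherwise. These equations hold for the frequencies of any scheduler, general or MR.

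The core of the argument is to verify that the frequencies of $\sched$ also solve the balance equations belonging to $\tsched$, and then to invoke uniqueness. Writing $z_s \eqdef \freq{\sched}{s}$ and using $\freq{\sched}{s',\alpha}=z_{s'}\cdot\tsched(s')(\alpha)$, which holds by the very definition of $\tsched$ (also when $z_{s'}=0$), the arrival term above is exactly the inflow that $\tsched$ produces from the visit counts $z$. Hence $z$, viewed as a row vector over the non-terminal states, solves $z = e_\init + zQ$, where $Q$ is the sub-stochastic transition matrix induced by $\tsched$ on the non-terminal states and $e_\init$ is the unit vector at $\init$. Because $\cM$ has no end components, from every non-terminal state a terminal state is reachable under $\tsched$, so $Q^n\to 0$ and $I-Q$ is invertible; thus the system has a unique solution, which must coincide with the genuine visit counts of $\tsched$. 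Therefore $\freq{\tsched}{s}=\freq{\sched}{s}$ for all $s$, and consequently $\freq{\tsched}{s,\alpha}=\freq{\tsched}{s}\cdot\tsched(s)(\alpha)=\freq{\sched}{s,\alpha}$ for every state-action pair. Substituting these equal frequencies into the displayed formula for $\Pr^{\sched}_{\cM}(\Diamond t)$ yields the claim.

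The main obstacle is the uniqueness/invertibility step, since this is precisely where the hypothesis of no end components is indispensable: an end component would let $Q$ have spectral radius $1$ (probability mass trapped forever), which breaks both the finiteness of the frequencies and the uniqueness of the balance equations. I would therefore argue carefully that the absence of ECs forces $Q^n\to 0$, equivalently geometric absorption into the terminal states, rather than taking this for granted.
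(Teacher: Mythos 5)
Your proof is correct. Note, however, that the paper does not actually prove this lemma: it is stated as folklore with a pointer to \cite[Theorem 9.16]{Kallenberg20}, so there is no in-paper argument to compare against. What you have written is a self-contained reconstruction of the standard argument behind that citation, and it uses exactly the machinery the paper deploys elsewhere: defining $\tsched$ by $\tsched(s)(\alpha)=\freq{\sched}{s,\alpha}/\freq{\sched}{s}$ is precisely how the scheduler is built in the proof of Lemma~\ref{lem:convex}, and your flow equations are the balance system (S1)--(S3) of Section~\ref{sec:check-GPR}, where the paper likewise invokes the fact that solutions of that system are exactly the frequency vectors of general schedulers and also exactly those of MR-schedulers. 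Your uniqueness step is the right way to make this rigorous: absence of ECs implies the chain induced by the MR-scheduler $\tsched$ has no bottom SCC of non-terminal states (such an SCC, together with the actions chosen with positive probability, would constitute an end component of $\cM$), hence $Q^n\to 0$, hence $I-Q$ is invertible and the visit counts of $\tsched$ are the unique solution of the system that $\freq{\sched}{\cdot}$ also solves. The one place your argument leans on facts you state but do not prove is the preliminary claim that in an EC-free MDP \emph{every} scheduler, including history-dependent randomized ones, reaches a terminal state almost surely and has finite expected number of visits to each state; this is needed for the frequencies $\freq{\sched}{s,\alpha}$ to be finite and for the balance equations to be meaningful, and it is itself part of the folklore being cited. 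A fully self-contained write-up would establish it, e.g., via the theorem of de Alfaro that under any scheduler the set of state-action pairs occurring infinitely often almost surely forms an end component, which in an EC-free MDP forces almost-sure absorption.
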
  

As a consequence we can build linear combinations of scheduler in such MDPs.

\begin{lem}[Convex combination of MR-schedulers]
	\label{lem:convex}
	Let $\cM$ be an MDP without end components and let
	$\sched$ and $\tsched$ be schedulers for $\cM$ and
	$\lambda$ a real number in the open interval $]0,1[$.
	Then, there exists an MR-scheduler $\usched$
	such that:
	\begin{center}
		$\Pr^{\usched}_{\cM}(\Diamond t) \ = \
		\lambda \cdot \Pr^{\sched}_{\cM}(\Diamond t) \ + \
		(1{-}\lambda) \cdot \Pr^{\tsched}_{\cM}(\Diamond t)$ 
	\end{center}
	for each terminal state $t$.
\end{lem}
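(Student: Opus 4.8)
The plan is to reduce the claim to the preceding lemma by first producing a single (history-dependent, randomized) scheduler that realizes the desired convex combination of the two reachability vectors, and then \emph{memorylessifying} it via Lemma~\ref{lem:from-general-to-MR-schedulers}.

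First I would build a \emph{mixture scheduler} $\wsched$ that intuitively tosses a $\lambda$-biased coin at the outset, commits to following $\sched$ with probability $\lambda$ and $\tsched$ with probability $1{-}\lambda$, and never switches. Since a scheduler cannot literally store the coin, I realize this by posterior weighting along the history: for a finite path $\pi=s_0\,\alpha_0\ldots\alpha_{n-1}\,s_n$ let $L^{\sched}(\pi)=\prod_{i<n}\sched(\prefix{\pi}{i})(\alpha_i)$ be the likelihood of the actions of $\pi$ under $\sched$, and likewise $L^{\tsched}(\pi)$, and set
\[
\wsched(\pi)(\alpha)\ =\ \frac{\lambda\, L^{\sched}(\pi)\,\sched(\pi)(\alpha)\ +\ (1{-}\lambda)\, L^{\tsched}(\pi)\,\tsched(\pi)(\alpha)}{\lambda\, L^{\sched}(\pi)\ +\ (1{-}\lambda)\, L^{\tsched}(\pi)}
\]
whenever the denominator is positive, and arbitrarily otherwise (such $\pi$ carry $\Pr^{\wsched}_{\cM}$-measure $0$). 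The key point, proved by a routine induction on the length of $\pi$, is that $L^{\wsched}(\pi)=\prod_{i<n}\wsched(\prefix{\pi}{i})(\alpha_i)=\lambda\,L^{\sched}(\pi)+(1{-}\lambda)\,L^{\tsched}(\pi)$: the denominator in the displayed definition at the prefix $\prefix{\pi}{n-1}$ equals $L^{\wsched}(\prefix{\pi}{n-1})$ by the induction hypothesis, so it cancels and the likelihoods telescope. Multiplying by the common transition weight $\prod_{i<n}P(s_i,\alpha_i,s_{i+1})$ gives $\Pr^{\wsched}_{\cM}(\mathrm{Cyl}(\pi))=\lambda\,\Pr^{\sched}_{\cM}(\mathrm{Cyl}(\pi))+(1{-}\lambda)\,\Pr^{\tsched}_{\cM}(\mathrm{Cyl}(\pi))$ for the cylinder of $\pi$. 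As cylinders generate the path $\sigma$-algebra, uniqueness of measure extension yields $\Pr^{\wsched}_{\cM}=\lambda\,\Pr^{\sched}_{\cM}+(1{-}\lambda)\,\Pr^{\tsched}_{\cM}$ as measures, hence $\Pr^{\wsched}_{\cM}(\Diamond t)=\lambda\,\Pr^{\sched}_{\cM}(\Diamond t)+(1{-}\lambda)\,\Pr^{\tsched}_{\cM}(\Diamond t)$ for every terminal state $t$.

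Second, since $\cM$ has no end components, I apply Lemma~\ref{lem:from-general-to-MR-schedulers} to $\wsched$ to obtain an MR-scheduler $\usched$ with $\Pr^{\usched}_{\cM}(\Diamond t)=\Pr^{\wsched}_{\cM}(\Diamond t)$ for all terminal $t$. Chaining the two identities establishes the assertion. The main obstacle is the first step: choosing the normalization in $\wsched$ so that the per-step likelihood update telescopes, and verifying the cylinder identity cleanly (including the degenerate prefixes of likelihood $0$, which are harmless since the identity forces $L^{\sched}=L^{\tsched}=0$ there); everything after that is a direct appeal to the previous lemma. As an alternative to the mixture construction one could argue via expected state--action frequencies: the vector $z\eqdef\lambda\,\freq{\sched}{\cdot}+(1{-}\lambda)\,\freq{\tsched}{\cdot}$ again solves the flow-balance equations because the convex weights sum to $1$, and normalizing $z$ state-wise defines an MR-scheduler whose induced frequencies coincide with $z$ by uniqueness of the balance solution in the absorbing chain on the (transient) non-terminal states; then $\Pr^{\usched}_{\cM}(\Diamond t)=\sum_{(s,\alpha)} z_{s,\alpha}\,P(s,\alpha,t)$ gives the combination directly.
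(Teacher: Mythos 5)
Your proof is correct, but your primary argument takes a genuinely different route from the paper's. The paper works entirely with expected state--action frequencies: it first uses Lemma~\ref{lem:from-general-to-MR-schedulers} to assume $\sched$ and $\tsched$ are themselves MR, forms $f_* = \lambda\,\freq{\sched}{*} + (1{-}\lambda)\,\freq{\tsched}{*}$, defines $\usched(s)(\alpha) = f_{s,\alpha}/f_s$ (arbitrary where $f_s=0$), and then invokes the correspondence between flow-balance solutions and MR-frequencies (again via Lemma~\ref{lem:from-general-to-MR-schedulers}) to conclude $f_* = \freq{\usched}{*}$, so that $\Pr^{\usched}_{\cM}(\Diamond t) = f_t$ for each terminal $t$. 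This frequency argument is exactly the ``alternative'' you sketch in your closing sentences. Your main construction instead builds a history-dependent mixture scheduler $\wsched$ by posterior weighting, establishes the telescoping identity $L^{\wsched}(\pi) = \lambda L^{\sched}(\pi) + (1{-}\lambda) L^{\tsched}(\pi)$ by induction (with the zero-likelihood prefixes correctly dismissed), concludes that $\Pr^{\wsched}_{\cM}$ equals $\lambda \Pr^{\sched}_{\cM} + (1{-}\lambda)\Pr^{\tsched}_{\cM}$ on cylinders and hence on the whole $\sigma$-algebra, and only at the end applies Lemma~\ref{lem:from-general-to-MR-schedulers} once. (For that step you could even avoid appealing to uniqueness of measure extension: for terminal $t$ the event $\Diamond t$ is a countable disjoint union of cylinders, so summing the cylinder identity suffices.) What your route buys is a stronger intermediate fact: the convex combination of the two induced path measures is realized exactly, as a measure, by a single scheduler in an \emph{arbitrary} MDP, with EC-freeness needed only for the final MR conversion. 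What the paper's route buys is brevity and uniformity: it avoids the measure-theoretic bookkeeping, stays inside the finite-dimensional frequency calculus that the paper reuses later (e.g., the constraint systems (S1)--(S3) in Section~\ref{sec:check-GPR}), and produces the MR-scheduler $\usched$ explicitly by normalizing the combined frequencies.
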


\begin{proof}
	Thanks to Lemma \ref{lem:from-general-to-MR-schedulers}
	we may suppose that $\sched$ and $\tsched$ are MR-schedulers.
	Let
	\[
	f_* \ \ = \ \
	\lambda \cdot \freq{\sched}{*} \ + \ (1{-}\lambda)\cdot \freq{\tsched}{*}
	\]
	where $*$ stands for a state or a state-action pair in $\cM$.
	Let $\usched$ be an MR-scheduler
	defined by
	$
	\usched(s)(\alpha)
	= \frac{f_{s,\alpha}}{f_s}
	$
	for each non-terminal state $s$ where $f_s>0$ and
	each action $\alpha \in \Act(s)$.
	If $f_s=0$ then $\usched$ selects an arbitrary distribution
	over $\Act(s)$.
	
	Using 
	Lemma \ref{lem:from-general-to-MR-schedulers}
	we then obtain
	$f_* = \freq{\usched}{*}$ where $*$ ranges over all states
	and state-action pairs in $\cM$.
	But this yields:
	\begin{eqnarray*}
		\Pr^{\usched}_{\cM}(\Diamond t) & = & f_t \ = \ 
		\lambda \cdot \freq{\sched}{t} +
		(1{-}\lambda) \cdot \freq{\tsched}{t}
		%		\\[1ex]
		\ = \
		\lambda \cdot \Pr^{\sched}_{\cM}(\Diamond t) +
		(1{-}\lambda) \cdot \Pr^{\tsched}_{\cM}(\Diamond t)
	\end{eqnarray*}
	for each terminal state $t$.
\end{proof}
Let $\cM,\sched,\tsched,\lambda$ be as in Lemma \ref{lem:convex}.
Then, the notation $\lambda \sched \oplus (1{-}\lambda)\tsched$	will be used to denote any MR-scheduler $\usched$ as in	Lemma \ref{lem:convex}.

\subsection{MEC-quotient}\label{app:MEC-quotient}

We recall the definition of the MEC-quotient, which is a standard concept for the analysis of MDPs \cite{deAlfaro1997}.
Intuitively, the MEC-quotient of an MDP collapses all maximal end components, ignoring the actions in the end component while keeping outgoing transitions.
More concretely, we use a modified version with an additional trap state as in \cite{LICS18-SSP} that serves to mimic behaviors inside an end component of the original MDP.

\begin{defi}[MEC-quotient of an MDP]
	\label{def:MEC-contraction} 
	Let $\cM = (S,\Act,P,\init)$ be an MDP with end components.  
	Let $\cE_1,\ldots,\cE_k$ be the maximal end components (MECs) of $\cM$.
	We may suppose without loss of generality that enabled actions of states 
	are pairwise disjoint, i.e., whenever $s_1,s_2$ are states
	in $\cM$ with $s_1\not= s_2$ then
	$\Act_{\cM}(s_1)\cap \Act_{\cM}(s_2)=\varnothing$.
	This permits to consider $\cE_i$ as a subset of $\Act$.
	Let $U_i$ denote the set of states that belong to $\cE_i$
	and let $U=U_1\cup \ldots \cup U_k$.
	The \emph{MEC-quotient of $\cM$} is the MDP $\cN = (S',\Act',P',\init')$
	and the function $\iota : S \to S'$ are defined as follows.
	\begin{itemize}
		\item
		The state space $S'$ is $S \setminus U \cup \{s_{\cE_1},\ldots,s_{\cE_k},\bot\}$
		where $s_{\cE_1},\ldots,s_{\cE_k},\bot$ are pairwise distinct fresh states.
		\item
		The function $\iota$ is given by $\iota(s)=s$ if $s\in S \setminus U$
		and $\iota(u)=s_{\cE_i}$ if $u\in U_i$.
		\item
		The initial state of $\cN$ is $\init' = \iota(\init)$.
		\item
		The action set $\Act'$ is $\Act \cup \{\tau\}$ where
		$\tau$ is a fresh action symbol.
		\item
		The set of actions enabled in state $s\in S'$ of $\cN$
		and the transition probabilities are as follows:
		\begin{itemize}
			\item
			If $s$ is a state of $\cM$ that does not belong to an MEC of $\cM$
			(i.e., $s\in S \cap S'$) then 
			$\Act_{\cN}(s)=\Act_{\cM}(s)$ and
			$P'(s,\alpha,s')=P(s,\alpha,\iota^{-1}(s'))$
			for all $s'\in S'$ and $\alpha \in \Act_{\cM}(s)$.
			\item
			If $s=s_{\cE_i}$ is a state representing MEC $\cE_i$ of $\cM$
			then (as we view $\cE_i$ as a set of actions):   
			\[
			\Act_{\cN}\bigl(s_{\cE_i}\bigr) \ = \
			\bigcup_{u\in U_i} (\Act_{\cM}(u) \setminus \cE_i)
			\cup \{\tau\}
			\]
			The $\tau$-action is the deterministic transition to
			the fresh state $\bot$, i.e.:
			$
			P'(s_{\cE_i},\tau,\bot)=1.
			$
			For $u\in U_i$ and $\alpha \in \Act_{\cM}(u) \setminus \cE_i$ we set $P'(s_{\cE_i},\alpha,s')=P(u,\alpha,\iota^{-1}(s'))$
			for all $s'\in S'$. 
			\item
			The state $\bot$ is terminal, i.e.,
			$\Act_{\cN}(\bot)=\varnothing$.\Ende
		\end{itemize}
	\end{itemize}
\end{defi}

Thus, each terminal state of $\cM$ is terminal in its MEC-quotient $\cN$ too.
Vice versa, every terminal state of $\cN$ is either a terminal state of
$\cM$ or $\bot$.
Moreover, $\cN$ has no end components, which implies that
under every scheduler
$\tsched$ for $\cN$, a terminal state will be reached with probability 1.
In Section~\ref{sec:check-GPR}, we use the notation $\noeffbot$ rather than $\bot$.

The original MDP and its MEC-quotient have been found to be connected by the following lemma (see also \cite{deAlfaro1997, Alfaro-CONCUR99}).
For the sake of completeness we present the proof for our version of the MEC-quotient.

\begin{lem}[Correspondence of an MDP and its MEC-quotient]
	\label{lem:MEC-contraction-terminal-states-final} 
	Let $\cM$ be an MDP and $\cN$ its MEC-quotient.
	Then, for each scheduler $\sched$ for $\cM$ there is a scheduler
	$\tsched$ for $\cN$ such that
	\begin{align}
		\label{contraction-property}
		\text{$\Pr^{\sched}_{\cM}(\Diamond t) \ = \
			\Pr^{\tsched}_{\cN}(\Diamond t)$ \
			for each terminal state $t$ of $\cM$}  
	\end{align}  
	and vice versa.
	Moreover, if \eqref{contraction-property} holds then
	$\Pr^{\tsched}_{\cN}(\Diamond \bot)$ equals the probability
	for $\sched$ to generate an infinite path in $\cM$ that
	eventually enters and stays forever in an end component.
\end{lem}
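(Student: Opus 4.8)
The plan is to exhibit, for each direction, an explicit construction relating runs of $\sched$ on $\cM$ with runs of a matching scheduler on $\cN$, together with a measurable correspondence between their maximal paths. The backbone is the \emph{collapse map} induced by $\iota$: a maximal $\cM$-path is sent to the $\cN$-path obtained by replacing every state in $U_i$ by $s_{\cE_i}$, deleting the moves taken inside a MEC $\cE_i$, keeping every move outside the MECs and every exit action unchanged, and mapping an $\cM$-path that from some point on never leaves $U_i$ to the $\cN$-path that takes the $\tau$-transition into $\bot$ at the corresponding $s_{\cE_i}$. Both claimed equalities are statements about the probabilities of reaching terminal states, so it suffices to show that the two induced path measures agree under this correspondence on the events $\Diamond t$ for terminal $t$ of $\cM$ and on $\Diamond \bot$. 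I would record once and for all that the ``moreover'' part is then a free consequence: since $\cN$ has no end components, every $\tsched$-run reaches a terminal state of $\cN$ almost surely, and the terminal states of $\cN$ are exactly the terminal states of $\cM$ together with $\bot$; hence $\Pr^{\tsched}_{\cN}(\Diamond \bot) = 1 - \sum_{t} \Pr^{\tsched}_{\cN}(\Diamond t) = 1 - \sum_t \Pr^{\sched}_{\cM}(\Diamond t)$, and by the standard fact that under any scheduler a maximal $\cM$-path almost surely either reaches a terminal state or has its infinitely-often state-action pairs forming an end component, this difference equals the probability that $\sched$ eventually enters and stays forever in an end component.

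For the forward direction I would let $\tsched$ \emph{simulate} $\sched$, carrying as (randomized) memory a simulated $\cM$-history. While this history ends in a non-MEC state, $\tsched$ reproduces the next $\sched$-step; because $\iota$ preserves transitions outside the MECs, the $\cN$-move agrees with it. When the simulated history first enters a MEC $\cE_i$ at a state $u$, the $\cN$-run sits at $s_{\cE_i}$, and here $\tsched$ must commit to a single $\cN$-action. I would define its distribution directly from the residual behaviour of $\sched$ inside $\cE_i$: it takes an exit action $\alpha$ (which, using the WLOG disjointness of enabled action sets from Definition~\ref{def:MEC-contraction}, uniquely identifies the state $u'$ it leaves from) with probability equal to the probability that $\sched$ first exits $\cE_i$ via $(u',\alpha)$, and it takes $\tau$ with the probability that $\sched$ never exits $\cE_i$. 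Upon choosing $\alpha$, $\tsched$ resamples the internal segment conditioned on this exit and appends it to the simulated history, so that future choices after the exit remain faithful to $\sched$. Since $P'(s_{\cE_i},\alpha,\cdot) = P(u',\alpha,\iota^{-1}(\cdot))$, the successor distribution in $\cN$ matches that of the exit in $\cM$, and the path measures correspond as required.

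The backward direction is the mirror image and leans on the standard reachability properties of end components rather than on tail events. Given $\tsched$ on $\cN$, I would let $\sched$ copy $\tsched$ outside the MECs. When the $\cM$-run enters $\cE_i$ at $u$ while the $\cN$-run is at $s_{\cE_i}$, $\sched$ first samples an action according to $\tsched$'s distribution there; if the sample is an exit action $\alpha$ out of $u'$, then $\sched$ exploits that $\cE_i$ is a strongly connected end component to steer almost surely from $u$ to $u'$ within $\cE_i$ and then takes $\alpha$, reproducing the exit; if the sample is $\tau$, then $\sched$ stays inside $\cE_i$ forever, which is possible precisely because $\cE_i$ is an end component. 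This again matches the path measures, giving \eqref{contraction-property}.

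I expect the main obstacle to be the forward construction at $s_{\cE_i}$: the event ``$\sched$ never exits $\cE_i$'' is a tail event that is not decided at any finite time, so one cannot detect it by watching the simulation, and for a history-dependent $\sched$ the behaviour after an exit genuinely depends on the internal segment that was traversed. The resolution is to prescribe $\tsched$'s action distribution at $s_{\cE_i}$ from the (well-defined) first-exit and never-exit probabilities of the residual scheduler and to let $\tsched$ carry a simulated history that, on an exit, is sampled from the internal segment conditioned on that exit; verifying that this yields a bona fide scheduler whose pushforward path measure is exactly $\Pr^{\tsched}_{\cN}$ is the technical heart of the argument.
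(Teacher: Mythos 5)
Your proposal is correct and takes essentially the same route as the paper's proof: in the forward direction you define $\tsched$'s choice at a collapsed MEC state from the first-exit/never-exit probabilities of the residual scheduler (your randomized simulated history is just the disintegrated form of the paper's weighted average of $\sched$'s residual behaviour over all $\cM$-histories $\pi \in \Pi_\rho$ collapsing to the current $\cN$-history $\rho$), and in the backward direction you reproduce exits by steering inside the strongly connected MEC and realize $\tau$ by staying inside forever, exactly as the paper does with its MD-scheduler $\usched$. The only substantive step the paper makes explicit that you defer to the "technical heart" is the induction on $|\rho|$ showing $p^{\tsched}_{\rho} = \sum_{\pi\in\Pi_{\rho}} p^{\sched}_{\pi}$, from which both \eqref{contraction-property} and the "moreover" clause follow.
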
  

\begin{proof}
	Given a scheduler $\tsched$ for $\cN$, we pick an
	MD-scheduler $\usched$ such that $\usched(u) \in \cE_i$ for each $u\in U_i$.
	Then, the corresponding scheduler $\sched$ for $\cM$ behaves as
	$\tsched$ as long as $\tsched$ does not choose the
	$\tau$-transition to $\bot$. As soon as $\tsched$ schedules $\tau$
	then $\sched$ behaves as $\usched$ from this moment on.
	
	Vice versa, given a scheduler $\sched$ for $\cM$ then
	a corresponding scheduler $\tsched$ for $\cN$ mimics
	$\sched$ as long as $\sched$ has not visited a state
	belong to an end component $\cE_i$ of $\cM$.
	Scheduler $\tsched$ ignores $\sched$'s transitions inside
	an MEC $\cE_i$ and takes
	$\beta \in \bigcup_{u\in U_i} (\Act_{\cM}(u) \setminus \cE_i)$
	with the same probability
	as $\sched$ leaves $\cE_i$.
	With the remaining probability mass, $\sched$ stays forever inside $\cE_i$,
	which is mimicked by $\tsched$ by taking the $\tau$-transition to $\bot$.
	
	For the formal definition of $\tsched$, we use the following notation.
	For simplicity, let us assume that $\init \notin U_1 \cup \ldots \cup U_k$.
	This yields $\init= \init'$.
	Consider a finite path $\pi= s_0\, \alpha_0 \, s_1 \, \alpha_1 \ldots \alpha_{m-1} \, s_m$
	in $\cM$ with $s_0=\init$, then $\pi_{\cN}$ is the path in $\cN$ resulting from
	by replacing each maximal path fragment
	$s_h \alpha_h \ldots  \alpha_{j-1} s_j$
	consisting of actions inside an $\cE_i$ with state $s_{\cE_i}$.
	(Here, maximality means if $h > 0$ then $\alpha_{h-1}\notin \cE_i$
	and if $j<m$ then $\alpha_{j+1}\notin \cE_i$.)
	Furthermore, let $p_{\pi}^{\sched}$ denote the probability
	for $\sched$ to generate the path $\pi$ when starting in the
	first state of $\pi$.

	Let $\rho$ be a finite path in $\cN$ with first state $\init$ (recall that we suppose that $\cM$'s initial state
	does not belong to an MEC, which yields $\init=\init'$)
	and $\last(\rho)\not= \bot$.
	Then, $\Pi_{\rho}$ denotes the set of finite paths
	$\pi= s_0\, \alpha_0 \, s_1 \, \alpha_1 \ldots \alpha_{m-1} \, s_m$
	in $\cM$ such that
	(i) $\pi_{\cN}=\rho$ and
	(ii) if $s_m \in U_i$ then $\alpha_{m-1}\notin \cE_i$.
	The formal definition of scheduler $\tsched$ is as follows.
	Let $\rho$ be a finite path in $\cN$ where the last state $s$ of $\rho$ is
	non-terminal.
	If $s$ is a state of $\cM$ and does not belong to an MEC of $\cM$
	and $\beta \in \Act_{\cM}(s)$
	then:
	\[
	\tsched(\rho)(\beta)
	\ \ = \ \ \sum_{\pi\in \Pi_{\rho}} p_{\pi}^{\sched} \cdot \sched(\pi)(\beta)
	\]
	If $s = s_{\cE_i}$ and $\beta \in \Act_{\cN}\bigl(s_{\cE_i}\bigr)\setminus \{\tau\}$
	then
	\[
	\tsched(\rho)(\beta)
	\ \ = \ \
	\sum_{\pi\in \Pi_{\rho}} p_{\pi}^{\sched} \cdot
	\Pr^{\residual{\sched}{\pi}}_{\cM,\last(\pi)}\bigl( \
	\text{``leave $\cE_i$ via action $\beta$''} \ \bigr) 
	\]
	where ``leave $\cE_i$ via action $\beta$'' means the existence of a prefix
	whose action sequence consists of actions inside $\cE_i$ followed by action
	$\beta$. The last state of this prefix, however,
	could be a state of $U_i$. (Note $\beta \in \Act_{\cN}(s_{\cE_i})$ means that
	$\beta$ could have reached a state outside $U_i$, but
	there might be states inside $U_i$ that are accessible via $\beta$.)
	Similarly,  
	\[
	\tsched(\rho)(\tau)
	\ \ = \ \
	\sum_{\pi\in \Pi_{\rho}} p_{\pi}^{\sched} \cdot
	\Pr^{\residual{\sched}{\pi}}_{\cM,\last(\pi)}\bigl( \
	\text{``stay forever in $\cE_i$''} \ \bigr)
	\]
	where ``stay forever in $\cE_i$'' means that only actions
	inside $\cE_i$ are performed.
	By induction on the length of $\rho$ we obtain: 
	\[
	p_{\rho}^{\tsched} \ \ = \ \ \sum_{\pi\in \Pi_{\rho}} p_{\pi}^{\sched}
	\]
	But this yields 
	$\Pr^{\sched}_{\cM}(\Diamond t)$ = $\Pr^{\tsched}_{\cN}(\Diamond t)$
	for each terminal state $t$ of $\cM$.
	Moreover the probability under $\sched$ to eventually enter and stay forever in $\cE_i$
	equals the probability for $\tsched$ to
	reach the terminal state $\bot$ via a path of the form
	$\rho \, \tau \, \bot$ where $\last(\rho)=s_{\cE_i}$.
\end{proof}

\subsection{Automata and \texorpdfstring{$\omega$}{ω}-regular languages}
In order have a representation of an $\omega$-regular language, we use \emph{deterministic Rabin automata} (DRA).
A DRA constitutes a tuple $\mathcal{A} = (Q, \Sigma, q_0, \delta, \mathrm{Acc})$ where $Q$ is a finite set of states, $\Sigma$ an alphabet, $q_0$ the initial state, $\delta: Q \times \Sigma \to Q$ the transition function and $\mathrm{Acc} \subseteq 2^Q \times 2^Q$ the acceptance set.
The \emph{run} of $\mathcal{A}$ on a word $w=w_0 w_1 \dots \in \Sigma^\omega$ is the sequence $\tau = q_0 q_1 \ldots$ of states such that $\delta(q_i,w_i)=q_{i+1}$ for all $i$.
It is \emph{accepting} if there exists a pair $(L,K) \in \mathrm{Acc}$ such that $L$ is only visited finitely often and $K$ is visited infinitely often by $\tau$.
The language $\mathcal{L}(\mathcal{A})$ is the set of all words $w \in \Sigma^\omega$ on which the run of $\mathcal{A}$ is accepting.

A \emph{good prefix} $\pi$ for an $\omega$-regular language $\mathcal{L}$ is a finite word such that all infinite extensions of $\pi$ belong to $\mathcal{L}$.
An $\omega$-regular language $\mathcal{L}$ is called a  \emph{co-safety language} if all words in $\mathcal{L}$ have a prefix that is a good prefix for $\mathcal{L}$.
A {co-safety language} $\mathcal{L}$  is uniquely determined by the regular set of minimal good prefixes of words in $\mathcal{L}$.

The regular language of minimal good prefixes of a co-safety $\mathcal{L}$ which uniquely determines $\mathcal{L}$ can be represented by a \emph{deterministic finite automaton} (DFA).
A DFA constitutes a tuple $\mathcal{A} = (Q, \Sigma, q_0, \delta, \mathrm{Acc})$ where $Q$ is a finite set of states, $\Sigma$ an alphabet, $q_0$ the initial state, $\delta: Q \times \Sigma \to Q$ the transition function and $\mathrm{Acc} \subseteq Q$ the acceptance set.
The \emph{run} of $\mathcal{A}$ on a finite word $w=w_0 w_1 \dots w_n$ is the sequence $\tau = q_0 q_1 \ldots q_n$ of states such that $\delta(q_i,w_i)=q_{i+1}$ for all $i$.
It is \emph{accepting} if $q_n \in \mathrm{Acc}$.
The language $\mathcal{L}(\mathcal{A})$ is the set of all words $w \in \Sigma^*$ on which the run of $\mathcal{A}$ is accepting.

Given an MDP $\cM = (S, \Act, P, \init)$ and a DFA $\cA = (Q, \Sigma, q_0, \delta, \Acc)$ with $\Sigma \subseteq S^*$ we define the \emph{product MDP} $\cM \otimes \cA = (S \times Q, \Act, P', \init')$ with $P'(<\!\! s,q \!\!>, \alpha, <\!\! t,r \!\!>) = P(s, \alpha, t)$ if $r = \delta(q, s)$ and $0$ otherwise, and $\init' = \delta(q_0, \init)$.
The same construction works for the product of an MDP with a DRA.
The difference comes from the acceptance condition encoded in the second components of states of the product MDP.

%%%%%%%%%%%%%%%%%%%%%%%%%%%%%%%%%%%%%%%%%%%%%%%%%%

\section{Strict and global probability-raising causes}

\label{sec:SPR-GPR}

Our contribution starts by providing novel formal definitions for cause-effect relations
in MDPs which rely on the probability-raising (PR) principle
$P(E \mid C) > P(C)$
\textbf{(C1)} which states that the probability of the effect is higher after the cause.
Additionally, we include temporal priority of causes \textbf{(C2)}, stating that causes must happen before the effect.
Here, we focus on the case where both causes and effects
are state properties, i.e., sets of states.

In the sequel, let
$\cM = (S,\Act,P,\init)$ be an MDP
and $\Effect \subseteq S \setminus \{\init\}$ 
a nonempty set of terminal states.
As the effect set is fixed, the assumption that all effect states are terminal contributes to the temporal priority \textbf{(C2)}.
We may also assume that every state $s\in S$ is reachable from $\init$.

We consider two variants of the probability-raising condition: the global setting treats the set $\Cause$ as a unit, while the strict view requires \textbf{(C1)} for all states in $\Cause$ individually.

\begin{defi}[Global and strict probability-raising cause (GPR/SPR cause)]
	\label{def:GPR} 
	\label{def-PR-causes} 
	\label{def:causes}
	Let $\cM$ and $\Effect$ be as above and
	$\Cause$ a nonempty subset of $S \setminus \Effect$ such that for each $c\in \Cause$ we have $\Pr^{\max}_{\cM}( (\neg \Cause) \Until c ) >0$.
	Then, $\Cause$ is said to be a
	\emph{GPR cause} for
	$\Effect$ iff the following condition (G) holds:
	\begin{description}
		\item [(G)]
		For each scheduler $\sched$ where
		$\Pr^{\sched}_{\cM}( \Diamond \Cause) >0$:
		\begin{align*}
			\label{GPR}  
			\Pr^{\sched}_{\cM}(\ \Diamond \Effect \ | \ \Diamond \Cause \ )
			\ > \ \Pr^{\sched}_{\cM}(\Diamond \Effect).
			\tag{\text{GPR}}
		\end{align*}
	\end{description}
	$\Cause$ is called
	an \emph{SPR cause} for
	$\Effect$ iff the following condition (S) holds:
	\begin{description}
		\item [(S)]
		For each state $c\in \Cause$ and each
		scheduler $\sched$ where $\Pr^{\sched}_{\cM}( (\neg \Cause) \Until c ) >0$:
		\begin{align*}
			\label{SPR}  
			\Pr^{\sched}_{\cM}(\ \Diamond \Effect \ | \
			(\neg \Cause) \Until c \ )
			\ > \ \Pr^{\sched}_{\cM}(\Diamond \Effect).
			\tag{\text{SPR}}
		\end{align*}   
	\end{description}
\end{defi}

Note that we only consider sets $\Cause$ as PR cause when each state in $c\in \Cause$ is accessible from $\init$ without traversing other states in $\Cause$.
This can be seen as a minimality condition ensuring that a cause does not contain redundant elements.
However, we could omit this requirement 
without affecting the covered effects (events where an effect state is reached after visiting a cause state) or uncovered effects (events where an effect state is reached without visiting a cause state before).
More concretely, whenever a set $C \subseteq S \setminus \Effect$ satisfies conditions (G) or (S) then the set of states $c\in C$ where $\cM$ has a path from $\init$ satisfying $(\neg C)\Until c$ is a GPR resp. an SPR cause.
On the other hand the set $\Cause$ is disjoint of the effect $\Eff$ to ensure temporal priority \textbf{(C2)}.

%%%%%%%%%%%%%%%%%%%%%%%%%%%%%%%%%%%%%%%%%%%%%%%%%%%%%%%%%%%%%%%%%%%%%%%%

\subsection{Examples and simple properties of probability-raising causes}
\label{sec: simple properties}

We first observe that SPR and GPR causes cannot contain the initial state 
$\init$, since otherwise an equality instead of an inequality would hold in \eqref{GPR}
and \eqref{SPR}.
Furthermore as a direct consequence of the definitions and using the equivalence of
the LTL formulas
$\Diamond \Cause$ and
$(\neg \Cause) \until \Cause$
we obtain:

\begin{lem}[Singleton PR causes]
	\label{global-vs-strict-causes}
	If $\Cause$ is a singleton then
	$\Cause$ is a SPR cause for $\Effect$
	if and only if
	$\Cause$ is a GPR cause for $\Effect$.
\end{lem}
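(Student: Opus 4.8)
The plan is to show that when $\Cause=\{c\}$ is a singleton, conditions (S) and (G) from Definition~\ref{def:causes} are in fact one and the same statement, so that the two notions of cause coincide trivially.

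First I would observe that the universal quantifier over cause states in condition (S) collapses. Since $\Cause=\{c\}$ contains exactly one element, the clause ``for each state $c\in\Cause$'' ranges over a single state, so (S) reduces to the single requirement that for every scheduler $\sched$ with $\Pr^{\sched}_{\cM}((\neg\Cause)\Until c)>0$ one has $\Pr^{\sched}_{\cM}(\Diamond\Effect\mid(\neg\Cause)\Until c)>\Pr^{\sched}_{\cM}(\Diamond\Effect)$.

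The key step is to identify the two conditioning events. Because $\Cause=\{c\}$, the target set of the until-formula $(\neg\Cause)\Until\Cause$ is precisely $\{c\}$, so $(\neg\Cause)\Until\Cause$ and $(\neg\Cause)\Until c$ denote the very same path property. Invoking the standard LTL equivalence $\Diamond\Cause\equiv(\neg\Cause)\Until\Cause$ already noted before the lemma (a path reaches $\Cause$ iff it stays outside $\Cause$ until its first visit to $\Cause$), I conclude that the measurable sets of maximal paths described by $\Diamond\Cause$ and by $(\neg\Cause)\Until c$ coincide.

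Since $\Pr^{\sched}_{\cM}$ is a probability measure on sets of maximal paths, replacing one event by an equal event changes nothing: for every scheduler $\sched$ the side condition $\Pr^{\sched}_{\cM}(\Diamond\Cause)>0$ is equivalent to $\Pr^{\sched}_{\cM}((\neg\Cause)\Until c)>0$, and the conditional probabilities $\Pr^{\sched}_{\cM}(\Diamond\Effect\mid\Diamond\Cause)$ and $\Pr^{\sched}_{\cM}(\Diamond\Effect\mid(\neg\Cause)\Until c)$ are equal. Hence the quantified inequality in (S) and the one in (G) are identical, so $\Cause$ satisfies (S) exactly when it satisfies (G), which is the claim. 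There is no real obstacle here; the only point requiring care is the bookkeeping observation that the single-state quantifier in (S) genuinely collapses and that conditioning on two syntactically different but semantically identical events yields identical probabilities under each fixed scheduler.
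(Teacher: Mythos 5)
Your proposal is correct and follows exactly the paper's own reasoning: the paper states the lemma as a direct consequence of the definitions via the LTL equivalence $\Diamond\Cause \equiv (\neg\Cause)\Until\Cause$, which for a singleton $\Cause=\{c\}$ makes the conditioning events in (S) and (G) coincide. Your write-up merely spells out this identification (collapse of the quantifier, equality of the events, equality of the resulting conditional probabilities) in more detail.
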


The direction from SPR cause to GPR cause even holds in general as the event $\Diamond \Cause$ can be expressed as a disjoint union of all events $(\neg \Cause) \Until c$ where $c\in \Cause$.
Therefore, the probability for covered effects
$\Pr^{\sched}_{\cM}(\ \Diamond \Effect \ | \ \Diamond \Cause \ )$
is a weighted average of the probabilities
$\Pr^{\sched}_{\cM}(\ \Diamond \Effect \ | \ (\neg \Cause)\Until c \ )$
for $c\in \Cause$, which yields:

\begin{lem}
	\label{lemma:strict-implies-global} 
	Every SPR cause for $\Effect$ is a GPR cause for $\Effect$.
\end{lem}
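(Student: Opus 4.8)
The plan is to exploit the decomposition of the reachability event $\Diamond \Cause$ into disjoint events recording which state of $\Cause$ is visited first, and then to express the covered-effect probability as a convex combination of the strict conditional probabilities to which condition (S) directly applies. In this way the GPR inequality becomes a weighted average of SPR inequalities.

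First I would fix an arbitrary scheduler $\sched$ with $\Pr^{\sched}_{\cM}(\Diamond \Cause) > 0$ and observe that, for $c \in \Cause$, the event $(\neg \Cause) \Until c$ consists of exactly those maximal paths whose first visit to $\Cause$ occurs in state $c$. Since the first $\Cause$-state along a path (if any) is unique, these events are pairwise disjoint and their union is $\Diamond \Cause$. Writing $p_c = \Pr^{\sched}_{\cM}((\neg \Cause) \Until c)$, this yields $\sum_{c \in \Cause} p_c = \Pr^{\sched}_{\cM}(\Diamond \Cause) > 0$ and, by additivity over the disjoint decomposition,
\[
\Pr^{\sched}_{\cM}(\Diamond \Effect \wedge \Diamond \Cause) \ = \ \sum_{c \in \Cause} \Pr^{\sched}_{\cM}\bigl(\Diamond \Effect \wedge ((\neg \Cause) \Until c)\bigr).
\]

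Next I would divide by $\Pr^{\sched}_{\cM}(\Diamond \Cause) = \sum_{c} p_c$ to rewrite the covered-effect probability as the weighted average
\[
\Pr^{\sched}_{\cM}(\Diamond \Effect \mid \Diamond \Cause) \ = \ \sum_{c : p_c > 0} \frac{p_c}{\sum_{c'} p_{c'}} \cdot \Pr^{\sched}_{\cM}\bigl(\Diamond \Effect \mid (\neg \Cause) \Until c\bigr),
\]
where only the indices with $p_c > 0$ contribute. For each such $c$ the very same scheduler $\sched$ satisfies $\Pr^{\sched}_{\cM}((\neg \Cause) \Until c) = p_c > 0$, so condition (S) applies and gives $\Pr^{\sched}_{\cM}(\Diamond \Effect \mid (\neg \Cause) \Until c) > \Pr^{\sched}_{\cM}(\Diamond \Effect)$.

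Finally, since the coefficients $p_c / \sum_{c'} p_{c'}$ are nonnegative, sum to one, and at least one of them is positive, the weighted average of quantities each strictly exceeding $\Pr^{\sched}_{\cM}(\Diamond \Effect)$ is itself strictly larger, which is exactly the GPR inequality \eqref{GPR} for $\sched$; as $\sched$ was arbitrary, $\Cause$ is a GPR cause. The only genuinely delicate point is the preservation of the \emph{strict} inequality under the convex combination, which holds precisely because the weights sum to one with at least one positive weight; the remaining work is the routine bookkeeping justifying the disjoint decomposition of $\Diamond \Cause$, which follows from uniqueness of the first $\Cause$-state along each path.
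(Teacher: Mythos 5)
Your proposal is correct and follows essentially the same route as the paper's proof: both decompose $\Diamond \Cause$ into the pairwise disjoint events $(\neg \Cause) \Until c$, apply the SPR condition (S) to each $c$ reached with positive probability under the same scheduler $\sched$, and conclude that the resulting weighted average exceeds $\Pr^{\sched}_{\cM}(\Diamond \Effect)$. The only cosmetic difference is that the paper bounds the average from below by the minimum $m$ of the conditional probabilities over the contributing states, whereas you argue directly that a convex combination of terms each strictly above the threshold stays strictly above it.
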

\begin{proof}
	Assume that $\Cause$ is a SPR cause for $\Effect$ in $\cM$ and let $\sched$ be a scheduler that reaches $\Cause$ with positive probability.
	Further, let 
	\[
	C_{\sched}\eqdef \{c\in \Cause\ \mid \ \Pr^{\sched}_{\cM}((\neg \Cause )\Until c)>0\} 
	\quad \text{and} \quad
	m\eqdef \min_{c\in C_{\sched}} \Pr^{\sched}_{\cM}(\ \Diamond \Effect \ | \
	(\neg \Cause) \Until c \ ).
	\]
	As $\Cause$ is a SPR cause, $ m> \Pr^{\sched}_{\cM}(\Diamond \Effect)$.
	The set of $\sched$-paths satisfying $\Diamond \Cause$ is the disjoint union of the sets of $\sched$-paths satisfying $(\neg \Cause) \Until c$ with $c \in C_{\sched}$.
	Hence,
	\[
	\Pr^{\sched}_{\cM}(\Diamond \Effect \mid  \Diamond \Cause)=\frac{\sum_{c\in C_{\sched}}\Pr^{\sched}_{\cM}(\Diamond \Effect  \mid
		(\neg \Cause) \Until c  )\cdot \Pr^{\sched}_{\cM}((\neg \Cause) \Until c  )}{\sum_{c\in C_{\sched}}\Pr^{\sched}_{\cM}((\neg \Cause) \Until c) }\geq m.
	\]
	As $m>\Pr^{\sched}_{\cM}(\Diamond \Effect)$, the GPR condition \eqref{GPR} is satisfied under $\sched$. 
\end{proof}

\begin{figure}[t]
	\centering
	\begin{minipage}{0.6\textwidth}
		\centering
		\resizebox{\textwidth}{!}{
			%Non-strict global PR cause

\begin{tikzpicture}[->,>=stealth',shorten >=1pt,auto ,node distance=0.5cm, thick]
	\node[scale=1, state] (s0) {$\init$};
	\node[scale=1, state] (c1) [below left = 1 of s0] {$c_1$};
	\node[scale=1, state] (c2) [right =2 of c1] {$c_2$};
	\node[scale=1, state] (e) [left =2 of c1] {$\effect$};
	\node[scale=1, state] (f) [right = 2 of c2] {$\noeff$};
	
	\draw[<-] (s0) --++(-0.55,0.55);
	\draw (s0) -- (c1) node[below=0.2, pos=0.3,scale=1] {$1/3$};
	\draw (s0) -- (c2) node[below=0.2, pos=0.3,scale=1] {$1/3$};
	\draw (s0) -- (e) node[above, pos=0.5,scale=1] {$1/12$};
	\draw (s0) -- (f) node[pos=0.5,scale=1] {$1/4$};
	\draw (c1) -- (e) node[pos=0.3,above,scale=1] {$1$};
	\draw (c2) -- (f) node[below, pos=0.5,scale=1] {$3/4$};
	\draw (c2) to [out=200, in=340] (e) node[below right = .5, scale=1] {$1/4$};
\end{tikzpicture}
		}
		\caption{A MC allowing for non-strict GPR causes}
		\label{fig: non-strict-global-causes-in-MC}
	\end{minipage}\hspace{25pt}
	\begin{minipage}{0.3\textwidth}
		\centering
		\resizebox{0.28\textwidth}{!}{
			%Example showing that PR causes must not exist
	\begin{tikzpicture}
		[scale=1,->,>=stealth',auto ,node distance=0.5cm, thick]
		\tikzstyle{round}=[thin,draw=black,circle]
		
		\node[scale=1, state] (init) {$\init$};
		\node[scale=1, state, below=1 of init] (g) {$\eff$};
		
		\draw[<-] (init) --++(-0.55,0.55);
		\draw[color=black,->] (init) edge node [pos=0.5, right] {$1$} (g) ;
	\end{tikzpicture}
		}
		\caption{A MC with no PR cause}
		\label{fig: no-prob-raising-cause}
	\end{minipage}
\end{figure}

\begin{exa}[Non-strict GPR cause]
	\label{ex:non-strict-global-causes-in-MC}  
	{\rm
		Consider the Markov chain $\cM$ depicted in Figure \ref{fig: non-strict-global-causes-in-MC}
		where the nodes represent states and the directed edges represent transitions labeled with their respective probabilities.
		Let $\Effect=\{\effect\}$.
		Then,
		\begin{center}
			$\Pr_{\cM}(\Diamond \Effect) = \frac{1}{3} \, + \, \frac{1}{3}\cdot \frac{1}{4} \, + \, \frac{1}{12} = \frac{1}{2},\ \ $
			$ \Pr_{\cM}( \Diamond \Effect  |  \Diamond c_1  ) = \Pr_{\cM,c_1}(\Diamond \effect) = 1 \text{ and}\ \ $
			$ \Pr_{\cM}( \Diamond \Effect  |  \Diamond c_2  ) = \Pr_{\cM,c_2}(\Diamond \effect) =\frac{1}{4}.$
		\end{center}       
		Thus, $\{c_1\}$ is both an SPR and a GPR cause for $\Effect$, while $\{c_2\}$
		is not.
		The set $\Cause=\{c_1,c_2\}$ is a
		non-strict GPR cause for $\Effect$
		as:
		\begin{center}
			$\Pr_{\cM}(\ \Diamond \Effect \ | \ \Diamond \Cause \ ) 
			=
			(\frac{1}{3}+\frac{1}{3}\cdot \frac{1}{4}) / (\frac{1}{3}+\frac{1}{3})
			=
			(\frac{5}{12})/(\frac{2}{3})
			=
			\frac{5}{8}
			>
			\frac{1}{2}
			=
			\Pr_{\cM}(\Diamond \Eff)$.
		\end{center}
		Non-strictness follows from the fact that
		the SPR condition does not hold for
		state $c_2$.\Ende		
	}
\end{exa}

\begin{exa}[Probability-raising causes might not exist]
	\label{ex:no-prob-raising-cause}  
	{\rm  
		PR causes might not
		exist, even if $\cM$ is a Markov chain.
		This applies, e.g., to the MC in Figure \ref{fig: no-prob-raising-cause} and the effect set $\Effect=\{\effect\}$.
		The only cause candidate is the singleton $\{\init\}$.
		However, the strict inequality in \eqref{GPR} or \eqref{SPR} forbids $\{\init\}$ to be a PR cause.
		The same phenomenon occurs if all non-terminal states of a MC
		reach the effect states with the same probability.
		In such cases, however, the non-existence of PR causes
		is well justified as the events $\Diamond \Effect$ and
		$\Diamond \Cause$
		are stochastically independent for every set
		$\Cause \subseteq S \setminus \Effect$.
		\Ende
	}
\end{exa}

\begin{rem}[Memory needed for refuting PR condition]
	\label{rem:memory_necessary}
	\label{rem:memory-needed}
	Let $\cM$ be the MDP in Figure~\ref{fig:memory-needed}, where the notation is similar to Example \ref{ex:non-strict-global-causes-in-MC} with the addition of actions $\alpha, \beta$ and $\gamma$.
	Let $\Cause = \{c\}$
	and $\Effect=\{\eff\}$.
	Only state $s$ has a nondeterministic choice.
	$\Cause$ is not an PR cause. To see this,
	regard the finite-memory deterministic scheduler $\tsched$ that
	schedules $\beta$ only for the first visit of $s$ and $\alpha$ for
	the second visit of $s$.
	Then:
	\begin{align*}
		\Pr^{\tsched}_{\cM}(\Diamond \eff)
		\ = \ \frac{1}{2} \cdot \frac{1}{2} \, + \, 
		\frac{1}{2} \cdot \frac{1}{2} \cdot 1 \cdot \frac{1}{4}
		\ = \ \frac{5}{16}
		\ > \ \frac{1}{4} \ = \
		\Pr^{\tsched}_{\cM}(\Diamond \eff |\Diamond c)
	\end{align*}  
	Denote the MR schedulers reaching $c$ with positive probability as $\sched_{\lambda}$ with $\sched_{\lambda}(s)(\alpha)$ $=$ $\lambda$ and $\sched_{\lambda}(s)(\beta)=1{-}\lambda$  for some $\lambda \in \, [0,1[$. Then, $\Pr^{\sched_{\lambda}}_{\cM,s}(\Diamond \eff)  >0$ and:
	\begin{align*}
		\Pr^{\sched_{\lambda}}_{\cM}(\Diamond \eff)
		\ = \
		\frac{1}{2} \cdot \Pr^{\sched_{\lambda}}_{\cM,s}(\Diamond \eff)
		\ < \ 
		\Pr^{\sched_{\lambda}}_{\cM,s}(\Diamond \eff)
		\ = \
		\Pr^{\sched_{\lambda}}_{\cM,c}(\Diamond \eff)
		\ = \
		\Pr^{\sched_{\lambda}}_{\cM}(\Diamond \eff |\Diamond c)
	\end{align*}
	Thus, the SPR/GPR condition holds for $\Cause$
	and $\Effect$ 
	under all memoryless schedulers reaching $\Cause$ with positive probability,
	although $\Cause$ is not an PR cause.
	\Ende
\end{rem}

\begin{rem}[Randomization needed for refuting PR condition]
	\label{rem:randomization-needed}
	Consider the MDP $\cM$ in \mbox{Figure \ref{fig:randomization-needed}}.
	Let $\Effect = \{\effuncov,\effcov\}$ and $\Cause = \{c\}$.
	Here the state $\effuncov$ is not covered by the cause whereas $\effcov$ is, hence their names.
	The two MD-schedulers $\sched_{\alpha}$ and $\sched_{\beta}$ 
	which select $\alpha$ resp. $\beta$ for the initial
	state $\init$ are the only deterministic schedulers.
	As $\sched_{\alpha}$ does not reach $c$, it is irrelevant for
	the PR conditions.
	On the other hand $\sched_{\beta}$ satisfies \eqref{SPR} and \eqref{GPR} since
	\begin{align*}
		\Pr^{\sched_{\beta}}_{\cM}(\Diamond \Effect|\Diamond c)
		= \frac{1}{2} > \frac{1}{4}=
		\Pr^{\sched_{\beta}}_{\cM}(\Diamond \Effect).
	\end{align*}
	The MR scheduler $\tsched$ which selects $\alpha$ and $\beta$
	with probability $\frac{1}{2}$ in $\init$
	also reaches $c$ with positive probability but violates
	\eqref{SPR} and \eqref{GPR} as
	\begin{align*}
		\Pr^{\tsched}_{\cM}(\Diamond \Effect|\Diamond c)
		= \frac{1}{2} < \frac{5}{8}
		= \frac{1}{2} + \frac{1}{2} \cdot \frac{1}{2} \cdot \frac{1}{2}
		= \Pr^{\tsched}_{\cM}(\Diamond \Effect).
	\end{align*}
\end{rem}

\begin{figure}[t]
	\centering
	\begin{minipage}{0.41\textwidth}
		\centering
		\resizebox{\textwidth}{!}{
			%Memory needed

\begin{tikzpicture}[scale=1,->,>=stealth',auto ,node distance=0.5cm, thick]
	\tikzstyle{round}=[thin,draw=black,circle]
	
	\node[scale=1, state] (init) {$\init$};
	\node[scale=1, state, below=1.5 of init] (noeff) {$\noeff$};
	\node[scale=1, state, right=1.5 of init] (s) {$s$};
	\node[scale=1, state, below = 1.5 of s] (eff) {$\eff$};
	\node[scale=1, state, right=1.5 of eff] (c) {$c$};
	
	\draw[<-] (init) --++(-0.55,0.55);
	\draw[color=black ,->] (init) edge  node [pos=0.5,above] {$\gamma \mid 1/2$} (s) ;
	\draw[color=black ,->] (init)  edge  node [pos=0.5, left] {$\gamma \mid 1/2$} (noeff) ;
	\draw[color=black ,->] (s) edge  node [anchor=center] (n1) {} node [pos=0.8,right] {$3/4$} (noeff) ;
	\draw[color=black,->] (s) edge[out=260, in=100] node [anchor=center] (m1) {} node [pos=0.85,left] {$1/4$} (eff) ;
	\draw[color=black,->] (s) edge[out=280, in=80] node [anchor=center] (m2) {} node [pos=0.85,right] {$1/2$} (eff) ;
	\draw[color=black,->] (s) edge node [anchor=center] (n2) {} node [pos=0.8,left] {$1/2$} (c) ;
	\draw[color=black,->] (c) edge[out=90, in=0] node [pos=0.5, right] {$\gamma \mid 1$} (s) ;
	\draw[color=black , very thick, -] (n1.center) edge [bend right=45] node [pos=0.25] {$\alpha$} (m1.center);
	\draw[color=black, very thick, -] (m2.center) edge [bend right=45] node [pos=0.25] {$\beta$} (n2.center);
	
\end{tikzpicture}
		}
		\caption{MDP $\cM$ from Remark \ref{rem:memory-needed}}
		\label{fig:memory-needed}
	\end{minipage}\hspace{25pt}
	\begin{minipage}{0.41\textwidth}
		\centering
		\resizebox{\textwidth}{!}{
			%Randomization needed

\begin{tikzpicture}[scale=1,->,>=stealth',auto ,node distance=0.5cm, thick]
	\tikzstyle{round}=[thin,draw=black,circle]
	
	\node[scale=1, state] (init) {$\init$};
	\node[scale=1, state, below=1.25 of init] (pre) {$\effuncov$};
	\node[scale=1, state, right=3.75 of init] (c) {$c$};
	\node[scale=1, state, right=1.25 of pre] (t) {$\noeff$};
	\node[scale=1, state, below=1.25 of c] (post) {$\effcov$};
	
	\draw[<-] (init) --++(-0.55,0.55);
	\draw[color=black ,->] (init) edge  node [very near start, anchor=center] (n5) {} node [pos=0.5,above] {$1/2$} (c) ;
	\draw[color=black ,->] (init)  edge  node [very near start, anchor=center] (n0) {} node [pos=0.5, left] {$\alpha \mid 1$} (pre) ;
	\draw[color=black ,->] (c) edge  node [near start, anchor=center] (m5) {} node [pos=0.5,right] {$1/2$} (post) ;
	\draw[color=black,->] (init) edge node [near start, anchor=center] (n6) {} node [pos=0.5,right] {$1/2$} (t) ;
	\draw[color=black , very thick, -] (n6.center) edge [bend right=45] node [pos=0.3] {$\beta$} (n5.center);
	
	\draw[color=black ,->] (c)  edge  node [near start, anchor=center] (m6) {} node [pos=0.5,left] {$1/2$} (t) ;
	\draw[color=black, very thick, -] (m6.center) edge [bend right=45] node [pos=0.3] {$\tau$} (m5.center);
	
\end{tikzpicture}
		}
		\caption{MDP $\cM$ from Remark \ref{rem:randomization-needed}}
		\label{fig:randomization-needed}
	\end{minipage}
\end{figure}

\begin{rem}[Cause-effect relations for regular classes of schedulers]
	\label{remark:regular-classes-of-schedulers}
	The definitions of PR causes in MDPs impose constraints for all
	schedulers reaching a cause state.
	This condition is fairly strong and can often lead to the phenomenon
	that no PR cause exists.
	E.g, in order for $\Cause$ to not be a PR cause of any kind it suffices if there is a scheduler $\sched$ which reaches a cause state but also reaches $\Eff$ from the initial state with some extreme probability of $0$ or $1$.
	Replacing $\cM$ with an MDP resulting from the
	synchronous parallel composition of $\cM$ with a
	deterministic finite automaton
	representing a regular constraint on the scheduled state-action sequences
	(e.g., ``alternate between actions $\alpha$ and $\beta$ in state $s$''
	or ``take $\alpha$ on every third visit to state $s$ and actions $\beta$ or $\gamma$ otherwise'')
	leads to a weaker notion of PR causality.
	In such a construction the considered class of schedulers can be significantly reduced.
	This can be useful to obtain more detailed
	information on cause-effect relationships in special scenarios,
	be it at design time where multiple scenarios (regular classes of schedulers)
	are considered or
	for a post-hoc analysis where one seeks for the causes
	of an occurred effect and where information about the scheduled actions
	is extractable from log files or the information gathered by a monitor.
	\Ende
\end{rem}

\begin{rem}[Action PR causality]
	\label{remark:action-causality}
	Our notions of PR causes are purely state-based with PR conditions that compare probabilities under the same scheduler.
	However, in combination with model transformations, the proposed notions of PR causes are also applicable for reasoning about other forms of PR causality.
	
	Suppose, the task is to check whether taking action $\alpha$ in state $s$ raises the effect probabilities compared to never scheduling $\alpha$ in state $s$.
	This form of action causality was discussed in an example in \cite{DimFinkbeinerTorfah-ATVA2020}.
	We argue that we can deal with this kind of causality too.
	
	For this we assume there are no cycles in $\cM$ containing $s$.
	This is a strong assumption as we do not want to force the action $\alpha$ to be taken in the first visit to $s$ or to be always taken when visiting $s$.
	Therefore, our framework can not handle this kind of action causality if the state in question is part of a cycle.
	
	Let $\cM_0$ and $\cM_1$ be copies of $\cM$ with the following modifications: 
	In $\cM_0$, the only enabled action of state $s$ is $\alpha$, while in $\cM_1$ the enabled actions of state $s$ are the elements of $\Act_{\cM}(s)\setminus \{\alpha\}$.
	The MDP $\cN$ then has a fresh initial state $\init$ which transitions with equal probabilities $1/2$ to the copies of $s$ in $\cM_0$ and $\cM_1$.
	The action $\alpha$ raises the effect probability in $\cM$ if and only if for all scheduler $\sched$ of $\cN$ the copy of $s$ in $\cM_0$ satisfies \eqref{SPR} for the union of effect sets of both copies in $\cN$.
	This idea can be generalized to check whether scheduler classes satisfying a regular constraint have higher effect probability compared to all other schedulers.
	In this case, we deal with an MDP $\cN$ as above where $\cM_0$ and $\cM_1$ are defined as the synchronous product of deterministic finite automata and $\cM$.
	
	To demonstrate this consider the MDP $\cM$ from Figure \ref{fig:action1}.
	We are interested whether taking $\alpha$ in $s$ raises the probability to reach the effect state $\eff$.
	The constructed MDP $\cN$ with two adapted copies of $\cM$ is depicted in Figure \ref{fig:action2}.
	For all scheduler $\sched$ of $\cN$ the state $s_0$ satisfies \eqref{SPR} by
	\begin{align*}
		\Pr_{\cN}^\sched(\lozenge \{\eff_0, \eff_1\} \mid \lozenge s_0) = 1/4 > 1/8 = \Pr^{\sched}_\cN(\lozenge \{\eff_0, \eff_1\}),
	\end{align*}
	which means that the action $\alpha$ does indeed raise the probability of reaching $\eff$ in $\cM$.
	\Ende
\end{rem}

\begin{figure}[t]
	\centering
	\begin{minipage}{0.4\textwidth}
		\centering
		\resizebox*{0.68\textwidth}{!}{\begin{tikzpicture}
	[scale=1,->,>=stealth',auto ,node distance=0.5cm, thick]
	
	\node[scale=1, state] (init) {$\init$};
	\node[scale=1, state, right=0.8 of init] (s) {$s$};
	\node[scale=1, state, below=0.7 of s] (t) {$t$};
	\node[scale=1, state, below=2.1 of init] (eff) {$\eff$};
	\node[scale=1, ellipse, draw, below right=1 of t] (noeff) {$\noeff$};
	
	\draw[<-] (init) --++(-0.55,0.55);
	
	\draw[color=black, ->] (init) edge node[pos=0.2, right] {$1/10$} (eff);
	\draw[color=black, ->] (init) edge node[pos=0.5, above] {$9/10$} (s);
	
	\draw[color=black, ->] (s) edge node[pos=0.5] {$\alpha$} (t);
	\draw[color=black, ->] (s) edge[out=325, in=90] node[pos=0.5] {$\beta$} (noeff);
	
	\draw[color=black, ->] (t) edge node[pos=0.15, left] {$1/4$} (eff);
	\draw[color=black, ->] (t) edge node[pos=0.15, right] {$3/4$} (noeff);
\end{tikzpicture}}
		\caption{MDP $\cM$ from Remark \ref{remark:action-causality}}
		\label{fig:action1}
	\end{minipage}
	\hfill
	\begin{minipage}{0.59\textwidth}
		\centering
		\resizebox{0.945\textwidth}{!}{\begin{tikzpicture}
	[scale=1,->,>=stealth',auto ,node distance=0.5cm, thick]
	
	\node[scale=1, state] (help) {$\init$};
	
	\draw[<-] (help) --++(-0.55,0.55);

	\node[scale=1, state, left=1.7 of help] (s0) {$s_0$};
	\node[scale=1, state, left=0.8 of s0] (init0) {$\init_0$};
	\node[scale=1, state, below=0.7 of s0] (t0) {$t_0$};
	\node[scale=1, state, below=2.1 of init0] (eff0) {$\eff_0$};
	\node[scale=1, ellipse, draw, below right=1 of t0] (noeff0) {$\noeff_0$};
	
	\draw[color=black, ->] (init0) edge node[pos=0.2, right] {$1/10$} (eff0);
	\draw[color=black, ->] (init0) edge node[pos=0.5, above] {$9/10$} (s0);
	
	\draw[color=black, ->] (s0) edge node[pos=0.5] {$\alpha$} (t0);
%	\draw[color=black, ->] (s0) edge[out=325, in=90] node[pos=0.5] {$\beta$} (noeff0);
	
	\draw[color=black, ->] (t0) edge node[pos=0.15, left] {$1/4$} (eff0);
	\draw[color=black, ->] (t0) edge node[pos=0.15, right] {$3/4$} (noeff0);

	\node[scale=1, state, right=1.7 of help] (s1) {$s_1$};
	\node[scale=1, state, right=0.8 of s1] (init1) {$\init_1$};
	\node[scale=1, state, below=0.7 of s1] (t1) {$t_1$};
	\node[scale=1, state, below=2.1 of init1] (eff1) {$\eff_1$};
	\node[scale=1, ellipse, draw, below left=1 of t1] (noeff1) {$\noeff_1$};
	
	\draw[color=black, ->] (init1) edge node[pos=0.2, left] {$1/10$} (eff1);
	\draw[color=black, ->] (init1) edge node[pos=0.5, above] {$9/10$} (s1);
	
%	\draw[color=black, ->] (s1) edge node[pos=0.5] {$\alpha$} (t1);
	\draw[color=black, ->] (s1) edge[out=215, in=90] node[pos=0.5, above left] {$\beta$} (noeff1);
	
	\draw[color=black, ->] (t1) edge node[pos=0.15, right] {$1/4$} (eff1);
	\draw[color=black, ->] (t1) edge node[pos=0.15, left] {$3/4$} (noeff1);
	
	\draw[color=black, ->] (help) edge node[pos=0.5, above] {$1/2$} (s0);
	\draw[color=black, ->] (help) edge node[pos=0.5, above] {$1/2$} (s1);
\end{tikzpicture}}
		\caption{$\cN$ with the two copies $\cM_0$ and $\cM_1$}
		\label{fig:action2}
	\end{minipage}	
\end{figure}

%%%%%%%%%%%%%%%%%%%%%%%%%%%%%%%%%%%%%%%%%%%%%%%%%%

\section{Checking the existence of PR causes and the PR conditions}

\label{sec:check}

We now turn to algorithms for checking whether a given set $\Cause$ is an SPR or GPR cause for $\Effect$ in $\cM$.
Since the minimality condition (for all $c \in \Cause: \Pr^{\max}_\cM(\neg \Cause \until c) > 0$) of PR causes is verifiable by standard model checking techniques in polynomial time, we concentrate on checking the probability-raising conditions (S) and (G).
In the special case where $\cM$ is a Markov chain, both conditions \eqref{SPR} and \eqref{GPR} can be checked in polynomial time by computing the corresponding probabilities.
Thus, the interesting case is checking the PR conditions in MDPs. 
In case of SPR causality this is closely related to the existence of PR causes and decidable in polynomial time (Section~\ref{sec:check-SPR}), while checking the GPR condition is more complex and polynomially reducible to (the non-solvability of) a quadratic constraint system (Section~\ref{sec:check-GPR}).

We start with the preliminary consideration that for both conditions (S) and (G), it suffices to consider a class of worst-case schedulers, which are minimizing the probability to reach an effect state from every cause state.
For this we transform the MDP in question according to the cause candidate in question.

\begin{nota}
	[MDP with minimal effect probabilities from cause candidates]
	\label{notation:MDP-mit-min-prob-ab-cause-candidates}  
	If $C \subseteq S$ then we write $\wminMDP{\cM}{C}$ for the MDP resulting from $\cM$ by removing all enabled actions and transitions of the states in $C$.
	Instead, $\wminMDP{\cM}{C}$ has a fresh action $\gamma$ which is enabled exactly in the states $s \in C$ with the transition probabilities $P_{\wminMDP{\cM}{C}}(s,\gamma,\eff)=\Pr^{\min}_{\cM,s}(\Diamond \Effect)$ and $P_{\wminMDP{\cM}{C}}(s,\gamma,\noeff)=1{-}\Pr^{\min}_{\cM,s}(\Diamond \Effect)$.
	Here, $\eff$ is a fixed state in $\Effect$ and $\noeff$ a (possibly fresh) terminal state not in $\Effect$. 
	We write $\wminMDP{\cM}{c}$ if $C=\{c\}$ is a singleton.
	\Ende
\end{nota}

	As an example for the model transformation consider the abstract MDP $\cM$ from Figure \ref{fig:M_c-raw} for the singleton set $C = \{c\}$.
	The transformed MDP $\wminMDP{\cM}{c}$ is seen in Figure \ref{fig:M_c-transformed}, where a fresh state $\noeff$ is added.
	Furthermore, all outgoing transitions from $c$ are deleted a replaced by a fresh action $\gamma$ with exactly two transitions corresponding to $P_{\wminMDP{\cM}{c}}(c,\gamma,\eff)=\Pr^{\min}_{\cM,c}(\Diamond \Effect)$ and $P_{\wminMDP{\cM}{c}}(c,\gamma,\noeff)=1{-}\Pr^{\min}_{\cM,c}(\Diamond \Effect)$.
	\begin{figure}[t]
		\centering
		\begin{minipage}{0.43\textwidth}
			\centering
			\resizebox{0.9\textwidth}{!}{
				\begin{tikzpicture}[scale=1,->,>=stealth',auto ,node distance=0.5cm, thick]
	\tikzstyle{round}=[thin,draw=black,circle]
	
	\node[scale=1, state] (init) {$\init$};
	\node[scale=5, ellipse, draw, dotted, below right=0.1 of init] (MDP) {\phantom{text}};
	\node[scale=1, above=-1 of MDP] (MDPtext) {MDP};
	\node[scale=0.2, circle, draw, below=0.5 of init] (a) {\phantom{a}};
	\node[scale=0.2, circle, draw, below right=0.5 of init] (b) {\phantom{a}};
	\node[scale=0.2, below right=2 of b] (c) {\phantom{a}};
	\node[scale=0.2, circle, draw, below=0.5 of c] (e) {\phantom{a}};
	\node[scale=0.2, circle, draw, right=1 of c] (d) {\phantom{a}};
	\node[scale=1, state, below=1 of b] (cause) {$c$};
	\node[scale=1, state, below=0.2 of MDP] (eff1) {$\eff$};
	\node[scale=1, state, right=1 of eff1] (eff2) {$\eff$};
	
	\draw[<-] (init) --++(-0.55,0.55);
	\draw[color=black ,->] (init) edge  (a) ;
	\draw[color=black ,->] (init) edge  (b) ;
	
	\draw[->] (a) --++(0.35,-0.35);
	\draw[->] (a) --++(0.5,0);
	
	\draw[->] (b) --++(0.35,-0.35);
	\draw[->] (b) --++(0.35,0.35);
	
	\draw[<-] (cause) --++(0,01);
	\draw[->] (cause) --++(0.65,0.65);
	\draw[color=black ,->] (cause) edge  (e) ;
	
	\draw[<-] (e) --++(0,0.5);
	\draw[->] (e) edge (eff1);
	\draw[->] (e) edge (d);
	
	\draw[->] (d) edge (eff2);
	\draw[<-] (d) --++(-0.35,0.35);
	\draw[->] (d) --++(0.35,0.35);
	
\end{tikzpicture}
			}
			\caption{MDP $\cM$}
			\label{fig:M_c-raw}
		\end{minipage}
		\hfill
		\begin{minipage}{0.52\textwidth}
			\centering
			\resizebox{0.9\textwidth}{!}{
				\begin{tikzpicture}[scale=1,->,>=stealth',auto ,node distance=0.5cm, thick]
	\tikzstyle{round}=[thin,draw=black,circle]
	
	\node[scale=1, state] (init) {$\init$};
	\node[scale=5, ellipse, draw, dotted, below right=0.1 of init] (MDP) {\phantom{text}};
	\node[scale=1, above=-1 of MDP] (MDPtext) {MDP};
	\node[scale=0.2, circle, draw, below=0.5 of init] (a) {\phantom{a}};
	\node[scale=0.2, circle, draw, below right=0.5 of init] (b) {\phantom{a}};
	\node[scale=0.2, below right=2 of b] (c) {\phantom{a}};
	\node[scale=0.2, circle, draw, below=0.4 of c] (e) {\phantom{a}};
	\node[scale=0.2, circle, draw, right=1 of c] (d) {\phantom{a}};
	\node[scale=1, state, below=1 of b] (cause) {$c$};
	\node[scale=1, state, below=0.2 of MDP] (eff1) {$\eff$};
	\node[scale=1, ellipse, draw, left=1.5 of eff1] (noeff) {$\noeff$};
	\node[scale=1, state, right=1 of eff1] (eff2) {$\eff$};
	
	\draw[<-] (init) --++(-0.55,0.55);
	\draw[color=black ,->] (init) edge  (a) ;
	\draw[color=black ,->] (init) edge  (b) ;
	
	\draw[->] (a) --++(0.35,-0.35);
	\draw[->] (a) --++(0.5,0);
	
	\draw[->] (b) --++(0.35,-0.35);
	\draw[->] (b) --++(0.35,0.35);
	
	\draw[<-] (cause) --++(0,01);
%	\draw[->] (cause) --++(0.65,0.65);
%	\draw[color=black ,->] (cause) edge  (e) ;
	\draw[->, color=red] (cause) edge node[color=black, scale=0.8, pos=0.8, left] {$\Pr^{\min}_c(\lozenge \Eff)$} (eff1);
	\draw[->, color=red] (cause) edge node[color=black, scale=0.8, pos=0.8, left] {$1-\Pr^{\min}_c(\lozenge \Eff)$}(noeff);

	\draw[<-] (e) --++(0,0.5);
	\draw[->] (e) edge (eff1);
	\draw[->] (e) edge (d);
	
	\draw[->] (d) edge (eff2);
	\draw[<-] (d) --++(-0.35,0.35);
	\draw[->] (d) --++(0.35,0.35);
	
\end{tikzpicture}
			}
			\caption{Transformed MDP $\wminMDP{\cM}{c}$}
			\label{fig:M_c-transformed}
		\end{minipage}
	\end{figure}

\begin{lem}
	\label{lemma:wmin-criterion-PR-causes}    
	Let $\cM=(S,\Act,P,\init)$ be an MDP and $\Effect\subseteq S$ a set of terminal states.
	Let $\Cause \subseteq S\setminus \Effect$. Then, $\Cause$ is an SPR cause (resp. a GPR cause) for $\Effect$ in $\cM$ if and only if $\Cause$ is an SPR cause (resp. a GPR cause) for $\Effect$ in $\wminMDP{\cM}{\Cause}$.  
\end{lem}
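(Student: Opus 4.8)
The plan is to exploit that $\cM$ and $\wminMDP{\cM}{\Cause}$ are identical on all states outside $\Cause$. Since $\Cause\cap\Effect=\varnothing$ and all effect states are terminal, every maximal path that meets $\Cause$ meets it first in a uniquely determined state $c$, i.e.\ it satisfies $(\neg\Cause)\Until c$, and the prefix up to this first visit exists in both models with the same probability. I would therefore set up a correspondence between schedulers: given a scheduler $\sched$ for $\cM$, let $\tsched$ for $\wminMDP{\cM}{\Cause}$ copy the decisions of $\sched$ before $\Cause$ is hit (afterwards $\tsched$ has no choice but $\gamma$); conversely, lift a scheduler of $\wminMDP{\cM}{\Cause}$ to $\cM$ by appending, after the first visit to any $c\in\Cause$, an MD-scheduler attaining $\Pr^{\min}_{\cM,c}(\Diamond\Effect)$. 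Under this correspondence the quantities $\Pr^{\sched}_{\cM}((\neg\Cause)\Until c)$, $\Pr^{\sched}_{\cM}(\Diamond\Cause)$ and the uncovered-effect probability $\Pr^{\sched}_{\cM}(\Diamond\Effect\wedge\neg\Diamond\Cause)$ are preserved.

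Next I would decompose the effect probability along the first visited cause state. For every scheduler,
\[
\Pr^{\sched}_{\cM}(\Diamond\Effect)=\Pr^{\sched}_{\cM}(\Diamond\Effect\wedge\neg\Diamond\Cause)+\sum_{c\in\Cause}\Pr^{\sched}_{\cM}((\neg\Cause)\Until c)\cdot\Pr^{\sched}_{\cM}(\Diamond\Effect\mid(\neg\Cause)\Until c),
\]
where each factor $\Pr^{\sched}_{\cM}(\Diamond\Effect\mid(\neg\Cause)\Until c)$ is the effect probability produced by the residual scheduler after the first visit to $c$ and hence lies in the interval $[\Pr^{\min}_{\cM,c}(\Diamond\Effect),\Pr^{\max}_{\cM,c}(\Diamond\Effect)]$, with the extremes attained by MD-schedulers. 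Crucially these conditional factors can be chosen independently across the different cause states (they are governed by disjoint residual histories). In $\wminMDP{\cM}{\Cause}$ the same decomposition holds but each factor is pinned to $\Pr^{\min}_{\cM,c}(\Diamond\Effect)$, so, modulo the shared pre-cause behavior, the schedulers of $\wminMDP{\cM}{\Cause}$ correspond exactly to those $\cM$-schedulers whose post-cause behavior is effect-minimizing.

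With this in hand, both PR conditions become inequalities between a conditional effect probability and the unconditional one, viewed as functions of the free post-cause factors. The GPR case is clean: writing $Q=\Pr^{\sched}_{\cM}(\Diamond\Cause)$ and $P=\sum_c\Pr^{\sched}_{\cM}((\neg\Cause)\Until c)\cdot\Pr^{\sched}_{\cM}(\Diamond\Effect\mid(\neg\Cause)\Until c)$, the gap $\tfrac{P}{Q}-\Pr^{\sched}_{\cM}(\Diamond\Effect)$ equals $P\cdot\tfrac{1-Q}{Q}-\Pr^{\sched}_{\cM}(\Diamond\Effect\wedge\neg\Diamond\Cause)$, which is monotonically increasing in $P$ because $\tfrac{1-Q}{Q}\geq 0$. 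Hence the condition is hardest to satisfy when all post-cause factors are minimal: a violation by any $\cM$-scheduler yields a violation by the effect-minimizing scheduler with the same pre-cause behavior, which lives in $\wminMDP{\cM}{\Cause}$, while the converse inclusion of schedulers is immediate. This gives the GPR equivalence in both directions, and one should invoke Lemma~\ref{lem:convex} only where intermediate values genuinely have to be realized.

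The step I expect to be the main obstacle is the SPR direction. For a tested state $c$ the defining gap is $\Pr^{\sched}_{\cM}(\Diamond\Effect\mid(\neg\Cause)\Until c)-\Pr^{\sched}_{\cM}(\Diamond\Effect)$; minimizing the post-cause factor at $c$ lowers the first term, but the unconditional probability subtracted on the right still carries the contributions of the \emph{other} cause states, so the naive worst case need not be the all-minimizing one. The crux is therefore to show that it nonetheless suffices to consider the schedulers minimizing the effect probability from \emph{every} cause state simultaneously, the behavior hard-wired into $\wminMDP{\cM}{\Cause}$. I would isolate the dependence of the gap on the factor at $c$ (coefficient $1-\Pr^{\sched}_{\cM}((\neg\Cause)\Until c)$) against those at the remaining states and argue the reduction state by state, paying particular attention to the boundary case $\Pr^{\sched}_{\cM}((\neg\Cause)\Until c)=1$ and to schedulers whose reaching probabilities are realized only in a limit rather than by a single MD-scheduler.
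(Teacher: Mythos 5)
Your GPR argument is correct and is, in essence, the paper's own proof (Lemma~\ref{app:lem:criterion-global-prob-raising}): there too the effect probability is decomposed along the first visited cause state, the GPR inequality is rewritten in a form that is monotone in the post-cause effect mass, and an arbitrary scheduler is replaced by one with identical pre-cause behavior that switches to an effect-minimizing MD-scheduler upon hitting $\Cause$, i.e., by a scheduler of $\wminMDP{\cM}{\Cause}$. The two easy directions (from $\cM$ to $\wminMDP{\cM}{\Cause}$, for both (S) and (G)) are also fine, since every scheduler of $\wminMDP{\cM}{\Cause}$ is realized by a scheduler of $\cM$.

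The SPR converse, which you explicitly leave open, is a genuine gap --- and it cannot be closed, because the claim you reduce it to (that the all-minimizing schedulers remain the worst case for (S)) is false for $|\Cause|\geq 2$; your observation that the worst case for state $c$ minimizes after $c$ but \emph{maximizes} after the other cause states is exactly right and is fatal. Concretely, take $\Cause=\{c_1,c_2\}$ and $\Effect=\{\eff\}$, where $\init$ has a single action moving to $c_1$, $c_2$ and a terminal non-effect state with probabilities $\tfrac{1}{4}$, $\tfrac{1}{2}$, $\tfrac{1}{4}$; state $c_1$ has a single action reaching $\eff$ with probability $\tfrac{1}{2}$; state $c_2$ has two actions, reaching $\eff$ with probability $\tfrac{1}{2}$ and $1$, respectively. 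Then $\Pr^{\min}_{\cM,c_1}(\Diamond\Effect)=\Pr^{\min}_{\cM,c_2}(\Diamond\Effect)=\tfrac{1}{2}$, so in $\wminMDP{\cM}{\Cause}$ the unique scheduler yields $\Pr(\Diamond\Effect)=\tfrac{3}{8}$ and both conditional probabilities equal $\tfrac{1}{2}>\tfrac{3}{8}$: $\Cause$ \emph{is} an SPR cause in $\wminMDP{\cM}{\Cause}$. In $\cM$, however, the MD-scheduler $\sched$ choosing the second action in $c_2$ gives $\Pr^{\sched}_{\cM}(\Diamond\Effect)=\tfrac{5}{8}$ while $\Pr^{\sched}_{\cM}(\Diamond\Effect\mid(\neg\Cause)\Until c_1)=\tfrac{1}{2}<\tfrac{5}{8}$, so (S) fails at $c_1$ and $\Cause$ is \emph{not} an SPR cause in $\cM$. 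Thus the SPR half of the statement is itself false for non-singleton cause sets; it is sound for singletons (where there are no ``other'' cause states to boost), which is the only form in which the paper actually uses it, namely via $\wminMDP{\cM}{c}$ in Algorithm~\ref{alg:SPR-check}. You should also know that the paper's own proof of this direction (Lemma~\ref{app:lem:criterion-strict-prob-raising}) breaks at precisely the point you flagged: its step \eqref{SPR-1} asserts $w_c>\Pr^{\usched}_{\cM}(\Diamond\Effect)$ for every scheduler $\usched\in\Usched_c$, justified by SPR-causality in $\wminMDP{\cM}{\Cause}$; but membership in $\Usched_c$ constrains only the behavior after $c$, so such $\usched$ (the scheduler $\sched$ above is one, with $\Pr^{\sched}_{\cM}(\Diamond\Effect)=\tfrac{5}{8}>\tfrac{1}{2}=w_{c_1}$) need not correspond to any scheduler of $\wminMDP{\cM}{\Cause}$, and \eqref{SPR-1} does not follow. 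In short: your ``crux'' is not a missing lemma but a false statement, which no proof attempt --- including the paper's --- can repair for sets of size at least two.
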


Obviously, for all $c \in \Cause: \Pr^{\max}_\cM(\neg \Cause \until c) > 0$ holds for $\Cause$ in $\cM$ if and only if
it holds for $\Cause$ in $\wminMDP{\cM}{\Cause}$.  
Furthermore, it is clear all SPR resp. GPR causes of $\cM$
are also SPR resp. GPR causes in $\wminMDP{\cM}{\Cause}$. 
So, it remains to prove the converse direction.
This will be done
in Lemma \ref{app:lem:criterion-strict-prob-raising} for SPR causes and
in Lemma \ref{app:lem:criterion-global-prob-raising} for GPR causes.

\begin{lem}%
	[Criterion for strict probability-raising causes]
	\label{app:lem:criterion-strict-prob-raising}
	Suppose $\Cause$ is
	an SPR cause for $\Effect$ in $\wminMDP{\cM}{\Cause}$.
	Then, $\Cause$ is an SPR cause for $\Effect$ in $\cM$.
\end{lem}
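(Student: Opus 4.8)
The plan is to exploit the fact that $\wminMDP{\cM}{\Cause}$ changes $\cM$ only by replacing the outgoing behaviour of the cause states, so that any scheduler of $\cM$ and its ``mimicking'' counterpart in $\wminMDP{\cM}{\Cause}$ induce the same distribution on the prefixes up to and including the first visit of a cause state. In particular, for every $c\in\Cause$ the probability $\Pr^{\sched}_{\cM}((\neg\Cause)\Until c)$ is preserved, so a scheduler reaches $c$ with positive probability in $\cM$ iff its counterpart does so in $\wminMDP{\cM}{\Cause}$. First I would fix a scheduler $\sched$ of $\cM$ and a state $c\in\Cause$ with $\Pr^{\sched}_{\cM}((\neg\Cause)\Until c)>0$, and aim to establish the strict inequality $\Pr^{\sched}_{\cM}(\Diamond\Effect\mid(\neg\Cause)\Until c)>\Pr^{\sched}_{\cM}(\Diamond\Effect)$.

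The elementary but crucial observation is that, for any cause state $c'$, the residual scheduler $\residual{\sched}{\pi}$ after a path $\pi$ ending in $c'$ is one particular scheduler started in $c'$, so its effect probability is at least $\Pr^{\min}_{\cM,c'}(\Diamond\Effect)$. Averaging over all such $\pi$ gives $\Pr^{\sched}_{\cM}(\Diamond\Effect\mid(\neg\Cause)\Until c')\geq\Pr^{\min}_{\cM,c'}(\Diamond\Effect)$, and the right-hand side is exactly the value realized by the fresh action $\gamma$ in $\wminMDP{\cM}{\Cause}$. Applied to $c'=c$ this lower-bounds the left-hand side of the desired SPR inequality by $\Pr^{\min}_{\cM,c}(\Diamond\Effect)$, which equals the conditional effect probability of the mimicking scheduler $\tsched$ in $\wminMDP{\cM}{\Cause}$ given $(\neg\Cause)\Until c$.

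I would then invoke the hypothesis that $\Cause$ is an SPR cause in $\wminMDP{\cM}{\Cause}$: since $\tsched$ reaches $c$ with the same positive probability as $\sched$, condition (S) in $\wminMDP{\cM}{\Cause}$ yields $\Pr^{\min}_{\cM,c}(\Diamond\Effect)>\Pr^{\tsched}_{\wminMDP{\cM}{\Cause}}(\Diamond\Effect)$. To close the argument I would decompose the total effect probability over the first visited cause state, writing $\Pr^{\sched}_{\cM}(\Diamond\Effect)$ as the sum of the ``uncovered'' part $\Pr^{\sched}_{\cM}(\Diamond\Effect\wedge\neg\Diamond\Cause)$ and the terms $\Pr^{\sched}_{\cM}((\neg\Cause)\Until c')\cdot\Pr^{\sched}_{\cM}(\Diamond\Effect\mid(\neg\Cause)\Until c')$, and compare this term by term with the analogous decomposition of $\Pr^{\tsched}_{\wminMDP{\cM}{\Cause}}(\Diamond\Effect)$, whose cause-contributions use $\Pr^{\min}_{\cM,c'}(\Diamond\Effect)$ and whose uncovered part and reaching probabilities coincide with those of $\sched$.

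The step I expect to be the genuine obstacle is precisely this final comparison of the two total effect probabilities. Since the $\gamma$-gadget forces the \emph{minimal} effect probability at \emph{every} cause state simultaneously, the transformation can only lower the total effect probability, so the bound coming from the hypothesis controls a quantity that may be strictly smaller than $\Pr^{\sched}_{\cM}(\Diamond\Effect)$; the contributions of the cause states $c'\neq c$, whose residual effect probabilities under $\sched$ may exceed their minimum, push the right-hand side the ``wrong'' way. Overcoming this seems to require isolating the behaviour at the critical state $c$ — intuitively, minimizing the effect from $c$ alone while leaving the remaining cause states free to contribute their true effect mass to $\Pr^{\sched}_{\cM}(\Diamond\Effect)$ — and verifying that the SPR guarantee at $c$ still survives this weaker transformation. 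Making this reduction to the single cause state $c$ precise, and reconciling it with the simultaneous minimization built into $\wminMDP{\cM}{\Cause}$, is where I would concentrate the effort.
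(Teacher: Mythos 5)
Your proposal reproduces, almost step for step, the skeleton of the paper's own proof: fix $c$ and $\sched$ with $\Pr^{\sched}_{\cM}(\psi_c)>0$ where $\psi_c=(\neg\Cause)\Until c$, note that prefix probabilities up to the first visit of a cause state are preserved by the transformation, lower-bound $\Pr^{\sched}_{\cM}(\Diamond\Effect\mid\psi_c)$ by $w_c=\Pr^{\min}_{\cM,c}(\Diamond\Effect)$, and compare with a decomposition of $\Pr^{\sched}_{\cM}(\Diamond\Effect)$ over the first cause state visited. The step you leave open is exactly the step on which the paper's proof pivots: the paper introduces the class $\Usched_c$ of schedulers of $\cM$ that minimize the effect probability after every visit to $c$ but are \emph{unconstrained at the cause states $c'\neq c$}, asserts that the hypothesis (SPR in $\wminMDP{\cM}{\Cause}$) yields $w_c>\Pr^{\usched}_{\cM}(\Diamond\Effect)$ for every $\usched\in\Usched_c$, and then concludes by applying this to the scheduler $\usched_{\sched}$ that copies $\sched$ until $c$ is reached and minimizes afterwards. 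That assertion is precisely the ``reconciliation'' you identified as missing, and the paper gives no argument for it: a scheduler in $\Usched_c$ may exceed the minimal effect probability at the other cause states, hence corresponds to no scheduler of $\wminMDP{\cM}{\Cause}$, and the hypothesis says nothing about it.

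Moreover, your worry is not a technical inconvenience to be engineered around: no argument can close this gap, because the statement is false as written. Consider $\Cause=\{c,c'\}$, $\Effect=\{\eff\}$, where $\init$ has a single action leading to $c$ with probability $\tfrac{1}{100}$, to $c'$ with probability $\tfrac{1}{2}$, and to a terminal state $\noeff$ otherwise; $c$ has a single action leading to $\eff$ with probability $\tfrac{2}{5}$ (else to $\noeff$); and $c'$ has two actions, $\alpha$ leading to $\eff$ with probability $\tfrac{2}{5}$ (else to $\noeff$) and $\beta$ leading to $\eff$ with probability $1$. (Both cause states are reached directly from $\init$, so the minimality condition holds.) Then $w_c=w_{c'}=\tfrac{2}{5}$, so $\wminMDP{\cM}{\Cause}$ admits a unique scheduler, whose effect probability is $\tfrac{1}{100}\cdot\tfrac{2}{5}+\tfrac{1}{2}\cdot\tfrac{2}{5}=0.204<\tfrac{2}{5}$, and the conditional probability given $(\neg\Cause)\Until c$ resp.\ $(\neg\Cause)\Until c'$ equals $\tfrac{2}{5}$ in both cases; hence $\Cause$ is an SPR cause in $\wminMDP{\cM}{\Cause}$. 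In $\cM$, however, the MD-scheduler $\usched$ choosing $\beta$ in $c'$ gives $\Pr^{\usched}_{\cM}(\Diamond\Effect)=\tfrac{1}{100}\cdot\tfrac{2}{5}+\tfrac{1}{2}=0.504$, while $\Pr^{\usched}_{\cM}(\Diamond\Effect\mid(\neg\Cause)\Until c)=\tfrac{2}{5}$, so condition (S) fails at $c$; note that $\usched\in\Usched_c$, so this also refutes the paper's key inequality directly. What survives---and what your ``isolate the critical state $c$'' instinct points to---is the singleton case: if $\Cause=\{c\}$, or more generally if one assumes the SPR condition at $c$ in the per-state transformation $\wminMDP{\cM}{c}$ (where all other cause states keep their actions), then $\Usched_c$ coincides with the schedulers of the transformed MDP, the key inequality is sound, and both your plan and the paper's chain of inequalities go through; this per-state transformation is also what the paper's algorithm for checking the SPR condition actually uses. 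So the gap in your proposal is real, but it is a gap in the lemma itself, not a failure of your technique.
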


\begin{proof}
	We show that $\Cause$ is an SPR cause in $\cM$ by showing (S) for all states in $\Cause$.
	Thus, we fix a state $c\in \Cause$.
	Recall also that we assume the states in $\Effect$ to be terminal.
	Let $\psi_c=(\neg \Cause) \Until c$, $w_c = \Pr_{\cM, c}^{\min}(\Diamond \Eff)$ and
	let $\Usched_c$ denote the set of all schedulers $\usched$
	for $\cM$ such that
	\begin{itemize}
		\item
		$\Pr_{\cM}^{\usched}(\psi_c)>0$ and
		\item
		$\Pr^{\residual{\usched}{\pi}}_{\cM,c}(\Diamond \Effect)=w_c$
		for each finite $\usched$-path $\pi$ from $\init$ to $c$.
	\end{itemize}
	Clearly, $\Pr^{\usched}_{\cM}( \Diamond c \wedge \Diamond \Effect)
	= \Pr_\cM^\usched(\Diamond c)\cdot w_c$
	for $\usched \in \Usched_c$.
	As $\Cause$ is an SPR cause in $\wminMDP{\cM}{\Cause}$
	we have:
	\begin{align}
		\label{SPR-1} 
		w_c \ > \ \Pr^{\usched}_{\cM}(\Diamond \Effect)
		\quad \text{for all schedulers $\usched \in \Usched_c$}. 
	\end{align}
	The task is to prove that (S)
	holds for $c$ and all schedulers of $\cM$ with
	$\Pr^{\sched}_{\cM}(\psi_c) >0$.
	
	Suppose $\sched$ is a scheduler for $\cM$ with
	$\Pr^{\sched}_{\cM}(\psi_c) >0$. Then $\Pr^{\sched}_{\cM}(\psi_c \wedge \Diamond \Eff)
	\geqslant  
	\Pr^{\sched}_{\cM}(\psi_c)\cdot w_c.$
	Moreover, there
	exists a scheduler $\usched =\usched_{\sched} \in \Usched_c$
	with
	\begin{center}
		$\Pr^{\sched}_{\cM}(\psi_c) = \Pr^{\usched}_{\cM}(\psi_c)$
		\ \ and \ \
		$\Pr^{\sched}_{\cM}((\neg \psi_c) \wedge \Diamond \Effect)=
		\Pr_\cM^\usched((\neg \psi_c) \wedge \Diamond \Effect)$.
	\end{center}
	To see this, consider the scheduler $\usched$ that
	behaves as $\sched$ as long as $c$ is not reached.
	As soon as $\usched$ has reached $c$, scheduler $\usched$
	switches mode and
	behaves
	as an MD-scheduler minimizing the
	probability to reach an effect state.
	The SPR condition (S)
	holds for $c$ and $\sched$ if and only if
	\begin{align}
		\label{SPR-2}
		\frac{\Pr^{\sched}_{\cM}( \psi_c \wedge \Diamond \Effect)}
		{\Pr^{\sched}_{\cM}(\psi_c)}
		\ \ > \ \
		\Pr^{\sched}_{\cM}(\Diamond \Effect)
	\end{align}
	Using $
		\Pr^{\sched}_{\cM}(\Diamond \Effect)
		 = 
		\Pr^{\sched}_{\cM}( \psi_c \wedge \Diamond \Effect)
		\ + \
		\Pr^{\sched}_{\cM}(
		(\neg \psi_c) \wedge \Diamond \Effect ),$
	we can equivalently convert condition \eqref{SPR-2} 
	for $c$ and $\sched$ to
	\begin{align}
		\label{SPR-3}
		\Pr^{\sched}_{\cM}(\psi_c \wedge \Diamond \Effect)
		\cdot
		\frac{1-\Pr^{\sched}_{\cM}(\psi_c)}{\Pr^{\sched}_{\cM}(\psi_c)}
		\ \ > \ \
		\Pr^{\sched}_{\cM}((\neg \psi_c) \wedge \Diamond \Effect )
	\end{align}
	The remaining
	task is now to derive \eqref{SPR-3} from \eqref{SPR-1}.	
	Applying \eqref{SPR-1} to scheduler $\usched=\usched_{\sched}$ yields:
	\[
	\begin{array}{lcl}         
		w_c & \ \ > \ \ &
		\Pr^{\usched}_{\cM}(\psi_c \wedge  \Diamond \Effect)
		\ + \
		\Pr^{\usched}_{\cM}((\neg \psi_c) \wedge  \Diamond \Effect)
		\\
		\\[0ex]
		& = &
		\Pr^{\sched}_{\cM}(\psi_c)\cdot w_c
		\ + \
		\Pr^{\sched}_{\cM}((\neg \psi_c) \wedge  \Diamond \Effect).
	\end{array}   
	\]
	We conclude:
	\begin{eqnarray*}
		\Pr^{\sched}_{\cM}(\psi_c \wedge \Diamond \Effect)
		\cdot
		\frac{1-\Pr^{\sched}_{\cM}(\psi_c)}{\Pr^{\sched}_{\cM}(\psi_c)}
		& \ \geqslant \ &
		\Pr^{\sched}_{\cM}(\psi_c) \cdot w_c
		\cdot
		\frac{1-\Pr^{\sched}_{\cM}(\psi_c)}{\Pr^{\sched}_{\cM}(\psi_c)}
		\\
		\\[0ex]
		& = &
		\bigl(1-\Pr^{\sched}_{\cM}(\psi_c)\bigr)\cdot w_c
		\\[1ex]
		& > &
		\Pr^{\sched}_{\cM}((\neg \psi_c) \wedge  \Diamond \Effect).
	\end{eqnarray*}
	Thus, \eqref{SPR-3} holds for $c$ and $\sched$.
\end{proof}

\begin{lem}[Criterion for GPR causes]
	\label{app:lem:criterion-global-prob-raising}
	Suppose $\Cause$ is
	a GPR cause for $\Effect$ in $\wminMDP{\cM}{\Cause}$.
	Then, $\Cause$ is a GPR cause for $\Effect$ in $\cM$.
\end{lem}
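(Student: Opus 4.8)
The plan is to mirror the proof of Lemma~\ref{app:lem:criterion-strict-prob-raising}, but to carry out the comparison for the set $\Cause$ as a whole rather than state by state. Write $\psi = \Diamond \Cause$ and, for a fixed scheduler, abbreviate $p = \Pr(\psi)$, $A = \Pr(\psi \wedge \Diamond \Effect)$ and $B = \Pr((\neg\psi)\wedge \Diamond \Effect)$, so that $\Pr(\Diamond \Effect) = A + B$. A short computation (exactly as in the passage leading to \eqref{SPR-3}) shows that for $0 < p \leqslant 1$ the GPR condition $\Pr(\Diamond \Effect \mid \Diamond \Cause) > \Pr(\Diamond \Effect)$ is equivalent to the cross-multiplied inequality
\begin{equation}
	A \cdot \frac{1-p}{p} \ > \ B. \tag{$\dagger$}
\end{equation}
So it suffices to establish $(\dagger)$ for every scheduler $\sched$ of $\cM$ with $\Pr^{\sched}_{\cM}(\Diamond \Cause) > 0$, given that $(\dagger)$ holds for all schedulers of $\wminMDP{\cM}{\Cause}$ that reach $\Cause$ with positive probability.

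First I would fix such a scheduler $\sched$ for $\cM$ and transfer it to the modified MDP. Since $\cM$ and $\wminMDP{\cM}{\Cause}$ coincide on all states outside $\Cause$ and differ only at cause states, the decisions of $\sched$ on finite paths that have not yet visited $\Cause$ can be copied verbatim to $\wminMDP{\cM}{\Cause}$; once a cause state is reached the only enabled action there is $\gamma$. Let $\tsched$ denote the resulting scheduler of $\wminMDP{\cM}{\Cause}$. Because the two schedulers agree before $\Cause$ is reached, this gives $\Pr^{\tsched}(\Diamond \Cause) = \Pr^{\sched}_{\cM}(\Diamond \Cause) = p$, the identical value $B = \Pr^{\sched}_{\cM}((\neg\psi)\wedge \Diamond \Effect)$ (paths that avoid $\Cause$ run identically in both models and see the same effect states), and the same first-visit probabilities $\Pr((\neg \Cause)\Until c)$ for every $c \in \Cause$.

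The decisive comparison concerns $A$. In $\wminMDP{\cM}{\Cause}$ the action $\gamma$ reaches $\Effect$ from a cause state $c$ with probability exactly $w_c = \Pr^{\min}_{\cM,c}(\Diamond \Effect)$, so decomposing over the first cause state visited yields $A' := \Pr^{\tsched}(\psi \wedge \Diamond \Effect) = \sum_{c\in\Cause} \Pr^{\sched}_{\cM}((\neg\Cause)\Until c)\cdot w_c$. Splitting the same way in $\cM$ and bounding each residual effect probability from $c$ below by its minimum $w_c$ gives $A = \Pr^{\sched}_{\cM}(\psi \wedge \Diamond \Effect) \geqslant \sum_{c\in\Cause}\Pr^{\sched}_{\cM}((\neg\Cause)\Until c)\cdot w_c = A'$. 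Now $\tsched$ reaches $\Cause$ with probability $p > 0$, so the GPR hypothesis applied to $\tsched$ gives $A' \cdot \frac{1-p}{p} > B$; since $B \geqslant 0$ and $A' \geqslant 0$ this already forces $\frac{1-p}{p} > 0$, hence $p < 1$ and the factor is nonnegative. Combining with $A \geqslant A'$ I obtain $A \cdot \frac{1-p}{p} \geqslant A' \cdot \frac{1-p}{p} > B$, which is $(\dagger)$ for $\sched$, as required.

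The only genuinely delicate point is the interplay between the monotonicity $A \geqslant A'$ and the sign of the factor $\frac{1-p}{p}$: passing from $A'$ up to $A$ is harmless precisely because $(\dagger)$ for $\tsched$ already rules out $p = 1$ (otherwise the left-hand side would vanish while $B \geqslant 0$). Everything else is the routine translation between $\cM$ and its $\gamma$-modification, justified exactly as the switching construction in the proof of Lemma~\ref{app:lem:criterion-strict-prob-raising}.
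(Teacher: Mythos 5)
Your proposal is correct and follows essentially the same route as the paper's proof: cross-multiplying the GPR condition into the form $p\cdot B < (1-p)\cdot A$, comparing an arbitrary scheduler $\sched$ of $\cM$ with the scheduler that copies $\sched$ until $\Cause$ is reached and then minimizes the effect probability (equivalently, your $\tsched$ on $\wminMDP{\cM}{\Cause}$), and exploiting that $p$, $B$ and the first-visit probabilities are preserved while $A \geqslant A'$ by $w_\pi^{\sched} \geqslant w_c$. The only cosmetic difference is that the paper keeps the comparison inside $\cM$ via the scheduler class $\Ssched_{>0,\min}$ and avoids dividing by $p$, whereas you divide by $p$ and add the (harmless, not strictly needed) observation that the hypothesis forces $p<1$.
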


\begin{proof}
	From the assumption that $\Cause$ is
	a GPR cause for $\Effect$ in $\wminMDP{\cM}{\Cause}$, we can conclude that 
	the GPR condition \eqref{GPR} holds
	for all schedulers $\sched$
	that satisfy
	\[\Pr_{\cM}^{\sched}(\Diamond \Cause)>0 \qquad \text{and} \qquad\Pr^{\residual{\sched}{\pi}}_{\cM,c}(\Diamond \Effect)
	\ \ = \ \ \Pr^{\min}_{\cM,c}(\Diamond \Effect)
	\]
	for each finite $\sched$-path from the initial state
	$\init$ to a state $c \in \Cause$.
	To prove that \eqref{GPR}
	holds
	for all schedulers $\sched$ that satisfy $\Pr_{\cM}^{\sched}(\Diamond \Cause)>0$, we introduce the following notation:
	We write
	\begin{itemize}
		\item
		$\Ssched_{>0}$ for the set of all schedulers $\sched$ such that
		$\Pr_{\cM}^{\sched}(\Diamond \Cause)>0$,
		\item
		$\Ssched_{>0,\min}$ for the set of all schedulers
		with $\Pr_{\cM}^{\sched}(\Diamond \Cause)>0$ such that
		\[
		\Pr^{\residual{\sched}{\pi}}_{\cM,c}(\Diamond \Effect)
		\ \ = \ \ \Pr^{\min}_{\cM,c}(\Diamond \Effect)
		\]
		for each finite $\sched$-path from the initial state
		$\init$ to a state $c \in \Cause$.
	\end{itemize} 
	It now suffices to show
	that for each scheduler $\sched \in \Ssched_{>0}$ there exists a scheduler
	$\sched' \in \Ssched_{>0,\min}$ such that
	if \eqref{GPR} holds for $\sched'$ then \eqref{GPR} holds for $\sched$.
	So, let $\sched\in\Ssched_{>0}$. 
	
	For $c\in \Cause$, let $\Pi_c$ denote the set of
	finite paths
	$\pi=s_0 \, \alpha_0\,  s_1 \, \alpha_1 \ldots \alpha_{n-1}\, s_n$
	with $s_0=\init$, $s_n=c$ and
	$\{s_0,\ldots,s_{n-1} \}\cap (\Cause \cup \Effect) =\varnothing$.
	Let
	\[
	w_{\pi}^{\sched} \ = \ \Pr^{\residual{\sched}{\pi}}_{\cM,c}(\Diamond \Effect)
	\]
	Furthermore, let $p_{\pi}^{\sched}$ denote the probability for
	(the cylinder set of) $\pi$ under scheduler $\sched$.
	Then
	\[
	\Pr^{\sched}_{\cM}((\neg \Cause)\Until c)
	\ = \ \sum_{\pi \in \Pi_c} p_{\pi}^{\sched}.
	\]
	Moreover:
	\[
	\Pr^{\sched}_{\cM}(\Diamond \Effect)
	\ \ = \ \ 
	\Pr^{\sched}_{\cM}(\neg \Cause \Until \Effect) \ + 
	\sum_{c\in \Cause} \sum_{\pi \in \Pi_c}
	\!\!\! p_{\pi}^{\sched} \cdot w_{\pi}^{\sched}\text{ and}
	\]
	\[
	\Pr^{\sched}_{\cM}(\ \Diamond \Effect \ | \ \Diamond \Cause \ )
	\ \ = \ \
	\frac{1}{\Pr_{\cM}^{\sched}(\Diamond \Cause)} \cdot 
	\sum_{c\in \Cause} \
	\sum_{\pi\in \Pi_c} p_{\pi}^{\sched} \cdot w_{\pi}^{\sched}      
	\]
	Thus, the condition \eqref{GPR}
	holds for the scheduler
	$\sched \in \Ssched_{>0}$ if and only if
	\begin{eqnarray*}
		\Pr^{\sched}_{\cM}(\neg \Cause \Until \Effect)
		+
		\sum_{c\in \Cause}\sum_{\pi \in \Pi_c} p_{\pi}^{\sched}\cdot w_{\pi}^{\sched}
		&  <  &
		\frac{1}{\Pr_{\cM}^{\sched}(\Diamond \Cause)} \cdot \!\!\!
		\sum_{c\in \Cause} \sum_{\pi\in \Pi_c} p_{\pi}^{\sched} \cdot w_{\pi}^{\sched}.
	\end{eqnarray*}
	The latter is equivalent to:
	\begin{align*}
		\Pr_{\cM}^{\sched}(\Diamond \Cause) \! \cdot \! \Pr^{\sched}_{\cM}(\neg \Cause \Until \Effect)
		+ 
		\Pr_{\cM}^{\sched}(\Diamond \Cause) \! \cdot \!\!\!\!\!\!
		\sum_{c\in \Cause}  \sum_{\pi \in \Pi_c} \!\!\! p_{\pi}^{\sched}\cdot w_{\pi}^{\sched} 
		< \!\!\!\!\!\!
		\sum_{c\in \Cause}\sum_{\pi \in \Pi_c} \!\!\! p_{\pi}^{\sched}\cdot w_{\pi}^{\sched},
	\end{align*}
	which again is equivalent to:
	\begin{align}
		\label{GPR-2}
		\Pr_{\cM}^{\sched}(\Diamond \Cause) \cdot \Pr^{\sched}_{\cM}(\neg \Cause \Until \Effect) 
		< 
		\bigl(1-\Pr_{\cM}^{\sched}(\Diamond \Cause)\bigr) \cdot \!\!\!
		\sum_{c\in \Cause} \
		\sum_{\pi \in \Pi_c} p_{\pi}^{\sched}\cdot w_{\pi}^{\sched}.
	\end{align}
	Pick an MD-scheduler
	$\tsched$ that minimizes the probability to reach $\Effect$
	from every state.
	In particular, $w_c = w_\pi^{\tsched} \leqslant w_\pi^{\sched}$
	for every state $c\in \Cause$ and every path $\pi\in \Pi_c$
	(recall that $w_c =  \Pr^{\min}_{\cM,c}(\Diamond \Effect)$).
	Moreover, the scheduler $\sched$ can be transformed into a
	scheduler $\sched_{\tsched}\in \Ssched_{>0,\min}$
	that is ``equivalent'' to $\sched$ with respect to the
	global probability-raising condition.
	More concretely, let $\sched_\tsched$ denote the scheduler 
	that behaves as $\sched$ as long as $\sched$ has not yet visited a state
	in $\Cause$ and behaves as $\tsched$ as soon as a state in $\Cause$ has been
	reached. Thus,
	$p_{\pi}^{\sched}=p_{\pi}^{\sched_{\tsched}}$
	and
	$\residual{\sched_{\tsched}}{\pi}=\tsched$
	for each $\pi\in \Pi_c$.
	This yields that the probability to reach $c\in \Cause$ from
	$\init$ is the same under
	$\sched$ and $\sched_{\tsched}$, i.e.,
	$\Pr^{\sched}_{\cM}(\Diamond c)=
	\Pr^{\sched_{\tsched}}_{\cM}(\Diamond c)$.
	Therefore $\Pr^{\sched}_{\cM}(\Diamond \Cause)=\Pr^{\sched_{\tsched}}_{\cM}(\Diamond \Cause)$.
	The latter implies that $\sched_{\tsched}\in \Ssched_{>0}$,
	and hence $\sched_{\tsched}\in \Ssched_{>0,\min}$.
	Moreover, $\sched$ and $\sched_{\tsched}$ reach $\Effect$ without visiting $\Cause$ with the same
	probability, i.e., $\Pr^{\sched}_{\cM}(\neg \Cause \Until \Effect)=\Pr^{\sched_{\tsched}}_{\cM}(\neg \Cause \Until \Effect)$.
	But this yields: if \eqref{GPR-2} holds for $\sched_{\tsched}$ then
	\eqref{GPR-2} holds for $\sched$.
	As \eqref{GPR-2} holds for $\sched_{\tsched}$ by assumption, this completes the proof.
\end{proof}

\subsection{Checking the strict probability-raising condition and the existence of causes}

\label{sec:check-SPR}
\label{sec:SPR_check}
\label{sec:check-SPR-condition}

The basis of both checking the existence of PR causes or checking the SPR condition (S) for a given cause candidate is the following polynomial time algorithm to check whether (S)
holds in a given state $c$ of $\cM$
for all schedulers $\sched$ with $\Pr^{\sched}_{\cM}(\Diamond c)>0$:

\begin{algo}
	\label{alg:SPR-check} \ 
	\begin{description}
		\item [Input] state $c \in S$, set of terminal states $\Eff \subseteq S$
		\item [Task] Decide whether \eqref{SPR} holds in $c$ for all schedulers $\sched$.
	\end{description}
	\begin{enumerate}
		\item [0.] Compute $q_s =\Pr^{\max}_{\wminMDP{\cM}{c},s}(\Diamond \Effect)$ and $w_c = \Pr^{\min}_{\cM,c}(\Diamond \Eff)$ for each state $s$ in $\wminMDP{\cM}{c}$ .
		
		\item [1.] If $q_{\init} < w_c$, then return
		``yes, \eqref{SPR} holds for $c$''.
		
		\item [2.] If $q_{\init} > w_c$, then  return
		``no, \eqref{SPR} does not hold for $c$''.
		
		\item [3.]
		Suppose $q_{\init} = w_c$. Let
		$A(s) = \{\alpha \in \Act_{\wminMDP{\cM}{c}}(s) \mid q_s = \sum_{t\in \wminMDP{S}{c}} P_{\wminMDP{\cM}{c}}(s,\alpha,t)\cdot q_t\}$
		for each non-terminal state $s$.
		Let $\wminMDPmax{\cM}{c}$
		denote the sub-MDP of $\wminMDP{\cM}{c}$
		induced by the state-action pairs $(s,\alpha)$ where
		$\alpha \in A(s)$.
		\begin{enumerate}
			\item [3.1]
			If $c$ is reachable from $\init$ in
			$\wminMDPmax{\cM}{c}$,
			then return
			\mbox{``no,  \eqref{SPR} does not hold for $c$''.}
			
			\item [3.2]
			If $c$ is not reachable from $\init$
			in $\wminMDPmax{\cM}{c}$,
			then return ``yes,  \eqref{SPR} holds for $c$''.
		\end{enumerate}
	\end{enumerate}  
\end{algo}

As the construction of the MDP $\wminMDP{\cM}{c}$ suggests, the two values compared by the algorithm are instances of worst-case scheduler.
On one hand, the probability to reach $\Eff$ starting in $c$ is minimized, while it is maximized if $c$ was not seen yet.
If in such a scenario we have case 1. $q_\init < w_c$ then $c$ obviously satisfies \eqref{SPR}.
In the case 2. $q_\init > w_c$ we can build a scheduler which refuses \eqref{SPR} for $c$.
Lastly, in the corner case 3. $q_\init = w_c$ a treatment by a reachability analysis is needed, as seen in the following Example \ref{ex:M_c_max}.

\begin{exa}
	\label{ex:M_c_max}
	For the transformation to $\wminMDPmax{\cM}{c}$ consider $\wminMDP{\cM}{c}$ from Figure \ref{fig:M_c_max-raw}.
	For $\Cause = \{c\}$ we are in case 3. of Algorithm \ref{alg:SPR-check} as $q_\init = \Pr^{\max}_{\wminMDP{\cM}{c},s}(\Diamond \Effect) = 1/4 = \Pr^{\min}_{\cM,c}(\Diamond \Eff) = w_c$.
	The only non-deterministic choice is in the state $\init$.
	We have $A(\init) = \{\alpha\}$ since $\alpha$ is the only maximizing action for $\lozenge \eff$ in $\init$.
	Thus, in the resulting MDP $\wminMDPmax{\cM}{c}$, depicted in Figure \ref{fig:M_c_max-transformed}, all other actions in $\init$ are deleted.
	We are actually in case 3.2 as $c$ is not reachable from the initial state in $\wminMDPmax{\cM}{c}$ which means that \eqref{SPR} holds for $c$.
	\Ende
	\begin{figure}[t]
		\centering
		\begin{minipage}{0.4\textwidth}
			\centering
			\resizebox{\textwidth}{!}{
				\begin{tikzpicture}
	[scale=1,->,>=stealth',auto ,node distance=0.5cm, thick]
	\tikzstyle{round}=[thin,draw=black,circle]
	
	\node[scale=1, state] (init) {$\init$};
	\node[scale=1, state, below=1.5 of init] (noeff) {$\noeff$};
	\node[scale=1, state, left = 1.5 of noeff] (s) {$s$};
	\node[scale=1, state, right=1.5 of noeff] (c) {$c$};
	\node[scale=1, state, below=1 of noeff] (eff) {$\eff$};
	
	\draw[<-] (init) --++(-0.55,0.55);
	
	\draw[color=black ,->] (init) edge  node[pos=0.3 ,anchor=center] (a1) {} node[pos=0.5,above left] {$1/2$} (s) ;
	\draw[color=black,->] (init) edge[out=250, in=110] node[pos=0.5 ,anchor=center] (a2) {} node[pos=0.7,left] {$1/2$} (noeff) ;
	\draw[color=black,->] (init) edge[out=290, in=70] node[pos=0.5 ,anchor=center] (b1) {} node [pos=0.7,right] {$1/2$} (noeff) ;
	\draw[color=black,->] (init) edge node[pos=0.3 ,anchor=center] (b2) {} node [pos=0.5,above right] {$1/2$} (c) ;
	
	\draw[color=black,->] (s) edge node[pos=0.5,above] {$1/2$} (noeff) ;
	\draw[color=black,->] (s) edge node[pos=0.6,above] {$1/2$} (eff) ;
	
	\draw[color=black,->] (c) edge node [pos=0.5,above] {$3/4$} (noeff) ;
	\draw[color=black,->] (c) edge node [pos=0.6,above] {$1/4$} (eff) ;
	
	\draw[color=black , very thick, -] (a1.center) edge [bend right=45] node [pos=0.3] {$\alpha$} (a2.center);
	\draw[color=black , very thick, -] (b1.center) edge [bend right=45] node [pos=0] {$\beta$} (b2.center);
\end{tikzpicture}
			}
			\caption{Example MDP $\wminMDP{\cM}{c}$}
			\label{fig:M_c_max-raw}
		\end{minipage}
		\hspace*{40pt}
		\begin{minipage}{0.4\textwidth}
			\centering
			\resizebox{\textwidth}{!}{
				\begin{tikzpicture}
	[scale=1,->,>=stealth',auto ,node distance=0.5cm, thick]
	\tikzstyle{round}=[thin,draw=black,circle]
	
	\node[scale=1, state] (init) {$\init$};
	\node[scale=1, state, below=1.5 of init] (noeff) {$\noeff$};
	\node[scale=1, state, left = 1.5 of noeff] (s) {$s$};
	\node[scale=1, state, right=1.5 of noeff] (c) {$c$};
	\node[scale=1, state, below=1 of noeff] (eff) {$\eff$};
	
	\draw[<-] (init) --++(-0.55,0.55);
	
	\draw[color=black ,->] (init) edge  node[pos=0.3 ,anchor=center] (a1) {} node[pos=0.5,above left] {$1/2$} (s) ;
	\draw[color=black,->] (init) edge[out=250, in=110] node[pos=0.5 ,anchor=center] (a2) {} node[pos=0.7,left] {$1/2$} (noeff) ;
	%		\draw[color=black,->] (init) edge[out=290, in=70] node[pos=0.5 ,anchor=center] (b1) {} node [pos=0.7,right] {$\frac{1}{2}$} (noeff) ;
	%		\draw[color=black,->] (init) edge node[pos=0.3 ,anchor=center] (b2) {} node [pos=0.5,above right] {$\frac{1}{2}$} (c) ;
	
	\draw[color=black,->] (s) edge node[pos=0.5,above] {$1/2$} (noeff) ;
	\draw[color=black,->] (s) edge node[pos=0.6,above] {$1/2$} (eff) ;
	
	\draw[color=black,->] (c) edge node [pos=0.5,above] {$3/4$} (noeff) ;
	\draw[color=black,->] (c) edge node [pos=0.6,above] {$1/4$} (eff) ;
	
	\draw[color=black , very thick, -] (a1.center) edge [bend right=45] node [pos=0.3] {$\alpha$} (a2.center);
	%		\draw[color=black , very thick, -] (b1.center) edge [bend right=45] node [pos=0] {$\beta$} (b2.center);
\end{tikzpicture}
			}
			\caption{Transformed MDP $\wminMDPmax{\cM}{c}$}
			\label{fig:M_c_max-transformed}
		\end{minipage}
	\end{figure}
\end{exa}

\begin{lem} \label{soundness-SPR-algo}
	Algorithm \ref{alg:SPR-check} is sound and runs in polynomial time.
\end{lem}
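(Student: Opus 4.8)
The plan is to prove correctness by transporting the question into the transformed MDP $\wminMDP{\cM}{c}$, where it becomes a comparison of two reachability values, and then reading off the three cases. First I would invoke Lemma~\ref{lemma:wmin-criterion-PR-causes} for the singleton $\{c\}$ to reduce ``\eqref{SPR} holds at $c$ in $\cM$'' to the same statement in $\wminMDP{\cM}{c}$. The decisive observation is that in $\wminMDP{\cM}{c}$ the unique action at $c$ moves to a terminal state, reaching $\eff$ with probability exactly $w_c=\Pr^{\min}_{\cM,c}(\Diamond\Effect)$; since every effect state is terminal, a path witnessing $\Diamond c$ has not yet visited $\Effect$, so for \emph{every} scheduler $\sched$ with $\Pr^{\sched}(\Diamond c)>0$ one has $\Pr^{\sched}_{\wminMDP{\cM}{c}}(\Diamond\Effect\mid\Diamond c)=w_c$, independently of $\sched$. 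Hence \eqref{SPR} holds at $c$ if and only if $\Pr^{\sched}_{\wminMDP{\cM}{c}}(\Diamond\Effect)<w_c$ for all schedulers reaching $c$, where $q_{\init}=\Pr^{\max}_{\wminMDP{\cM}{c}}(\Diamond\Effect)$ is the unconstrained maximum computed in step~0.

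With this reformulation the first two cases are short. In case~1 ($q_{\init}<w_c$) every scheduler satisfies $\Pr^{\sched}(\Diamond\Effect)\le q_{\init}<w_c$, so \eqref{SPR} holds and the output ``yes'' is correct. In case~2 ($q_{\init}>w_c$) I would exhibit a scheduler reaching $c$ with effect probability above $w_c$. Let $\tsched$ be an MD-scheduler attaining $q_{\init}$ and let $\vsched$ be any scheduler with $\Pr^{\vsched}(\Diamond c)>0$, which exists since $c$ is reachable and the transformation leaves its incoming transitions intact. Passing to the MEC-quotient (Lemma~\ref{lem:MEC-contraction-terminal-states-final}), in which $c$ survives as a state because its only action leaves to terminal states and hence $c$ lies in no end component, I would form the convex combination $\usched=\lambda\tsched\oplus(1{-}\lambda)\vsched$ of Lemma~\ref{lem:convex}. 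Its expected frequency of $c$ equals $\lambda\,\freq{\tsched}{c}+(1{-}\lambda)\freq{\vsched}{c}>0$ for $\lambda<1$, so $\usched$ reaches $c$, while its effect probability is $\lambda q_{\init}+(1{-}\lambda)\Pr^{\vsched}(\Diamond\Effect)$, which exceeds $w_c$ for $\lambda$ close to $1$; thus \eqref{SPR} is violated and ``no'' is correct.

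The heart of the argument is case~3 ($q_{\init}=w_c$), where \eqref{SPR} fails exactly when some optimal scheduler (one attaining $q_{\init}$) reaches $c$, and I would show this is equivalent to reachability of $c$ from $\init$ in $\wminMDPmax{\cM}{c}$. Here I rely on the standard fact that any scheduler attaining $q_{\init}$ uses only maximizing actions $\alpha\in A(s)$ along every positive-probability history, since choosing a non-maximizing action at a positively reached state strictly lowers the reachability value below $q_{\init}$. For step~3.2, if $c$ is unreachable in $\wminMDPmax{\cM}{c}$ then every finite path to $c$ contains a non-maximizing action, so any scheduler reaching $c$ is strictly suboptimal, giving $\Pr^{\sched}(\Diamond\Effect)<q_{\init}=w_c$; hence \eqref{SPR} holds and ``yes'' is correct. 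For step~3.1, if $c$ is reachable in $\wminMDPmax{\cM}{c}$ I would fix a finite path $\init=s_0\,\alpha_0\ldots s_n=c$ of maximizing actions (note the action at $c$ is maximizing, as $q_c=w_c$) and define $\sched$ to follow this path and to switch to an optimal MD-scheduler on any deviating history. A backward induction along the path, using $\alpha_i\in A(s_i)$ at each step and optimality on deviations, yields $\Pr^{\sched}(\Diamond\Effect)=q_{\init}=w_c$ while $\Pr^{\sched}(\Diamond c)>0$; then \eqref{SPR} would require $w_c>w_c$ and therefore fails, so ``no'' is correct.

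Polynomial running time follows because every ingredient is polynomial: computing the values $q_s$ and $w_c$ together with optimal MD-schedulers (recalled as polynomial in the preliminaries), building $\wminMDP{\cM}{c}$ and the maximizing sub-MDP $\wminMDPmax{\cM}{c}$, and a reachability test for $c$. I expect the \emph{main obstacle} to be the boundary case~3 and, within it, the precise equivalence between reachability of $c$ in $\wminMDPmax{\cM}{c}$ and the existence of an optimal scheduler visiting $c$: the subtle point is that using only maximizing actions is necessary but \emph{not} sufficient for optimality, since end components can trap probability mass. Consequently the witness in step~3.1 must be built along a \emph{finite} maximizing path that is then completed optimally (avoiding any trap), and the necessity argument in step~3.2 must appeal to the one-step-loss characterization of optimality rather than merely to membership in the maximizing sub-MDP.
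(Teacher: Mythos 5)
Your proposal is correct and follows essentially the same route as the paper's proof: reduce to $\wminMDP{\cM}{c}$ via Lemma~\ref{lemma:wmin-criterion-PR-causes}, observe that the conditional effect probability after $\Diamond c$ is the scheduler-independent constant $w_c$, settle case~1 by monotonicity, case~2 by a convex combination (Lemma~\ref{lem:convex}) of a maximizing scheduler with one reaching $c$, and case~3 by reachability of $c$ in the maximizing sub-MDP $\wminMDPmax{\cM}{c}$. The only differences are cosmetic: you collapse end components via the global MEC-quotient where the paper uses a sub-MDP induced by the two schedulers, and your case~3.1 witness (finite maximizing path completed by an optimal MD-scheduler, with the end-component caveat made explicit) spells out what the paper's proof asserts directly by picking an MD-scheduler attaining all values $q_s$ while reaching $c$.
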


\begin{proof} 
	First, we show the soundness of Algorithm \ref{alg:SPR-check}.
	By the virtue of Lemma \ref{lemma:wmin-criterion-PR-causes} stating the soundness of the transformation $\cM$ to $\wminMDP{\cM}{c}$ it
	suffices to show that Algorithm \ref{alg:SPR-check} returns the correct
	answers ``yes'' or ``no'' when the task is to check whether the
	singleton $\Cause = \{c\}$ is an SPR cause in $\cN=\wminMDP{\cM}{c}$.
	Recall the notation $q_s =\Pr^{\max}_{\wminMDP{\cM}{c},s}(\Diamond \Effect)$.
	We abbreviate $q=q_\init$.
	Note that $(\neg \Cause) \Until c$ is equivalent to $\Diamond c$ as $c \in \Cause$.
	
	For every scheduler $\sched$ of $\cN$ we have
	$\Pr^{\sched}_{\cN,c}(\Diamond \Effect)=w_c$.
	Thus, $\Pr^{\sched}_{\cN}( \Diamond \Effect \ | \ \Diamond c)=w_c$
	if $\sched$ is a scheduler of $\cN$ with $\Pr^\sched_\cN(\Diamond c)>0$.
	
	Algorithm \ref{alg:SPR-check} correctly answers ``no'' (case 2 or 3.1)
	if $w_c=0$.
	Suppose that $w_c>0$. 
	Thus, the SPR condition for $c$ reduces to $\Pr^\sched_\cN(\Diamond \Effect) < w_c$ for all schedulers $\sched$ of $\cN$ with $\Pr^\sched_\cN(\Diamond c)>0$.
	\begin{enumerate}
		\item[1.] 
		of Algorithm \ref{alg:SPR-check} (i.e., if $q < w_c$), the answer ``yes'' is sound
		as $\Pr^{\max}_{\cN}(\Diamond \Effect) = q < w_c$.
		
		\item[2.] 
		(i.e., if $q > w_c$)
		Let $\tsched$ be an MD-scheduler 
		with $\Pr^{\tsched}_{\cN,s}(\Diamond \Effect) = q_s$
		for each state $s$ and pick an MD-scheduler $\sched$ with
		$\Pr^{\sched}_{\cN}(\Diamond c)>0$.
		It is no restriction to suppose that
		$\tsched$ and $\sched$ realize the same end components of $\cN$.
		(Note that if state $s$ belongs to an end component
		that is realized by $\tsched$ then $s$ contained in a bottom strongly
		connected component of the Markov chain induced by $\tsched$.
		But then
		$q_s=0$, i.e., no effect state is reachable from $s$ in $\cN$.
		Recall that all effect states are terminal and thus not contained in
		end components.
		But then we can safely assume that $\tsched$ and $\sched$ schedule
		the same action for state $s$.)
		Let $\lambda$ be any real number with $1 > \lambda > \frac{w_c}{q}$
		and let $\cK$ denote the sub-MDP of $\cN$ with state space $S$
		where the enabled
		actions of state $s$ are the actions scheduled for $s$ under one of the
		schedulers $\tsched$ or $\sched$.
		Let now $\usched$ be the MR-scheduler
		$\lambda \tsched \oplus (1{-}\lambda)\sched$
		defined as in Lemma \ref{lem:convex}
		for the EC-free MDP resulting from $\cK$ when collapsing
		$\cK$'s end components into a single terminal state.
		For the states belonging to an end component of $\cK$,
		$\usched$ schedules the same action as $\tsched$ and $\sched$.
		Then, $\Pr^{\usched}_{\cN}(\Diamond t)=\lambda \Pr^{\tsched}_{\cN}(\Diamond t)+(1{-}\lambda) \Pr^{\sched}_{\cN}(\Diamond t)$
		for all terminal states $t$ of $\cN$ and $t=c$.
		Hence:
		\begin{center}
			$\Pr_{\cN}^{\usched}(\Diamond c)
			\ \ \geqslant \ \
			(1{-}\lambda)\cdot \Pr^{\sched}_{\cM}(\Diamond c) \ \ > \ \ 0$
			and
			\\[1ex]
			$
			\Pr^{\usched}_{\cN}(\Diamond \Effect)
			\ \ \geqslant \ \
			\lambda \cdot \Pr^{\tsched}_{\cM}(\Diamond \Effect)
			\ \ = \ \ \lambda \cdot q \ \ > \ \ w_c
			$
		\end{center}
		Thus, scheduler $\usched$ is a witness
		why \eqref{SPR} does not hold for $c$.
		
		\item[3.1] 
		Pick an MD-scheduler $\sched$ of $\wminMDPmax{\cM}{c}$ such that
		$c$ is reachable from $\init$ via $\sched$ and
		$\Pr^{\sched}_{\cN,s}(\Diamond \Effect)=q_s$ for all states.
		Thus, \eqref{SPR} does not hold for $c$ and scheduler $\sched$.
		\item[3.2]	
		We have $\Pr^{\sched}_{\cN}(\Diamond c)=0$ for all schedulers $\sched$ for $\cN$
		with $\Pr^{\sched}_{\cN}(\Diamond \Effect)=q=w_c$.
		But then $\Pr^{\sched}_{\cN}(\Diamond c)>0$ implies
		$\Pr^{\sched}_{\cN}(\Diamond \Effect) < w_c$ 
		as required in \eqref{SPR}.
	\end{enumerate}	
	The polynomial runtime of Algorithm \ref{alg:SPR-check}
	follows from the fact that minimal and maximal reachability probabilities and hence also the MDPs $\cN=\wminMDP{\cM}{c}$ and its sub-MDP $\wminMDPmax{\cM}{c}$ can be computed in polynomial time.
\end{proof}

%%%%%%%%%%%%%%%%%%%%%%%%%%%%%%%%%%%%%%%%%%%%%%%%%%%%%%%%%%%%%%%%%%%%%%%%

By applying Algorithm \ref{alg:SPR-check} to all states $c \in \Cause$ and standard algorithms to check the existence of a path satisfying $(\neg \Cause) \Until c$ for every state $c\in \Cause$, we obtain:

\begin{thm}[Checking SPR causes]
	\label{thm:SPR-check-complexity}
	The problem ``given $\cM$, $\Cause$ and $\Effect$, check whether $\Cause$ is a SPR cause for $\Effect$ in $\cM$'' is solvable in polynomial-time.
\end{thm}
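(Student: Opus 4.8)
The plan is to reduce the set-based decision to a collection of single-state checks, each handled by Algorithm~\ref{alg:SPR-check}, after first normalizing the MDP so that the conditioning event $(\neg\Cause)\Until c$ simplifies. First I would invoke Lemma~\ref{lemma:wmin-criterion-PR-causes}: $\Cause$ is an SPR cause for $\Effect$ in $\cM$ if and only if it is an SPR cause in $\cN \eqdef \wminMDP{\cM}{\Cause}$. Building $\cN$ only requires the minimal reachability values $w_c = \Pr^{\min}_{\cM,c}(\Diamond\Effect)$ for $c \in \Cause$, which are computable in polynomial time. The minimality side condition $\Pr^{\max}_{\cM}((\neg\Cause)\Until c) > 0$ is a plain reachability question (is $c$ reachable from $\init$ without passing through $\Cause$ beforehand?) and, as noted after Lemma~\ref{lemma:wmin-criterion-PR-causes}, it holds in $\cM$ iff it holds in $\cN$; so I would discharge it by a standard graph search for each $c$.

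The key observation that makes the reduction work is that in $\cN$ every state of $\Cause$ has its outgoing behaviour replaced by the single $\gamma$-transition to $\{\eff,\noeff\}$, so a path that reaches any cause state immediately moves to a terminal state. Consequently, in $\cN$ the event $(\neg\Cause)\Until c$ coincides with $\Diamond c$ for each $c \in \Cause$, and $\Pr^{\sched}_{\cN,c}(\Diamond\Effect) = w_c$ under every scheduler. Hence condition \eqref{SPR} for a fixed $c$ collapses to the single-state requirement that $w_c > \Pr^{\sched}_{\cN}(\Diamond\Effect)$ for all schedulers $\sched$ with $\Pr^{\sched}_{\cN}(\Diamond c) > 0$ --- exactly the property that Algorithm~\ref{alg:SPR-check} decides, whose soundness for this single-state formulation is Lemma~\ref{soundness-SPR-algo}. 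Running the algorithm on $\cN$ with the state $c$ is well-defined because $c$ is already collapsed, so its internal transformation $\wminMDP{\cN}{c}$ leaves $\cN$ unchanged at $c$ and the computed thresholds $q_{\init}$ and $w_c$ agree with those of $\cN$.

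Putting the pieces together, I would output that $\Cause$ is an SPR cause exactly when, for every $c \in \Cause$, both the reachability test for $(\neg\Cause)\Until c$ succeeds and Algorithm~\ref{alg:SPR-check} returns \emph{yes} for $c$ in $\cN$; this is correct because condition (S) is a conjunction over the states $c \in \Cause$ and each conjunct has just been shown equivalent to the corresponding single-state check. The running time is polynomial: we perform $|\Cause|$ reachability computations and $|\Cause|$ invocations of Algorithm~\ref{alg:SPR-check}, each polynomial by Lemma~\ref{soundness-SPR-algo}, on top of the one-time polynomial construction of $\cN$.

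The main obstacle I anticipate is precisely the mismatch between the set-based definition of (S), which conditions on $(\neg\Cause)\Until c$ and ranges over all schedulers of $\cM$, and the single-state, singleton-cause view built into Algorithm~\ref{alg:SPR-check}. A naive application of the algorithm to the original $\cM$ would instead test the singleton $\{c\}$, conditioning on $\Diamond c$ and thereby misclassifying paths that visit a different cause state before $c$. The passage to $\cN$ is what dissolves this discrepancy by forcing every cause visit to be terminal; verifying carefully that the two conditioning events genuinely coincide in $\cN$, and that moving to $\cN$ preserves both the PR status (Lemma~\ref{lemma:wmin-criterion-PR-causes}) and the minimality condition, is the crux of the argument.
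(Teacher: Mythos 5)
Your reduction is not the paper's, and the difference is exactly where your argument breaks. The paper proves Theorem~\ref{thm:SPR-check-complexity} by running Algorithm~\ref{alg:SPR-check} on each $c\in\Cause$ \emph{in the original MDP} $\cM$: each invocation collapses only the tested state (internally forming $\wminMDP{\cM}{c}$), so every other cause state keeps its full action set. You instead collapse the whole set first, appealing to Lemma~\ref{lemma:wmin-criterion-PR-causes}, and then run the per-state checks inside $\cN=\wminMDP{\cM}{\Cause}$. These two procedures are not equivalent, and yours is unsound with respect to condition (S) in $\cM$. The reason is that a scheduler refuting \eqref{SPR} for a fixed $c$ is constrained only on the paths satisfying $(\neg\Cause)\Until c$; on paths that enter a \emph{different} cause state $c'$ first, it may \emph{maximize} the effect probability, since such paths lie outside the conditioning event and contribute only to the right-hand side $\Pr^{\sched}_{\cM}(\Diamond\Effect)$. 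Passing to $\cN$ pins those paths to the minimal value $w_{c'}$, which erases exactly these refuting schedulers; what your argument establishes is equivalence with SPR causality in $\cN$, not in $\cM$.

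Concretely, let $\Effect=\{\eff\}$, $\Cause=\{c_1,c_2\}$, and let $\cM$ have one action in $\init$ going to $c_1,c_2,\noeff$ with probabilities $\frac{1}{10},\frac{6}{10},\frac{3}{10}$, one action in $c_1$ going to $\eff$ and $\noeff$ with probability $\frac{1}{2}$ each, and two actions in $c_2$: $\alpha$ with $P(c_2,\alpha,\eff)=\frac{11}{20}$, $P(c_2,\alpha,\noeff)=\frac{9}{20}$, and $\beta$ with $P(c_2,\beta,\eff)=1$. Then $w_{c_1}=\Pr^{\min}_{\cM,c_1}(\Diamond\eff)=\frac{1}{2}$, $w_{c_2}=\frac{11}{20}$, and $\cN$ is a Markov chain with $\Pr_{\cN}(\Diamond\eff)=\frac{1}{10}\cdot\frac{1}{2}+\frac{6}{10}\cdot\frac{11}{20}=0.38$; since $0.38$ lies below both $w_{c_1}$ and $w_{c_2}$, your procedure answers ``yes'' for both states and declares $\Cause$ an SPR cause. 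But the MD-scheduler $\sched$ of $\cM$ that chooses $\beta$ in $c_2$ satisfies $\Pr^{\sched}_{\cM}(\Diamond\eff)=\frac{1}{10}\cdot\frac{1}{2}+\frac{6}{10}=0.65$ and $\Pr^{\sched}_{\cM}((\neg\Cause)\Until c_1)=\frac{1}{10}>0$, while $\Pr^{\sched}_{\cM}(\Diamond\eff\mid(\neg\Cause)\Until c_1)=\frac{1}{2}<0.65$; hence \eqref{SPR} fails for $c_1$ and $\Cause$ is \emph{not} an SPR cause in $\cM$. The paper's procedure catches this: in $\wminMDP{\cM}{c_1}$, state $c_2$ retains $\beta$, so $q_{\init}=0.65>w_{c_1}$ and Algorithm~\ref{alg:SPR-check} answers ``no''. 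The same example shows that the ``if'' direction of Lemma~\ref{lemma:wmin-criterion-PR-causes} fails for non-singleton cause sets ($\Cause$ is an SPR cause in $\cN$ but not in $\cM$), so that lemma cannot carry the weight you place on it --- and, tellingly, the paper's own proof of this theorem never applies the transformation to the whole set, only to singletons, where it is sound. Your closing concern about the mismatch between $\Diamond c$ and $(\neg\Cause)\Until c$ in per-singleton checks is worth raising, but the move to $\cN$ does not dissolve it soundly: it trades a discrepancy in the conditioning event for a discrepancy on its complement.
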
  

%%%%%%%%%%%%%%%%%%%%%%%%%%%%%%%%%%%%%%%%%%%%%%%%%%%%%%%%%%%%%%%%%%%%%%%%

\begin{rem}[Memory requirements for (S)]
	\label{MR-sufficient-SRP} 
	As the soundness proof for Algorithm \ref{alg:SPR-check}
	shows:
	If $\Cause$ does not satisfy (S),
	then there is an MR-scheduler $\sched$ for
	$\wminMDP{\cM}{\Cause}$ witnessing the violation of \eqref{SPR}.
	Scheduler $\sched$ 
	corresponds to a finite-memory (randomized) scheduler $\tsched$
	with two memory cells for $\cM$:
	``before $\Cause$'' (where $\tsched$ behaves as $\sched$)
	and ``after $\Cause$'' (where $\tsched$ behaves as an MD-scheduler minimizing
	the effect probability).
	\Ende
\end{rem}

%%%%%%%%%%%%%%%%%%%%%%%%%%%%%%%%%%%%%%%%%%%%%%%%%%%%%%%%%%%%%%%%%%%%%%%%

\begin{lem}[Criterion for the existence of PR causes]
	\label{lem:existence-check}
	Let $\cM$ be an MDP and $\Effect$ a nonempty set of states. The following statements are equivalent:
	\begin{enumerate}
		\item[(a)]
		$\Effect$ has an SPR cause in $\cM$,
		\item[(b)]
		$\Effect$ has a GPR cause in $\cM$,
		\item[(c)]
		there is a state $c_0\in S \setminus \Effect$ such that
		the singleton $\{c_0\}$ is an SPR cause (and therefore a GPR cause) for $\Effect$ in $\cM$.
	\end{enumerate}
	Thus, the existence of SPR and GPR causes can be checked
	with Algorithm \ref{alg:SPR-check} in polynomial time.
\end{lem}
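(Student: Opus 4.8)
The plan is to prove the cycle $(c)\Rightarrow(a)\Rightarrow(b)\Rightarrow(c)$ and then read off the decidability statement. Two of the links are immediate. For $(c)\Rightarrow(a)$, a singleton that is an SPR cause is in particular an SPR cause, so (a) holds verbatim (and by Lemma~\ref{global-vs-strict-causes} it is also GPR, justifying the parenthetical in (c)). For $(a)\Rightarrow(b)$, Lemma~\ref{lemma:strict-implies-global} states that every SPR cause is a GPR cause, so any SPR cause witnessing (a) also witnesses (b). The whole content therefore lies in $(b)\Rightarrow(c)$: from an \emph{arbitrary} GPR cause I must extract a single state whose singleton already raises the effect probability.

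To attack $(b)\Rightarrow(c)$, I would first pass to the worst-case reduction of Lemma~\ref{lemma:wmin-criterion-PR-causes}, writing $w_c=\Pr^{\min}_{\cM,c}(\Diamond\Effect)$ for each $c\in S\setminus\Effect$, and take as candidate the state $c_0$ maximizing $w_c$ (restricting, if convenient, to states of the given GPR cause $\Cause$). By Lemma~\ref{lemma:wmin-criterion-PR-causes} together with the soundness of Algorithm~\ref{alg:SPR-check} (Lemma~\ref{soundness-SPR-algo}), showing that $\{c_0\}$ is an SPR cause reduces to the quantitative claim that every scheduler reaching $c_0$ in $\wminMDP{\cM}{c_0}$ attains effect probability at most $w_{c_0}$, i.e.\ $q_{\init}=\Pr^{\max}_{\wminMDP{\cM}{c_0},\init}(\Diamond\Effect)\le w_{c_0}$, with the boundary case $q_{\init}=w_{c_0}$ handled by the reachability test in step~3 of Algorithm~\ref{alg:SPR-check}. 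I would try to derive this by contradiction: a scheduler that reaches $c_0$ yet achieves $\Pr^{\sched}_{\cM}(\Diamond\Effect)>w_{c_0}$ should, after replacing its post-$c_0$ behaviour by an effect-minimizing MD-scheduler and forming a convex combination with a global effect-minimizer via Lemma~\ref{lem:convex}, be turned into a scheduler under which $\Cause$ violates the GPR condition~(G).

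The step I expect to be the main obstacle is precisely this last quantitative claim. The GPR hypothesis on $\Cause$ constrains effect probabilities only when \emph{all} states of $\Cause$ are simultaneously pinned to their minimal post-cause values, as happens in $\wminMDP{\cM}{\Cause}$; the singleton test $\wminMDP{\cM}{c_0}$, in contrast, pins only $c_0$ and leaves every other state—including the remaining states of $\Cause$—free to be driven towards the effect. Bridging this asymmetry is the crux: I would have to show that the freedom to boost the effect through states other than $c_0$ cannot be used without simultaneously inflating the covered-effect term on the left-hand side of~(G), so that no scheduler can both reach $c_0$ and push $\Pr^{\sched}_{\cM}(\Diamond\Effect)$ strictly above $w_{c_0}$. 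If this transfer from the set-level condition to the singleton level goes through, the final decidability assertion is then routine: run the sound, polynomial-time Algorithm~\ref{alg:SPR-check} on every state $c\in S\setminus\Effect$ and report that a PR cause exists iff some singleton passes the test.
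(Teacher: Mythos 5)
Your handling of the two easy implications is correct and matches the paper: (c)$\Rightarrow$(a) is immediate, and (a)$\Rightarrow$(b) is Lemma~\ref{lemma:strict-implies-global}. For (b)$\Rightarrow$(c) you also open exactly as the paper's own proof does: take $c_0\in\Cause$ maximizing $w_c=\Pr^{\min}_{\cM,c}(\Diamond\Effect)$, and reduce the singleton test via Lemma~\ref{lemma:wmin-criterion-PR-causes} to the claim that every scheduler of $\wminMDP{\cM}{c_0}$ reaching $c_0$ keeps the overall effect probability below $w_{c_0}$. But you leave that claim unproven and flag the set-versus-singleton asymmetry as the crux, so judged as a proof your proposal is incomplete: the only nontrivial implication rests on an unverified transfer.

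Your diagnosis of the crux is, however, exactly right, and two things are worth knowing. First, the paper's own proof does not bridge this gap either: it establishes $\Pr^{\sched}_{\cM}(\Diamond\Effect)<w_{c_0}$ only for schedulers that minimize the effect probability after \emph{every} state of $\Cause$ (i.e., it verifies the SPR condition for $\{c_0\}$ inside $\wminMDP{\cM}{\Cause}$), and then appeals to Lemma~\ref{lemma:wmin-criterion-PR-causes}, which for the singleton $\{c_0\}$ requires the condition in $\wminMDP{\cM}{c_0}$, where the remaining cause states are unconstrained. Second, the transfer you hoped for is provably impossible, because (b)$\Rightarrow$(c) is false. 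Consider $\cM$ with states $\init,c_1,c_2,\eff,\noeff$ and $\Effect=\{\eff\}$, where $\init$ has a single action moving to $c_1$ and to $c_2$ with probability $1/4$ each and to $\noeff$ with probability $1/2$, and each $c_i$ has two actions, one reaching $\eff$ with probability $1/10$, the other with probability $9/10$ (remaining mass to $\noeff$). Writing $v_i\in[1/10,9/10]$ for the effect probability a scheduler realizes after $c_i$, every scheduler satisfies
\begin{align*}
\Pr^{\sched}_{\cM}(\Diamond\Effect\mid\Diamond\Cause)=\tfrac{1}{2}(v_1{+}v_2)
\ > \ \tfrac{1}{4}(v_1{+}v_2)=\Pr^{\sched}_{\cM}(\Diamond\Effect),
\end{align*}
so $\Cause=\{c_1,c_2\}$ is a GPR cause (there are no uncovered effects and $\Pr^{\sched}_{\cM}(\Diamond\Cause)=\tfrac12<1$). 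Yet no singleton SPR cause exists: for $\{c_1\}$, the scheduler with $v_1=1/10$, $v_2=9/10$ gives $\Pr^{\sched}_{\cM}(\Diamond\Effect\mid\Diamond c_1)=1/10<1/4=\Pr^{\sched}_{\cM}(\Diamond\Effect)$; $\{c_2\}$ is symmetric, and $\{\init\}$, $\{\noeff\}$ fail trivially. In fact no SPR cause at all exists here, so (b) holds while (a) and (c) both fail, and running Algorithm~\ref{alg:SPR-check} over singletons would wrongly report that no cause exists. The obstacle you identified is thus not a missing trick but a genuine flaw in the statement: a scheduler may correlate its post-cause behaviour, minimizing after $c_0$ while maximizing after the other cause states, a situation the set-level condition (G) never has to confront. (For Markov chains the lemma and the paper's argument are sound, since there is only one scheduler and the conditional effect probability after a cause state equals $w_c$ regardless of how that state is reached; the failure is genuinely driven by nondeterminism.)
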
  

\begin{proof}
	Obviously, statement (c) implies statements (a) and (b).
	The implication ``(a) $\Longrightarrow$ (b)'' follows from
	Lemma \ref{lemma:strict-implies-global}.
	We now turn to the proof of ``(b) $\Longrightarrow$ (c)''.
	For this, we assume that we are given a
	GPR cause $\Cause$ for $\Effect$ in $\cM$.
	For $c\in \Cause$, let $w_c=\Pr^{\min}_{\cM,c}(\Diamond \Effect)$.
	Pick a state $c_0\in \Cause$ such that
	$w_{c_0} = \max \{ w_c : c \in \Cause\}$.
	For every scheduler $\sched$ for $\cM$ that minimizes the
	effect probability whenever it visits a state in $\Cause$,
	and visits $\Cause$ with positive probability,
	the conditional probability
	$\Pr^{\sched}_{\cM}(\Diamond \Effect |\Diamond \Cause)$ is
	a weighted average of the values $w_c$, $c \in \Cause$,
	and thus bounded by $w_{c_0}$.
	Using Lemma \ref{lemma:wmin-criterion-PR-causes} we see that it is sufficient to only consider the minimal probabilities $w_c=\Pr^{\min}_{\cM,c}(\Diamond \Effect)$.
	Thus, we conclude that $\{c_0\}$ is both an SPR and a GPR cause for $\Effect$.
\end{proof}

%%%%%%%%%%%%%%%%%%%%%%%%%%%%%%%%%%%%%%%%%%%%%%%%%%%%%%%%%%%%%%%%%%%%%%%%%%%%%%%%

\subsection{Checking the global probability-raising condition}

\label{sec:check-GPR-condition}
\label{sec:check-GPR}

Throughout this section, we suppose that both the effect set $\Effect$ and the cause candidate $\Cause$ are fixed disjoint subsets of the state space of the MDP $\cM=(S,\Act,P,\init)$, and address the task to check whether $\Cause$ is a global probability-raising cause for $\Effect$ in $\cM$. As the minimality condition (for all $c \in \Cause: \Pr^{\max}_\cM(\neg \Cause \until c) > 0)$ can be checked in polynomial time using a standard graph algorithm, we will concentrate on an algorithm to check the probability-raising condition \eqref{GPR}.
We start by stating the main results of this section.

\begin{thm}
	\label{thm:checking-GPR-in-poly-space}
	Given $\cM$, $\Cause$ and $\Effect$, deciding whether $\Cause$ is a GPR cause for $\Effect$ in $\cM$ can be done in $\coNP$.
\end{thm}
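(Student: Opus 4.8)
The plan is to show that the \emph{complement} problem --- that $\Cause$ is \emph{not} a GPR cause --- lies in $\NP$. The minimality part (existence of some $c\in\Cause$ with $\Pr^{\max}_{\cM}(\neg\Cause\Until c)=0$) is checkable in polynomial time, so I can concentrate on guessing a scheduler that violates condition \eqref{GPR}. First I would invoke Lemma~\ref{lemma:wmin-criterion-PR-causes} to replace $\cM$ by $\wminMDP{\cM}{\Cause}$, and then pass to its MEC-quotient $\cN$ (Lemma~\ref{lem:MEC-contraction-terminal-states-final}), which has no end components; by Lemma~\ref{lem:from-general-to-MR-schedulers} it then suffices to range over MR-schedulers of $\cN$. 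Writing $x=\Pr^{\sched}_{\cN}(\Diamond\Cause)$, $y=\Pr^{\sched}_{\cN}(\neg\Cause\Until\Effect)$, and $z$ for the probability of reaching $\Effect$ after $\Cause$, the computation leading to \eqref{GPR-2} in the proof of Lemma~\ref{app:lem:criterion-global-prob-raising} shows that \eqref{GPR} \emph{fails} under $\sched$ exactly when $x>0$ and $z(1-x)\le xy$, equivalently $z\le x(y+z)$.

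The crux is then to reduce an arbitrary violating scheduler to a polynomial-size witness. All three quantities $x,y,z$ are linear functionals of the expected state--action frequency vector $f$ of $\sched$ in $\cN$, and the achievable $f$ range over a polytope $\mathcal{F}$ (cut out by flow constraints) whose vertices are the MD-schedulers. The obstacle is that the violation condition is bilinear, through the product $x\cdot(y+z)$, so MD-schedulers alone need not witness a violation (cf.\ Remark~\ref{rem:randomization-needed}). I would resolve this by fixing the cause probability: suppose $\sched^*$ violates \eqref{GPR} with $x(\sched^*)=p>0$. For the \emph{fixed} constant $p$, the map $f\mapsto p(y(f)+z(f))-z(f)$ is \emph{linear} and coincides with $x(f)(y(f)+z(f))-z(f)$ on the slice $\{f\in\mathcal{F}:x(f)=p\}$; its maximum over this slice is attained at a vertex $\hat f$ and is at least its value at $\sched^*$, hence $\ge 0$. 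Since $x(\hat f)=p$, the point $\hat f$ itself satisfies $z\le x(y+z)$ with $x=p>0$, so it is again a violation. Finally, a standard polytope fact --- a vertex of $\mathcal{F}\cap\{x=p\}$ lies on an edge (or is a vertex) of $\mathcal{F}$ --- gives $\hat f=\lambda f_1+(1-\lambda)f_2$ for two MD-vertices $f_1,f_2$ and some $\lambda\in[0,1]$. Thus any violation is witnessed by a convex combination of just two MD-schedulers.

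With this in hand the $\NP$ procedure reads: construct $\cN$ in polynomial time; guess two MD-schedulers $\sched_1,\sched_2$ of $\cN$; compute their terminal-reachability values $(x_i,y_i,z_i)$ in polynomial time; and, using Lemma~\ref{lem:convex} so that the mixture realizes $(x,y,z)=\lambda(x_1,y_1,z_1)+(1-\lambda)(x_2,y_2,z_2)$, decide whether there is a $\lambda\in[0,1]$ with $x>0$ and $z(1-x)\le xy$. Substituting the affine-in-$\lambda$ expressions turns the inequality into a single quadratic in $\lambda$ together with an open linear condition, whose feasibility on $[0,1]$ is decidable in polynomial time by inspecting the quadratic's roots. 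Soundness holds because every such mixture is realized by a genuine scheduler (Lemma~\ref{lem:convex}), and completeness by the edge argument above. Hence the complement lies in $\NP$, and deciding whether $\Cause$ is a GPR cause lies in $\coNP$.

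I expect the middle step to be the main obstacle: the naive guess of a single MD-scheduler provably fails, and the whole argument hinges on the observation that fixing the cause probability $p$ linearises the bilinear condition, so that optimality over the corresponding slice --- a hyperplane section of $\mathcal{F}$ --- places the witness on an edge of $\mathcal{F}$ and thereby makes a two-scheduler mixture sufficient.
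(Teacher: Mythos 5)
Your proof is correct, and although it shares the paper's setup---the reduction via Lemma~\ref{lemma:wmin-criterion-PR-causes} and the MEC-quotient to an EC-free model, and the reformulation~\eqref{GPR-2} of a violation as ``$x>0$ and $x(y{+}z)-z\geq 0$'' over the set of expected frequency vectors---the step that actually establishes $\NP$-membership of the complement is genuinely different from the paper's. The paper analyzes the quadratic system (S1)--(S5) directly: in the nontrivial case it passes to a solution in which (S4) is tight and the set of zero-valued variables is maximal, guesses that zero-set, and invokes the implicit function theorem to show that the remaining solution is the unique solution of an explicitly computable linear system, which can then be recomputed and verified deterministically. You instead exploit polyhedral geometry: fixing the cause probability $p$ of a violating scheduler linearises the bilinear violation condition on the slice $\mathcal{F}\cap\{x=p\}$; a linear functional attains its maximum over this slice at a vertex, and a vertex of a hyperplane section of a polytope lies on an edge (or vertex) of $\mathcal{F}$, hence is a convex combination of two MD frequency vectors. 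The certificate is then two MD-schedulers together with a univariate quadratic feasibility check, which is clearly polynomial. Your route is more elementary (no analytic machinery) and yields a sharper structural by-product: every violation is witnessed by a mixture of just \emph{two} MD-schedulers of the transformed MDP, refining Corollary~\ref{cor:MR-scheduler} and Theorem~\ref{thm:MR-sufficient-GPR}; the paper's route, in exchange, gives a deterministic recipe for reconstructing the violating frequency vector itself from the guessed zero-set. Two standard facts are load-bearing in your argument and should be stated explicitly, since the paper cites neither in the form you need: (i) the polyhedron defined by (S1)--(S3) is \emph{bounded} (a nonzero recession direction would induce flow conserved inside non-terminal states, i.e.\ an end component, contradicting EC-freeness of the quotient), so it is a polytope and its edges are segments between vertices; and (ii) its vertices are exactly frequency vectors of MD-schedulers (the basic-feasible-solution argument: columns of positive variables have support in the reached states, so linear independence forces exactly one positive action per reached state). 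Both follow by routine arguments for transient MDPs and are consistent with \cite[Theorem 9.16]{Kallenberg20}, but without them the claim ``whose vertices are the MD-schedulers'' is an assertion, not a proof.
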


In order to provide an algorithm, we perform a model transformation  after which the violation of \eqref{GPR} by a scheduler $\sched$ can be expressed solely in terms of the expected frequencies of the state-action pairs of the transformed MDP under $\sched$. 
This allows us to express the existence of a scheduler witnessing the non-causality of $\Cause$ in terms of the satisfiability of a quadratic constraint system.
Thus, we can restrict the quantification in (G) to MR-schedulers in the transformed model.
We trace back the memory requirements to $\wminMDP{\cM}{\Cause}$ and to the original MDP $\cM$ yielding the second main result.

\begin{thm}
	\label{thm:MR-sufficient-GPR}
	Let $\cM$ be an MDP with effect set $\Effect$ as before and
	$\Cause$ a set of non-effect states which satisfies for all $c \in \Cause: \Pr^{\max}_\cM(\neg \Cause \until c) > 0$.
	If $\Cause$ is not a GPR cause for $\Effect$, then
	there is an MR-scheduler for $\wminMDP{\cM}{\Cause}$ refuting the GPR
	condition for $\Cause$
	in $\wminMDP{\cM}{\Cause}$
	and a finite-memory scheduler for $\cM$ with two memory cells
	refuting the GPR
	condition for $\Cause$
	in $\cM$.
\end{thm}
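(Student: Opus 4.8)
The plan is to reduce everything to an end-component-free model in which expected frequencies of state-action pairs are finite and, by Lemmas~\ref{lem:from-general-to-MR-schedulers} and~\ref{lem:convex}, range exactly over a polytope realized by MR-schedulers. First I would invoke Lemma~\ref{lemma:wmin-criterion-PR-causes} to replace $\cM$ by $\wminMDP{\cM}{\Cause}$, so that every cause state $c$ carries the single action $\gamma$ leading to $\eff$ with probability $w_c=\Pr^{\min}_{\cM,c}(\Diamond\Effect)$ and to $\noeff$ otherwise. In particular each $c$ is visited at most once, so $\freq{\sched}{c}=\Pr^{\sched}_{\cM}(\Diamond c)$ and $\Pr^{\sched}_{\cM}(\Diamond\Cause)=\sum_{c\in\Cause}\freq{\sched}{c}$.

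Next I would pass to the MEC-quotient $\cN$ of $\wminMDP{\cM}{\Cause}$ (with trap state $\noeffbot$), which is end-component free. Cause states are never part of an end component, since their only successors are terminal, so they survive unchanged in $\cN$; by Lemma~\ref{lem:MEC-contraction-terminal-states-final} the scheduler correspondence preserves all terminal-state probabilities and, because cause states survive and immediately terminate, also $\Pr^{\sched}(\Diamond\Cause)$ and the covered/uncovered split of $\Diamond\Effect$. Hence a scheduler refutes \eqref{GPR} in $\wminMDP{\cM}{\Cause}$ iff the corresponding scheduler does so in $\cN$. In $\cN$ I would introduce frequency variables $y_{s,\alpha}=\freq{\sched}{s,\alpha}$ and observe that the three relevant quantities are linear in $y$: the cause mass $x=\sum_{c}y_{c,\gamma}$, the covered-effect mass $\sum_{c}w_c\,y_{c,\gamma}$, and the uncovered-effect mass collected on transitions into $\Effect$ not originating from a $\gamma$-action. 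Multiplying out \eqref{GPR} turns its negation into $\sum_c w_c y_{c,\gamma}\le x\cdot\Pr^{\sched}_{\cM}(\Diamond\Effect)$ together with $x>0$, a single quadratic inequality in $y$. Since the achievable $y$ form a polytope each of whose points is realized by the MR-scheduler $\usched(s)(\alpha)=y_{s,\alpha}/\sum_\beta y_{s,\beta}$, and the inequality depends on $\sched$ only through $y$, the existence of any refuting scheduler yields a refuting MR-scheduler of $\cN$.

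Finally I would trace the memory requirements back. An MR-scheduler of $\cN$ pulls back, via the construction in the proof of Lemma~\ref{lem:MEC-contraction-terminal-states-final}, to a scheduler of $\wminMDP{\cM}{\Cause}$ that behaves memorylessly until it either leaves or decides to remain forever in an end component. For the first claim I must arrange this memorylessly: a refuting MR-scheduler of $\cN$ should be chosen to use the $\tau$-transition to $\noeffbot$ only with probability $0$ or $1$ at each collapsed component, so that staying forever becomes a deterministic in-component choice (MD, hence MR) rather than a once-and-for-all coin flip. Granting this, the resulting MR-scheduler of $\wminMDP{\cM}{\Cause}$ is lifted to $\cM$ by keeping it as the ``before $\Cause$'' behavior and appending, as the ``after $\Cause$'' behavior, a fixed MD-scheduler minimizing the effect probability, which realizes the probabilities $w_c$ baked into $\gamma$. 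This uses exactly two memory cells and, by Lemma~\ref{lemma:wmin-criterion-PR-causes}, refutes \eqref{GPR} in $\cM$.

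The hard part is the step just described: controlling the probability mass that would ``stay forever in an end component''. Because reaching $\noeffbot$ contributes to neither $x$ nor to the covered or uncovered effect, while larger $x$ and larger uncovered mass both make the inequality easier to satisfy, one expects that a refuting scheduler never needs to split the in-component choice probabilistically; making this precise, namely that the quadratic constraint can be met at a $y$ whose $\tau$-components are extremal, is the crux that separates the clean MR-result in the end-component-free $\cN$ from the MR-claim for $\wminMDP{\cM}{\Cause}$ itself.
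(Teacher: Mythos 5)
Your route is the same as the paper's: reduce to $\wminMDP{\cM}{\Cause}$ via Lemma~\ref{lemma:wmin-criterion-PR-causes}, pass to the MEC-quotient satisfying (A1)--(A3), obtain a refuting MR-scheduler there from the frequency polytope (the paper's Corollary~\ref{cor:MR-scheduler}), make the $\tau$-decisions deterministic, lift back to $\wminMDP{\cM}{\Cause}$, and finish with the two-memory-cell scheduler for $\cM$ as in Remark~\ref{MR-sufficient-SRP}. The problem is that the step you yourself flag as ``the crux'' is a genuine gap, not a routine verification: you never prove that a refuting MR-scheduler of the quotient can be chosen with $\tau$-probabilities in $\{0,1\}$, you only state that ``one expects'' it. This is exactly the paper's Lemma~\ref{lem:MR-sufficient-global-cause}, and the monotonicity heuristic you offer does not establish it. The refutation inequality (the negation of \eqref{GPR-1}) reads $x_{\Cause}\cdot x_{\effuncov} \geqslant (1-x_{\Cause})\sum_{c}x_c w_c$, and the cause mass $x_{\Cause}=\sum_c x_c$ appears on \emph{both} sides: shifting probability away from $\noeffbot$ increases the left-hand side but also increases $\sum_c x_c w_c$ and decreases the factor $(1-x_{\Cause})$, so refutation is not monotone in the non-$\tau$ mass, and the extremal endpoint that works may be either ``always take $\tau$'' or ``never take $\tau$'' depending on the instance. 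The paper closes this step with a convexity argument instead: fixing one state $u$ with $0<\usched(u)(\tau)<1$ and rescaling its non-$\tau$ probabilities by $x/y$ gives a one-parameter family $\vsched_x$ of MR-schedulers along which all relevant reachability probabilities scale linearly in $x$, so the violation function $f(x)$ is a quadratic $\mathfrak{a}x^2+\mathfrak{b}x+\mathfrak{c}$ with $\mathfrak{a}\geqslant 0$; if both endpoints satisfied \eqref{GPR} (i.e.\ $f(0),f(1)<0$), convexity would force $f<0$ on all of $[0,1]$, contradicting $f(y)\geqslant 0$ for the given refuting scheduler. Iterating this over all such states $u$ yields the deterministic $\tau$-choices.

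A second, smaller gap: even granting deterministic $\tau$-choices, your parenthetical ``MD, hence MR'' does not by itself produce an MR-scheduler for $\wminMDP{\cM}{\Cause}$. When a collapsed MEC is to be \emph{left}, the quotient scheduler prescribes a probability distribution over several exit state-action pairs located at different states inside the MEC, and realizing that distribution memorylessly from within the MEC is not immediate (a memoryless scheduler that mixes staying and leaving actions cannot in general reproduce the required exit probabilities). The paper's Theorem~\ref{thm:MRscheduler_lift} handles this with an intermediate finite-memory scheduler that targets the exit states in sequence and then appeals to Kallenberg's theorem to flatten it to an MR-scheduler with the same expected frequencies for all non-MEC actions. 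With Lemma~\ref{lem:MR-sufficient-global-cause} and Theorem~\ref{thm:MRscheduler_lift} supplied, your outline matches the paper's proof; without them, the two central claims of the theorem remain unproved.
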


%%%%%%%%%%%%%%%%%%%%%%%%%%%%%%%%%%%%%%%%%%%%%%%%%%%%%%%%%%%%%%%%%%%%%%%%

The remainder of this section is concerned with the proofs of both
Theorem \ref{thm:checking-GPR-in-poly-space} and
Theorem \ref{thm:MR-sufficient-GPR}.
For this, we suppose that $\Cause$ satisfies for all $c \in \Cause: \Pr^{\max}_\cM(\neg \Cause \until c) > 0$ which can be checked preemptively in polynomial time as argued before.

\paragraph*{\bf Checking the GPR condition (Proof of Theorem \ref{thm:checking-GPR-in-poly-space}). }
We will start with a polynomial-time model transformation into a kind of ``canonical form'' after which we can make the following assumptions when checking the GPR condition of $\Cause$ for $\Effect$

\begin{description}
	\item [(A1)]
	$\Effect=\{\effuncov,\effcov\}$ consists of two terminal states.
	
	\item [(A2)] %
	For every $c\in \Cause$, there is a single enabled
	action $\Act(c)=\{\gamma\}$, and
	there is $w_c\in [0,1]\cap \Rational$ such that
	$P(c,\gamma,\effcov)=w_c$ and
	$P(c,\gamma,\noeffc)=1{-}w_c$,
	where $\noeffc$ is a terminal non-effect state
	and $\noeffc$ and $\effcov$ are only accessible via
	$\gamma$-transition from the $c\in \Cause$.
	
	\item [(A3)]
	$\cM$ has no end components and
	there is a further terminal state $\noeffbot$
	and an action $\tau$ such that 
	$\tau \in \Act(s)$ implies $P(s,\tau,\noeffbot)=1$.
\end{description}
The terminal states $\effunc$, $\effcov$, $\noeffc$ and $\noeffbot$ 
are pairwise distinct. $\cM$ can have further terminal
states representing true negatives. 
However, these can be identified with $\noeffbot$.

Intuitively, $\effcov$ stands for covered effects (``$\Effect$ after $\Cause$'')
and can be seen as a true positive,
while $\effuncov$ represents the uncovered effects (``$\Effect$ without preceding $\Cause$'')
and corresponds to a false negative.
Let $\sched$ be a scheduler in $\cM$. Note that 
$\Pr^{\sched}_{\cM}((\neg \Cause)\until \Effect)=
\Pr^{\sched}_{\cM}(\Diamond \effuncov)$
and
$\Pr^{\sched}_{\cM}( \Diamond (\Cause \wedge \Diamond \Effect))=
\Pr^{\sched}_{\cM}(\Diamond \effcov)$.
As the cause states can not reach each other we also have
$\Pr^{\sched}_{\cM}((\neg \Cause)\Until c)= \Pr^{\sched}_{\cM}(\Diamond c)$
for each $c\in \Cause$.
The intuitive meaning of $\noeffc$ is a false positive (``no effect after $\Cause$''), while $\noeffbot$ stands for true negatives where neither the effect nor the cause is observed.
Note that
$\Pr^{\sched}_{\cM}( \Diamond (\Cause \wedge \neg \Diamond \Effect))= \Pr^{\sched}_{\cM}(\Diamond \noeffc)$
and
$\Pr^{\sched}_{\cM}( \neg \Diamond \Cause \wedge \neg \Diamond \Effect))= \Pr^{\sched}_{\cM}(\Diamond \noefftn)$.

\paragraph*{Establishing assumptions (A1)-(A3):}
We justify the assumptions as we can transform $\cM$ into a new MDP of the same asymptotic size satisfying the above assumptions.
Thanks to Lemma \ref{lemma:wmin-criterion-PR-causes},
we may suppose that $\cM=\wminMDP{\cM}{\Cause}$ without changing the satisfaction of (G).
Thus, from cause states $c \in \Cause$ there are only two outgoing transitions, either to a terminal effect state $\eff$ with probability $\Pr^{\min}_c(\lozenge \Eff)$ or to a terminal non-effect state $\noeff$ with the remaining probability (see Notation \ref{notation:MDP-mit-min-prob-ab-cause-candidates}).
We then may rename the effect state $\eff$ and the non-effect state $\noeff$ reachable from $\Cause$ into $\effcov$ and $\noeffc$, respectively.
Furthermore, we collapse all other effect states into a single state $\effunc$ and all true negative states into a single state $\noefftn$.
Similarly, by renaming and possibly duplicating terminal states we also suppose that $\noeffc$ has no other incoming transitions than the $\gamma$-transitions from the states in $\Cause$.
This ensures (A1) and (A2).
For (A3) consider the set $T$ of terminal states in the MDP obtained so far.
We remove all non-trivial end components by switching to the MEC-quotient \cite{deAlfaro1997}, i.e., we collapse all states that belong to the same MEC $\cE$ into a single state $s_{\cE}$ representing the MEC while ignoring the actions inside $\cE$.
Additionally, we add a fresh deterministic $\tau$-transition from the states $s_{\cE}$ representing MECs to $\noeffbot$
(i.e., $P(s_{\cE},\tau,\noeffbot) = 1$).
The $\tau$-transitions from states $s_{\cE}$ to $\noeffbot$ can then be used to mimic cases where the scheduler of the original MDP enters the end component $\cE$ and stays there forever.

In particular, consider the MEC-quotient $\cN$ of $\wminMDP{\cM}{\Cause}$ (see Definition \ref{app:MEC-quotient}). 
Let $\noeffbot$ be the state to which we add a $\tau$-transition with probability $1$ from each MEC that we collapse in the MEC-quotient.
That is, $\noeffbot=\bot$ with the notations of Definition \ref{def:MEC-contraction}.

	We demonstrate these transformations on the abstract MDP $\cM$ from Figure \ref{fig:establishing-assumptions-abstract-raw}, where the dotted circles correspond to sets of states in the MDP.
	The MDP already satisfies $\cM = \wminMDP{\cM}{c}$.
	We rename $\eff$ reachable from $\Cause$ to $\effcov$ and $\noeff$ to $\noeff_{\mathsf{fp}}$.
	Effect states not reachable from $\Cause$ collapse to $\effunc$.
	There are no terminal non-effect states not reachable from $c$, which would collapse to $\noeff_{\mathsf{tn}}$.
	The MEC quotient collapses MECs to states $s_{\cE_i}$ only keeping outgoing transitions.
	There is a fresh action $\tau$ in states $s_{\cE_i}$ to $\noeff_{\mathsf{tn}}$. 
	Thus, we get $\cN$ from Figure \ref{fig:establishing-assumptions-abstract-transformed}.
	\begin{figure}[t]
		\centering
		\begin{minipage}{0.45\textwidth}
			\centering
			\resizebox{\textwidth}{!}{
				\begin{tikzpicture}[scale=1,->,>=stealth',auto ,node distance=0.5cm, thick]
	\tikzstyle{round}=[thin,draw=black,circle]
	
	\node[scale=1, state] (init) {$\init$};
	\node[scale=5, ellipse, draw, below right=0.1 of init] (MDP) {\phantom{text}};
	\node[scale=1, above=-1 of MDP] (MDPtext) {MDP};
	\node[scale=0.2, circle, draw, below=0.5 of init] (a) {\phantom{a}};
	\node[scale=0.2, circle, draw, below right=0.5 of init] (b) {\phantom{a}};
	\node[scale=0.2, below right=2 of b] (c) {\phantom{a}};
	\node[scale=1, ellipse, draw, dotted, below=0.1 of c] (e) {$\mathsf{MEC}$};
	\node[scale=1, ellipse, draw, dotted, right=1 of c] (d) {$\mathsf{MEC}$};
	\node[scale=1, ellipse, draw, dotted, below=1 of b] (cause) {$\Cause$};
	\node[scale=1, ellipse, draw, below=1 of cause] (eff1) {$\eff$};
	\node[scale=1, ellipse, draw, left=0.2 of eff1] (noeff) {$\noeff$};
	\node[scale=1, ellipse, draw, right=1 of eff1] (eff2) {$\eff$};
	
	\draw[<-] (init) --++(-0.55,0.55);
	\draw[color=black ,->] (init) edge  (a) ;
	\draw[color=black ,->] (init) edge  (b) ;
	
	\draw[->] (a) --++(0.35,-0.35);
	\draw[->] (a) --++(0.5,0);
	
	\draw[->] (b) --++(0.35,-0.35);
	\draw[->] (b) --++(0.35,0.35);
	
	\draw[<-] (cause) --++(0,0.9);
	\draw[->] (cause) edge (eff1);
	\draw[->] (cause) edge (noeff);
	
	\draw[<-] (e) --++(0,0.9);
%	\draw[->] (e) edge (eff1);
	\draw[->] (e) edge node[pos=0.7] {$\alpha$} (d);
	\draw[->] (e) edge node[pos=0.3, below left] {$\beta$} (eff2);

	\draw[<-] (d) --++(-0.65,0.65);
	\draw[->] (d) --++(0.65,0.65);
	
\end{tikzpicture}
			}
			\caption{MDP $\cM$ not satisfying \\ assumptions (A1)-(A3)}
			\label{fig:establishing-assumptions-abstract-raw}
		\end{minipage}
		\hfill
		\begin{minipage}{0.53\textwidth}
			\centering
			\resizebox{\textwidth}{!}{
				\begin{tikzpicture}[scale=1,->,>=stealth',auto ,node distance=0.5cm, thick]
	\tikzstyle{round}=[thin,draw=black,circle]
	
	\node[scale=1, state] (init) {$\init$};
	\node[scale=5, ellipse, draw, below right=0.1 of init] (MDP) {\phantom{text}};
	\node[scale=1, above=-1 of MDP] (MDPtext) {MDP};
	\node[scale=0.2, circle, draw, below=0.5 of init] (a) {\phantom{a}};
	\node[scale=0.2, circle, draw, below right=0.5 of init] (b) {\phantom{a}};
	\node[scale=0.2, below right=2 of b] (c) {\phantom{a}};
	\node[scale=1, state, below=0.01 of c] (e) {$s_{\cE_0}$};
	\node[scale=1, state, right=1 of c] (d) {$s_{\cE_1}$};
	\node[scale=1, ellipse, draw, dotted, below=1 of b] (cause) {$\Cause$};
	\node[scale=1, ellipse, draw, below=1 of cause] (eff1) {$\effcov$};
	\node[scale=1, ellipse, draw, left=0.2 of eff1] (noeff) {$\noeff_{\mathsf{fp}}$};
	\node[scale=1, ellipse, draw, right=1 of eff1] (eff2) {$\effunc$};
	\node[scale=1, ellipse, draw, right=0.2 of eff2] (noeff1) {$\noeff_{\mathsf{tn}}$};
	
	\draw[<-] (init) --++(-0.55,0.55);
	\draw[color=black ,->] (init) edge  (a) ;
	\draw[color=black ,->] (init) edge  (b) ;
	
	\draw[->] (a) --++(0.35,-0.35);
	\draw[->] (a) --++(0.5,0);
	
	\draw[->] (b) --++(0.35,-0.35);
	\draw[->] (b) --++(0.35,0.35);
	
	\draw[<-] (cause) --++(0,0.9);
	\draw[->] (cause) edge (eff1);
	\draw[->] (cause) edge (noeff);
	
	\draw[<-] (e) --++(0,0.9);
	%	\draw[->] (e) edge (eff1);
	\draw[->] (e) edge node[pos=0.7] {$\alpha$} (d);
	\draw[->] (e) edge node[pos=0.3, below left] {$\beta$} (eff2);
	\draw[->] (e) edge node[pos=0.2] {$\tau$} (noeff1);
	
	\draw[->] (d) edge node[pos=0.2] {$\tau$} (noeff1);
	\draw[<-] (d) --++(-0.65,0.65);
	\draw[->] (d) --++(0.65,0.65);
	
\end{tikzpicture}
			}
			\caption{Transformed MDP $\cN$ satisfying \\ assumptions (A1)-(A3)}
			\label{fig:establishing-assumptions-abstract-transformed}
		\end{minipage}
	\end{figure}

The following Lemma \ref{lem:probabilities_MEC-quotient} and Corollary \ref{cor:GPR_MEC} prove the soundness of the model transformation.

\begin{lem}\label{lem:probabilities_MEC-quotient}
	For each scheduler $\sched$ for $\wminMDP{\cM}{\Cause}$, there is a scheduler $\tsched$ for $\cN$, and vice versa, such that 
	\begin{itemize}
		\item
		$\Pr^{\sched}_{\wminMDP{\cM}{\Cause}}(\Diamond \Effect) = \Pr^{\tsched}_{\cN}(\Diamond \Effect)$,
		\item
		$\Pr^{\sched}_{\wminMDP{\cM}{\Cause}}(\Diamond \Cause) = \Pr^{\tsched}_{\cN}(\Diamond \Cause)$, and
		\item 
		$\Pr^{\sched}_{\wminMDP{\cM}{\Cause}}(\Diamond \Cause \land \Diamond \Effect) = \Pr^{\tsched}_{\cN} (\Diamond \effcov)$.
	\end{itemize}
\end{lem}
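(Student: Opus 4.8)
The plan is to reduce all three equalities to the terminal-state correspondence already established in Lemma~\ref{lem:MEC-contraction-terminal-states-final}, which provides, for every scheduler $\sched$ of $\wminMDP{\cM}{\Cause}$, a scheduler $\tsched$ of $\cN$ (and vice versa) with $\Pr^{\sched}_{\wminMDP{\cM}{\Cause}}(\Diamond t)=\Pr^{\tsched}_{\cN}(\Diamond t)$ for every terminal state $t$. The first observation I would record is that the cause states survive the MEC-quotient construction: by Notation~\ref{notation:MDP-mit-min-prob-ab-cause-candidates} each $c\in\Cause$ has only the single deterministic action $\gamma$ to the terminal states reached from $\Cause$, so $c$ lies on no cycle and hence in no end component; thus $\iota$ is the identity on $\Cause$ and $c$ keeps its $\gamma$-transition in $\cN$, while every effect state and every other terminal state of $\wminMDP{\cM}{\Cause}$ stays terminal in $\cN$. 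I would also note that the relabelling and duplication of terminal states performed while establishing the canonical form only splits absorbing states according to their incoming transitions, and therefore preserves every reachability probability; in particular it preserves $\Pr(\Diamond\Effect)$ and makes $\effcov$ and $\noeffc$ reachable exclusively via the $\gamma$-transition from $\Cause$, i.e.\ it enforces (A2).

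Given the correspondence $\sched\leftrightarrow\tsched$, the first equality is immediate: $\Diamond\Effect$ is the disjoint union of $\Diamond t$ over the finitely many terminal effect states $t$, so summing the per-terminal-state identities of Lemma~\ref{lem:MEC-contraction-terminal-states-final} yields $\Pr^{\sched}_{\wminMDP{\cM}{\Cause}}(\Diamond\Effect)=\Pr^{\tsched}_{\cN}(\Diamond\Effect)$. For the remaining two equalities the key step is to re-express the cause-related events purely through terminal states. Since each $c\in\Cause$ moves deterministically in one step into $\{\effcov,\noeffc\}$ and, by (A2), $\effcov$ and $\noeffc$ are reachable only through $\Cause$, the event $\Diamond\Cause$ coincides with the disjoint union $\Diamond\effcov \cup \Diamond\noeffc$ (disjoint because both targets are terminal). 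Hence $\Pr^{\sched}_{\wminMDP{\cM}{\Cause}}(\Diamond\Cause) = \Pr^{\sched}_{\wminMDP{\cM}{\Cause}}(\Diamond\effcov) + \Pr^{\sched}_{\wminMDP{\cM}{\Cause}}(\Diamond\noeffc)$, and applying the terminal-state correspondence to $\effcov$ and $\noeffc$ (the same rewriting being valid in $\cN$) gives the second equality.

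For the third equality I would argue that $\Diamond\Cause\wedge\Diamond\Effect$ is equivalent to $\Diamond\effcov$. Indeed, a path witnessing $\Diamond\Cause$ reaches some $c\in\Cause$ and then terminates at $\effcov$ or $\noeffc$; no effect state can occur earlier on this path, since effect states are terminal and would have stopped the path before $c$ was reached. Thus such a path satisfies $\Diamond\Effect$ exactly when its final $\gamma$-step lands in the effect state $\effcov$, and conversely $\Diamond\effcov$ forces both $\Diamond\Cause$ (as $\effcov$ is reached only via $\Cause$) and $\Diamond\Effect$ (as $\effcov\in\Effect$). Since $\effcov$ is terminal, Lemma~\ref{lem:MEC-contraction-terminal-states-final} gives $\Pr^{\sched}_{\wminMDP{\cM}{\Cause}}(\Diamond\Cause\wedge\Diamond\Effect)=\Pr^{\sched}_{\wminMDP{\cM}{\Cause}}(\Diamond\effcov)=\Pr^{\tsched}_{\cN}(\Diamond\effcov)$, and the ``vice versa'' direction is identical using the converse direction of that lemma. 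The main obstacle is precisely this reduction of the non-terminal events $\Diamond\Cause$ and $\Diamond\Cause\wedge\Diamond\Effect$ to reachability of the terminal states $\effcov,\noeffc$: it rests on the structural properties of $\wminMDP{\cM}{\Cause}$ and $\cN$ (deterministic one-step exit of cause states, terminality of all effect states, and the ``reachable only via $\Cause$'' property (A2)), after which everything follows mechanically from the already-proven MEC-quotient correspondence.
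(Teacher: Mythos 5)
Your proof is correct and follows essentially the same route as the paper's: both reduce all three equalities to the terminal-state correspondence of Lemma~\ref{lem:MEC-contraction-terminal-states-final} and then re-express $\Diamond \Cause$ and $\Diamond \Cause \wedge \Diamond \Effect$ as reachability of the terminal states $\effcov$ and $\noeffc$ (using that cause states exit in one deterministic $\gamma$-step and that the renaming/duplication enforcing (A2) preserves all relevant probabilities). Your write-up is in fact somewhat more explicit than the paper's about why cause states survive the MEC-quotient and why the terminal-state relabelling is probability-preserving, but the underlying argument is the same.
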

\begin{proof}
	By Lemma \ref{lem:MEC-contraction-terminal-states-final}, there is a scheduler $\tsched$ for $\cN$ for each scheduler $\sched$ for $\wminMDP{\cM}{\Cause}$ such that each terminal state is reached with the same probability under $\tsched$ in $\cN$ and under $\sched$ in $\wminMDP{\cM}{\Cause}$.
	The state $\effcov$ is present in $\wminMDP{\cM}{\Cause}$ under the name $\eff$.
	The state $\eff$ is furthermore reached in $\wminMDP{\cM}{\Cause}$ if and only if $\Diamond \Cause \land \Diamond \Effect$ is satisfied along a run.
	The set of terminal states in $\Effect$ is obtained from the set $\Effect$ in $\wminMDP{\cM}{\Cause}$ by collapsing states. As a scheduler $\sched$ can be viewed as a scheduler for both MDPs and these MDPs agree except for the terminal states, the first equality follows as well. 
	As the probability to reach $\Cause$ is the sum of the probabilities to reach the terminal states $\effcov$ and $\noeffc$ in $\cN$ and as these states are only renamed in $\cN$ in comparison to $\wminMDP{\cM}{\Cause}$, the claim follows. 
\end{proof}

From Lemma \ref{lem:probabilities_MEC-quotient} and Lemma \ref{lemma:wmin-criterion-PR-causes}, we conclude the following corollary that justifies working under assumptions (A1)-(A3).

\begin{cor}\label{cor:GPR_MEC}
	The set $\Cause$ is a GPR cause for $\Effect$ in $\cM$ if and only if $\Cause$ is a GPR cause for $\Effect$ in $\cN$.
\end{cor}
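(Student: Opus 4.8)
The plan is to chain together the two results immediately preceding the corollary. First I would invoke Lemma~\ref{lemma:wmin-criterion-PR-causes}, which already states that $\Cause$ is a GPR cause for $\Effect$ in $\cM$ if and only if it is a GPR cause for $\Effect$ in $\wminMDP{\cM}{\Cause}$. This reduces the claim to the single remaining assertion that $\Cause$ is a GPR cause for $\Effect$ in $\wminMDP{\cM}{\Cause}$ exactly when it is one in the MEC-quotient $\cN$, so that all the content is localized to the passage to the MEC-quotient.

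Next I would rewrite the GPR condition in a form that makes transparent its dependence on precisely the three probabilities preserved by Lemma~\ref{lem:probabilities_MEC-quotient}. For any MDP and any scheduler $\sched$ with $\Pr^{\sched}(\Diamond \Cause)>0$, multiplying \eqref{GPR} through by the positive number $\Pr^{\sched}(\Diamond \Cause)$ and expanding the conditional probability via its definition turns the condition into
\[
\Pr^{\sched}(\Diamond \Cause \wedge \Diamond \Effect) \ > \ \Pr^{\sched}(\Diamond \Cause)\cdot \Pr^{\sched}(\Diamond \Effect).
\]
Hence whether a given scheduler witnesses non-causality is a function solely of the triple $\Pr^{\sched}(\Diamond \Effect)$, $\Pr^{\sched}(\Diamond \Cause)$, $\Pr^{\sched}(\Diamond \Cause \wedge \Diamond \Effect)$, where in $\cN$ the joint probability is read off as $\Pr_{\cN}(\Diamond \effcov)$. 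I would then apply Lemma~\ref{lem:probabilities_MEC-quotient}, which supplies a two-way correspondence between the schedulers of $\wminMDP{\cM}{\Cause}$ and those of $\cN$ preserving all three of these values. Since $\Pr(\Diamond \Cause)$ is itself preserved, the positivity requirement $\Pr(\Diamond \Cause)>0$ is matched on both sides, so corresponding schedulers simultaneously lie in the class relevant to \eqref{GPR} and simultaneously satisfy or violate the displayed inequality.

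The only point requiring real care---and the closest thing to an obstacle---is the quantifier structure together with the edge case $\Pr(\Diamond \Cause)=0$: condition (G) quantifies only over schedulers that reach $\Cause$ with positive probability, so I must verify that the scheduler correspondence of Lemma~\ref{lem:probabilities_MEC-quotient} respects this restriction (it does, precisely because $\Pr(\Diamond \Cause)$ is among the preserved quantities) and that schedulers with $\Pr(\Diamond \Cause)=0$ are irrelevant to GPR on both sides. Once this is settled, the GPR condition holds for every relevant scheduler of $\wminMDP{\cM}{\Cause}$ if and only if it holds for every relevant scheduler of $\cN$, and combining this with the first reduction yields the corollary.
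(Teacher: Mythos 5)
Your proposal is correct and follows essentially the same route as the paper's proof: both chain Lemma~\ref{lemma:wmin-criterion-PR-causes} (reduction from $\cM$ to $\wminMDP{\cM}{\Cause}$) with the scheduler correspondence of Lemma~\ref{lem:probabilities_MEC-quotient} (reduction from $\wminMDP{\cM}{\Cause}$ to $\cN$). Your additional observations---rewriting \eqref{GPR} as $\Pr^{\sched}(\Diamond \Cause \wedge \Diamond \Effect) > \Pr^{\sched}(\Diamond \Cause)\cdot \Pr^{\sched}(\Diamond \Effect)$ so that only the three preserved probabilities matter, and checking that the positivity restriction $\Pr^{\sched}(\Diamond \Cause)>0$ is respected by the correspondence---merely make explicit what the paper leaves implicit in the phrase ``all relevant probabilities agree.''
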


\begin{proof}
	By Lemma \ref{lem:probabilities_MEC-quotient}, for each scheduler $\sched$ for $\wminMDP{\cM}{\Cause}$, there is a scheduler $\tsched$ for $\cN$ such that all relevant probabilities agree, and vice versa. So, $\Cause$ is a GPR cause for $\Effect$ in $\wminMDP{\cM}{\Cause}$ if and only if it is a GPR cause in $\cN$.
	By Lemma \ref{lemma:wmin-criterion-PR-causes}, 
	$\Cause$ is a GPR cause for $\Effect$ in $\wminMDP{\cM}{\Cause}$ if and only if it is a GPR cause in $\cM$.
\end{proof}

Note, however, that the transformation changes the memory-requirements of schedulers witnessing that $\Cause$ is not a GPR cause for $\Effect$. We will address the memory requirements in the original MDP later.
With assumptions (A1)-(A3), condition (G) can be reformulated as follows:

\begin{lem}
	\label{lem:GPR-poly-constraint}
	Under assumptions (A1)-(A3), $\Cause$ satisfies (G)
	if and only if for
	each scheduler $\sched$ with $\Pr_{\cM}^{\sched}(\Diamond \Cause) >0$
	the following condition holds:
	\begin{align*}
		\label{GPR-1}
		\Pr_{\cM}^{\sched}(\Diamond \Cause) \cdot \Pr^{\sched}_{\cM}( \Diamond \effuncov)
		\ < \ 
		\bigl(1{-}\Pr_{\cM}^{\sched}(\Diamond \Cause)\bigr)
		\cdot 
		\!\!\!\!\!\!\sum_{c\in \Cause} \!\!\!\!\!\!
		\Pr^{\sched}_{\cM}(\Diamond c) \cdot w_c
		\tag{GPR-1}
	\end{align*}
\end{lem}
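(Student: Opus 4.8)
The plan is to establish the equivalence by a direct algebraic rewriting of condition~(G), starting from the definition of conditional probability and then invoking the structural identities that assumptions (A1)--(A3) guarantee. Since both (G) and \eqref{GPR-1} quantify over exactly the schedulers $\sched$ with $\Pr^{\sched}_{\cM}(\Diamond \Cause) > 0$, it suffices to fix one such $\sched$ and show that the two inequalities are equivalent for it.

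First I would unfold the conditional probability appearing in~(G),
\[
\Pr^{\sched}_{\cM}(\Diamond \Effect \mid \Diamond \Cause) = \frac{\Pr^{\sched}_{\cM}(\Diamond \Effect \wedge \Diamond \Cause)}{\Pr^{\sched}_{\cM}(\Diamond \Cause)},
\]
and then replace the numerator using the observations recorded just before the lemma. Under (A1)--(A2) every path visiting a cause state proceeds deterministically via $\gamma$ to either $\effcov$ or $\noeffc$, both terminal, and the effect states are terminal; hence a path can visit $\Cause$ and $\Effect$ together only by reaching $\effcov$, so $\Pr^{\sched}_{\cM}(\Diamond \Effect \wedge \Diamond \Cause) = \Pr^{\sched}_{\cM}(\Diamond \effcov)$. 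Because the cause states cannot reach one another, the events $\Diamond c = (\neg\Cause)\Until c$ for distinct $c \in \Cause$ are disjoint, and reaching $\effcov$ decomposes as reaching some $c$ and then taking the $\gamma$-branch of probability $w_c$, giving
\[
\Pr^{\sched}_{\cM}(\Diamond \effcov) = \sum_{c \in \Cause} \Pr^{\sched}_{\cM}(\Diamond c) \cdot w_c.
\]
Finally, as $\Effect = \{\effuncov, \effcov\}$ is a disjoint pair of terminal states, $\Pr^{\sched}_{\cM}(\Diamond \Effect) = \Pr^{\sched}_{\cM}(\Diamond \effcov) + \Pr^{\sched}_{\cM}(\Diamond \effuncov)$.

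With these substitutions, (G) becomes $\Pr^{\sched}_{\cM}(\Diamond \effcov)/\Pr^{\sched}_{\cM}(\Diamond \Cause) > \Pr^{\sched}_{\cM}(\Diamond \effcov) + \Pr^{\sched}_{\cM}(\Diamond \effuncov)$. Multiplying through by $\Pr^{\sched}_{\cM}(\Diamond \Cause) > 0$ and collecting the $\Pr^{\sched}_{\cM}(\Diamond \effcov)$ terms gives
\[
\bigl(1 - \Pr^{\sched}_{\cM}(\Diamond \Cause)\bigr) \cdot \Pr^{\sched}_{\cM}(\Diamond \effcov) > \Pr^{\sched}_{\cM}(\Diamond \Cause) \cdot \Pr^{\sched}_{\cM}(\Diamond \effuncov),
\]
and inserting the sum expression for $\Pr^{\sched}_{\cM}(\Diamond \effcov)$ yields exactly \eqref{GPR-1}. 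Each step is an equivalence -- in particular the clearing of the denominator is reversible precisely because $\Pr^{\sched}_{\cM}(\Diamond \Cause)$ is strictly positive -- so the chain proves both directions simultaneously.

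I do not anticipate a real obstacle: the argument is essentially bookkeeping of covered versus uncovered effects, and all three structural identities are immediate consequences of (A1)--(A3). The only point that deserves explicit care is the identity $\Pr^{\sched}_{\cM}(\Diamond \Effect \wedge \Diamond \Cause) = \Pr^{\sched}_{\cM}(\Diamond \effcov)$, which uses both that effect states are terminal (so a path hitting $\effuncov$ can no longer reach $\Cause$) and the canonical two-transition shape of cause states from (A2); once these are in place the remaining manipulations are routine.
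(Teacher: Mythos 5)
Your proposal is correct and follows exactly the route the paper intends: the paper states this lemma without an explicit proof, treating it as immediate from the identities recorded just before it (covered effects correspond to $\Diamond\effcov$, uncovered to $\Diamond\effuncov$, and $\Diamond c = (\neg\Cause)\Until c$ with at most one cause state per path), combined with the same denominator-clearing algebra that appears in the derivation of \eqref{GPR-2} in the proof of Lemma~\ref{app:lem:criterion-global-prob-raising}. Your write-up supplies precisely those identities and the reversible algebraic manipulation, so it matches the paper's argument in substance and structure.
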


With assumptions (A1)-(A3),  a terminal state of $\cM$ is reached almost surely under any scheduler after finitely many steps in expectation.
Given a scheduler $\sched$ for $\cM$ recall the definition of expected frequencies of 
state action-pairs $(s, \alpha)$, states $s\in S$ and state-sets $T \subseteq S$
under $\sched$:
\begin{align*}
	\freq{\sched}{s,\alpha} & \ \ \eqdef \ \
	\mathrm{E}^{\sched}_{\cM}(\text{number of visits to $s$ in which $\alpha$ is taken})\\
	\freq{\sched}{s} &\ \ \eqdef
	\sum\nolimits_{\alpha \in \Act(s)} \freq{\sched}{s,\alpha},
	\qquad	\freq{\sched}{T} \eqdef \sum\nolimits_{s \in T} \freq{\sched}{s}.
\end{align*}
Let $T$ be one of the sets $\{\effcov\}$, $\{\effunc\}$, $\Cause$, or a singleton $\{c\}$ with $c\in \Cause$.
As $T$ is visited at most once during each run of $\cM$ (assumptions (A1) and (A2)), we have 
$\Pr^{\sched}_{\cN}(\Diamond T) = \freq{\sched}{T}$
for each scheduler $\sched$.
This allows us to express the violation of (G) in terms of a quadratic constraint system over variables for the expected frequencies of state-action pairs.
Let $\SA$ denote the set of state-action pairs in $\cM$. We consider the following constraint system over the variables $x_{s,\alpha}$ for each $(s,\alpha)\in \SA$ where we use the short form notation
$x_s = \sum_{\alpha \in \Act(s)}x_{s,\alpha}$:
\begin{align*}
	x_{s,\alpha} & \ \geqslant \ 0 \qquad \text{for all $(s,\alpha) \in \SA$}
	\tag{S1}
	\\[1ex]
	x_{\init} & \ = \
	1+ \!\!\!\!\! \sum_{(t,\alpha) \in \SA} \!\!\!\!\!
	x_{t,\alpha}\cdot P(t,\alpha,\init)
	\tag{S2}\\[1ex]
	x_{s} & \ =  \sum_{(t,\alpha) \in \SA} \!\!\!\!\! x_{t,\alpha}\cdot P(t,\alpha,s)
	\qquad \text{for all $s\in S\setminus\{\init\}$}
	\tag{S3}
\end{align*}
Using well-known results for MDPs without ECs (see, e.g., \cite[Theorem 9.16]{Kallenberg20}), given a vector $x\in \Real^{\SA}$, then $x$ is a solution to (S1) and the balance equations (S2) and (S3)
if and only if there is a (possibly history-dependent) scheduler $\sched$ for $\cM$ with $x_{s,\alpha}=\freq{\sched}{s,\alpha}$ for all $(s,\alpha)\in \SA$
if and only if there is an MR-scheduler $\sched$ for $\cM$ with $x_{s,\alpha}=\freq{\sched}{s,\alpha}$ for all $(s,\alpha)\in \SA$.

The violation of \eqref{GPR-1} in Lemma \ref{lem:GPR-poly-constraint}
and the condition $\Pr^{\sched}_{\cM}(\Diamond \Cause)>0$
can be reformulated in terms of the frequency-variables as follows where $x_{\Cause}$ is an abbreviation for $\sum_{c\in \Cause} x_c$:
\begin{align*}
	\label{non-GPR}
	0
	\ \ \geqslant \ \
	\bigl(1-  x_{\Cause} \bigl) \cdot \!\!\!\!
	&\sum_{c \in \Cause} \!\!\!\!\!\! x_{c} \cdot w_c \ - \ x_{\Cause} \cdot x_{\effuncov}
	\tag{S4} \\[1ex]
%\end{align*}
%Finally, we can reformulate the condition $\Pr^{\sched}_{\cM}(\Diamond \Cause)>0$ by frequency variables:
%\begin{align*}
	&x_\Cause > 0 \tag{S5}
\end{align*}

\begin{lem}
	\label{prop:quadradic-system}
	Under assumptions (A1)-(A3), the set $\Cause$ is not a GPR cause for $\Effect$ in $\cM$ iff the constructed quadratic system of inequalities (S1)-(S5) has a solution.
\end{lem}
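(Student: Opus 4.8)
The plan is to prove both directions of the equivalence by passing between schedulers and solutions of the balance equations through the standard expected-frequency characterization, and then matching the probabilistic inequality \eqref{GPR-1} from Lemma~\ref{lem:GPR-poly-constraint} term-by-term with the quadratic inequality (S4) together with the positivity constraint (S5). The two facts I would lean on are already assembled above: first, the characterization cited from \cite[Theorem 9.16]{Kallenberg20}, which says that under (A3) (no end components) a vector $x\in\Real^{\SA}$ satisfies (S1)--(S3) if and only if there is a scheduler $\sched$ (equivalently, an MR-scheduler) with $x_{s,\alpha}=\freq{\sched}{s,\alpha}$ for all $(s,\alpha)\in\SA$; and second, the observation that under (A1) and (A2) each of $\{\effcov\}$, $\{\effunc\}$, $\Cause$, and every singleton $\{c\}$ with $c\in\Cause$ is entered at most once on any run, so that $\Pr^{\sched}_{\cM}(\Diamond T)=\freq{\sched}{T}$. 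Consequently, for any scheduler $\sched$ with frequency vector $x$, we have $x_{\Cause}=\Pr^{\sched}_{\cM}(\Diamond\Cause)$, $x_{\effuncov}=\Pr^{\sched}_{\cM}(\Diamond\effuncov)$, and $x_{c}=\Pr^{\sched}_{\cM}(\Diamond c)$ for each $c\in\Cause$.

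With these ingredients in hand the argument becomes a direct rewriting. For the forward direction, if $\Cause$ is not a GPR cause then, by Lemma~\ref{lem:GPR-poly-constraint}, some scheduler $\sched$ with $\Pr^{\sched}_{\cM}(\Diamond\Cause)>0$ violates \eqref{GPR-1}, i.e.\ satisfies $\Pr^{\sched}_{\cM}(\Diamond\Cause)\cdot\Pr^{\sched}_{\cM}(\Diamond\effuncov)\geqslant\bigl(1-\Pr^{\sched}_{\cM}(\Diamond\Cause)\bigr)\cdot\sum_{c\in\Cause}\Pr^{\sched}_{\cM}(\Diamond c)\cdot w_c$. Setting $x_{s,\alpha}=\freq{\sched}{s,\alpha}$ yields a solution of (S1)--(S3) by the frequency characterization; substituting the probability/frequency identities turns the violated inequality into (S4) and turns $\Pr^{\sched}_{\cM}(\Diamond\Cause)>0$ into (S5).

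For the converse, any solution $x$ of (S1)--(S5) is, again by \cite[Theorem 9.16]{Kallenberg20}, the frequency vector of some (MR-)scheduler $\sched$. Then (S5) gives $\Pr^{\sched}_{\cM}(\Diamond\Cause)=x_{\Cause}>0$, and (S4) rewrites back to the violation of \eqref{GPR-1}, so by Lemma~\ref{lem:GPR-poly-constraint} the set $\Cause$ fails condition (G) and is therefore not a GPR cause for $\Effect$.

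The only genuinely delicate point is to apply the frequency--probability identification exactly to the sets for which it is valid (the terminal states $\effcov,\effunc$ and the at-most-once-visited cause set), and in particular to justify that $x_{\Cause}=\sum_{c\in\Cause}x_c$ simultaneously equals $\Pr^{\sched}_{\cM}(\Diamond\Cause)$ and $\sum_{c\in\Cause}\Pr^{\sched}_{\cM}(\Diamond c)$. This holds precisely because assumption (A2) forces the cause states to be mutually unreachable, so the events $\Diamond c$ for $c\in\Cause$ are pairwise disjoint and their union is $\Diamond\Cause$. Everything else is the routine algebraic rearrangement already performed in deriving \eqref{GPR-1} from (G) and (S4) from the negation of \eqref{GPR-1}, so no further computation is needed.
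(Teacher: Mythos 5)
Your proof is correct and follows essentially the same route as the paper: the paper's (implicit) argument for this lemma is exactly the combination of the frequency characterization of (S1)--(S3) via \cite[Theorem 9.16]{Kallenberg20}, the identity $\Pr^{\sched}_{\cM}(\Diamond T)=\freq{\sched}{T}$ for the at-most-once-visited sets guaranteed by (A1)--(A2), and Lemma~\ref{lem:GPR-poly-constraint}, so that (S4)--(S5) are precisely the negation of \eqref{GPR-1} together with $\Pr^{\sched}_{\cM}(\Diamond\Cause)>0$. Your attention to the disjointness of the events $\Diamond c$ for $c\in\Cause$ under (A2) matches the paper's remark that cause states cannot reach each other.
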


We can now prove our first main Theorem \ref{thm:checking-GPR-in-poly-space} of this subsection, stating that deciding the GPR condition can be done in $\coNP$.

\begin{proof}[Proof of Theorem \ref{thm:checking-GPR-in-poly-space}]
	The quadratic system of inequalities can be constructed from $\cM$, $\Cause$, and $\Effect$ in polynomial time.
	Except for the strict inequality constraint in (S5), it has the form of a quadratic program, for which the threshold problem can be decided in $\NP$ by \cite{Vavasis1990}.
	We will prove that also with this strict inequality, it can be checked in $\NP$ whether the system (S1)-(S5) has a solution.
	As the system of inequalities is expressing the violation of \eqref{GPR}, deciding whether a set $\Cause$ is a GPR cause can then be done in $\coNP$.
	
	To show that satisfiability of the system (S1)-(S5) is in $\NP$, we will provide a non-deterministic algorithm that runs in polynomial time and finds a solution if one exists. 
	Some of the arguments are similar to the arguments used in \cite{Vavasis1990}. Additionally, we will rely on the implicit function theorem.
	
	We begin by proving what a solution to (S1)-(S5) can be assumed to look like.
	Thus assume that a solution to (S1)-(S5) exists. 
	There are two possible cases:
	\begin{description}
		\item[Case 1]
		All solutions to (S1)-(S3) and (S5) satisfy (S4). Then, in particular, the frequency values of an MD-scheduler maximizing the probability to reach Cause are a solution to (S1)-(S3) and (S5) and hence to (S4) in this case. 
		\item[Case 2]
		There are solutions to (S1)-(S3) and (S5) that violate (S4). The space of feasible points for conditions (S1)-(S3) and (S5) is connected.
		The right hand side of (S4)
		\[
		\bigl(1-  x_{\Cause} \bigl) \cdot \!\!\!\!
		\sum_{c \in \Cause} \!\!\!\!\!\! x_{c} \cdot w_c \ - \ x_{\Cause} \cdot x_{\effuncov}
		\]
		is continuous. Hence, as there are also solutions to (S1)-(S3) and (S5) that  satisfy (S4) by assumption, there is a solution to (S1)-(S3) and (S5) that satisfies
		\[
		\bigl(1-  x_{\Cause} \bigl) \cdot \!\!\!\!
		\sum_{c \in \Cause} \!\!\!\!\!\! x_{c} \cdot w_c \ - \ x_{\Cause} \cdot x_{\effuncov} = 0. \tag{S4'}
		\]
	\end{description}
	Now, let us take a closer look at Case 2:
	First of all, we add the equation
	\[
	x_{\Cause} = \sum_{c\in \Cause} x_c \tag{S6}
	\]
	to our system.
	Thus, the variables are $x_{\Cause}$ and  $x_{s,\alpha}$ for each $(s,\alpha)\in \SA$. 
	Obviously, this does not influence the satisfiability.
	Equation (S4') now contains the new variable $x_{\Cause}$, which is not an abbreviation anymore.
	We write $x$ for the vector of variables $x_{s,\alpha}$ with $(s,\alpha)\in \SA$.
	
	In Case 2,	there is a solution $(x^\ast,x_{\Cause}^\ast)$  such that the maximal possible number of variables is $0$ 	and such that $x_{\Cause}^\ast$ is maximal among all such solutions.
	Let $X^\prime$ be the set of variables that are $0$ in $(x^\ast,x_{\Cause}^\ast)$.
	We remove all variables from $X^\prime$ from all constraints by setting them to $0$
	and call the resulting system (T1)-(T6) where (T4) is obtained from (S4'), while all other equations (T$i$) are obtained from (S$i$), by removing the chosen variables.
	We then collect the remaining variables in the vector $v=(y,y_{\Cause})$.
	Let $(y^\ast,y_{\Cause}^\ast)$ be the solution $(x^\ast,x_{\Cause}^\ast)$ after the variables in $X^\prime$ have been removed.
	Thus, all values in this vector are positive.
	
	Define the function $f$ as the right hand side of (T4):
	\[
	f(y,y_{\Cause})=\bigl(1-  y_{\Cause} \bigl) \cdot \!\!\!\!
	\sum_{c \in \Cause} \!\!\!\!\!\! y_{c} \cdot w_c \ - \ y_{\Cause} \cdot y_{\effuncov},\]
	where the variables $y$ are as the original variables $x$ after the  variables in $X^\prime$ have been removed.
	
	Now, we apply the implicit function theorem:
	Observe that
	\[
	\frac{\partial f(y,y_{\Cause})}{\partial y_{\Cause}} = -\sum_{c \in \Cause}  y_{c} \cdot w_c - y_{\effuncov}.
	\]
	Evaluated at $(y^\ast,y_{\Cause}^\ast)$, this value is non-zero as all summands are negative and there are at least some of the variables in the abbreviation $y_{c} $ with $c\in \Cause$ left, i.e., not removed because they were not $0$ due to the original constraint (S5).
	So, we can apply the implicit function theorem, which guarantees us the existence of a function $g(y)$, such that $g(y^\ast)=y_{\Cause}^\ast$ and, for all $y^\prime$ in an open ball $B_1$ around $y^{\ast}$,  we have
	\[
	f(y^\prime,g(y^\prime))=0.
	\] 
	By the implicit function theorem, we can explicitly compute the gradient 
	\[
	\nabla g = \left( \frac{\partial  g(y)}{\partial {y_1}},\dots,  \frac{\partial g(y)}{\partial {y_k}} \right) = - \left(\frac{\partial f(y,y_{\Cause})}{\partial y_{\Cause}}\right)^{-1}\cdot 
	\left( \frac{\partial  f(y,y_{\Cause})}{\partial {y_1}},\dots,  \frac{\partial f(y,y_{\Cause})}{\partial {y_k}} \right)
	\]
	of the derivatives on $B_1$
	for the appropriate $k$ from the derivatives of $f$. 
	Note that on $B_1$, the gradient  $\nabla g$ is $0$ iff 
	\[
	H(y,y_{\Cause})\eqdef \left( \frac{\partial  f(y,y_{\Cause})}{\partial {y_1}},\dots,  \frac{\partial f(y,y_{\Cause})}{\partial {y_k}} \right)
	\]
	is $0$. Furthermore, all entries of $H(y,y_{\Cause})$ are linear in the variables $v$ as the function $f$ is quadratic.
	As the function $g$ has a local maximum in $y^\ast$, we know that $\nabla g$ evaluated at $y^\ast$ is $0$.

	Equations (T2), (T3), and (T6) are linear equations in the remaining variables $v$.
	We can rewrite these three equations with a matrix $M$ and a vector $b$ whose entries can easily be expressed in terms of the coefficients of the original system  (again, after the set of variables $X^\prime$ has been removed) as
	\[
	Mv=b.
	\]
	The solutions to this equation form an $r$-dimensional affine space $W$. It can be written as
	\[
	W=\{\underbrace{c_0+c_1\cdot z_1+\dots+c_r\cdot z_r}_{\eqdef h(z_1,\dots,z_r)} \mid (z_1,\dots,z_r)\in \mathbb{R}^r\}
	\]
	for some vectors $c_0,c_1, \dots, c_r$ which can be computed from $M$ and $b$ in polynomial time.
	
	Let $B_2$ be an open ball in $\mathbb{R}^r$ such that $h(B_2)\subseteq B_1$ and such that $h(B_2)$ contains $(y^\ast,y_{\Cause}^\ast)$.
	We claim that $g\circ h\colon B_2 \to \mathbb{R}$ has an isolated local maximum at $z^\ast \eqdef h^{-1}(y^\ast,y_{\Cause}^\ast)$.
	It is clear that $g \circ h$ has a local maximum since $g$ has a local maximum at $(y^\ast,y_{\Cause}^\ast)$.
	Suppose now, that $g\circ h$ does not have an isolated local maximum at $h^{-1}(y^\ast,y_{\Cause}^\ast)$.
	As $h$ is an  affine map and the graph of $g$ is the solution to a quadratic equation, this is only possible if there is a direction $d\in \mathbb{R}^r\setminus \{0\}$ such that
	\begin{align}
		\label{eq:GPR-check-local-maximum}
		g\circ h (z^\ast) = g\circ h (z^\ast + t \cdot d)
	\end{align}
	for all $t\in \mathbb{R}$. Due to the boundedness of the polyhedron described by conditions (T1)-(T3), (T5) and (T6) and since $z^\ast$ lies in the interior of this polyhedron,  this means that there must be a value $q\in \mathbb{R}$ such that $(h (z^\ast + q \cdot d), g \circ h (z^\ast + q \cdot d)) $ provides a solution $v$ to equations (T1)-(T3), (T5) and (T6) with an additional $0$.
	By the definition of $g$, this solution furthermore satisfies (T4) and, by equation \eqref{eq:GPR-check-local-maximum}, it still satisfies (T5). This contradicts the choice of the original 
	solution $(x^\ast,x_{\Cause}^\ast)$.
	
	So, $g\circ h\colon B_2 \to \mathbb{R}$ has an isolated local maximum at $z^\ast$. This implies that on an open ball around $z^\ast$, the point $z^\ast$ is the only solution to 
	\[
	\nabla g (h(z))=0
	\]
	and consequently, the only solution to 
	\[
	H(h(z))=0.
	\]
	Since $H(h(z))$ is a vector of linear expressions in $z$, this implies that $z^\ast$ is the only solution on $\mathbb{R}^r$ to $H(h(z))=0$. 
	This is the key result that we need to provide a non-deterministic polynomial-time algorithm to check the satisfiability of the original constraint system.
	
	Let us now describe the algorithm:
	The algorithm begins by computing the frequency values of an MD-scheduler as in Case 1 in polynomial time and checks whether the resulting vector of frequency values satisfies (S1)-(S5).
	If this is the case, the algorithm returns that the system is satisfiable.
	
	If this is not the case, the algorithm tries to compute a solution to (S1)-(S3), (S5), and (S4') as in Case 2.
	The algorithm non-deterministically guesses a subset  of the variables and removes them 
	from all constraints by replacing them with $0$. 
	
	Suppose we guess the set $X^{\prime}$ as above.
	We show that we then compute a solution.
	After the variables from $X^\prime$ have been removed, $H(y,y_{\Cause})$ can be computed in polynomial time as all the derivatives of $f$ are linear expressions in the variables which require basic arithmetic and can be computed in polynomial time.
	Likewise, $M$ and $b$ can be computed in polynomial time from the original constraints after the guessed variables have been removed.
	The vectors $c_0,c_1, \dots, c_r$ describing the solution space to $Mv=b$ can then also be computed in polynomial time.
	
	Thus, also the vector $H(h(z))$ of linear expressions in the variables $z$ can be computed in polynomial time.
	The equation system $H(h(z))=0$ has a unique solution if the guessed variables were indeed $X^\prime$.
	In this case, the solution $z^\ast$ can be computed in polynomial time as well. 
	If the guess of variables was not $X^\prime$, then either there is no unique solution to this equation system which can be detected in polynomial time, or the solution, which is computed in the sequel in polynomial time, might not satisfy the original constraints, which is checked in the end. 
	
	From $z^\ast$, we can compute  $y^\ast=h(z^\ast)$ using the vectors $c_0,c_1,\dots ,c_r$. The solution
	$x^\ast$ is then  obtained by plugging in $0$s for the removed variables. Checking whether the resulting vector satisfies all constraints can also be done in polynomial time in the end.
	If $X^\prime$ was guessed correctly, this vector $x^\ast$ indeed forms a solution to the original constraints as we have seen.
	
	In summary, the algorithm needs to guess the set $X^\prime$ of variables which are $0$ in a solution to the original constraints with the maximal number of zeroes.
	All other steps are deterministic polynomial-time computations.
	Thus, satisfiability of (S1)-(S5) can be checked in NP.
\end{proof}

%%%%%%%%%%%%%%%%%%%%%%%%%%%%%%%%%%%%%%%%%%%%%%%%%%%%%%%%%%%%%%%%%%%%%%%%%%%%%  

\noindent{\bf Memory requirements of schedulers in the original MDP (Proof of Theorem \ref{thm:MR-sufficient-GPR}). }
Every solution to the
linear system of inequalities (S1), (S2), and (S3) corresponds to the
expected frequencies of state-action pairs of an MR-scheduler in the
transformed model satisfying (A1)-(A3). Hence:

\begin{cor}\label{cor:MR-scheduler}
	Under assumptions (A1)-(A3),
	$\Cause$ is no GPR cause for
	$\Effect$ iff there exists an MR-scheduler $\tsched$
	with $\Pr^{\tsched}_{\cM}(\Diamond \Cause) >0$
	violating \eqref{GPR}.
\end{cor}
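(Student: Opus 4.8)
The plan is to obtain this corollary essentially for free by combining Lemma~\ref{prop:quadradic-system}, which characterizes non-causality through solvability of the constraint system (S1)--(S5), with the standard correspondence between solutions of the balance equations and expected frequencies of MR-schedulers that was recalled just before (S4)--(S5) (see \cite[Theorem 9.16]{Kallenberg20}). The only genuine work is translating the frequency-based constraints back into the probabilistic statements appearing in condition (G).

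The direction ``$\Leftarrow$'' is immediate: an MR-scheduler is in particular a scheduler, so if some MR-scheduler $\tsched$ with $\Pr^{\tsched}_{\cM}(\Diamond \Cause)>0$ violates \eqref{GPR}, then condition (G) already fails, and hence $\Cause$ is by definition not a GPR cause for $\Effect$. For the substantive direction ``$\Rightarrow$'', I would assume $\Cause$ is not a GPR cause and apply Lemma~\ref{prop:quadradic-system} to obtain a solution $x=(x_{s,\alpha})_{(s,\alpha)\in\SA}$ of the system (S1)--(S5). Since $x$ in particular satisfies (S1) together with the balance equations (S2) and (S3), the cited result yields an MR-scheduler $\tsched$ with $\freq{\tsched}{s,\alpha}=x_{s,\alpha}$ for all $(s,\alpha)\in\SA$.

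It then remains to verify that this $\tsched$ is a witness. The key point is that under assumptions (A1)--(A3) each of the sets $\Cause$, $\{c\}$ for $c\in\Cause$, and $\{\effuncov\}$ is visited at most once along any run, so expected frequency equals reachability probability: $\Pr^{\tsched}_{\cM}(\Diamond \Cause)=\freq{\tsched}{\Cause}=x_{\Cause}$, and likewise $\Pr^{\tsched}_{\cM}(\Diamond c)=x_c$ and $\Pr^{\tsched}_{\cM}(\Diamond \effuncov)=x_{\effuncov}$. Constraint (S5) now gives $\Pr^{\tsched}_{\cM}(\Diamond \Cause)=x_{\Cause}>0$, while (S4), read as $x_{\Cause}\cdot x_{\effuncov}\geqslant (1-x_{\Cause})\sum_{c\in\Cause} x_c\cdot w_c$, is precisely the negation of \eqref{GPR-1} instantiated for $\tsched$. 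By Lemma~\ref{lem:GPR-poly-constraint}, the negation of \eqref{GPR-1} is equivalent to the failure of (G) under $\tsched$, so $\tsched$ violates \eqref{GPR}, as required.

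I do not expect a real obstacle here, since the hard content (the solvability equivalence of Lemma~\ref{prop:quadradic-system} and the MR-realizability of frequency vectors) is already available. The one step that needs care is the identification of expected frequencies with reachability probabilities; this is exactly where assumptions (A1)--(A3) are used, guaranteeing that the relevant states are terminal or cannot be revisited so that $\freq{\tsched}{T}=\Pr^{\tsched}_{\cM}(\Diamond T)$ for the sets $T$ in question. Everything else is bookkeeping.
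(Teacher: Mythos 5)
Your proposal is correct and takes essentially the same route as the paper: the paper obtains this corollary directly from Lemma~\ref{prop:quadradic-system} combined with the correspondence (via \cite[Theorem 9.16]{Kallenberg20}) between solutions of (S1)--(S3) and expected frequencies of MR-schedulers, with the step you rightly flag as the one needing care --- that under (A1)--(A3) the sets $\Cause$, $\{c\}$, and $\{\effuncov\}$ are visited at most once, so $\freq{\tsched}{T}=\Pr^{\tsched}_{\cM}(\Diamond T)$ --- being exactly the observation the paper records before introducing (S4) and (S5).
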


The model transformation we used for assumptions (A1)-(A3), however, does affect the memory requirements of a violating scheduler.
We may further restrict the MR-schedulers necessary to witness non-causality under assumptions (A1)-(A3).
For the following lemma, recall that $\tau$ is the action of the MEC quotient used for the extra transition from states representing MECs to a new trap state (see also assumption (A3)).

\begin{lem}
	\label{lem:MR-sufficient-global-cause}
	Assume (A1)-(A3).
	Given an MR-scheduler $\usched$ with $\Pr^{\usched}_{\cM}(\Diamond \Cause) >0$ that violates \eqref{GPR},
	an MR-scheduler $\tsched$ with $\tsched(s)(\tau)\in \{0,1\}$ for each state $s$ with $\tau \in \Act(s)$
	that satifies $\Pr^{\tsched}_{\cM}(\Diamond \Cause) >0$ and violates \eqref{GPR}
	is computable in polynomial time.
\end{lem}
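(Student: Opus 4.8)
The plan is to work throughout under assumptions (A1)--(A3) and to manipulate the given MR-scheduler $\usched$ directly, tracking only reachability probabilities. By Lemma~\ref{lem:GPR-poly-constraint} (adding $\Pr^{\sched}_{\cM}(\Diamond\effcov)=\sum_{c}\Pr^{\sched}_{\cM}(\Diamond c)\cdot w_c$ to both sides of \eqref{GPR-1}), a scheduler $\sched$ with $\Pr^{\sched}_{\cM}(\Diamond\Cause)>0$ refutes \eqref{GPR} exactly when
\[
F(\sched)\ \eqdef\ \Pr^{\sched}_{\cM}(\Diamond\Cause)\cdot\Pr^{\sched}_{\cM}(\Diamond\Effect)-\Pr^{\sched}_{\cM}(\Diamond\effcov)\ \geq\ 0 .
\]
The decisive structural point is that the action $\tau$ leads deterministically to the trap $\noeffbot$, which lies in none of $\Cause$, $\Effect=\{\effuncov,\effcov\}$; hence raising the $\tau$-probability in a state only diverts mass to $\noeffbot$ and never contributes to reaching $\Cause$, $\Effect$, or $\effcov$.

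Starting from $\usched$, I would round the $\tau$-probabilities to $\{0,1\}$ one state at a time. Fix a state $s$ with $\tau\in\Act(s)$ and $\lambda\eqdef\usched(s)(\tau)\in\,]0,1[$, freeze all other decisions, and keep the conditional distribution of $\usched$ on $\Act(s)\setminus\{\tau\}$ fixed while letting $\lambda$ range over $[0,1]$ (so the non-$\tau$ mass is $(1-\lambda)$ times a fixed distribution); call the resulting scheduler $\sched_\lambda$. Since $\lambda$ affects the induced chain only through the single state $s$, a first-passage analysis at $s$ gives, for every target $t\in\{\Cause,\Effect,\effcov\}$,
\[
\Pr^{\sched_\lambda}_{\cM}(\Diamond t)\ =\ h_t+p_t\cdot\frac{(1-\lambda)\,q_t}{1-(1-\lambda)\,r},
\]
where $r\in[0,1[$ is the probability of returning to $s$ after one non-$\tau$ step (here $r<1$, as $r=1$ would yield a non-terminal bottom component, i.e.\ an end component, contradicting (A3)), and $h_t,p_t,q_t\geq 0$ are independent of $\lambda$; for the set targets $\Cause$ and $\Effect$ this is a sum of such terms sharing $r$, and all targets are hit at most once by (A1)--(A2).

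The key step is the substitution $u\eqdef\tfrac{1-\lambda}{1-(1-\lambda)r}$, a decreasing bijection of $\lambda\in[0,1]$ onto $u\in[0,\tfrac{1}{1-r}]$ sending $\lambda=1$ to $u=0$ and $\lambda=0$ to $u=\tfrac{1}{1-r}$. In $u$ each probability is affine with non-negative slope, $\Pr^{\sched_\lambda}_{\cM}(\Diamond t)=a_t+b_t\,u$ with $b_t\geq 0$, so $F$ becomes the parabola $(a_{\Cause}+b_{\Cause}u)(a_{\Effect}+b_{\Effect}u)-(a_{\effcov}+b_{\effcov}u)$ with leading coefficient $b_{\Cause}b_{\Effect}\geq 0$, hence convex in $u$. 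As the starting value $u^\ast$ (for $\lambda\in\,]0,1[$) is interior and $F(u^\ast)=F(\usched)\geq 0$, convexity forces $\max\{F(u{=}0),F(u{=}\tfrac1{1-r})\}\geq F(u^\ast)\geq 0$, so some endpoint $\lambda\in\{0,1\}$ keeps $F\geq 0$. To preserve $\Pr(\Diamond\Cause)>0$: since $\Pr^{\sched_\lambda}_{\cM}(\Diamond\Cause)=a_{\Cause}+b_{\Cause}u$ is non-decreasing in $u$ and positive at $u^\ast$, the endpoint $\lambda=0$ (largest $u$) has $\Pr(\Diamond\Cause)\geq\Pr^{\usched}_{\cM}(\Diamond\Cause)>0$; and if instead only $\lambda=1$ achieves $F\geq 0$, then convexity with $F(u^\ast)\geq 0$ forces $F(u{=}0)>0$ strictly, which in turn forces $\Pr(\Diamond\Cause)>0$ at $u=0$ (otherwise $F(u{=}0)=-\Pr(\Diamond\effcov)\leq 0$). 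Thus some endpoint yields a scheduler still refuting \eqref{GPR}. Replacing $\usched(s)(\tau)$ by this value and iterating over all states with $\tau$ enabled gives the required $\tsched$ with $\tsched(s)(\tau)\in\{0,1\}$; each round evaluates only two reachability-probability vectors, so the construction is polynomial.

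The main obstacle is exactly the convexity argument: $F$ is a genuinely bilinear functional of the scheduler, so naive monotone rounding of $\lambda$ to an extreme need not preserve $F\geq 0$. The substitution $u=\tfrac{1-\lambda}{1-(1-\lambda)r}$, which linearises the reachability probabilities, together with the observation that both factors of the bilinear term have non-negative slopes in $u$, is what makes the restriction of $F$ a convex parabola and lets the endpoints dominate. The secondary subtlety---retaining $\Pr^{\sched}_{\cM}(\Diamond\Cause)>0$ at the chosen extreme---is settled by the same convexity combined with the monotonicity of $\Pr(\Diamond\Cause)$ in $u$.
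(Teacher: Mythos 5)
Your proof is correct, and it follows the same high-level strategy as the paper's—round the $\tau$-probabilities to $\{0,1\}$ one state at a time, restrict the violation functional to a one-parameter family of MR-schedulers, and use convexity of a quadratic to show that one of the two endpoints preserves the violation—but the decisive technical step is genuinely different, and in fact more robust. The paper's proof asserts that the relevant reachability probabilities scale linearly in the non-$\tau$ mass $x$ at the modified state ($p_x=p_y\tfrac{x}{y}$, etc.); this is only valid when the modified state cannot be revisited and every path to the relevant targets passes through it, neither of which is guaranteed in the MEC-quotient (only almost-sure cycles, i.e.\ end components, are excluded there, so cycles through a $\tau$-state with positive escape probability are possible). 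Your first-passage decomposition $\Pr^{\sched_\lambda}_{\cM}(\Diamond t)=h_t+p_t\frac{(1-\lambda)q_t}{1-(1-\lambda)r}$ combined with the substitution $u=\frac{1-\lambda}{1-(1-\lambda)r}$ handles exactly these cases: all three probabilities become affine in $u$ with non-negative slopes, so the functional $F$ is a convex parabola in $u$ and the endpoint argument is rigorous in full generality. You are also more careful on a second point: the paper infers $f(0),f(1)<0$ from ``$\vsched_0,\vsched_1$ satisfy \eqref{GPR}'', overlooking that a scheduler may satisfy \eqref{GPR} vacuously with $\Pr(\Diamond\Cause)=0$ (in which case $f=0$); your case distinction (monotonicity of $a_{\Cause}+b_{\Cause}u$ for the endpoint $\lambda=0$, and strictness of $F(0)>0$ forcing $a_{\Cause}>0$ for the endpoint $\lambda=1$) closes this hole explicitly. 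What the paper's route buys is brevity; what yours buys is validity without implicit structural assumptions, at the modest cost of the renewal computation.
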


\begin{proof}
	Let $\usched$ be a scheduler with
	$\Pr^{\usched}_{\cM}(\Diamond \Cause) > 0$ violating \eqref{GPR-1}, i.e.:
	\[
	\Pr_{\cM}^{\usched}(\Diamond \Cause) \cdot
	\Pr^{\usched}_{\cM}( \Diamond \effuncov)
	\ < \ 
	\bigl(1{-}\Pr_{\cM}^{\usched}(\Diamond \Cause)\bigr)
	\cdot 
	\!\!\!\!\!\sum_{c\in \Cause} \!\!\!\!\!
	\Pr^{\usched}_{\cM}(\Diamond c) \cdot w_c.
	\]	
	We will show  how to transform $\usched$ into an
	MR-scheduler $\tsched$ that schedules the $\tau$-transitions to
	$\noeffbot$ with probability 0 or 1.
	We regard the set $U$ of states
	$u$ which have a $\tau$-transition to $\noeffbot$
	(recall that then $P(u,\tau,\noeffbot)=1$)
	and where $0 < \usched(u)(\tau) < 1$.
	We now process the $U$-states in an arbitrary order,
	say $u_1,\ldots,u_k$,
	and generate a sequence $\tsched_0=\usched,\tsched_1,\ldots,\tsched_k$
	of MR-schedulers such that for $i \in \{1,\ldots,k\}$:
	\begin{itemize}
		\item
		$\tsched_i$ refutes \eqref{GPR}
		(or equivalently condition \eqref{GPR-1} from
		Lemma \ref{lem:GPR-poly-constraint})
		\item 
		$\tsched_i$
		agrees with $\tsched_{i-1}$ for all states but $u_i$,
		\item
		$\tsched_i(u_i)(\tau)\in \{0,1\}$.
	\end{itemize}
	Thus, the final scheduler $\tsched_k$ satisfies the
	desired properties.
	
	To explain how to derive $\tsched_i$ from $\tsched_{i-1}$,
	let $i\in \{1,\ldots,k\}$, $\vsched=\tsched_{i-1}$, $u=u_i$
	and $y=1{-}\vsched(u)(\tau)$. By definition we have, $0<y<1$ (as $u\in U$ and
	by definition of $U$) and
	$y= \sum_{\alpha\in \Act(u)\setminus \{\tau\}} \vsched(u)(\alpha)$.
	For $x\in [0,1]$, let $\vsched_x$ denote the MR-scheduler that
	agrees with $\vsched$ for all states but $u$, for which
	$\vsched_x$'s decision is:
	\[
	\vsched_x(u)(\tau)=1{-}x,
	\qquad
	\vsched_x(u)(\alpha)=\vsched(u)(\alpha) \cdot \frac{x}{y}
	\quad \text{for $\alpha \in \Act(u)\setminus \{\tau\}$}
	\] 
	Obviously, $\vsched_y=\vsched$.
	We now show that at least one of the two MR-schedulers
	$\vsched_0$ or $\vsched_1$ also refutes \eqref{GPR}.
	For this, we suppose by contraction that this is not the case,
	which means that \eqref{GPR} holds for both.	
	Let $f : [0,1]\to \mathbb{Q}$ be defined by
	\[
	f(x) \ = \
	\Pr_{\cM}^{\vsched_x}(\Diamond \Cause) \cdot
	\Pr^{\vsched_x}_{\cM}( \Diamond \effuncov)
	- 
	\bigl(1{-}\Pr_{\cM}^{\vsched_x}(\Diamond \Cause)\bigr)
	\cdot 
	\!\!\!\!\!\sum_{c\in \Cause} \!\!\!\!\!\!
	\Pr^{\vsched_x}_{\cM}(\Diamond c) \cdot w_c
	\]
	As $\vsched = \vsched_y$ violates \eqref{GPR-1},
	while $\vsched_0$ and $\vsched_1$
	satisfy \eqref{GPR-1} we obtain:
	\[
	f(0), f(1) < 0  \qquad \text{and} \qquad f(y) \geqslant 0
	\]
	We now split $\Cause$ into the set $C$ of states $c\in \Cause$
	such that there is a $\vsched$-path from $\init$ to $c$ that traverses
	$u$ and $D=\Cause \setminus C$.
	Thus, $\Pr^{\vsched_x}_{\cM}(\Diamond \Cause)= p_x + p$
	where $p_x=\Pr^{\vsched_x}(\Diamond C)$ and
	$p=\Pr^{\vsched}(\Diamond D)$.
	Similarly, $\Pr^{\vsched_x}_{\cM}(\Diamond \effunc)$ has the form
	$q_x + q$ where
	$q_x=\Pr^{\sched_x}_{\cM}(\Diamond (u \wedge \Diamond \effunc))$
	and
	$q= \Pr^{\sched_x}_{\cM}((\neg u) \Until \effunc)$.
	With $p_{x,c}=\Pr^{\vsched_x}_{\cM}(\Diamond c)$
	for $c\in C$
	and $p_d = \Pr^{\vsched}_{\cM}(\Diamond d)$ for $d\in D$,
	let
	\[
	v_x  = \sum_{c\in C} p_{x,c}\cdot w_c
	\qquad \text{and} \qquad 
	v = \sum_{d\in D} p_{d}\cdot w_d
	\]
	As $y$ is fixed, the values
	$p_{y},p_{y,c},q_{y},v_{y}$ can be seen as
	constants.
	Moreover, $p_x,p_{x,c},q_x,v_x$ differ from
	$p_y,p_{y,c},q_y,v_y$
	only by the factor $\frac{x}{y}$.
	That is:
	\begin{center}
		$p_x=p_y\frac{x}{y}$, \ \
		$p_{x,c}=p_{y,c}\frac{x}{y}$, \ \   
		$q_x=q_y\frac{x}{y}$ \ \ and \ \
		$v_x=v_y\frac{x}{y}$.
	\end{center}
	Thus,
	$f(x)$ has the following form:
	\begin{eqnarray*}
		f(x) & = &
		(p_x{+}p)(q_x{+}q) - \bigl(1{-}(p_x{+}p)\bigr)(v_x{+}v)
		\\[1ex]
		& = &
		\underbrace{p_x q_x {+} p_xv_x}_{\mathfrak{a}x^2} +
		\underbrace{p_x(q+v) + q_xp - v_x}_{\mathfrak{b}x} +
		\underbrace{pq -v +pv}_{\mathfrak{c}}
		\\[1ex]
		& = & \mathfrak{a}x^2 + \mathfrak{b}x + \mathfrak{c}
	\end{eqnarray*}
	For the value $\mathfrak{a}$, we have
	$\mathfrak{a}x^2=p_x q_x {+} p_x v_x $ and hence
	$\mathfrak{a}= \frac{1}{y^2}(p_yq_y + p_yv_y)>0$.
	But then the second derivative $f''(x)=2\mathfrak{a}$ of
	$f$ is positive, which yields
	that $f$ has a global minimum at some point $x_0$ and is strictly
	decreasing for $x < x_0$ and strictly increasing for $x> x_0$.
	Since $f(0)$ and $f(1)$ are both negative, we obtain
	$f(x) <0$ for all $x$ in the interval $[0,1]$.
	But this contradicts $f(y) \geqslant 0$.
	
	This yields that at least one of the schedulers $\vsched_0$
	or $\vsched_1$ witnesses the violation of \eqref{GPR}.
	Thus, we can define $\tsched_i\in \{\vsched_0,\vsched_1\}$ accordingly.
	
	The number of states $k$ in $U$ is bounded by the number of states in $S$. In each iteration of the above construction, the function value $f(0)$ is sufficient to determine one of the schedulers $\vsched_0$
	and $\vsched_1$ witnessing the violation of \eqref{GPR}.
	So, the procedure has to compute the values in condition \eqref{GPR-1} for $k$-many MR-schedulers and update the scheduler afterwards.
	As the update can easily be carried out in polynomial time, the run-time of all $k$ iterations is polynomial as well.
\end{proof}

The condition that $\tau$ only has to be scheduled  with probability $0$ or $1$ in each state is the key to transfer the sufficiency of MR-schedulers to the MDP $\wminMDP{\cM}{\Cause}$. %
This fact is of general interest as well and stated in the following theorem where $\tau$ again is the action added to move from a state $s_{\cE}$ to the new trap state in the MEC-quotient.

\begin{thm}
	\label{thm:MR-schedulers-MEC-quotient}\label{thm:MRscheduler_lift}
	Let $\cM$ be an MDP with pairwise disjoint action sets for all states. 
	Then, for each MR-scheduler $\sched$ for the MEC-quotient of $\cM$ with
	$\sched(s_{\cE})(\tau)\in \{0,1\}$ for each MEC $\cE$ of $\cM$ there is an MR-scheduler $\tsched$ for $\cM$ such that every action  $\alpha$ of $\cM$ that does not belong to an MEC of $\cM$, has the same expected frequency under $\sched$ and $\tsched$.
\end{thm}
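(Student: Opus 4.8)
The plan is to build $\tsched$ piecewise, letting it imitate $\sched$ outside the MECs and handling each MEC $\cE_i$ separately according to the value $\sched(s_{\cE_i})(\tau)\in\{0,1\}$. Since the action sets are pairwise disjoint, an action determines the state at which it is enabled, so the expected frequency of an action is unambiguous, and a non-MEC action $\alpha$ is enabled either at a non-MEC state (where it also occurs in the quotient $\cN$) or at some state $u_\alpha\in U_i$ as a leaving action (where it occurs at $s_{\cE_i}$ in $\cN$). First I would split the MECs into the \emph{staying} ones, with $\sched(s_{\cE_i})(\tau)=1$, and the \emph{leaving} ones, with $\sched(s_{\cE_i})(\tau)=0$. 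For a staying MEC, $\sched$ assigns frequency $0$ to every leaving action, so I let $\tsched$ use only internal actions of $\cE_i$ inside $U_i$; this keeps $\tsched$ in $U_i$ forever and gives every leaving action of $\cE_i$ frequency $0$, exactly as under $\sched$.

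For a leaving MEC $\cE_i$ the real work is to reproduce, inside $U_i$, the leaving-action frequencies $x_\alpha\eqdef\freq{\sched}{s_{\cE_i},\alpha}$. Write $p_\alpha\eqdef P(u_\alpha,\alpha,S\setminus U_i)$ for the exit probability of a leaving action; by maximality of the MEC one has $p_\alpha>0$. Flow balance at $s_{\cE_i}$ in the EC-free quotient $\cN$ (where no $\tau$-mass is lost) yields the key identity $\sum_\alpha x_\alpha\, p_\alpha = e_i$, where $e_i$ is the expected number of times $\sched$ enters $s_{\cE_i}$ from outside. For each leaving action $\alpha$ I would define a pure-exit MR-scheduler $\wsched_\alpha$ on $U_i$ that, using strong connectivity of $\cE_i$, navigates to $u_\alpha$ almost surely via internal actions and there takes $\alpha$, repeating after every re-entry; since $p_\alpha>0$ this exits $U_i$ almost surely, and a balance argument shows that for entry flow $e_i$ one gets $\freq{\wsched_\alpha}{u_\alpha,\alpha}=e_i/p_\alpha$ and $0$ for every other leaving action. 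Taking the convex combination with weights $\lambda_\alpha\eqdef x_\alpha p_\alpha/e_i$, the identity above gives $\sum_\alpha\lambda_\alpha=1$ for free, and the combined process takes each $\alpha$ with frequency $\lambda_\alpha\cdot e_i/p_\alpha=x_\alpha$; because the exit distribution of $\alpha$ is fixed by the transition function, the outflow into each external state matches as well.

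It remains to assemble the pieces and check the global match. I would take $\tsched$ equal to $\sched$ on non-MEC states, to the stay-forever scheduler on staying MECs, and to the combined pure-exit scheduler on each leaving MEC, and let $\hat z$ be the induced frequency vector (finite on all non-MEC and leaving-MEC-internal pairs; the internal pairs of staying MECs are irrelevant). The balance equations of $\cM$ then hold entrywise: at a non-MEC state they coincide with those of $\cN$ once one substitutes $P'(s_{\cE_i},\alpha,\cdot)=P(u_\alpha,\alpha,\cdot)$ and uses that leaving frequencies of staying MECs vanish; at a state of a leaving MEC they hold because $\hat z$ restricted to $U_i$ is a genuine transient flow with total external inflow $e_i$ and the prescribed leaving frequencies; and states of staying MECs act as traps, mirroring the $\tau$-transition to $\bot$ in $\cN$. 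Finally, the memoryless choice $\tsched(s)(\beta)=\hat z_{s,\beta}/\hat z_s$ on the positive-frequency support realizes $\hat z$: this support contains no internal recurrent class of a leaving MEC (such a class would force infinite frequency, contradicting finiteness of $\hat z$), so the induced process exits every leaving MEC almost surely and its frequencies are the unique balanced solution for the given entries, namely $\hat z$. Hence $\freq{\tsched}{\alpha}=\hat z_\alpha=\freq{\sched}{\alpha}$ for every non-MEC action $\alpha$.

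The main obstacle is the leaving-MEC step: realizing a prescribed profile of leaving-action frequencies by a single MR-scheduler that provably exits the end component almost surely. An EC is not EC-free, so the frequency--scheduler correspondence used for $\cN$ does not apply off the shelf; the pure-exit schedulers $\wsched_\alpha$ together with the automatic normalization $\sum_\alpha\lambda_\alpha=1$ are exactly what keep me within transient, hence realizable, flows, and the transience-and-uniqueness argument of the last paragraph is what closes the gap between \emph{a balanced finite flow exists} and \emph{an MR-scheduler realizes it}.
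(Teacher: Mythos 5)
Your proof is correct, and it follows the same overall architecture as the paper's -- split the MECs into staying ones ($\sched(s_{\cE})(\tau)=1$) and leaving ones, build an intermediate (non-memoryless) scheduler that copies $\sched$ outside the MECs and handles each leaving MEC by a per-entry gadget, then pass to an MR-scheduler by a frequency argument -- but both key steps use genuinely different gadgets. Inside a leaving MEC, the paper does not match frequencies directly: its gadget runs through memory modes $1,\dots,k$ with conditional probabilities $q_i=p_i/(1-\sum_{j<i}p_j)$, so that per entry exactly one leaving action is scheduled with distribution $p_i=\sched(s_{\cE})(\alpha_i)$; entries of the MEC then correspond one-to-one to visits of $s_{\cE}$ in the quotient and the equality of expected frequencies is immediate. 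You instead match the target frequencies $x_\alpha$ themselves, mixing pure-exit MD schedulers $\wsched_\alpha$ with weights $\lambda_\alpha=x_\alpha p_\alpha/e_i$; this obliges you to account for the geometric retries (the factor $1/p_\alpha$) and to prove the flow identity $\sum_\alpha x_\alpha p_\alpha=e_i$, and the agreement of the entry flows between $\cM$ and the quotient is no longer visit-by-visit but is recovered from your final uniqueness argument. For the passage to an MR-scheduler, the paper detours through the auxiliary MDP $\cM^\prime$ with $\tau$-escapes to $\bot$, observes that its intermediate scheduler is transient there, invokes \cite[Theorem~9.16]{Kallenberg20} as a black box, and removes the escapes at the very end; you avoid the auxiliary model, treat staying MECs as traps throughout, and re-prove the needed instance of that theorem by hand: normalize $\hat z$, show the support has no closed class inside a leaving MEC, conclude almost-sure exit, and use uniqueness of the solution of the balance equations on the transient part. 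The paper's route buys brevity and a transparent correspondence; yours buys self-containedness and makes explicit why normalization works, which is precisely the content of the cited result. Two spots would deserve one extra sentence in a final write-up: the weights $\lambda_\alpha$ need $e_i>0$ (if $e_i=0$ the MEC and its leaving actions have frequency $0$ under all relevant schedulers, so the claim is vacuous there), and the parenthetical ``such a class would force infinite frequency'' should be unfolded as: a class closed under all support actions that is entered with positive probability is never left by your assembled scheduler, so some of its states would be visited infinitely often with positive probability, contradicting the finiteness of $\hat z$. Neither point is a gap; both are at the level of informality of the paper's own proof.
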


\begin{proof}
	Let $\sched$ be an MR-scheduler for $\MEC{\cM}$ with $\sched(s_{\cE})(\tau) \in \{0,1\}$ for each MEC $\cE$ of $\cM$.
	We consider the following extension $\cM^\prime$ of $\cM$: The state space of $\cM$ is extended by a new terminal state $\bot$ and 
	a fresh action $\tau$ is enabled in each state $s$ that belongs to a MEC of $\cM$. Action $\tau$ leads to $\bot$ with probability $1$. All remaining transition probabilities are as in $\cM$.
	So, $\cM^\prime$ is obtained from $\cM$ by allowing a transition to a new terminal state $\bot$ from each state that belongs to a MEC.
	
	Now, we first provide a finite-memory scheduler $\tsched$ for $\cM^\prime$ that leaves each MEC $\cE$ for which $\sched(s_{\cE})(\tau)=0$ via the state action pair $(s,\alpha)$ with probability $\sched(s_{\cE})(\alpha)$. Recall that we assume that each action is enabled in at most one state and that the actions enabled in the state $s_{\cE}$ in $\MEC{\cM}$ are precisely the actions that are enabled in some state of $\cE$ and that do not belong to $\cE$ (Section~\ref{app:MEC-quotient}).
	
	Let us define the scheduler $\tsched$:
	In all states that do not belong to a MEC $\cE$ of $\cM$ with $\sched(s_{\cE})(\tau)=0$, the behavior of $\tsched$ is memoryless:
	For each state $s$ of $\cM$ (and hence of $\cM^\prime$) that does not belong to a MEC, $\tsched(s)=\sched(s)$. For each state $s$ in an end component 
	$\cE$ of $\cM$ with $\sched(s_{\cE})(\tau)=1$, we define $\tsched(s)(\tau)=1$.
	If a MEC $\cE$ of $\cM$ with $\sched(s_{\cE})(\tau)=0$ is entered, $\tsched$ makes use of finitely many memory modes as follows:
	Enumerate the state action pairs $(s,\alpha)$ where $s$ belongs to $\cE$, but $\alpha$ does not belong to $\cE$, and for which $\sched(s_{\cE})(\alpha)>0$ by 
	$(s_1,\alpha_1)$, \dots, $(s_k,\alpha_k)$.
	Further, let $p_i\eqdef \sched(s_{\cE})(\alpha_i)>0$ for all $1\leq i \leq k$.
	By assumption $\sum_{1\leq i \leq k} p_i =1$.
	When entering $\cE$, the scheduler works in $k$ memory modes $1$, \dots, $k$ until an action $\alpha$ that does not belong to $\cE$ is scheduled
	starting  in memory mode $1$.
	In each memory mode $i$,
	$\tsched$ follows an MD-scheduler for $\cE$ that reaches $s_i$ with probability $1$ from all states of $\cE$.
	Once, $s_i$ is reached, $\tsched$ chooses action $\alpha_i$ with probability 
	\[
	q_i \eqdef \frac{p_i}{1-\sum_{j<i} p_j}.
	\]
	Now $\tsched$ leaves $\cE$ via $(s_k,\alpha_k)$ with probability $1$ if it reaches the last memory mode $k$.
	As $\tsched$ behaves MD in each mode, it leaves the end component $\cE$ after finitely many steps in expectation.
	Furthermore, for each $i\leq k$, it leaves $\cE$ via $(s_i,\alpha_i)$ with probability $(1-\sum_{j<i} p_j)\cdot q_i = p_i$.
	Since the behavior of $\sched$ in $\MEC{\cM}$ is mimicked by $\tsched$ in $\cM^{\prime}$, we conclude that the expected frequency of actions in $\cM$ which do not belong to an end component is the same in $\cM^\prime$ under $\tsched$ and in $\MEC{\cM}$ under $\sched$.
	
	The expected frequency of each state-action pair of $\cM^\prime$ under $\tsched$ is finite, since each MEC of $\cM^\prime$ is left after finitely many steps in expectation.
	In the terminology of \cite{Kallenberg20}, the scheduler $\tsched$ is \emph{transient}.
	By \cite[Theorem 9.16]{Kallenberg20}, this implies that there is a MR-scheduler $\usched$ for $\cM^\prime$ under which the expected frequency of state-action pairs is the same as under $\tsched$ and thus the expected frequency in $\cM^\prime$ of actions $\alpha$ of $\cM$ that do not belong to an end component is the same as under $\sched$ in $\MEC{\cM}$.
	
	Finally, we modify $\usched$ such that it becomes a scheduler for $\cM$: For each end component $\cE$ of $\cM$ with $\sched(s_{\cE})(\tau)=1$, we fix a memoryless scheduler $\usched_{\cE}$ that does not leave the end component. Now, whenever a state $s$ in such an end component is visited, the modified scheduler switches to the behavior of $\usched_{\cE}$ instead of choosing action $\tau$ with probability $1$. Clearly, this does not affect the expected frequency of actions of $\cM$ that do not belong to an end component and hence the modified scheduler is as claimed in the theorem. 
\end{proof}

\begin{rem}
	The proof of Theorem \ref{thm:MRscheduler_lift} above provides an algorithm  how to obtain the scheduler $\tsched$ from $\sched$. The number of memory modes of the intermediately constructed finite-memory scheduler is bounded by the number of state-action pairs of $\cM$. Further, in each memory mode during the traversal of a MEC, the scheduler behaves in a memoryless deterministic way. Hence, the induced Markov chain is of size polynomial in the size of the MDP $\cM$ and the representation of the scheduler $\sched$. Therefore, also the expected frequencies of all state-action pairs under the intermediate finite-memory scheduler and hence under $\tsched$ can be computed in time polynomial in the size of the MDP $\cM$ and the representation of the scheduler $\sched$. 
	So, also the scheduler $\tsched$ itself which can be derived from these expected frequencies can be computed in polynomial time from $\sched$.
	
	Together with Lemma \ref{lem:MR-sufficient-global-cause}, this means that $\tsched$ and hence the scheduler with two memory modes whose existence is stated in Theorem \ref{thm:MR-sufficient-GPR} can be computed from a solution to the constraint system (S1)-(S5) from Section~\ref{sec:check-GPR} in time polynomial in the size of the original MDP and the size of the representation of the solution to (S1)-(S5).
	\Ende
\end{rem}

With these results we can now prove the second main result of this section, Theorem \ref{thm:MR-sufficient-GPR}, stating that if \eqref{GPR} does not hold there is a finite-memory scheduler with two memory cells refuting the GPR condition.

\begin{proof}[Proof of Theorem \ref{thm:MR-sufficient-GPR}]
	The model transformation establishing assumptions (A1)-(A3) results in the MEC-quotient of $\wminMDP{\cM}{\Cause}$ up to the renaming and collapsing of terminal states. By Corollary \ref{cor:MR-scheduler} and Theorem  \ref{thm:MRscheduler_lift}, we conclude that
	$\Cause$ is not a GPR cause for $\Effect$ in $\cM$ if and only if there is a MR-scheduler $\sched$ for $\wminMDP{\cM}{\Cause}$ with $\Pr^{\sched}_{\wminMDP{\cM}{\Cause}}(\Diamond \Cause)>0$ that violates \eqref{GPR}.
	As in Remark \ref{MR-sufficient-SRP}, 
	$\sched$ can be extended to
	a finite-memory randomized scheduler $\tsched$ for $\cM$ with two memory cells.
\end{proof}

%%%%%%%%%%%%%%%%%%%%%%%%%%%%%%%%%%%%%%%%%%%%%%%%%%%%%%%%%%%%%%%%%%%%%%%%

\begin{rem}[On lower bounds on GPR checking]
	Solving systems of quadratic inequalities with linear side constraints is NP-hard in general (see, e.g., \cite{GareyJ79}).
	For convex problems, in which the associated symmetric matrix occurring in the quadratic inequality has only non-negative eigenvalues, the problem is, however, solvable in polynomial time \cite{kozlov1980polynomial}.
	Unfortunately, the quadratic constraint system describing a scheduler refuting \eqref{GPR} given by (S1)-(S5) is not of this form.
	We observe that even if $\Cause$ is a singleton $\{c\}$ and the variable $x_{\effuncov}$ is forced to take a constant value $y$ by (S1)-(S3), i.e., by the structure of the MDP, the inequality (S4) takes the form:
	\begin{align}
		\label{C4 one variable}
		x_c\cdot w_c - x_c^2 \cdot (w_c+y) \leq 0
	\end{align}
	Here, the $1\times 1$-matrix $({-}w_c{-}y)$ has a negative eigenvalue.
	Although it is not ruled out that
	(S1)-(S5) belongs to another class of
	efficiently solvable constraint systems,
	the NP-hardness result in \cite{pardalos1991quadratic}
	for the solvability of
	quadratic inequalities of the form \eqref{C4 one variable}
	with linear side constraints 
	might be an indication for the computational difficulty.
	\Ende
\end{rem}

%%%%%%%%%%%%%%%%%%%%%%%%%%%%%%%%%%%%%%%%%%%%%%%%%%

\section{Quality and optimality of causes}

\label{sec:criteria}

The goal of this section is to identify notions that measure how ``good'' causes are and to present algorithms to determine good causes according to the proposed quality measures.
We have seen so far that small (singleton) causes are easy to determine (see Section~\ref{sec:check-SPR-condition}).
Moreover, it is easy to see that the proposed existence-checking algorithm can be formulated in such a way that the algorithm returns a singleton (strict or global) probability-raising cause $\{c_0\}$ with maximal \emph{precision}, i.e., a state $c_0$ where 
$\inf_{\sched} \Pr^{\sched}_{\cM}(\Diamond \Effect |\Diamond c_0)
= \Pr^{\min}_{\cM,c_0}(\Diamond \Effect)$ is maximal.
On the other hand, 
singleton or small cause sets might have poor coverage in the sense that the probability for paths that reach an effect state without visiting a cause state before (``uncovered effects'') can be large. 
This motivates the consideration of quality notions for causes that incorporate how well 
effect scenarios are covered. 
We take inspiration of quality measures that are considered in statistical analysis (see e.g. \cite{Powers-fscore}). 
This includes the \emph{recall} as a measure for the relative coverage (proportion of covered effects among all effect scenarios), the \emph{coverage ratio} (quotient of covered and uncovered effects) as well as the \emph{f-score}. 
The f-score is a standard measure for classifiers  defined by the harmonic mean of precision and recall.
It can be seen as a compromise to achieve both good precision and good recall.

In this section, we assume as before an MDP $\cM = (S,\Act,P,\init)$ and $\Effect \subseteq S$ are given where all effect states are terminal. 
Furthermore, we suppose all states $s\in S$ are reachable from $\init$.

%%%%%%%%%%%%%%%%%%%%%%%%%%%%%%%%%%%%%%%%%%%%%%%%%%%%%%%%%%%%%%%%%%%%%%%%

\subsection{Quality measures for causes}

\label{sec:acc-measures}

In statistical analysis, the precision of a classifier with binary outcomes (``positive'' or ``negative'') is defined as the ratio of all true positives among all positively classified elements, while its recall is defined as the ratio of all true positives among all actual positive elements.
Translated to our setting, we consider classifiers induced by a given cause set $\Cause$ that return ``positive'' for sample paths in case that a cause state is visited 
and ``negative'' otherwise. 
The intuitive meaning of true positives, false positives, true negatives and false negatives is as described in the confusion matrix in Figure \ref{fig: confusion matrix}.
The formal definition is
\[
\begin{array}{rclrcl}
	\mathsf{tp}^\sched & = & \Pr_\cM^\sched(\lozenge \Cause \wedge \lozenge \Eff), & \mathsf{tn}^\sched & = & \Pr_\cM^\sched(\neg \lozenge \Cause \wedge \neg \lozenge \Eff),
	\\[1ex]
	\mathsf{fp}^\sched & = & \Pr_\cM^\sched(\lozenge \Cause \wedge \neg \lozenge \Eff), &\mathsf{fn}^\sched & = & \Pr_\cM^\sched(\neg \lozenge \Cause \wedge \lozenge \Eff).
\end{array}
\]

\begin{figure}[t]
	\begin{tabular}{c|c|c|}
		Path hits& $\Eff$ & $\neg \Eff$ \\
		\hline
		$\Cause$& True positive (tp) & False positive (fp) \\
		&$\Cause$ correctly predicted $\Eff$ & $\Cause$ falsely predicted $\Eff$ \\
		\hline
		$\neg \Cause$ & False negative (fn) & True negative (tn) \\
		&$\Cause$ falsely not predicted $\Eff$ & $\Cause$ correctly not predicted $\Eff$ \\
		\hline
	\end{tabular}
	\caption{Confusion matrix for $\Cause$ as a binary classifier for $\Eff$}\label{fig: confusion matrix}
\end{figure}

With this interpretation of causes as binary classifiers in mind, the recall and precision
and coverage ratio 
of a cause set $\Cause$ \emph{under a scheduler} $\sched$ are defined as follows:
\[
\begin{array}{rclcl}
	\precision^{\sched}(\Cause) & \ = \ &
	\Pr^{\sched}_{\cM}(\ \Diamond \Effect \ | \ \Diamond \Cause \ )
	& = &
	\frac{\displaystyle \mathsf{tp}^{\sched}}
	{\displaystyle \mathsf{tp}^{\sched} + \mathsf{fp}^{\sched}}
	\\[2ex] 
	\recall^{\sched}(\Cause) & = &
	\Pr^{\sched}_{\cM}(\ \Diamond \Cause  \ | \ \Diamond \Effect \ )
	& = &
	\frac{\displaystyle \mathsf{tp}^{\sched}}{\displaystyle \mathsf{tp}^{\sched}+\mathsf{fn}^{\sched}}
	\\[2ex]
	\ratiocov^{\sched}(\Cause) & \ = \ &
	\frac{\displaystyle
		\Pr^{\sched}_{\cM}
		\bigl(\Diamond (\Cause \wedge \Diamond \Effect) \bigr)}
	{\displaystyle
		\Pr^{\sched}_{\cM}\bigl((\neg \lozenge \Cause) \wedge \lozenge \Effect \bigr)}
	& = &
	\frac{\displaystyle \mathsf{tp}^{\sched}}
	{\displaystyle \mathsf{fn}^{\sched}}.
\end{array}
\]

Note that for these quality measures we make some respective assumptions on the scheduler as we assume
\begin{itemize}
	\item $\Pr^{\sched}_{\cM}(\Diamond \Cause)>0$ for $\precision^\sched(\Cause)$, 
	\item $\Pr^{\sched}_{\cM}(\Diamond \Effect)>0$ for $\recall^\sched(\Cause)$ and
	\item $\Pr^{\sched}_{\cM}\bigl( (\neg \lozenge \Cause) \wedge \lozenge \Effect \bigr)>0$) for $\covrat^\sched(\Cause)$.
\end{itemize}
If we have $\Pr^{\sched}_{\cM}\bigl( (\neg \lozenge \Cause) \wedge \lozenge \Effect \bigr)=0$ and $\Pr^\sched_{\cM}(\Diamond \Cause)>0$ for some scheduler $\sched$, we define $\covrat^\sched(\Cause)=+\infty$.
This makes sense since we can converge to such a scheduler $\sched$ with a sequence of schedulers $\sched_0 \ldots$ for which $\Pr^{\sched_i}_{\cM}\bigl( (\neg \lozenge \Cause) \wedge \lozenge \Effect \bigr)>0$ and $\Pr^{\sched_i}_{\cM}(\Diamond \Cause)>0$ for $i \in \mathbb{N}$.
The coverage ratio of such a sequence converges to $+ \infty$.

Finally, the f-score of $\Cause$ \emph{under a scheduler} $\sched$
is defined as the harmonic mean of the precision and recall.
Here we assume $\Pr^{\sched}_{\cM}(\Diamond \Cause)>0$, which implies
$\Pr^{\sched}_{\cM}(\Diamond \Effect)>0$:
\begin{equation*}
	\label{f-score}
	\fscore^{\sched}(\Cause) \ \ \eqdef \ \
	2 \cdot
	\frac{\precision^{\sched}(\Cause)\cdot \recall^{\sched}(\Cause)}
	{\precision^{\sched}(\Cause) + \recall^{\sched}(\Cause)}
	= \frac{2 \cdot \mathsf{tp}^\sched}{2 \cdot \mathsf{tp}^\sched + \mathsf{fp}^\sched + \mathsf{fn}^\sched}
\end{equation*}
If, however, $\Pr_{\cM}^\sched(\Diamond \Eff)>0$ and $\Pr_{\cM}^{\sched}(\Diamond \Cause)=0$ for some $\sched$, define $\fscore^\sched(\Cause)=0$.
This again makes sense as for a sequence of schedulers converging to $\sched$ the f-score also converges to $0$ (also see Lemma \ref{lem:fscore=0}).

To lift the definitions of the quality measures under a scheduler to the quality measure of a cause, we consider the worst-case scheduler:
\begin{defi}[Quality measures for causes]
	Let $\Cause$ be a PR cause.
	We define
	\begin{align*}
		\relcov(\Cause)  \ = \ 
		\inf_{\sched} \relcov^{\sched}(\Cause)
		\ = \
		\Pr^{\min}_{\cM}(\ \Diamond \Cause \ | \ \Diamond \Effect \ )
	\end{align*}
	when ranging over all schedulers $\sched$
	with $\Pr^{\sched}_{\cM}(\Diamond \Effect)>0$.
	Likewise, the coverage ratio and f-score of $\Cause$
	are defined by the worst-case coverage ratio resp. f-score
	-- ranging over schedulers for which $\covratio^{\sched}(\Cause)$ 
	resp. $\fscore^{\sched}(\Cause)$ is defined:
	\begin{align*}
		\covratio(\Cause)  \ = \ 
		\inf_{\sched} \covratio^{\sched}(\Cause),
		\quad
		\fscore(\Cause)  \ = \ 
		\inf_{\sched}  \fscore^{\sched}(\Cause).
	\end{align*}
\end{defi}

Besides the quality measures defined so far, which we will address in detail, there is a vast landscape of further quality measures for binary classifiers in the literature (for an overview, see, e.g., \cite{Powers-fscore}).
One prominent example which has been claimed to be superior to the f-score recently \cite{chicco2020advantages} is \emph{Matthews correlation coefficient} (MCC). In terms of the entries of a confusion matrix (as in Figure~\ref{fig: confusion matrix}), it is defined as
\begin{align*}
	\text{MCC}=\frac{tp\cdot tn - fp\cdot fn}{\sqrt{(tp+fp)\cdot(tp+fn)\cdot(tn+fp)\cdot(tn+fn)}}
\end{align*}
In contrast to the f-score (as well as recall and coverage ratio), it makes use of all four entries of the confusion matrix.
In our setting, we could assign the MCC to a $\Cause$ by again taking the infimum of the value over all sensible schedulers. 

Like the MCC, almost all (cf. \cite{Powers-fscore}) of the quality measures studied in the literature are algebraic functions (intuitively speaking, built from polynomials, fractions and root functions) in the entries of the confusion matrix. 
At the end of this section, we will comment on the computational properties of finding good causes when quality is measured by the infimum over all sensible schedulers of an algebraic function in the entries of the confusion matrix.

\subsection{Computation schemes for the quality measures for fixed cause set}

\label{sec:comp-acc-measures-fixed-cause}

For this section, we assume  a fixed PR cause $\Cause$ is given and address the problem to compute its quality values.
The first observation is, that all quality measures are preserved by the switch from $\cM$ to $\wminMDP{\cM}{\Cause}$ as well as the transformations of $\wminMDP{\cM}{\Cause}$ to an MDP that satisfies conditions (A1)-(A3) of Section~\ref{sec:check-GPR}.
In the following Lemmata \ref{lemma:accuracy-measures-M-and-Mcause} and \ref{lemma:accuracy-measures-M-and-Mcause2} we show that the quality measures $\recall, \covrat$ and $\fscore$ of a fixed $\Cause$ are compatible with the model transformations from Section 4.
These are, on one hand a transformation to $\wminMDP{\cM}{\Cause}$, which only considers the minimal probability to reach $\Eff$ starting from $\Cause$, and on the other hand a transformation to an MDP $\cN$ satisfying (A1)-(A3), which has no end components and has exactly four terminal states $\effcov, \effunc, \noeff_{\mathsf{fp}}, \noeff_{\mathsf{tn}}$.
These four terminal states exactly correspond to the four entries of the confusion matrix (Figure \ref{fig: confusion matrix}).

\begin{lem}
	\label{lemma:accuracy-measures-M-and-Mcause} 
	If $\Cause$ is an SPR or a GPR cause then:
	\begin{center}
		\begin{tabular}{rcl}
			$\recall_{\cM}(\Cause)$ & \ = \ & $\recall_{\wminMDP{\cM}{\Cause}}(\Cause)$
			\\
			$\covratio_{\cM}(\Cause)$ & = & $\covratio_{\wminMDP{\cM}{\Cause}}(\Cause)$
			\\
			$\fscore_{\cM}(\Cause)$ & = & $\fscore_{\wminMDP{\cM}{\Cause}}(\Cause)$
		\end{tabular}  
	\end{center}  
\end{lem}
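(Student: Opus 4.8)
The plan is to reduce the claim to the already-established correspondence between schedulers of $\cM$ and of $\wminMDP{\cM}{\Cause}$ that preserves the relevant reachability probabilities. The key tool is Lemma \ref{lem:probabilities_MEC-quotient} together with the soundness of the transformation to $\wminMDP{\cM}{\Cause}$ (Lemma \ref{lemma:wmin-criterion-PR-causes}). The central observation is that each of the quantities $\mathsf{tp}^{\sched}$, $\mathsf{fp}^{\sched}$, $\mathsf{fn}^{\sched}$ entering the definitions of $\recall$, $\covratio$ and $\fscore$ is expressible purely in terms of the probabilities $\Pr^{\sched}_{\cM}(\Diamond \Effect)$, $\Pr^{\sched}_{\cM}(\Diamond \Cause)$ and $\Pr^{\sched}_{\cM}(\Diamond \Cause \wedge \Diamond \Effect)$. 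Indeed, $\mathsf{tp}^{\sched} = \Pr^{\sched}_{\cM}(\Diamond \Cause \wedge \Diamond \Effect)$, $\mathsf{fn}^{\sched} = \Pr^{\sched}_{\cM}(\Diamond \Effect) - \mathsf{tp}^{\sched}$, and $\mathsf{fp}^{\sched} = \Pr^{\sched}_{\cM}(\Diamond \Cause) - \mathsf{tp}^{\sched}$. Since none of these three measures uses $\mathsf{tn}^{\sched}$, the claim should follow once we show these three base probabilities are matched under the correspondence of schedulers.

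First I would invoke the already-proven scheduler correspondence. By Lemma \ref{lemma:wmin-criterion-PR-causes} and the frequency/probability preservation arguments underlying it, for every scheduler $\sched$ of $\cM$ there is a scheduler $\tsched$ of $\wminMDP{\cM}{\Cause}$ with $\Pr^{\sched}_{\cM}(\Diamond t) = \Pr^{\tsched}_{\wminMDP{\cM}{\Cause}}(\Diamond t)$ for each terminal state $t$, and vice versa. I would then argue that this matching extends to the three composite events above. For the effect probability $\Pr^{\sched}_{\cM}(\Diamond \Effect)$ this is immediate since $\Effect$ is a set of terminal states preserved by the transformation. For $\Pr^{\sched}_{\cM}(\Diamond \Cause)$ and the joint event $\Pr^{\sched}_{\cM}(\Diamond \Cause \wedge \Diamond \Effect)$, I would use that the transformation only modifies the behavior after a cause state is reached: a path reaches $\Cause$ in $\cM$ exactly when the corresponding path reaches the cause state in $\wminMDP{\cM}{\Cause}$, and the probability of reaching $\Effect$ after $\Cause$ is captured by $w_c = \Pr^{\min}_{\cM,c}(\Diamond\Effect)$ precisely for the worst-case schedulers that determine the infima. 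This is where I would lean on the fact that the quality measures are defined as infima over schedulers, so that passing to the worst-case (effect-minimizing after $\Cause$) scheduler does not change the value, exactly as in Lemma \ref{lemma:wmin-criterion-PR-causes}.

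The main step is therefore to verify that taking the worst-case scheduler over $\cM$ yields the same optimal values of $\recall$, $\covratio$, $\fscore$ as over $\wminMDP{\cM}{\Cause}$. I would proceed measure by measure, writing each as a function of the matched triple $(\Pr^{\sched}(\Diamond\Effect),\Pr^{\sched}(\Diamond\Cause),\Pr^{\sched}(\Diamond\Cause\wedge\Diamond\Effect))$, and observing that since the attainable triples coincide across the two models (by the scheduler correspondence), the infima of these functions coincide as well. The subtlety, which I expect to be the main obstacle, is the direction in which the worst case lies: one must check that minimizing each quality measure indeed corresponds to minimizing the effect probability from each cause state, so that restricting to $\wminMDP{\cM}{\Cause}$ (which hard-codes $w_c$) does not lose any worse scheduler. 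This monotonicity argument — that each of $\recall^{\sched}$, $\covratio^{\sched}$, $\fscore^{\sched}$ is non-decreasing in the conditional effect probability after $\Cause$ while the uncovered part is unaffected — is the crux; once it is established, together with Lemma \ref{lem:convex} for building the requisite intermediate schedulers, the three equalities follow directly.
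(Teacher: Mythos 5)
Your plan circles the paper's actual proof, but it rests on one claim that is false and defers the real work to an unproven ``crux''. The false claim is that ``the attainable triples coincide across the two models'' (equivalently, that for every scheduler $\sched$ of $\cM$ there is a scheduler of $\wminMDP{\cM}{\Cause}$ matching all terminal-state probabilities, \emph{and vice versa}). Only one direction holds: a scheduler of $\wminMDP{\cM}{\Cause}$ corresponds to a scheduler of $\cM$ that minimizes the effect probability from every cause state, with $\mathsf{tp}$, $\mathsf{fp}$, $\mathsf{fn}$ all preserved. In the other direction, a scheduler of $\cM$ may reach $\Effect$ from a cause state $c$ with any probability up to $\Pr^{\max}_{\cM,c}(\Diamond\Effect)$, which cannot be reproduced in $\wminMDP{\cM}{\Cause}$, where that probability is hard-wired to $w_c=\Pr^{\min}_{\cM,c}(\Diamond\Effect)$; so $\cM$ realizes strictly more triples in general. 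Neither Lemma \ref{lemma:wmin-criterion-PR-causes} (a statement about causehood, not a scheduler bijection) nor Lemma \ref{lem:probabilities_MEC-quotient} (which concerns the MEC-quotient of $\wminMDP{\cM}{\Cause}$, a different transformation step) supplies the two-way correspondence you invoke. Note the internal tension: if the attainable triples really coincided, your closing monotonicity ``crux'' would be superfluous.

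What actually closes the argument --- and is essentially the entire content of the paper's proof --- is the one-sided construction you name at the end but do not carry out: given any scheduler $\sched$ of $\cM$, let $\tsched$ agree with $\sched$ until the first visit to $\Cause$ and then follow an MD-scheduler minimizing the effect probability. Then $\mathsf{fn}$ is unchanged, $\mathsf{tp}+\mathsf{fp}=\Pr^{\sched}_{\cM}(\Diamond\Cause)$ is unchanged, and $\mathsf{tp}$ can only decrease; hence $\recall^{\sched}_{\cM}(\Cause)\geqslant\recall^{\tsched}_{\cM}(\Cause)$ and $\covratio^{\sched}_{\cM}(\Cause)\geqslant\covratio^{\tsched}_{\cM}(\Cause)$ immediately, and $\fscore^{\sched}_{\cM}(\Cause)\geqslant\fscore^{\tsched}_{\cM}(\Cause)$ because precision and recall both decrease and the harmonic mean is monotone in each argument (the paper checks this via its partial derivatives). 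Since $\tsched$ is viewable as a scheduler of $\wminMDP{\cM}{\Cause}$ with identical measure values, this yields ``$\geqslant$'', while the embedding of $\wminMDP{\cM}{\Cause}$-schedulers into $\cM$ yields ``$\leqslant$''. This monotonicity step is short, but it \emph{is} the proof, so it cannot be left as a named obstacle; and Lemma \ref{lem:convex} plays no role here --- no convex combination of schedulers is needed, only the switch-at-first-visit scheduler $\tsched$.
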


\begin{proof}
	``$\leqslant$'':  
	A scheduler for $\wminMDP{\cM}{\Cause}$ can be seen as
	a scheduler $\sched$ for $\cM$ behaving as an MD-scheduler minimizing the reachability probability of $\Eff$ from every state in $\Cause$ and we have:
	\begin{center}
		\begin{tabular}{rcl}
			$\recall_{\cM}^{\sched}(\Cause)$ & \ = \ &
			$\recall^{\sched}_{\wminMDP{\cM}{\Cause}}(\Cause)$ \\
			$\covratio_{\cM}^{\sched}(\Cause)$ & = &
			$\covratio^{\sched}_{\wminMDP{\cM}{\Cause}}(\Cause)$ \\
			$\precision_{\cM}^{\sched}(\Cause)$ & = &
			$\precision^{\sched}_{\wminMDP{\cM}{\Cause}}(\Cause)$ \\
		\end{tabular}
	\end{center}
	and therefore:
	\begin{center}
		\begin{tabular}{rcl}
			$\fscore_{\cM}^{\sched}(\Cause)$ & \ = \ &
			$\fscore^{\sched}_{\wminMDP{\cM}{\Cause}}(\Cause)$
		\end{tabular}  
	\end{center}
	We obtain the claimed inequalities, e.g.
	$\recall_{\cM}(\Cause)\leqslant \recall_{\wminMDP{\cM}{\Cause}}(\Cause)$.
	
	``$\geqslant$'':  
	Let $\sched$ be a scheduler of
	$\cM$. Let $\tsched$ be
	the scheduler of $\cM$ that behaves as $\sched$ until the first
	visit to a state in $\Cause$. As soon as $\tsched$ has reached $\Cause$, it
	behaves as an MD-scheduler minimizing the probability to reach $\Effect$.
	Recall and coverage
	under $\tsched$ and $\sched$ have the form:
	\[
	\begin{array}{rclcrcl}
		\recall_{\cM}^{\sched}(\Cause)
		& \ = \ & \frac{x}{x + q}
		& \ \ \ \ &
		\covratio_{\cM}^{\sched}(\Cause) & \ = \ & \frac{x}{q}
		\\
		\recall_{\cM}^{\tsched}(\Cause)
		& = & \frac{y}{y + q}
		& & 
		\covratio_{\cM}^{\tsched}(\Cause) & = & \frac{y}{q}
	\end{array}  
	\]
	where  $x \geqslant y$ (and $q=\mathsf{fn}^{\sched}$).
	Considering $\tsched$ as a scheduler of $\cM$ and of
	$\wminMDP{\cM}{\Cause}$, we get:
	\[
	\begin{array}{rcccl}
		\recall_{\cM}^{\sched}(\Cause)
		& \geqslant &
		\recall_{\cM}^{\tsched}(\Cause)
		& = &
		\recall_{\wminMDP{\cM}{\Cause}}^{\tsched}(\Cause)
		\\
		\covratio_{\cM}^{\sched}(\Cause)
		& \geqslant &
		\covratio_{\cM}^{\tsched}(\Cause)
		& = &
		\covratio_{\wminMDP{\cM}{\Cause}}^{\tsched}(\Cause)
	\end{array}
	\]
	This implies:
	\begin{align*}
		\recall_{\cM}^{\sched}(\Cause)  \ \geqslant \ 
		\recall_{\wminMDP{\cM}{\Cause}}(\Cause),
		&&\text{and}&&
		\covratio_{\cM}(\Cause) \ \geqslant \
		\covratio_{\wminMDP{\cM}{\Cause}}(\Cause)
	\end{align*}
	With similar arguments we get:
	\[
	\begin{array}{rcccl}
		\precision_{\cM}^{\sched}(\Cause)
		& \geqslant & 
		\precision_{\cM}^{\tsched}(\Cause)
		& = &
		\precision_{\wminMDP{\cM}{\Cause}}^{\tsched}(\Cause)
	\end{array}
	\]
	As the harmonic mean viewed as a function
	$f: \Real_{>0}^2 \to \Real$, $f(x,y) = 2 \frac{xy}{x{+}y}$
	is monotonically increasing in both arguments
	(note that $\frac{df}{dx} = \frac{y^2}{(x{+}y)^2}>0$ and
	$\frac{df}{dy} = \frac{x^2}{(x{+}y)^2}>0$),
	we obtain:
	\[
	\fscore_{\cM}^{\sched}(\Cause)
	\ \geqslant \ 
	\fscore_{\cM}^{\tsched}(\Cause)
	\ = \
	\fscore_{\wminMDP{\cM}{\Cause}}^{\tsched}(\Cause)
	\]
	This yields
	$\fscore_{\cM}(\Cause) \geqslant \fscore_{\wminMDP{\cM}{\Cause}}(\Cause)$.
\end{proof}

\begin{lem}
	\label{lemma:accuracy-measures-M-and-Mcause2} 
	Let $\cN$ be the MEC-quotient of $\wminMDP{\cM}{\Cause}$ for some MDP $\cM$ with a set of terminal states $\Effect$ and 
	an SPR or a GPR cause $\Cause$. Then:
	\begin{center}
		\begin{tabular}{rcl}
			$\recall_{\wminMDP{\cM}{\Cause}}(\Cause)$ &  =  & $\recall_{\cN}(\Cause)$
			\\
			$\covratio_{\wminMDP{\cM}{\Cause}}(\Cause)$ & = & $\covratio_{\cN}(\Cause)$
			\\
			$\fscore_{\wminMDP{\cM}{\Cause}}(\Cause)$ & = & $\fscore_{\cN}(\Cause)$
		\end{tabular}  
	\end{center}  
\end{lem}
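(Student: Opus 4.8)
The plan is to reduce the whole statement to Lemma \ref{lem:probabilities_MEC-quotient}, which already provides, for each scheduler $\sched$ of $\wminMDP{\cM}{\Cause}$ a scheduler $\tsched$ of $\cN$ (and, symmetrically, a converse), agreeing on the three probabilities $\Pr(\Diamond \Effect)$, $\Pr(\Diamond \Cause)$ and $\Pr(\Diamond \Cause \wedge \Diamond \Effect)=\Pr_{\cN}(\Diamond \effcov)$. The first step is to observe that the only confusion-matrix entries occurring in $\recall$, $\covratio$ and $\fscore$, namely $\mathsf{tp}$, $\mathsf{fp}$ and $\mathsf{fn}$, are expressible through exactly these preserved quantities. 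Using the disjoint decompositions $\Diamond \Effect \equiv (\Diamond \Cause \wedge \Diamond \Effect)\vee(\neg\Diamond \Cause \wedge \Diamond \Effect)$ and $\Diamond \Cause \equiv (\Diamond \Cause \wedge \Diamond \Effect)\vee(\Diamond \Cause \wedge \neg\Diamond \Effect)$, one gets
\begin{align*}
	\mathsf{tp}^{\sched} &= \Pr^{\sched}(\Diamond \Cause \wedge \Diamond \Effect), \\
	\mathsf{fn}^{\sched} &= \Pr^{\sched}(\Diamond \Effect) - \Pr^{\sched}(\Diamond \Cause \wedge \Diamond \Effect), \\
	\mathsf{fp}^{\sched} &= \Pr^{\sched}(\Diamond \Cause) - \Pr^{\sched}(\Diamond \Cause \wedge \Diamond \Effect).
\end{align*}
Note also that in $\cN$ the covered effect coincides with $\effcov$ (since from $\Cause$ the $\gamma$-transition leads only to $\effcov$ or $\noeffc$, and $\effunc$ is reachable only without visiting $\Cause$), so $\mathsf{tp}^{\tsched}_{\cN}=\Pr^{\tsched}_{\cN}(\Diamond \effcov)$.

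Combining these formulas with Lemma \ref{lem:probabilities_MEC-quotient} shows that under the correspondence $\sched \leftrightarrow \tsched$ each of $\mathsf{tp}$, $\mathsf{fp}$, $\mathsf{fn}$ takes the same value in $\wminMDP{\cM}{\Cause}$ under $\sched$ as in $\cN$ under $\tsched$. Since $\recall^{\sched}(\Cause)$, $\covratio^{\sched}(\Cause)$ and $\fscore^{\sched}(\Cause)$ are, by definition, functions of the triple $(\mathsf{tp}^{\sched},\mathsf{fp}^{\sched},\mathsf{fn}^{\sched})$ alone, the per-scheduler values of all three measures are preserved by the correspondence.

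Next I would lift these per-scheduler equalities to the worst-case (infimum) values that define the measures of $\Cause$. Because the correspondence runs in both directions, the set of realizable triples $\{(\mathsf{tp}^{\sched},\mathsf{fp}^{\sched},\mathsf{fn}^{\sched})\}$ coincides for $\wminMDP{\cM}{\Cause}$ and $\cN$, so the infimum of any fixed function of the triple agrees over both models. Here I must check that the side conditions restricting the admissible schedulers in each infimum are themselves preserved: $\recall$ requires $\Pr(\Diamond \Effect)=\mathsf{tp}+\mathsf{fn}>0$, $\fscore$ requires $\Pr(\Diamond \Cause)=\mathsf{tp}+\mathsf{fp}>0$, and $\covratio$ requires $\mathsf{fn}>0$. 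Each of these is again a condition on the preserved triple, hence respected by the correspondence, so the admissible scheduler families match on both sides and the infima are equal.

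The only place demanding care — and the closest thing to an obstacle, though not a real one — is the degenerate cases built into the definitions: the convention $\covratio=+\infty$ when $\mathsf{fn}=0$ and $\Pr(\Diamond \Cause)>0$, and $\fscore=0$ when $\Pr(\Diamond \Cause)=0$ while $\Pr(\Diamond \Effect)>0$. Since both conditions are determined by the same preserved quantities $\mathsf{tp}$, $\mathsf{fp}$, $\mathsf{fn}$, they trigger simultaneously on corresponding schedulers, so the conventions agree across the two MDPs. Thus the argument is essentially bookkeeping on top of Lemma \ref{lem:probabilities_MEC-quotient}, and the three claimed equalities follow.
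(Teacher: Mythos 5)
Your proof is correct and follows essentially the same route as the paper: the paper's own proof of this lemma is a one-liner deferring to (the argument of) Lemma \ref{lem:probabilities_MEC-quotient}, i.e., to the scheduler correspondence between $\wminMDP{\cM}{\Cause}$ and its MEC-quotient preserving $\Pr(\Diamond\Effect)$, $\Pr(\Diamond\Cause)$ and the covered-effect probability, which is exactly what you invoke. Your additional bookkeeping --- expressing $\mathsf{tp}$, $\mathsf{fp}$, $\mathsf{fn}$ through these three preserved quantities and checking that the side conditions and the degenerate conventions ($\covratio=+\infty$, $\fscore=0$) are themselves determined by them --- merely spells out what the paper leaves implicit.
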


\begin{proof}
	Analogously to the proof of Lemma \ref{lem:probabilities_MEC-quotient}. 
\end{proof}

This now allows us to work under assumptions (A1)-(A3) when addressing problems concerning the quality measures for a fixed cause set.

%%%%%%%%%%%%%%%%%%%%%%%%%%%%%%%%%%%%%%%%%%%%%%%%%%%%%%%%%%%%%%%%%%%%%%%%%%%%%%%

As efficient computation methods for $\recall(\Cause)$
are known from literature (see \cite{TACAS14-condprob,Maercker-PhD20}
for poly-time algorithms to compute
conditional reachability probabilities), we can use the same methods to compute the coverage ratio.

\begin{cor}
	\label{cor:covratio-in-PTIME}
	The value $\covratio(\Cause)$ and corresponding worst-case schedulers are computable in polynomial time.
\end{cor}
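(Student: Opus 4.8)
The plan is to reduce the computation of $\covratio(\Cause)$ to that of $\recall(\Cause)$, which is already known to be computable in polynomial time through minimal conditional reachability probabilities \cite{TACAS14-condprob,Maercker-PhD20}. The key observation is a pointwise algebraic identity linking the two measures. For a scheduler $\sched$ with $\Pr^{\sched}_{\cM}(\Diamond \Effect) > 0$, dividing numerator and denominator of $\recall^{\sched}(\Cause) = \mathsf{tp}^{\sched}/(\mathsf{tp}^{\sched} + \mathsf{fn}^{\sched})$ by $\mathsf{fn}^{\sched}$ and recalling $\covratio^{\sched}(\Cause) = \mathsf{tp}^{\sched}/\mathsf{fn}^{\sched}$ yields
\[
\recall^{\sched}(\Cause) \ = \ \frac{\covratio^{\sched}(\Cause)}{\covratio^{\sched}(\Cause) + 1}.
\]
This identity even holds in the extended reals under the convention $\covratio^{\sched}(\Cause) = +\infty$ when $\mathsf{fn}^{\sched} = 0$ and $\mathsf{tp}^{\sched} > 0$, since there $\recall^{\sched}(\Cause) = 1$ matches the limit of $r/(r{+}1)$ as $r \to \infty$.

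Second, I would exploit that the map $g(r) = r/(r+1)$ is strictly increasing and continuous on $[0,\infty]$ with $g(+\infty) = 1$. Hence a scheduler minimizes $\recall^{\sched}(\Cause)$ if and only if it minimizes $\covratio^{\sched}(\Cause)$, and because taking an infimum commutes with a strictly increasing continuous function, $\recall(\Cause) = g\bigl(\covratio(\Cause)\bigr)$. Inverting $g$ gives the closed form
\[
\covratio(\Cause) \ = \ \frac{\recall(\Cause)}{1 - \recall(\Cause)},
\]
with the reading $\covratio(\Cause) = +\infty$ in the degenerate case $\recall(\Cause) = 1$. Since by definition $\recall(\Cause) = \Pr^{\min}_{\cM}(\Diamond \Cause \mid \Diamond \Effect)$ is a minimal conditional reachability probability, it and a witnessing worst-case scheduler are computable in polynomial time by the algorithm of \cite{TACAS14-condprob}; substituting the computed value into the formula above then delivers $\covratio(\Cause)$, and the same scheduler that minimizes $\recall^{\sched}(\Cause)$ serves as the worst-case scheduler for $\covratio^{\sched}(\Cause)$.

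The only point requiring care — rather than hard work — is matching the domains of quantification. The measure $\recall^{\sched}$ is defined for all schedulers with $\Pr^{\sched}_{\cM}(\Diamond \Effect) > 0$, whereas $\covratio^{\sched}$ is declared finite only when $\mathsf{fn}^{\sched} > 0$. I would verify that, together with the convention $\covratio^{\sched} = +\infty$ on the remaining schedulers reaching $\Effect$, the identity $\recall^{\sched}(\Cause) = g(\covratio^{\sched}(\Cause))$ holds on exactly the same scheduler set, so that the two infima are indeed related by $g$. Once this bookkeeping is settled, the pointwise identity and the monotonicity of $g$ make the reduction immediate, and the polynomial-time bound is inherited directly from the known algorithm for conditional reachability probabilities.
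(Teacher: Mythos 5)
Your proposal is correct and follows essentially the same route as the paper's own proof: both derive the algebraic identity linking $\recall^{\sched}(\Cause)$ and $\covratio^{\sched}(\Cause)$, use its monotonicity to conclude that minimizing one is equivalent to minimizing the other, and then invoke the polynomial-time algorithm for $\Pr^{\min}_{\cM}(\Diamond \Cause \mid \Diamond \Effect)$ from \cite{TACAS14-condprob,Maercker-PhD20}. Your explicit treatment of the $+\infty$ convention and the domain bookkeeping is slightly more careful than the paper's, but it is the same argument.
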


\begin{proof}
	For a given scheduler $\sched$ we have 
	\begin{align*}
		\covrat^\sched(\Cause)=\frac{\mathsf{tp}^\sched} {\mathsf{fn}^\sched} \ \ \text{and} \ \
		\recall^\sched(\Cause)=\frac{\mathsf{tp}^\sched}{\mathsf{tp}^\sched + \mathsf{fn}^\sched}
	\end{align*}
	We thus get the following
	\begin{align*}
		\frac{1}{\frac{\mathsf{tp}^\sched}{\mathsf{tp}^\sched + \mathsf{fn}^\sched}} &=\frac{\mathsf{tp}^\sched + \mathsf{fn}^\sched}{\mathsf{tp}^\sched} = \frac{\mathsf{tp}^\sched}{\mathsf{tp}^\sched} + \frac{\mathsf{fn}^\sched}{\mathsf{tp}^\sched} = 1 + \frac{\mathsf{fn}^\sched}{\mathsf{tp}^\sched} = 1+\frac{1}{\frac{\mathsf{tp}^\sched}{\mathsf{fn}^\sched}}.
	\end{align*}
	This implies
	\begin{align*}
		\frac{1}{\recall^\sched(\Cause)} = 1+ \frac{1}{\covrat^\sched(\Cause)} \ \text{thus} \
		\covrat^\sched(\Cause) =1 / \left(\frac{1}{\recall^\sched(\Cause)} - 1\right).
	\end{align*}
	Computing $\covrat(\Cause)$ now implores us to take the infimum of all sensible schedulers over $1 / \left(\frac{1}{\recall^\sched(\Cause)} - 1\right)$ which is the same as taking the infimum of all sensible schedulers over $\recall^\sched(\Cause)$.
	This amounts to computing
	\begin{align*}
		\inf_{\sched}\recall^\sched(\Cause) = \Pr_{\cM}^{\min}( \ \lozenge\Cause \mid \lozenge \Eff \ ),
	\end{align*}
	which can be computed in polynomial time by \cite{TACAS14-condprob, Maercker-PhD20}.
\end{proof}

In contrast to these results, we are not aware of
known concepts that are applicable 
for computing the f-score.
Indeed, this quality measure is efficiently computable:

\begin{thm}
	\label{fscore-in-PTIME}
	The value $\fscore(\Cause)$ and corresponding worst-case schedulers are computable in polynomial time.
\end{thm}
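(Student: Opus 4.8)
The plan is to reduce the computation of $\fscore(\Cause)$ to a \emph{linear-fractional program} over the polytope of expected state-action frequencies, which can then be solved in polynomial time.

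First I would pass to the transformed model. By Lemmas~\ref{lemma:accuracy-measures-M-and-Mcause} and~\ref{lemma:accuracy-measures-M-and-Mcause2} we have $\fscore_{\cM}(\Cause)=\fscore_{\cN}(\Cause)$, where $\cN$ is the MEC-quotient of $\wminMDP{\cM}{\Cause}$, so I may assume (A1)--(A3): $\cN$ has no end components and exactly four terminal states $\effcov,\effunc,\noeffc,\noeffbot$, whose reachability probabilities are precisely $\mathsf{tp}^{\sched},\mathsf{fn}^{\sched},\mathsf{fp}^{\sched},\mathsf{tn}^{\sched}$. Writing $P_C^{\sched}=\Pr^{\sched}_{\cN}(\Diamond\Cause)=\mathsf{tp}^{\sched}+\mathsf{fp}^{\sched}$ and $P_E^{\sched}=\Pr^{\sched}_{\cN}(\Diamond\Effect)=\mathsf{tp}^{\sched}+\mathsf{fn}^{\sched}$, the f-score rewrites as
\[
\fscore^{\sched}(\Cause)=\frac{2\,\mathsf{tp}^{\sched}}{2\,\mathsf{tp}^{\sched}+\mathsf{fp}^{\sched}+\mathsf{fn}^{\sched}}=\frac{2\,\mathsf{tp}^{\sched}}{P_C^{\sched}+P_E^{\sched}}.
\]
Crucially, this formula also returns the stipulated value $0$ on the boundary case $P_C^{\sched}=0<P_E^{\sched}$ of the definition, since $\mathsf{tp}^{\sched}\le P_C^{\sched}$ forces $\mathsf{tp}^{\sched}=0$ there.

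Next I would invoke the frequency characterization used already in Section~\ref{sec:check-GPR} (cf.\ \cite[Theorem 9.16]{Kallenberg20}): in the EC-free MDP $\cN$ the set $\cF$ of vectors $(x_{s,\alpha})_{(s,\alpha)\in\SA}$ satisfying the balance constraints (S1)--(S3) is exactly the set of expected frequency vectors realizable by schedulers, each being realized by an MR-scheduler $\usched$ via $\usched(s)(\alpha)=x_{s,\alpha}/x_{s}$. As $\cN$ has no end components, the expected absorption time is uniformly bounded, so $\cF$ is a \emph{bounded} polytope. Since each of the four terminal states is visited at most once, $\mathsf{tp}^{\sched}=x_{\effcov}$, $\mathsf{fn}^{\sched}=x_{\effunc}$, $\mathsf{fp}^{\sched}=x_{\noeffc}$ are linear in $\mathbf{x}$, whence both numerator and denominator of the f-score are linear. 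Therefore
\[
\fscore(\Cause)=\min_{\mathbf{x}\in\cF,\ 2x_{\effcov}+x_{\noeffc}+x_{\effunc}>0}\ \frac{2x_{\effcov}}{2x_{\effcov}+x_{\noeffc}+x_{\effunc}},
\]
a linear-fractional minimization; the excluded face $P_C+P_E=0$ consists of schedulers reaching neither cause nor effect, which are not sensible, and by reachability of $\Cause$ it does not contain the relevant optima.

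Finally I would solve this via the Charnes--Cooper transformation: setting $t=1/(2x_{\effcov}+x_{\noeffc}+x_{\effunc})$ and $\mathbf{y}=t\mathbf{x}$ turns it into the linear program ``minimize $2y_{\effcov}$ subject to $2y_{\effcov}+y_{\noeffc}+y_{\effunc}=1$, the homogenised balance constraints $A\mathbf{y}=t\,\mathbf{e}_{\init}$, and $\mathbf{y}\ge 0,\ t\ge 0$''. Boundedness of $\cF$ guarantees that no feasible point has $t=0$ (this would require a nonzero recession direction of $\cF$), so the LP is equivalent to the fractional program on $\{P_C+P_E>0\}$, and it is feasible since $\Cause$ is reachable. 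Solving it in polynomial time yields both $\fscore(\Cause)$ and an optimal vertex $(\mathbf{y}^{\ast},t^{\ast})$, from which $\mathbf{x}^{\ast}=\mathbf{y}^{\ast}/t^{\ast}$ and the corresponding MR worst-case scheduler are recovered. I expect the main obstacle to be the careful handling of the definitional edge cases: checking that the continuous extension of $2x_{\effcov}/(P_C+P_E)$ to the boundary agrees with the stipulated values (in particular $\fscore^{\sched}=0$ when $P_C=0<P_E$), and that the infimum over the merely \emph{sensible} schedulers is genuinely attained by the bounded LP, which is what justifies replacing the open feasibility condition by the closed program and concluding the existence of an MR worst-case scheduler.
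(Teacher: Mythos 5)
Your proposal is correct, and it reaches the result by a genuinely different route than the paper. After the same preprocessing (Lemmas~\ref{lemma:accuracy-measures-M-and-Mcause} and~\ref{lemma:accuracy-measures-M-and-Mcause2}, assumptions (A1)--(A3)), the paper rewrites $\fscore(\Cause)=2/(X+2)$ with $X=\sup_{\sched}\,(\mathsf{fp}^{\sched}+\mathsf{fn}^{\sched})/\mathsf{tp}^{\sched}$ and computes $X$ via Theorem~\ref{thm:comp-Q}: extremal ratios of reachability probabilities of disjoint terminal sets are reduced, by adding reset transitions back to $\init$ and weight $1$ on the numerator states, to a stochastic shortest path problem $\mathrm{E}^{\max}_{\cN}(\boxplus V)$ solvable in polynomial time; the corner case $\fscore(\Cause)=0$, where $X=+\infty$ and Theorem~\ref{thm:comp-Q} yields no optimal scheduler, is then handled separately through Lemma~\ref{lem:fscore=0}. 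You instead work directly in the frequency polytope (S1)--(S3) (the same Kallenberg correspondence the paper uses in Section~\ref{sec:check-GPR}), observe that $\mathsf{tp}^{\sched}$, $\mathsf{fp}^{\sched}$, $\mathsf{fn}^{\sched}$ are linear in the frequency variables (terminal-state inflows), and solve the resulting linear-fractional program by the Charnes--Cooper transformation. Your exactness argument is sound: EC-freeness gives a uniformly bounded expected absorption time (transience), so the frequency polytope is bounded, hence the homogenized LP has no feasible point with $t=0$, its minimum is attained, and the optimal point maps back to a frequency vector and thus to an MR worst-case scheduler. Your route buys a uniform treatment of the zero corner case --- you verified that the stipulated value $0$ for schedulers with $\Pr^{\sched}_{\cM}(\Diamond\Cause)=0<\Pr^{\sched}_{\cM}(\Diamond\Eff)$ agrees with the closed-form ratio, so no analogue of Lemma~\ref{lem:fscore=0} is needed --- and it outputs the worst-case scheduler directly from the optimal frequencies. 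What the paper's detour buys is the reusable Theorem~\ref{thm:comp-Q} (ratio objectives via stochastic shortest paths), which is exploited again for Theorem~\ref{thm:fscore-opt-MC} and the game reduction in Theorem~\ref{fscore-threshold-poblem-via-stochMPgames}, and which additionally yields memoryless \emph{deterministic} worst-case schedulers when the ratio is finite.
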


%%%%%%%%%%%%%%%%%%%%%%%%%%%%%%%%%%%%%%%%%%%%%%%%%%%%%%%%%%%%%%%%%%%%%%%%%%%%%%%

The remainder of this subsection is devoted to the proof of Theorem \ref{fscore-in-PTIME}.
We can express $\fscore(\Cause)$ in terms of the supremum of a quotient of reachability probabilities for disjoint sets of terminal states.
More precisely, under assumptions (A1)-(A3) and assuming $\fscore(\Cause)>0$,
we have:
\begin{center}
	$
	\fscore(\Cause) = \frac{2}{X+2}
	\quad \text{where} \quad
	X \ = \ %
	\sup_\sched \, \frac{\Pr^{\sched}_{\cM}(\Diamond \noefffp)
		+ \Pr^{\sched}_{\cM}(\Diamond \effuncov)}
	{\Pr^{\sched}_{\cM}(\Diamond \effcov)}
	$
\end{center}
where $\sched$ ranges over all schedulers with
$\Pr_{\cM}^\sched (\Diamond \effcov)>0$.
Moreover, we can show that we can handle the corner case of $\fscore(\Cause) = 0$.
\begin{lem}\label{lem:fscore=0}
	Let $\Cause$ be an SPR or a GPR cause. Then, the following three statements
	are equivalent:
	\begin{enumerate}
		\item [(a)] $\recall(\Cause)=0$
		\item [(b)] $\fscore(\Cause)=0$
		\item [(c)] There is a scheduler $\sched$ such that
		$\Pr^{\sched}_{\cM}(\Diamond \Effect)>0$ and
		$\Pr^{\sched}_{\cM}(\Diamond \Cause)=0$.
	\end{enumerate}
\end{lem}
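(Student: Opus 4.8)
The plan is to prove all three statements equivalent by giving the two easy implications (c)$\Rightarrow$(a) and (c)$\Rightarrow$(b) directly, and the two converses (a)$\Rightarrow$(c) and (b)$\Rightarrow$(c) by a single common argument. First I would pass, via Lemmas~\ref{lemma:accuracy-measures-M-and-Mcause} and~\ref{lemma:accuracy-measures-M-and-Mcause2}, to the MEC-quotient $\cN$ of $\wminMDP{\cM}{\Cause}$, which satisfies (A1)--(A3): recall and f-score are preserved, and condition (c) is preserved as well, since a scheduler with $\Pr^{\sched}(\Diamond\Cause)=0$ behaves identically in $\cM$, $\wminMDP{\cM}{\Cause}$ and $\cN$ (the transformations only change behaviour at or after $\Cause$), so the correspondences of Lemmas~\ref{lemma:wmin-criterion-PR-causes} and~\ref{lem:probabilities_MEC-quotient} apply to these schedulers. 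Throughout I use that an SPR or GPR cause is in particular a GPR cause (Lemma~\ref{lemma:strict-implies-global}), which remains a GPR cause in $\cN$ (Corollary~\ref{cor:GPR_MEC}); hence for every scheduler $\sched$ with $\Pr^{\sched}_{\cN}(\Diamond\Cause)>0$ we have $\precision^{\sched}(\Cause)>\Pr^{\sched}_{\cN}(\Diamond\Effect)\geq 0$, i.e. $\precision^{\sched}(\Cause)>0$.

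For (c)$\Rightarrow$(a) and (c)$\Rightarrow$(b), take a witness $\sched$ with $\Pr^{\sched}_{\cN}(\Diamond\Effect)>0$ and $\Pr^{\sched}_{\cN}(\Diamond\Cause)=0$. By (A2) the states $\effcov$ and $\noefffp$ are reachable only through $\Cause$, so $\mathsf{tp}^{\sched}=\Pr^{\sched}_{\cN}(\Diamond\effcov)=0$ while $\mathsf{fn}^{\sched}=\Pr^{\sched}_{\cN}(\Diamond\effunc)=\Pr^{\sched}_{\cN}(\Diamond\Effect)>0$. Thus $\recall^{\sched}(\Cause)=0$, giving $\recall(\Cause)=0$, and by the extension of the f-score to schedulers that reach $\Effect$ but not $\Cause$ we get $\fscore^{\sched}(\Cause)=0$, giving $\fscore(\Cause)=0$.

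For the converses the key tool is that, because $\cN$ has no end components, the set of achievable vectors $(\Pr^{\sched}_{\cN}(\Diamond\effcov),\Pr^{\sched}_{\cN}(\Diamond\effunc),\Pr^{\sched}_{\cN}(\Diamond\noefffp))$ over all schedulers is a convex polytope realised by schedulers (Lemmas~\ref{lem:from-general-to-MR-schedulers} and~\ref{lem:convex} together with \cite[Theorem~9.16]{Kallenberg20}). Both $\recall^{\sched}(\Cause)=\frac{\mathsf{tp}^{\sched}}{\mathsf{tp}^{\sched}+\mathsf{fn}^{\sched}}$ and $\fscore^{\sched}(\Cause)=\frac{2\mathsf{tp}^{\sched}}{2\mathsf{tp}^{\sched}+\mathsf{fn}^{\sched}+\mathsf{fp}^{\sched}}$ are fractional-linear functions of this vector, and by the weighted mediant inequality the infimum of such a function over a polytope is attained at a vertex. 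Hence, if $\recall(\Cause)=0$ (resp.\ $\fscore(\Cause)=0$), there is a single scheduler $\sched^{\ast}$ realising the value $0$, which forces $\mathsf{tp}^{\sched^{\ast}}=\Pr^{\sched^{\ast}}_{\cN}(\Diamond\effcov)=0$ together with $\Pr^{\sched^{\ast}}_{\cN}(\Diamond\effunc)+\Pr^{\sched^{\ast}}_{\cN}(\Diamond\noefffp)>0$. Now I would invoke the GPR property: if $\Pr^{\sched^{\ast}}_{\cN}(\Diamond\Cause)>0$ then $\precision^{\sched^{\ast}}(\Cause)=\Pr^{\sched^{\ast}}_{\cN}(\Diamond\effcov)/\Pr^{\sched^{\ast}}_{\cN}(\Diamond\Cause)=0$, contradicting $\precision^{\sched^{\ast}}(\Cause)>0$. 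Therefore $\Pr^{\sched^{\ast}}_{\cN}(\Diamond\Cause)=0$, which by (A2) also yields $\Pr^{\sched^{\ast}}_{\cN}(\Diamond\noefffp)=0$, so that $\Pr^{\sched^{\ast}}_{\cN}(\Diamond\effunc)>0$; thus $\sched^{\ast}$ witnesses (c).

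The main obstacle is the attainment of the two infima. Mere convexity and compactness of the achievable region would not suffice: for a general convex compact set the infimum of $a/(a+b)$ can be approached only along sequences tending to the origin (i.e.\ with $\Pr^{\sched}_{\cN}(\Diamond\Effect)\to 0$) without being realised by any effect-reaching scheduler. It is precisely the finiteness of the vertex set of the reachability polytope of the finite, end-component-free MDP $\cN$ that excludes this and produces a genuine witness $\sched^{\ast}$; the GPR property then converts ``no covered effect'' into ``no cause at all''. A secondary point to handle carefully is the transfer of condition (c) through the model transformations, for which one restricts to the $\Cause$-avoiding schedulers on which $\cM$, $\wminMDP{\cM}{\Cause}$ and $\cN$ coincide.
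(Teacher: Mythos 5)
Your proof is correct, but it takes a genuinely different route from the paper's. The paper never leaves the original MDP $\cM$ and never invokes the normal form (A1)--(A3): it uses the results of \cite{TACAS14-condprob,Maercker-PhD20} to obtain schedulers that \emph{attain} the minimal conditional probabilities $\Pr^{\min}_{\cM}(\Diamond\Cause\mid\Diamond\Effect)$ (the recall) and $\Pr^{\min}_{\cM}(\Diamond\Effect\mid\Diamond\Cause)$ (the precision), deduces from \eqref{GPR} that the precision is positive and that the recall is positive whenever no scheduler as in (c) exists, and then closes the cycle (a)$\Rightarrow$(b)$\Rightarrow$(c)$\Rightarrow$(a), proving (b)$\Rightarrow$(c) by contradiction from the uniform positive bounds on precision and recall together with monotonicity of the harmonic mean. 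You instead pass to the quotient $\cN$ satisfying (A1)--(A3) --- which needs the supplementary observation, not stated in Lemmas~\ref{lemma:accuracy-measures-M-and-Mcause}, \ref{lemma:accuracy-measures-M-and-Mcause2} and \ref{lem:probabilities_MEC-quotient}, that condition (c) transfers as well; your restriction to the $\Cause$-avoiding schedulers, on which the three models coincide, handles this correctly --- and you replace the imported attainment results by a geometric argument: the achievable vectors of terminal-state probabilities form a polytope all of whose points are scheduler-realizable (the frequency LP (S1)--(S3) with \cite[Theorem 9.16]{Kallenberg20}, already used in Section~\ref{sec:check-GPR}), and the weighted mediant inequality places the infimum of the fractional-linear functions $\recall^{\sched}$ and $\fscore^{\sched}$ at a vertex lying inside the region where the respective denominator is positive. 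Both proofs hinge on the same crux, which you correctly identify: exhibiting an actual scheduler witnessing an infimum of value $0$; the paper imports this from prior work on conditional probabilities, while you re-derive it from the LP structure. The paper's route is shorter and transformation-free; yours is more self-contained, makes the reason for attainment transparent (finiteness of the vertex set rules out degenerate sequences with $\Pr^{\sched}_{\cN}(\Diamond\Effect)\to 0$), and shows in addition that the witness in (c) can be taken to be an MR-scheduler of $\cN$. One point you should make explicit: the extension $\fscore^{\sched}(\Cause)=0$ for schedulers with $\Pr^{\sched}_{\cN}(\Diamond\Effect)>0$ and $\Pr^{\sched}_{\cN}(\Diamond\Cause)=0$ agrees with the closed formula $2\mathsf{tp}^{\sched}/(2\mathsf{tp}^{\sched}+\mathsf{fn}^{\sched}+\mathsf{fp}^{\sched})$ (since $\mathsf{tp}^{\sched}=0$ there), so the infimum defining $\fscore(\Cause)$ really is the infimum of a single fractional-linear function over the part of the polytope where $2\mathsf{tp}+\mathsf{fn}+\mathsf{fp}>0$, and your vertex argument then covers the entire domain of that infimum, including the extension.
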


\begin{proof}
	Let $C=\Cause$.
	Using results of \cite{TACAS14-condprob,Maercker-PhD20}, there exist schedulers $\tsched$ and $\usched$
	with
	\begin{itemize}
		\item
		$\Pr^{\tsched}_{\cM}(\Diamond \Effect)>0$ 
		and 
		$\Pr^{\tsched}_{\cM}(\ \Diamond C \ | \Diamond \Effect \ )
		= 
		\inf_{\sched} \Pr^{\sched}_{\cM}(\ \Diamond C \ | \Diamond \Effect \ )$
		where $\sched$ ranges over all schedulers with positive
		effect probability,
		\item
		$\Pr^{\usched}_{\cM}(\Diamond C)>0$ 
		and 
		$\Pr^{\usched}_{\cM}(\ \Diamond \Effect \ | \Diamond C \ )
		= 
		\inf_{\sched} \Pr^{\sched}_{\cM}(\ \Diamond \Effect \ | \Diamond C \ )$
		where $\sched$ ranges over all schedulers with
		$\Pr^{\sched}_{\cM}(\Diamond C)>0$.
	\end{itemize}
	In particular,
	$\recall(C)=\Pr^{\tsched}_{\cM}(\ \Diamond C \ | \Diamond \Effect \ )$
	and
	$\precision(C)=\Pr^{\usched}_{\cM}(\ \Diamond \Effect \ | \Diamond C \ )$.
	By \eqref{GPR} applied to $\usched$ and $\tsched$
	(recall that each SPR cause is a GPR cause too, 
	see Lemma \ref{lemma:strict-implies-global}),
	we obtain the following statements (i) and (ii):
	\begin{description}
		\item [\text{\rm (i)}]
		$p \ \eqdef \ \precision(C) \ > \ 0$
		\item [\text{\rm (ii)}]
		If $\Pr^{\tsched}_{\cM}(\Diamond C)>0$
		then $\Pr^{\tsched}_{\cM}(\Diamond C \wedge \Diamond \Effect)>0$
		and therefore
		$\recall(C) >0$.
	\end{description}
	Obviously,
	if there is no scheduler $\sched$ as in statement (c) then
	$\Pr^{\tsched}_{\cM}(\Diamond C)>0$.
	Thus, from (ii) we get:
	\begin{description}
		\item [\text{\rm (iii)}]
		If there is no scheduler $\sched$ as in statement (c)
		then 
		$\recall(C) >0$.
	\end{description}
	
	``(a) $\Longrightarrow$ (b)'':
	We prove $\fscore(C)>0$ implies $\recall(C)>0$.
	If $\fscore(C)>0$ then, by definition of the f-score, there is no scheduler
	$\sched$ as in statement (c).
	But then $\recall(C)>0$ by statement (iii).
	
	``(b) $\Longrightarrow$ (c)'':
	Let $\fscore(C)=0$.
	Suppose by contradiction that there is no scheduler as in (c).
	Again by (iii) we obtain $\recall(C) >0$.
	But then, for each scheduler $\sched$ with $\Pr^{\sched}_{\cM}(\Diamond C)>0$:
	\[
	\precision^{\sched}(C) \ \geqslant \ p \ \stackrel{\text{\tiny (i)}}{>} \ 0
	\]
	and, with $r \eqdef \recall(C)$:
	\[
	\recall^{\sched}(C) \ \geqslant \ r \ > \ 0
	\]
	The harmonic mean as a function 
	$]0,1]^2\to \Real$, $(x,y) \mapsto 2 \frac{xy}{x+y}$ is
	monotonically increasing in both arguments.
	But then:
	\[
	\fscore^{\sched}(C) \ \geqslant \ 
	2 \frac{p \cdot r}{p{+}r} \ > \ 0
	\]
	Hence, $\fscore(C) = \inf_{\sched} \fscore^{\sched}(C) \geqslant 2 \frac{p \cdot r}{p{+}r}  >  0$.
	Contradiction.
	
	``(c) $\Longrightarrow$ (a)'':
	Let $\sched$ be a scheduler as in statement (c).
	Then, 
	\[
	\Pr^{\sched}_{\cM}(\ \Diamond C \ | \Diamond \Effect \ )  \ = \ 0.
	\]
	Hence:
	$\recall(C) \ = \ 
	\Pr^{\min}_{\cM}(\ \Diamond C \ | \Diamond \Effect \ ) \ = \ 0$.
\end{proof}

The remaining task to prove Theorem \ref{fscore-in-PTIME} is a generally applicable technique for computing extremal ratios of reachability probabilities in MDPs without ECs.

\paragraph*{\bf Max/min ratios of reachability probabilities for disjoint sets of terminal states.}
\label{sec:comp-quotient}

Suppose we are given an MDP $\cM= (S,\Act,P,\init)$ without ECs and disjoint subsets $U,V\subseteq S$  of terminal states. 
Given a scheduler $\sched$ with $\Pr_\cM^\sched(\Diamond V)>0$ we define:
\begin{align*}
	\ratio{\sched}{\cM}{U,V} \ = \
	\frac{\Pr_{\cM}^{\sched}(\Diamond U)}{\Pr_{\cM}^{\sched} (\Diamond V)}
\end{align*}
The goal is an algorithm for computing the extremal values:
\begin{align*}
	\ratio{\min}{\cM}{U,V} = \inf_{\sched} \ratio{\sched}{\cM}{U,V} \quad \text{and} \quad \ratio{\max}{\cM}{U,V} = \sup_{\sched} \ratio{\sched}{\cM}{U,V}
\end{align*}
where $\sched$ ranges over all schedulers with $\Pr_\cM^\sched(\Diamond V)>0$. 

To compute these, we rely on a polynomial reduction to the classical \emph{stochastic shortest path problem} \cite{BT91}. 
For this, consider the MDP $\cN$ arising from $\cM$ by adding reset transitions from all terminal states $t \in S \backslash V$ to $\init$.
Thus, exactly the $V$-states are terminal in $\cN$.
$\cN$ might contain ECs, which, however, do not intersect with $V$.
We equip $\cN$ with the weight function that assigns $1$ to all states in $U$ and $0$ to all other states. 
For a scheduler $\tsched$ with $\Pr^{\tsched}_{\cN}(\Diamond V)=1$, let $\mathrm{E}^{\tsched}_{\cN}(\boxplus V)$ be the expected accumulated weight until reaching $V$ under $\tsched$.
Let $\mathrm{E}^{\min}_{\cN}(\boxplus V) =  \inf_{\tsched} \mathrm{E}^{\tsched}_{\cN}(\boxplus V)$ and  $\mathrm{E}^{\max}_{\cN}(\boxplus V) =  \sup_{\tsched} \mathrm{E}^{\tsched}_{\cN}(\boxplus V)$, 
where $\tsched$ ranges over all schedulers 
with
\mbox{$\Pr^{\tsched}_{\cN}(\Diamond V)=1$.}
We can rely on known results \cite{BT91,Alfaro-CONCUR99,LICS18-SSP} to obtain
that both
$\mathrm{E}^{\min}_{\cN}(\boxplus V)$ and $\mathrm{E}^{\max}_{\cN}(\boxplus V)$
are computable in polynomial time.
As $\cN$ has only non-negative weights,
$\mathrm{E}^{\min}_{\cN}(\boxplus V)$ is finite
and a corresponding MD-scheduler with minimal expectation exists.
If $\cN$ has an EC containing at least one $U$-state,
which is the case iff $\cM$ has a scheduler $\sched$
with $\Pr^{\sched}_{\cM}(\Diamond U)>0$ and $\Pr^{\sched}_{\cM}(\Diamond V)=0$,
then
$\mathrm{E}^{\max}_{\cN}(\boxplus V) = +\infty$.
Otherwise, $\mathrm{E}^{\max}_{\cN}(\boxplus V)$
is finite and the maximum is achieved by an MD-scheduler as well.

\begin{thm}\label{thm:comp-Q}
	Let $\cM$ be an MDP without ECs and
	$U,V$ disjoint sets of terminal states in $\cM$, and let $\cN$ be as before.
	Then,
	$\ratio{\min}{\cM}{U,V}=\mathrm{E}^{\min}_{\cN}(\boxplus V)$ and
	$\ratio{\max}{\cM}{U,V}=\mathrm{E}^{\max}_{\cN}(\boxplus V)$.
	Thus, both values are computable in polynomial time,
	and there is an MD-scheduler minimizing $\ratio{\sched}{\cM}{U,V}$,
	and an MD-scheduler maximizing $\ratio{\sched}{\cM}{U,V}$ if 
	$\ratio{\max}{\cM}{U,V}$ is finite.
\end{thm}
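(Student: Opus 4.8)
The plan is to prove the two identities $\ratio{\min}{\cM}{U,V}=\mathrm{E}^{\min}_{\cN}(\boxplus V)$ and $\ratio{\max}{\cM}{U,V}=\mathrm{E}^{\max}_{\cN}(\boxplus V)$ by setting up a value-preserving correspondence between schedulers $\sched$ of $\cM$ with $\Pr^{\sched}_{\cM}(\Diamond V)>0$ and schedulers of $\cN$ that reach $V$ almost surely. Once these equalities are in place, the assertions on polynomial-time computability and on the existence of optimal MD-schedulers follow immediately from the facts on the stochastic shortest path problem recalled just before the theorem.

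The heart of the argument is a single renewal identity, and I would first reduce to memoryless randomized schedulers. Since $\ratio{\sched}{\cM}{U,V}$ depends on $\sched$ only through the terminal-reachability probabilities $\Pr^{\sched}_{\cM}(\Diamond U)$ and $\Pr^{\sched}_{\cM}(\Diamond V)$, Lemma \ref{lem:from-general-to-MR-schedulers} lets me replace any $\sched$ by an MR-scheduler with the same two values, hence the same ratio and the same sign of $\Pr(\Diamond V)$. For an MR-scheduler $\sched$ of $\cM$ with $a=\Pr^{\sched}_{\cM}(\Diamond U)$ and $b=\Pr^{\sched}_{\cM}(\Diamond V)>0$, let $\tsched$ be the memoryless scheduler of $\cN$ copying $\sched$ on the non-terminal states (the reset transition at the non-$V$ terminal states carries no choice). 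Under $\tsched$ the run of $\cN$ decomposes into i.i.d.\ epochs, each an independent run of $\cM$ from $\init$ to a terminal state: an epoch ends in $V$ with probability $b$ (and stops), in $U$ with probability $a$ (weight $1$, reset), and in another terminal state otherwise (weight $0$, reset). The first-step renewal equation $R=a\cdot(1+R)+b\cdot 0+(1-a-b)\cdot R$ gives $R=a/b$, that is $\mathrm{E}^{\tsched}_{\cN}(\boxplus V)=\ratio{\sched}{\cM}{U,V}$, while $\Pr^{\tsched}_{\cN}(\Diamond V)=1$ since each epoch independently hits $V$ with probability $b>0$. This realizes every $\cM$-ratio as an admissible $\cN$-expected-weight, yielding $\ratio{\min}{\cM}{U,V}\geq \mathrm{E}^{\min}_{\cN}(\boxplus V)$ and $\ratio{\max}{\cM}{U,V}\leq \mathrm{E}^{\max}_{\cN}(\boxplus V)$.

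For the converse inequalities I would use that the SSP minimum (and the maximum, when finite) is attained by an MD-scheduler $\tsched^{\ast}$ of $\cN$. Because $\cN$ and $\cM$ share the same non-terminal states and the reset transitions are forced, $\tsched^{\ast}$ restricts to an MD-scheduler $\sched^{\ast}$ of $\cM$, and the identity above gives $\ratio{\sched^{\ast}}{\cM}{U,V}=\mathrm{E}^{\tsched^{\ast}}_{\cN}(\boxplus V)$ as soon as $\Pr^{\sched^{\ast}}_{\cM}(\Diamond V)>0$; this proviso holds since $\Pr^{\sched^{\ast}}_{\cM}(\Diamond V)=0$ would make every epoch reset and contradict $\Pr^{\tsched^{\ast}}_{\cN}(\Diamond V)=1$. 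Thus $\sched^{\ast}$ is an admissible $\cM$-scheduler attaining the SSP value, which both closes the inequalities to equalities and supplies the claimed optimal MD-scheduler. The one remaining case is $\mathrm{E}^{\max}_{\cN}(\boxplus V)=+\infty$, which by the recalled characterization occurs exactly when $\cM$ has a scheduler with $\Pr(\Diamond U)>0$ and $\Pr(\Diamond V)=0$; mixing such a scheduler with any scheduler reaching $V$ via the convex combination of Lemma \ref{lem:convex} and letting the mixing weight tend toward the degenerate scheduler drives $\ratio{\sched}{\cM}{U,V}\to+\infty$, so $\ratio{\max}{\cM}{U,V}=+\infty$ as well.

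I expect the main obstacle to be the bookkeeping of the scheduler correspondence rather than any deep fact. Arbitrary schedulers of $\cN$ may remember how many resets have occurred, so there is no literal bijection between schedulers of $\cM$ and of $\cN$; the plan deliberately sidesteps this asymmetry by passing only MR/MD-schedulers through the correspondence — from $\cM$ to $\cN$ via the MR-reduction of Lemma \ref{lem:from-general-to-MR-schedulers}, and from $\cN$ to $\cM$ via MD-optimality of the SSP solution — so that the clean identity $R=a/b$ does all the work in both directions. The two points that genuinely require care are verifying $\Pr^{\sched^{\ast}}_{\cM}(\Diamond V)>0$ for the transferred optimal scheduler and the separate handling of the infinite-maximum case.
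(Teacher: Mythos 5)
Your proof is correct, and its skeleton coincides with the paper's own: both directions reduce to the identity ``ratio in $\cM$ equals expected accumulated weight in $\cN$'' for a scheduler that restarts after each reset, the converse inequalities come from MD-optimality of the stochastic shortest path values, and the case $\mathrm{E}^{\max}_{\cN}(\boxplus V)=+\infty$ is treated separately. The differences lie in how the pieces are executed. First, you obtain the key identity from the first-step renewal equation $R=a(1+R)+(1-a-b)R$, whereas the paper computes $\mathrm{E}^{\tsched}_{\cN}(\boxplus V)=\sum_{n,k} n\, x^n \binom{n+k}{k} q^k p = x/p$ by an explicit double summation (Lemma \ref{second-basic-fact}). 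Your route is shorter, but it needs one observation the proposal glosses over: over $[0,\infty]$ the equation $R=a+(1-b)R$ is also solved by $R=+\infty$, so you must first note $R\le 1/b<\infty$ (the number of epochs is geometric with parameter $b$, and each epoch collects weight at most $1$ because $U$-states are terminal) before solving; the paper's direct computation sidesteps this entirely. Second, you funnel the forward direction through MR-schedulers via Lemma \ref{lem:from-general-to-MR-schedulers} so that copying the scheduler into $\cN$ is literally memoryless, while the paper lets $\tsched$ restart an arbitrary history-dependent $\sched$ after each reset, which keeps the epochs i.i.d.\ without invoking the MR-reduction; both resolve the same asymmetry (schedulers of $\cN$ may count resets). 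Third, your handling of the infinite-maximum case --- mixing a scheduler with $\Pr^{\sched}_{\cM}(\Diamond U)>0$ and $\Pr^{\sched}_{\cM}(\Diamond V)=0$ against one reaching $V$ via Lemma \ref{lem:convex} and letting the mixing weight degenerate --- is actually more explicit than the paper, which dismisses this direction with ``similar arguments''; your argument is precisely what is needed to conclude $\ratio{\max}{\cM}{U,V}=+\infty$ there, since the degenerate scheduler is itself inadmissible for the ratio.
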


\begin{figure}[t]
	\centering
	\begin{minipage}{0.41\textwidth}
		\centering
		\resizebox{\textwidth}{!}{
			\begin{tikzpicture}[scale=1,->,>=stealth',auto ,node distance=0.5cm, thick]
	\tikzstyle{round}=[thin,draw=black,circle]
	
	\node[scale=1, state] (init) {$\init$};
	\node[scale=1, state, below=1.25 of init] (pre) {$\effuncov$};
	\node[scale=1, state, right=3.75 of init] (c) {$c$};
	\node[scale=1, state, right=1.25 of pre] (t) {$\noeff$};
	\node[scale=1, state, below=1.25 of c] (post) {$\effcov$};
	
	\draw[<-] (init) --++(-0.55,0.55);
	\draw[color=black,->] (init) edge  node [pos=0.2, anchor=center] (n5) {} node [pos=0.5,above] {$1/2$} (c) ;
	\draw[color=black,->] (init)  edge  node [pos=0.5, anchor=center] (n0) {} node [pos=0.5, left] {$1/2$} (pre) ;
	\draw[color=black,->] (c) edge  node [near start, anchor=center] (m5) {} node [pos=0.5,right] {$1/2$} (post) ;
	\draw[color=black,->] (init) edge[out=325, in=125] node [near start, anchor=center] (n6) {} node [pos=0.5,right] {$1/2$} (t) ;
	\draw[color=black,->] (init) edge[out=305, in=145] node [near start, anchor=center] (a6) {} node [pos=0.7,left] {$1/2$} (t) ;
	\draw[color=black, very thick, -] (n6.center) edge [bend right=45] node [pos=0.3] {$\alpha$} (n5.center);
	\draw[color=black, very thick, -] (n0.center) edge [bend right=45] node [pos=0.3,above] {$\beta$} (a6.center);
	
	\draw[color=black,->] (c)  edge  node [near start, anchor=center] (m6) {} node [pos=0.5,left] {$1/2$} (t) ;
	
\end{tikzpicture}
		}
		\caption{MDP $\cM$ from Example \ref{ex:quotient}}
		\label{fig:quotient1}
	\end{minipage}\hspace{25pt}
	\begin{minipage}{0.41\textwidth}
		\centering
		\resizebox{\textwidth}{!}{
			\begin{tikzpicture}[scale=1,->,>=stealth',auto ,node distance=0.5cm, thick]
	\tikzstyle{round}=[thin,draw=black,circle]
	
	\node[scale=1, state] (init) {$\init$};
	\node[scale=1, state, below=1.25 of init] (pre) {$\effuncov$};
	\node[scale=1, state, right=3.75 of init] (c) {$c$};
	\node[scale=1, state, right=1.25 of pre] (t) {$\noeff$};
	\node[scale=1, state, below=1.25 of c] (post) {$\effcov$};
	
	\draw[<-] (init) --++(-0.55,0.55);
	\draw[color=black,->] (init) edge  node [pos=0.2, anchor=center] (n5) {} node [pos=0.5,above] {$1/2$} (c) ;
	\draw[color=black,->] (init)  edge  node [pos=0.5, anchor=center] (n0) {} node [pos=0.5, left] {$1/2$} (pre) ;
	\draw[color=black,->] (c) edge  node [near start, anchor=center] (m5) {} node [pos=0.5,right] {$1/2$} (post) ;
	\draw[color=black,->] (init) edge[out=325, in=125] node [near start, anchor=center] (n6) {} node [pos=0.5,right] {$1/2$} (t) ;
	\draw[color=black,->] (init) edge[out=305, in=145] node [near start, anchor=center] (a6) {} node [pos=0.7,left] {$1/2$} (t) ;
	\draw[color=black, very thick, -] (n6.center) edge [bend right=45] node [pos=0.3] {$\alpha$} (n5.center);
	\draw[color=black, very thick, -] (n0.center) edge [bend right=45] node [pos=0.3,above] {$\beta$} (a6.center);
	
	\draw[color=black,->] (c)  edge  node [near start, anchor=center] (m6) {} node [pos=0.5,left] {$1/2$} (t) ;
	\node[scale=0.02, above = 0.15 of c] (help) {};
	\draw[color=red,->] (post) to[in=345, out=45] (help) to[out=165, in=30] (init);
	
	\node[scale=0.02, left=0.15 of pre] (help1) {};
	\draw[color=red,->] (t) to[in=270, out=225] (help1) to[out=90, in=225] (init);
	
\end{tikzpicture}
		}
		\caption{MDP $\cN$ with reset transitions for $\ratio{\min}{\cM}{\effcov, \effunc}$}
		\label{fig:quotient2}
	\end{minipage}
\end{figure}

\begin{exa}
	\label{ex:quotient}
	Consider the MDP $\cM$ from Figure \ref{fig:quotient1} with $\Eff =\{\effunc, \effcov\}$ and suppose the task is to compute $\covratio(c) = \ratio{\min}{\cM}{\effcov, \effunc}$.
	The construction for the algorithm is depicted in Figure \ref{fig:quotient2} resulting in $\cN$, where reset transitions for $\noeff$ and $\effcov$ have been added (red edges) and $\effunc$ is the only terminal state.
	The weight function now assigns $1$ to $\effcov$ and $0$ to all others.
	By Theorem \ref{thm:comp-Q} we have $\ratio{\min}{\cM}{\effcov, \effunc} = \mathrm{E}^{\min}_\cN(\boxplus \effunc).$
	\Ende
\end{exa}

\begin{proof}[Proof of Theorem \ref{thm:comp-Q}]
	$\cM$ has a scheduler $\sched$
	with $\Pr^{\sched}_{\cM}(\Diamond U)>0$ and $\Pr^{\sched}_{\cM}(\Diamond V)=0$
	if and only if
	the transformed MDP $\cN$ in Section~\ref{sec:comp-quotient}
	(Max/min ratios of reachability probabilities for disjoint sets of terminal states)
	has an EC containing at least one $U$-state.
	Therefore, we then have 
	$\mathrm{E}^{\max}_{\cN}(\boxplus V) = +\infty.$
	Otherwise, $\mathrm{E}^{\max}_{\cN}(\boxplus V)$ is finite.
	
	For the following we only consider $\ratio{\min}{\cM}{U,V} = \mathrm{E}^{\min}_\cN(\boxplus V)$ since the arguments for the maximum are similar.
	First, we show
	$\ratio{\min}{\cM}{U,V}  \geqslant \mathrm{E}^{\min}_{\cN}(\boxplus V)$.
	For this, we consider an arbitrary scheduler $\sched$ for $\cM$.
	Let
	\begin{align*}
		x  &= \Pr^{\sched}_{\cM}(\Diamond U)& 
		p & = \Pr^{\sched}_{\cM}(\Diamond V) &
		q & = 1 - x - p 
	\end{align*}
	For $p>0$ we have
	\[\frac{\Pr^{\sched}_{\cM}(\Diamond U)}
	{\Pr^{\sched}_{\cM}(\Diamond V)}    
	\ \ = \ \
	\frac{x}{p}\]
	Let $\tsched$ be the scheduler that behaves as $\sched$
	in the first round and after each reset.
	Then:
	\begin{align}
		\label{big-sum} 
		\mathrm{E}^{\tsched}_{\cN}(\boxplus V) \ \ = \ \
		\sum_{n=0}^{\infty}
		\sum_{k=0}^{\infty}
		n \cdot x^n \cdot
		\left( \!\!\!\begin{array}{c}
			n{+}k \\
			k
		\end{array} \!\!\!\right)
		q^k
		\cdot p
		\ \ \stackrel{\text{\eqref{second basic}}}{=} \ \
		\frac{x}{p}
	\end{align}  
	where \eqref{second basic} relies on some basic calculations
	(see Lemma \ref{second-basic-fact}).
	This yields: 
	\[
	\ratio{\sched}{\cM}{U,V} \ = \ \frac{x}{p} \ = \ 
	\mathrm{E}^{\tsched}_{\cN}(\boxplus V)
	\ \geqslant \ \mathrm{E}^{\min}_{\cN}(\boxplus V)
	\]
	Hence,
	$\ratio{\min}{\cM}{U,V}\geqslant \mathrm{E}^{\min}_{\cN}(\boxplus V)$.

	For the other direction $\mathrm{E}^{\min}_{\cN}(\boxplus V) \geqslant \ratio{\min}{\cM}{U,V}$, we use the fact that there is an MD-scheduler
	$\tsched$ for $\cN$ such that
	$\mathrm{E}^{\tsched}_{\cN}(\boxplus V)
	= \mathrm{E}^{\min}_{\cN}(\boxplus V)$.
	$\tsched$ can be viewed as an MD-scheduler for the original MDP $\cM$.
	Again we can rely on \eqref{big-sum} to obtain that:
	\[
	\mathrm{E}^{\tsched}_{\cN}(\boxplus V) \ \ = \ \
	\frac{\Pr^{\tsched}_{\cM}\bigl(\Diamond U \bigr)}
	{\Pr^{\tsched}_{\cM}\bigl(\Diamond V \bigr)}
	\ \ = \ \ \ratio{\tsched}{\cM}{U,V}
	\ \ \geqslant \ \ \ratio{\min}{\cM}{U,V}
	\]
	But this yields
	$\mathrm{E}^{\min}_{\cN}(\boxplus V) 
	\ \geqslant  \ \ratio{\min}{\cM}{U,V}$.
	For $\mathrm{E}^{\max}_{\cN}(\boxplus V) \ = \ \ratio{\max}{\cM}{U,V}$ we use similar arguments.
	We can now rely on known results \cite{BT91,Alfaro-CONCUR99,LICS18-SSP} to compute $\mathrm{E}^{\min}_{\cN}(\boxplus V)$ and $\mathrm{E}^{\max}_{\cN}(\boxplus V)$ in polynomial time.
\end{proof} 

\begin{lem}
	\label{second-basic-fact}
	Let
	$x,p,q \in [0,1]$ such that $x{+}q{+}p=1$.
	Then:
	\begin{align}
		\label{second basic}
		\sum_{n=0}^{\infty}
		\sum_{k=0}^{\infty}
		n \cdot x^n \cdot
		\left( \!\!\!\begin{array}{c}
			n{+}k \\
			k
		\end{array} \!\!\!\right)
		q^k
		\cdot p
		\ \ = \ \
		\frac{x}{p}
	\end{align}
	
\end{lem}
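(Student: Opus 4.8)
The plan is to evaluate the double sum by first performing the inner summation over $k$ in closed form, and then reducing the remaining outer sum over $n$ to standard power series. Since every summand is nonnegative, Tonelli's theorem guarantees that the double sum is well defined and agrees with the iterated sums computed in either order, so no separate justification for rearranging or interchanging the order of summation is required. Note also that the identity only makes sense for $p>0$, as otherwise the right-hand side $x/p$ is undefined; in the use of Lemma~\ref{second-basic-fact} inside the proof of Theorem~\ref{thm:comp-Q} this is guaranteed by $\Pr^{\sched}_{\cM}(\Diamond V)=p>0$, so I would state this as a standing assumption.

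First I would fix $n$ and evaluate the inner sum $\sum_{k=0}^{\infty}\binom{n+k}{k}q^k$ using the generalized (negative) binomial series $\sum_{k=0}^{\infty}\binom{n+k}{k}q^k=(1-q)^{-(n+1)}$, which is valid for $|q|<1$. The side condition $x+q+p=1$ gives $1-q=x+p$, and since $p>0$ we have $x+p>0$ and therefore $q<1$; hence the series converges and the inner sum equals $(x+p)^{-(n+1)}$. Substituting this back collapses the double sum into a single series over $n$:
\begin{align*}
	p \cdot \sum_{n=0}^{\infty} n \, x^n (x+p)^{-(n+1)}
	\ = \
	\frac{p}{x+p} \cdot \sum_{n=0}^{\infty} n \, r^n,
	\qquad r \ \eqdef \ \frac{x}{x+p}.
\end{align*}

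Because $p>0$ we have $r\in[0,1)$, so the remaining series is the standard one $\sum_{n=0}^{\infty} n\, r^n = r/(1-r)^2$. Using $1-r=p/(x+p)$ this evaluates to $x(x+p)/p^2$, and multiplying by the prefactor $p/(x+p)$ leaves exactly $x/p$, as claimed. The only step requiring a little care is the convergence bookkeeping — in particular recalling the negative binomial expansion and ensuring $q<1$ and $r<1$ hold — but under the standing assumption $p>0$ and with nonnegative terms throughout, this presents no genuine obstacle: once the two power-series identities are recalled, the lemma reduces to a short computation.
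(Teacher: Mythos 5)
Your proof is correct and follows essentially the same route as the paper's: evaluate the inner sum over $k$ in closed form as $(1-q)^{-(n+1)}$, then reduce the outer sum to the standard series $\sum_{n\geq 0} n r^n = r/(1-r)^2$. The only differences are cosmetic — the paper proves the negative binomial identity by induction on $n$ via Pascal's rule rather than citing it, and your explicit attention to convergence and the standing assumption $p>0$ (which the paper leaves implicit) is a reasonable tightening of the same argument.
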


\begin{proof}
	We first show for $0 < q <1$, $n\in \Nat$ and
	\begin{eqnarray*}
		a_n & \eqdef &
		\sum_{k=0}^{\infty}
		\left(\!\!\!\begin{array}{c}
			n{+}k \\
			k
		\end{array} \!\!\!\right)
		q^k,
	\end{eqnarray*}
	we have
	\[
	a_n = \frac{1}{(1{-}q)^{n+1}}
	\]
	This is done by induction on $n$. The claim is clear for $n{=}0$.
	For the step of induction we use:
	\[
	\left( \!\!\!\begin{array}{c}
		n{+}1{+}k \\
		k
	\end{array} \!\!\!\right)
	\ \ = \ \
	\left( \!\!\!\begin{array}{c}
		n{+}k \\
		k
	\end{array} \!\!\!\right)
	\ + \
	\left( \!\!\!\begin{array}{c}
		n{+}k \\
		k{-}1
	\end{array} \!\!\!\right)
	\ \ = \ \
	\left( \!\!\!\begin{array}{c}
		n{+}k \\
		k
	\end{array} \!\!\!\right)
	\ + \
	\left( \!\!\!\begin{array}{c}
		(n{+}1)+(k{-}1) \\
		k{-}1
	\end{array} \!\!\!\right)    
	\]
	But this yields $a_{n+1}= a_n + q\cdot a_{n+1}$.
	Hence:
	\[
	a_{n+1} \ = \ \frac{a_n}{1{-}q}
	\]
	The claim then follows directly from the induction hypothesis.
	
	The statement of Lemma \ref{second-basic-fact} now follows 
	by some calculations and
	the preliminary induction.
	\begin{eqnarray*}
		\sum_{n=0}^{\infty}
		\sum_{k=0}^{\infty}
		n \cdot x^n \cdot
		\left( \!\!\!\begin{array}{c}
			n{+}k \\
			k
		\end{array} \!\!\!\right)
		q^k
		\cdot p
		& = &
		\sum_{n=0}^{\infty}
		n \cdot x^n \cdot \frac{1}{(1{-}q)^{n+1}} \cdot p
		\\
		\\[0.5ex]
		& = &
		\frac{p}{1{-}q} \cdot
		\sum_{n=0}^{\infty} n \cdot \left( \frac{x}{1{-}q} \right)^n
		\\
		\\[0.5ex]
		& = &
		\frac{p}{1{-}q} \cdot
		\frac{\displaystyle  \frac{x}{1{-}q} }
		{\displaystyle \ \Bigl(1-\frac{x}{1{-}q}\Bigr)^2 \ }
		\\
		\\[0.5ex]
		& = &
		\frac{px}{(1{-}q{-}x)^2}  
		\ \ \ = \ \ \
		\frac{px}{p^2} \ \ \ = \ \ \ \frac{x}{p}  
	\end{eqnarray*}
	where we use $p=1{-}q{-}x$.
\end{proof}  

Applying this framework for $\ratio{\max}{\cM}{U,V}$ to the f-score we now prove Theorem \ref{fscore-in-PTIME}.

\begin{proof}[Proof of Theorem \ref{fscore-in-PTIME}]
	We use the simplifying assumptions (A1)-(A3) that can be made due to Lemmas  \ref{lemma:accuracy-measures-M-and-Mcause} and \ref{lemma:accuracy-measures-M-and-Mcause2}.
	For $\fscore(\Cause)$ we have
	after some straight-forward transformations
	\begin{align*}
		\fscore^{\sched}(\Cause) & =  \frac{2 \mathsf{tp}^\sched}{2\mathsf{tp}^\sched + \mathsf{fn}^\sched + \mathsf{fp}^\sched}.
	\end{align*}
	Using this we get
	\begin{align*}
		\frac{2}{\fscore^{\sched}(\Cause)} -2 & = \frac{\mathsf{fp}^\sched + \mathsf{fn}^\sched}{\mathsf{tp}^\sched} =
		\frac{\Pr^{\sched}_{\cM}(\Diamond \noeff_{\mathsf{fp}})	+ \Pr^{\sched}_{\cM}(\Diamond \effuncov)}
		{\Pr^{\sched}_{\cM}(\Diamond \effcov)}
	\end{align*}
	Thus, the task is to compute
	\[X = \sup_\sched \frac{2}{\fscore^\sched(\Cause)}-2 = \sup_\sched \frac{\Pr^{\sched}_{\cM}(\Diamond \noeff_{\mathsf{fp}})	+ \Pr^{\sched}_{\cM}(\Diamond \effuncov)}
	{\Pr^{\sched}_{\cM}(\Diamond \effcov)},\]
	where $\sched$ ranges over all schedulers with $\Pr_{\cM}^\sched (\Diamond \effcov)>0$.
	We have \[\fscore(\Cause) = \frac{2}{X+2}.\]
	But $X$ can be expressed as a supremum in the form of Theorem \ref{thm:comp-Q}. 
	This yields the claim that the optimal value is computable in polynomial time. 
	
	In case $\fscore(\Cause) = 0$, we do not obtain an optimal scheduler via Theorem \ref{thm:comp-Q}.
	Lemma \ref{lem:fscore=0}, however, shows that there is a scheduler $\sched$ with $\Pr^{\sched}_{\cM}(\Diamond \Effect ) > 0$ and $\Pr^{\sched}_{\cM}(\Diamond \Cause ) = 0$. 
	Such a scheduler can be computed in polynomial time as any (memoryless) scheduler in the largest sub-MDP of $\cM$ that does not contain states in $\Cause$.
	(This sub-MDP can be constructed by successively removing states and state-action pairs.)
\end{proof}

%%%%%%%%%%%%%%%%%%%%%%%%%%%%%%%%%%%%%%%%%%%%%%%%%%%%%%%%%%%%%%%%%%%%%%%%%%%%%%%%%%%%%%%%%%%%%

\subsection{Quality-optimal probability-raising causes}	
\label{sec:opt-PR-causes}

For the computation there is no difference between GPR and SPR causes as only the quality properties of the set are in question.
However, when finding optimal causes the distinction makes a difference.
Here, we say an SPR cause $\Cause$ is \emph{recall-optimal} if $\relcov(\Cause) = \max_C \relcov(C)$ where $C$ ranges over all SPR causes.
Likewise, \emph{ratio-optimality} resp. \emph{f-score-optimality} of $\Cause$ means maximality of $\ratiocov(\Cause)$ resp. $\fscore(\Cause)$ among all SPR causes.
Recall-, ratio- and f-score-optimality for GPR causes are defined accordingly.

%%%%%%%%%%%%%%%%%%%%%%%%%%%%%%%%%%%%%%%%%%%%%%%%%%%%%%%%%%%%%%%%%%%%%%%%%%%%%%%%%%%%%%%%%%%%%

\begin{lem}
	\label{lemma:recall-opt=ratio-opt}
	Let $\Cause$ be an SPR or a GPR cause.
	Then, $\Cause$ is recall-optimal if and only if $\Cause$ is ratio-optimal.
\end{lem}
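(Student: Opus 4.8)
The plan is to reduce ratio-optimality to recall-optimality by exploiting the fact that, for a fixed cause set, the coverage ratio is one and the same strictly increasing function of the recall. This functional dependence already surfaced in the proof of Corollary~\ref{cor:covratio-in-PTIME}: under any single scheduler $\sched$ one has the algebraic identity $\covrat^{\sched}(\Cause) = 1/\bigl(1/\recall^{\sched}(\Cause) - 1\bigr)$, both quantities being determined by $\mathsf{tp}^{\sched}$ and $\mathsf{fn}^{\sched}$ alone. Writing $g(r) = r/(1-r)$, this reads $\covrat^{\sched}(\Cause) = g\bigl(\recall^{\sched}(\Cause)\bigr)$, where $g$ is strictly increasing on $[0,1)$, with the convention $g(1) = +\infty$ matching the paper's definition of the coverage ratio when the false-negative mass vanishes.

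First I would lift this identity from the scheduler level to the cause level. Since $g$ is continuous and strictly increasing, it commutes with the infimum over schedulers: taking worst-case values yields $\covratio(\Cause) = \inf_{\sched} g\bigl(\recall^{\sched}(\Cause)\bigr) = g\bigl(\inf_{\sched}\recall^{\sched}(\Cause)\bigr) = g\bigl(\recall(\Cause)\bigr)$, exactly as already observed in Corollary~\ref{cor:covratio-in-PTIME}. Thus for \emph{every} SPR (resp.\ GPR) cause $C$ we have $\covratio(C) = g\bigl(\recall(C)\bigr)$ with the same strictly increasing $g$, independent of $C$.

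Given this, the equivalence of the two optimality notions follows immediately from strict monotonicity of $g$. For any two causes $C_1, C_2$ in the class, $\recall(C_1) \geqslant \recall(C_2)$ holds if and only if $\covratio(C_1) \geqslant \covratio(C_2)$; hence the set of maximizers of $\recall$ over all SPR (resp.\ GPR) causes coincides with the set of maximizers of $\covratio$. In particular $\Cause$ attains $\max_C \recall(C)$ if and only if it attains $\max_C \covratio(C)$, which is the claim. The argument is purely order-theoretic and applies verbatim to the SPR class and to the GPR class, since it only uses that both optimizations range over the same family of causes.

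The one delicate point I would treat carefully is the behavior at the endpoints: $\recall(\Cause)=0$ forces $\covratio(\Cause)=g(0)=0$, while $\recall(\Cause)=1$ forces $\covratio(\Cause)=+\infty$. Both are consistent with the paper's conventions (the latter via the $+\infty$ clause for vanishing uncovered-effect probability), and strict monotonicity of $g$ extends to these endpoints in the extended reals, so no maximizer is gained or lost. I do not expect any genuine obstacle beyond this: once the identity $\covratio = g \circ \recall$ has been established at the cause level, the equivalence is a one-line consequence of injectivity of a monotone map.
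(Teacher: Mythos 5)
Your proposal is correct and follows essentially the same route as the paper's proof: both rest on the strictly monotone correspondence between $\mathsf{tp}/\mathsf{fn}$ and $\mathsf{tp}/(\mathsf{tp}+\mathsf{fn})$, i.e., $\covratio = g(\recall)$ with $g(r)=r/(1-r)$, applied first at the level of individual schedulers and then lifted to the worst-case values across all SPR resp.\ GPR causes. The only cosmetic difference is that the paper argues via schedulers attaining the worst case for a fixed cause, whereas you commute $g$ with the infimum using continuity and strict monotonicity; both are sound, and your treatment of the endpoint conventions ($\recall=1$ versus $\covratio=+\infty$) matches the paper's definitions.
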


\begin{proof}
	Essentially the proof uses the same connection between $\recall$ and $\covrat$ as Corollary \ref{cor:covratio-in-PTIME}.
	Here we do not assume (A1)-(A3). 
	However, for each scheduler $\sched$ and each set $C$ of states we have:
	\[
	\Pr^{\sched}_{\cM}(\Diamond \Effect) \ = \ \mathsf{fn}^{\sched}_C+\mathsf{tp}^{\sched}_C
	\]
	where  
	$\mathsf{fn}^{\sched}_C=\Pr^{\sched}_{\cM}\bigl((\neg \lozenge C) \wedge \lozenge \Effect \bigr)$
	and
	$\mathsf{tp}^{\sched}_C =
	\Pr^{\sched}_{\cM}\bigl(\Diamond (C \wedge \Diamond \Effect) \bigr)$.
	If $C$ is a cause where $\mathsf{fn}^{\sched}_C$ is positive then $
	\ratiocov^{\sched}(C) \ = \ \frac{\mathsf{tp}^{\sched}_C}{\mathsf{fn}^{\sched}_C}$
	and
	$\relcov^{\sched}(C) \ = \ \frac{\mathsf{tp}^{\sched}_C}{\mathsf{fn}^{\sched}_C+\mathsf{tp}^{\sched}_C}$.
	
	For all non-negative reals $p,q,p',q'$ where $q,q'> 0$ we have:
	\[
	\frac{p}{q} < \frac{p'}{q'}
	\qquad \text{iff} \qquad
	\frac{p}{p+q} < \frac{p'}{p'+q'}.
	\]
	Hence, if $C$ is fixed and  $\sched$ ranges over all schedulers with
	$\mathsf{tp}_C^{\sched}>0$:
	\begin{align*}
		\frac{\mathsf{tp}^{\sched}_C}{\mathsf{fn}^{\sched}_C} \ \text{ is minimal iff } \ \frac{\mathsf{tp}^{\sched}_C}{\mathsf{fn}^{\sched}_C+\mathsf{tp}^{\sched}_C} \ \text{ is minimal}
	\end{align*}  
	Thus, if $C$ is fixed and $\sched=\sched_C$ is a scheduler achieving the
	worst-case (i.e., minimal) coverage ratio for $C$ then
	$\sched$ achieves the minimal recall for $C$, and vice versa.
	
	Let now $\mathsf{fn}_C=\mathsf{fn}_C^{\sched_C}$, $\mathsf{tp}_C=\mathsf{tp}_C^{\sched_C}$ where
	$\sched_C$ is a scheduler that minimizes the coverage ratio and
	minimizes the recall for cause set $C$.
	Then:
	\begin{center}
		$\ratiocov(C)= \frac{\mathsf{tp}_C}{\mathsf{fn}_C}$ is maximal \ iff \
		$\frac{\mathsf{tp}_C}{\mathsf{fn}_C+\mathsf{tp}_C}$ is maximal \ iff \
		$\relcov(C)$ is maximal
	\end{center}  
	where the extrema range over all SPR resp. GPR causes $C$.
	This yields the claim.
\end{proof}

%%%%%%%%%%%%%%%%%%%%%%%%%%%%%%%%%%%%%%%%%%%%%%%%%%%%%%%%%%%%%%%%%%%%%%%%%%

\paragraph*{\bf Recall-  and ratio-optimal SPR causes.}
\label{sec:opt-SPR-causes}
The techniques of Section~\ref{sec:check-SPR-condition}
yield an algorithm for generating a canonical SPR
cause with optimal recall and coverage ratio.
To see this, let $\cC$ denote the set of all states which constitute a singleton SPR cause.
The canonical cause $\CanCause$ is defined as the set of states $c\in \cC$ such that there is a scheduler $\sched$ with $\Pr_{\cM}^\sched((\neg \cC) \Until c)>0$.
So to speak $\CanCause$ is the ``front'' of $\cC$.
Obviously, $\cC$ and $\CanCau$ are computable in polynomial time.

\begin{thm}%
	\label{thm:optimality-of-canonical-SPR-cause}
	If $\cC\not= \varnothing$ then $\CanCause$ is a ratio- and recall-optimal SPR cause.
\end{thm}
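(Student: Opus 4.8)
The plan is to reduce the statement to two facts: that $\CanCause$ is itself an SPR cause, and that \emph{every} SPR cause $\Cause$ satisfies $\Cause \subseteq \cC$. Since Lemma \ref{lemma:recall-opt=ratio-opt} shows recall-optimality and ratio-optimality coincide, it suffices to establish recall-optimality. Throughout I write $w_c = \Pr^{\min}_{\cM,c}(\Diamond \Effect)$.

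First I would check that $\CanCause$ is an SPR cause. Minimality is immediate: for $c\in \CanCause$ we have $\Pr^{\max}_{\cM}((\neg \cC)\Until c)>0$, and since $\CanCause \subseteq \cC$ this forces $\Pr^{\max}_{\cM}((\neg \CanCause)\Until c)>0$. For condition (S) at $c \in \CanCause$ and a scheduler $\sched$ with $\Pr^{\sched}_{\cM}((\neg \CanCause)\Until c)>0$, I would follow the algebraic reduction from the proof of Lemma \ref{app:lem:criterion-strict-prob-raising}: replacing $\sched$ by the scheduler $\usched$ that agrees with $\sched$ until the first visit to $\CanCause$ and minimizes the effect probability afterwards reduces (S) to the single inequality $w_c > \Pr^{\usched}_{\cM}(\Diamond \Effect)$. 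Now $\usched$ reaches $c$ with positive probability and has $\Pr^{\usched}_{\cM}(\Diamond \Effect \mid \Diamond c)=w_c$; since $c\in \cC$, the singleton SPR condition for $\{c\}$ applied to $\usched$ gives exactly $w_c > \Pr^{\usched}_{\cM}(\Diamond \Effect)$. Hence $\CanCause$ satisfies (S), so it is an SPR cause.

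Second, and this is the heart, I would prove that every state of an SPR cause $\Cause$ lies in $\cC$. Fix $c\in \Cause$; by minimality there is a scheduler reaching $c$ as the first $\Cause$-state. Let $q = \Pr^{\max}_{\wminMDP{\cM}{c}}(\Diamond \Effect)$ and let $\tsched$ be an effect-maximal MD-scheduler of $\wminMDP{\cM}{c}$, viewed in $\cM$ as minimizing the effect after $c$, so that $\Pr^{\tsched}_{\cM}(\Diamond \Effect)=q$. Suppose $q>w_c$. If $\tsched$ reaches $c$ as first $\Cause$-state, then condition (S) for $\Cause$ at $c$ yields $w_c > \Pr^{\tsched}_{\cM}(\Diamond \Effect)=q$, a contradiction. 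Otherwise I would take a scheduler $\vsched$ following a fixed $\Cause$-avoiding path to $c$ and form the convex combination $\usched = \lambda\vsched \oplus (1{-}\lambda)\tsched$ of Lemma \ref{lem:convex} (minimizing after $c$): for small $\lambda>0$ it reaches $c$ as first $\Cause$-state with positive probability yet still has $\Pr^{\usched}_{\cM}(\Diamond \Effect)>w_c$, again contradicting (S). The same rerouting shows that in the boundary case $q=w_c$ the state $c$ is not reachable in the effect-maximal sub-MDP $\wminMDPmax{\cM}{c}$, so by case~3.2 of Algorithm \ref{alg:SPR-check} (and Lemma \ref{soundness-SPR-algo}) the set $\{c\}$ is a singleton SPR cause, i.e.\ $c\in\cC$.

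With $\Cause \subseteq \cC$ in hand, the first $\Cause$-state on any path is a front state of $\cC$ and hence lies in $\CanCause$; therefore $\Diamond \Cause$ implies $\Diamond \CanCause$ (and $\Diamond \cC$ and $\Diamond \CanCause$ denote the same event). Consequently $\recall^{\sched}(\Cause)=\Pr^{\sched}_{\cM}(\Diamond \Cause \mid \Diamond \Effect) \le \Pr^{\sched}_{\cM}(\Diamond \CanCause \mid \Diamond \Effect)=\recall^{\sched}(\CanCause)$ for every scheduler with $\Pr^{\sched}_{\cM}(\Diamond \Effect)>0$, and taking the infimum over these schedulers gives $\recall(\Cause)\le \recall(\CanCause)$. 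As $\CanCause$ is an SPR cause, it is recall-optimal, and by Lemma \ref{lemma:recall-opt=ratio-opt} also ratio-optimal. The main obstacle is the second step: the set condition (S) for $\Cause$ only constrains schedulers that reach $c$ as the \emph{first} cause state, whereas membership in $\cC$ constrains all schedulers reaching $c$; bridging this gap is exactly what the convex-combination/rerouting argument, exploiting that $c$ is a front state, achieves.
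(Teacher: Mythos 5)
You follow the paper's own skeleton: the paper's proof rests on the two assertions that (i) any minimality-respecting subset of $\cC$ (hence $\CanCause$) is an SPR cause, and (ii) every SPR cause is contained in $\cC$; the paper asserts both essentially without argument (``by definition'') and then concludes by the same path-wise domination you use. You correctly identified (ii) as the heart of the matter, and your argument for the case $q > w_c$ is essentially sound (it is case 2 of the soundness proof of Algorithm~\ref{alg:SPR-check}, plus the observation that the MR-mixture still assigns positive probability to a fixed $\Cause$-avoiding path to $c$). The genuine gap is the boundary case $q = w_c$, which you dismiss with ``the same rerouting shows\dots''. It does not: there the mixture $\lambda\vsched\oplus(1{-}\lambda)\tsched$ has effect probability $\lambda\Pr^{\vsched}_{\cM}(\Diamond\Effect)+(1{-}\lambda)w_c$, and since $\Pr^{\vsched}_{\cM}(\Diamond\Effect)$ may be smaller than $w_c$, this can drop strictly below $w_c$; the rerouted scheduler then \emph{satisfies} the strict inequality in \eqref{SPR} rather than violating it, so no contradiction with (S) is obtained. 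In the case $q>w_c$ you had slack to absorb the perturbation; here there is none. (Your first step has a related soft spot: the reduction you borrow from Lemma~\ref{app:lem:criterion-strict-prob-raising} tacitly requires that replacing $\sched$ by the after-$\CanCause$-minimizing $\usched$ leaves the probability of ``effect without $(\neg\CanCause)\Until c$'' unchanged, which fails when $c$ is reachable via other cause states.)

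Moreover, this gap cannot be closed, because assertion (ii) --- and with it the theorem as stated --- is false. Take $\Effect=\{\eff\}$ and states $\init,c_1,c_2,x,\eff,\noeff$, where $\init$ has actions $\alpha,\beta$ with $P(\init,\alpha,c_1)=P(\init,\alpha,x)=1/2$ and $P(\init,\beta,c_2)=P(\init,\beta,\noeff)=1/2$, and each other state has a unique action: $c_1$ moves to $\eff$ and to $c_2$ with probability $1/2$ each, $c_2$ moves to $\eff$ and to $\noeff$ with probability $1/2$ each, and $x$ moves to $\eff$ with probability $1/4$ and to $\noeff$ with probability $3/4$. A scheduler is determined by the probability $\mu$ of taking $\alpha$, and its effect probability is $(1{+}\mu)/4$. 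The set $C=\{c_1,c_2\}$ is an SPR cause: minimality holds ($c_1$ is reached $C$-free via $\alpha$, $c_2$ via $\beta$); (S) at $c_1$ reads $3/4>(1{+}\mu)/4$ for $\mu>0$, which always holds; and (S) at $c_2$ only constrains schedulers with $\mu<1$ (for $\mu=1$ the event $(\neg C)\Until c_2$ has probability $0$), where it reads $1/2>(1{+}\mu)/4$, true exactly when $\mu<1$. But $\{c_2\}$ is \emph{not} a singleton SPR cause: under $\mu=1$ we have $\Pr(\Diamond c_2)=1/4>0$ (reached via $c_1$) and $\Pr(\Diamond\Effect\mid\Diamond c_2)=1/2=\Pr(\Diamond\Effect)$ --- exactly your boundary case, since $q=w_{c_2}=1/2$ and $c_2$ is reachable in $\wminMDPmax{\cM}{c_2}$ through $c_1$. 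Consequently $\cC=\{c_1\}=\CanCause$, and taking $\mu=0$ gives $\recall(\CanCause)=0$, whereas $\recall(C)=\min_{\mu}(2{+}\mu)/(2{+}2\mu)=3/4$ (similarly $\covrat(C)=3>0=\covrat(\CanCause)$). So there is an SPR cause not contained in $\cC$ that strictly beats $\CanCause$ in both measures: the defect lies in the statement itself (and in the paper's unjustified claim (ii)), so no patch of your boundary case can exist.
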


\begin{proof}
	By definition of SPR causes any subset $C \subseteq \cC$ satisfying $\Pr^{\max}(\neg C \until c)$ for each $c \in C$ constitutes an SPR cause and thus $\CanCause$ is also an SPR cause.
	Optimality is a consequence as $\CanCause$ even yields
	path-wise optimal coverage in the following sense.
	If $C$ is any SPR cause then $C \subseteq \cC$ by definition and for each path $\pi$ in $\cM$:
	\begin{align*}
		\pi \models (\neg \lozenge \CanCause) \wedge \lozenge \Effect &\implies \pi \models (\neg \lozenge C) \wedge \lozenge \Effect\text{ and} \\
		\pi \models \Diamond (C \wedge \Diamond \Effect) &\implies \pi \models \Diamond (\CanCause \wedge \Diamond \Effect).
	\end{align*}
	But then
	\begin{align*}
		\Pr^{\sched}_{\cM}(\Diamond (C \wedge \Diamond \Effect)) &\leqslant	\Pr^{\sched}_{\cM}(\Diamond (\CanCause \wedge \Diamond \Effect)),\\
		\Pr^{\sched}_{\cM}((\neg \lozenge C) \wedge \lozenge \Effect)) &\geqslant	\Pr^{\sched}_{\cM}((\neg \lozenge \CanCause) \wedge \Effect)
	\end{align*}
	for every scheduler $\sched$,which yields the claim.
\end{proof}

%%%%%%%%%%%%%%%%%%%%%%%%%%%%%%%%%%%%%%%%%%%%%%%%%%%%%%%%%%%%%%%%%%%%%%%%%

\begin{figure}[t]
	\resizebox{0.45\textwidth}{!}{
		%CanCau is not f-score optimal

\begin{tikzpicture}[->,>=stealth',shorten >=1pt,auto ,node distance=0.5cm, thick]
	\node[scale=1, state] (s0) {$\init$};
	\node[scale=0.01, below = 1 of s0] (help) {};
	\node[scale=1, state] (eff) [right = 2.5 of help] {$\eff$};
	\node[scale=1, state] (noeff) [left =2.5 of help] {$\noeff$};
	\node[scale=1, state] (s1) [left=0.5 of help] {$s_1$};
	\node[scale=1, state] (s2) [right = 0.5 of help] {$s_2$};
	
	\draw[<-] (s0) --++(-0.55,0.55);
	\draw (s0) -- (eff) node[above, pos=0.5,scale=1] {$1/4$};
	\draw (s0) -- (noeff) node[above , pos=0.5,scale=1] {$1/4$};
	\draw (s0) -- (s1) node[right, pos=0.5,scale=1] {$1/2$};
	\draw (s1) -- (noeff) node[pos=0.5,scale=1] {$1/4$};
	\draw (s1) -- (s2) node[below, pos=0.5,scale=1] {$3/4$};
	\draw (s2) -- (eff) node[below, pos=0.5,scale=1] {$1$};
\end{tikzpicture}}
	\caption{Markov chain from Remark \ref{rem:canonical-not-fscore-optimal}}\label{fig:canonical-not-fscore-optimal}
\end{figure}
\begin{rem}\label{rem:canonical-not-fscore-optimal}		
	It is not true that the canonical SPR cause $\CanCause$ is f-score-optimal.
	To see this, Consider the Markov chain from Figure \ref{fig:canonical-not-fscore-optimal}.
	There we have $\CanCau = \{s_1\}$, which has $\precision(\CanCau) = \frac{3}{4}$ and $\recall(\CanCau) = \frac{3}{8}/(\frac{1}{4}+\frac{3}{8}) = \frac{3}{5}.$
	But the SPR cause $\{s_2\}$ has better f-score as its precision is $1$ and it has the same recall as $\CanCau$.
	\Ende
\end{rem}

%%%%%%%%%%%%%%%%%%%%%%%%%%%%%%%%%%%%%%%%%%%%%%%%%%%%%%%%%%%%%%%%%%%%%%%%%

\paragraph*{\bf F-score-optimal SPR cause.}
From Section~\ref{sec:comp-acc-measures-fixed-cause}, we see that f-score-optimal SPR causes in MDPs can be computed 
in polynomial space by computing the f-score for all potential SPR causes one by one in polynomial time (Theorem \ref{fscore-in-PTIME}).
As the space can be reused after each computation, this results in polynomial space.
For Markov chains, we can do better
and compute an f-score-optimal SPR cause in polynomial time
via a polynomial reduction to the stochastic shortest path problem:

\begin{thm}
	\label{thm:fscore-opt-MC}
	In Markov chains that have SPR causes, an f-score-optimal SPR cause can be computed in polynomial time.
\end{thm}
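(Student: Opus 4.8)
The plan is to reduce the search for an f-score-optimal SPR cause to a single instance of the min-ratio problem of Theorem~\ref{thm:comp-Q}, exploiting that in a Markov chain there is no scheduler choice, so each quality value of a fixed cause is a single number. First I would recall that in an MC a singleton $\{c\}$ is an SPR cause exactly when $\Pr_{\cM,c}(\Diamond \Effect) > \Pr_{\cM}(\Diamond \Effect)$, so the set $\cC$ of Section~\ref{sec:opt-SPR-causes} is computable in polynomial time and, by Theorem~\ref{thm:optimality-of-canonical-SPR-cause}, every SPR cause is a subset of $\cC$ while the front of every subset of $\cC$ is again an SPR cause. Using the observation (after Definition~\ref{def:causes}) that dropping the minimality condition leaves $\mathsf{tp}$, $\mathsf{fp}$ and $\mathsf{fn}$ unchanged, I would therefore optimize over \emph{all} subsets $\Cause \subseteq \cC$ and return the front of the optimum.

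Next I would encode the choice of $\Cause$ as nondeterminism. Let $w_c = \Pr_{\cM,c}(\Diamond \Effect)$ for $c \in \cC$, computable in polynomial time by solving a linear system. I would build an MDP $\cD$ from $\cM$ by keeping all transitions and adding, at each $c \in \cC$, a binary choice: action $\mathit{exclude}$ retains $\cM$'s transitions out of $c$, while action $\mathit{include}$ moves to a fresh terminal $\effcov$ with probability $w_c$ and to a fresh terminal $\noefffp$ with probability $1{-}w_c$; the original effect states are kept as terminals. An MD-scheduler of $\cD$ then corresponds to an include-set $\Cause \subseteq \cC$, and choosing $\mathit{include}$ exactly on $\Cause$ induces precisely the Markov chain $\wminMDP{\cM}{\Cause}$ (with renamed terminals), so that $\Pr_{\cD}^{\sched}(\Diamond \effcov) = \mathsf{tp}$, $\Pr_{\cD}^{\sched}(\Diamond \noefffp) = \mathsf{fp}$ and $\Pr_{\cD}^{\sched}(\Diamond \Effect) = \mathsf{fn}$ for the cause $\Cause$, by Lemma~\ref{lemma:wmin-criterion-PR-causes}.

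Finally, since $\fscore(\Cause) = 2/(X+2)$ with $X = (\mathsf{fp}+\mathsf{fn})/\mathsf{tp}$, maximizing the f-score over subsets of $\cC$ amounts to minimizing $\ratio{\sched}{\cD}{U,V}$ over MD-schedulers of $\cD$, with $U = \{\noefffp\} \cup \Effect$ and $V = \{\effcov\}$. To invoke Theorem~\ref{thm:comp-Q}, which requires an EC-free model and delivers a minimizing MD-scheduler together with $\ratio{\min}{\cD}{U,V}$ in polynomial time, I would first pass to the MEC-quotient of $\cD$ (adding the $\tau$-transition to a trap state); this preserves all relevant terminal-reachability probabilities by Lemma~\ref{lem:MEC-contraction-terminal-states-final}, and the returned minimizing MD-scheduler lifts back to a subset of $\cC$ by Theorem~\ref{thm:MRscheduler_lift}. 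Reading off the included states and taking their front then yields the optimal SPR cause. The remaining corner case is $\fscore = 0$: if no include-set achieves $\mathsf{tp} > 0$, then $V$ is unreachable, every SPR cause has f-score $0$ by Lemma~\ref{lem:fscore=0}, and I would simply return any singleton SPR cause. The step I expect to be the main obstacle is the bookkeeping around end components: I must argue that the $\mathit{include}$-edges (which leave every EC, since they lead to fresh terminals) survive the MEC-quotient so that no candidate cause state is lost, and that the MD-scheduler produced by Theorem~\ref{thm:comp-Q} genuinely corresponds to a well-defined include-set whose front is a valid SPR cause realizing the computed optimal f-score.
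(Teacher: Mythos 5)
Your proposal is correct and takes essentially the same approach as the paper: the paper likewise encodes the choice of cause set as an added binary action choice at each state of $\cC$ (a fresh action $\gamma$ at $c$ leading to $\effcov$/$\noefffp$ with probabilities $w_c$/$1{-}w_c$), reduces f-score maximization to minimizing $(\mathsf{fp}+\mathsf{fn})/\mathsf{tp}$ via $\fscore = 2/(X+2)$, solves this as a stochastic shortest path problem with reset transitions (the substance of Theorem~\ref{thm:comp-Q}), and finally takes the front of the selected states, arguing optimality exactly as you do via the correspondence between SPR causes and MD-schedulers. The only cosmetic differences are that the paper collapses cyclic BSCCs of the Markov chain up front rather than taking the MEC-quotient of the product MDP at the end, and that your $\fscore=0$ corner case is vacuous for Markov chains (every nonempty cause set is reached with positive probability, so it never arises).
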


\begin{proof}
	We regard the given Markov chain 
	$\cM$ as an MDP with a singleton action set $\Act=\{\alpha\}$.
	As $\cM$ has SPR causes, the set $\cC$
	of states that constitute a singleton SPR cause
	is nonempty.
	We may assume that $\cM$ has no non-trivial (i.e., cyclic)
	bottom strongly connected components as we may collapse them.
	Let $w_c$ $=$ $\Pr_{\cM,c}(\Diamond \Effect)$. 
	
	We switch from $\cM$ to a new MDP $\cK$ with state space
	$S_{\cK}=S \cup \{\effcov,\noefffp\}$ with 
	fresh states $\noefffp$ and $\effcov$
	and the action set
	$\Act_{\cK}=\{\alpha,\gamma \}$.
	The MDP $\cK$ arises from $\cM$
	by adding 
	\begin{description}
		\item[(i)] for each SPR state $c\in \cC$ a fresh state-action pair $(c,\gamma)$ such that \mbox{$P_{\cK}(c,\gamma,\effcov)=w_c$} and $P_{\cK}(c,\gamma,\noefffp)=1{-}w_c$ and
		\item[(ii)] reset transitions to $\init$ with action label $\alpha$
		from the new state $\noefffp$ and
		all terminal states of $\cM$,
		i.e.,
		$P_{\cK}(\noefffp,\alpha,\init)=1$ and $P_{\cK}(s,\alpha,\init)=1$ 
		for $s \in \Effect$ or if $s$ is a terminal
		non-effect state of $\cM$.
	\end{description}
	So, exactly $\effcov$ is terminal in $\cK$, 
	and $\Act_{\cK}(c)=\{\alpha,\gamma\}$ for $c\in \cC$, while
	$\Act_{\cK}(s)=\{\alpha\}$ 
	for all other states $s$.
	Intuitively, taking action $\gamma$ in state $c \in \cC$ selects
	$c$ to be a cause state.
	The states in $\Effect$ represent uncovered effects in $\cK$,
	while $\effcov$ stands for covered effects.
	
	We assign weight $1$ to all states 
	in $U= \Effect \cup \{\noefffp\}$ 
	and weight $0$ to all other states of~$\cK$. 
	Let $V=\{\effcov\}$.
	Then, $f= \mathrm{E}^{\min}_{\cK}(\boxplus V)$
	and  an MD-scheduler $\sched$ for $\cK$ such that
	$\mathrm{E}^{\sched}_{\cK}(\boxplus V)=f$ are computable in
	polynomial time.
	Let $\cC_{\gamma}$ 
	denote the set of states $c\in \cC$ where
	$\sched(c)=\gamma$ 
	and let $\Cause$ be the set of states $c\in \cC_{\gamma}$
	where $\cM$ has a path satisfying
	$(\neg \cC_{\gamma}) \Until c$.
	Then, $\Cause$ is an SPR cause of $\cM$.
	With arguments as in Section~\ref{sec:comp-acc-measures-fixed-cause}
	we obtain
	$\fscore(\Cause)=2/(f{+}2)$. 
	
	It remains to show that $\Cause$ is f-score-optimal.
	Let $C$ be an arbitrary SPR cause.
	Then, $C \subseteq \cC$. Let $\tsched$ be the MD-scheduler
	for $\cK$ that schedules $\gamma$ in $C$ and
	$\alpha$ for all other states of $\cK$.
	Then, $\fscore(C)=2/(f^{\tsched}{+}2)$
	where $f^{\tsched}=\mathrm{E}^{\tsched}_{\cK}(\boxplus V)$.
	Hence, $f \leqslant f^{\tsched}$, which yields 
	$\fscore(\Cause) \geqslant \fscore(C)$.
\end{proof}

The na\"ive adaption of 
the construction presented in the proof of Theorem \ref{thm:fscore-opt-MC}
for MDPs would yield a stochastic game structure where the objective of
one player is to minimize the expected accumulated weight until reaching
a target state.
Although algorithms for \emph{stochastic shortest path (SSP) games} are known
\cite{patek1999stochastic}, 
they rely on assumptions on the game structure which would not
be satisfied here.
However, for the threshold problem \emph{SPR-f-score} where inputs are an MDP $\cM$, $\Effect$ and $\vartheta \in \Rat_{\geq 0}$ and the task is to decide the existence of an SPR cause whose f-score exceeds $\vartheta$, we can establish a polynomial reduction to SSP games, which yields an $\NP \cap \coNP$ upper bound:

\begin{thm}
	\label{fscore-threshold-poblem-via-stochMPgames}
	The decision problem SPR-f-score is in $\NP\cap \coNP$.
\end{thm}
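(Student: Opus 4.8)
The plan is to prove the two inclusions separately. Membership in $\NP$ is the easy direction: a cause set $C \subseteq S$ is itself a polynomial-size certificate. Given a guessed $C$, I would verify in polynomial time that $C$ is an SPR cause (Theorem~\ref{thm:SPR-check-complexity}) and compute $\fscore(C)$ exactly (Theorem~\ref{fscore-in-PTIME}), then compare with $\vartheta$. Hence $\exists$ SPR cause with f-score exceeding $\vartheta$ is decidable in $\NP$ without any game machinery. The whole technical content therefore lies in showing SPR-f-score $\in \coNP$, i.e.\ that the complement ``$\max_C \fscore(C) \leqslant \vartheta$'' lies in $\NP$.

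For the $\coNP$ part I would reduce to a stochastic shortest path game $\cG$ obtained by keeping the nondeterminism of $\cM$ in the construction from the proof of Theorem~\ref{thm:fscore-opt-MC}. First compute in polynomial time the set $\cC$ of states forming singleton SPR causes; every SPR cause is a subset of $\cC$. In $\cG$ the minimizer (cause-selector) controls at each $c \in \cC$ the binary choice between a ``cut'' action $\gamma$ to the gadget reaching $\effcov$ with probability $w_c = \Pr^{\min}_{\cM,c}(\Diamond \Effect)$ and $\noefffp$ with the remaining mass, or passing control; the maximizer (the scheduler) controls the remaining nondeterminism of $\cM$. As in Theorem~\ref{thm:comp-Q} I add reset transitions from every terminal state other than $\effcov$ back to $\init$, making $V = \{\effcov\}$ the unique target, and put weight $1$ on $U = \Effect \cup \{\noefffp\}$ and $0$ elsewhere. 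By Lemmas~\ref{lemma:accuracy-measures-M-and-Mcause} and~\ref{lemma:accuracy-measures-M-and-Mcause2} together with Theorem~\ref{thm:comp-Q}, each memoryless-deterministic minimizer strategy corresponds (up to the ``front'' of $\cC$, exactly as in Theorem~\ref{thm:fscore-opt-MC}) to an SPR cause $C$, and the best maximizer response yields $\sup_{\sched} \frac{\Pr^{\sched}_{\cM}(\Diamond \noefffp)+\Pr^{\sched}_{\cM}(\Diamond \effuncov)}{\Pr^{\sched}_{\cM}(\Diamond \effcov)}$, which equals $X_C$ where $\fscore(C) = 2/(X_C+2)$. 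Thus the game value $V^\ast = \inf_C \sup_\sched X^{\sched}(C)$ satisfies $\max_C \fscore(C) = 2/(V^\ast+2)$, so that ``$\max_C \fscore(C) \leqslant \vartheta$'' becomes ``$V^\ast \geqslant \vartheta'$'' with $\vartheta' = 2/\vartheta - 2$ for $\vartheta > 0$.

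The certificate for the complement is then a maximizer strategy. Assuming memoryless determinacy of $\cG$, we have $V^\ast \geqslant \vartheta'$ iff the maximizer has a memoryless scheduler $\sched^\ast$ with $\inf_C X^{\sched^\ast}(C) \geqslant \vartheta'$. I would guess such an $\sched^\ast$ (polynomial size), fix it, and compute the minimizer's best response: against a fixed maximizer strategy the game collapses to an ordinary single-player SSP/ratio instance, solvable in polynomial time, whose value is exactly $\inf_C X^{\sched^\ast}(C)$. Accepting when this value is at least $\vartheta'$ places ``$\max_C \fscore(C) \leqslant \vartheta$'' in $\NP$, hence SPR-f-score $\in \coNP$. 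The corner case $\vartheta = 0$ (where the $2/(X+2)$ formula degenerates) I would dispatch separately in polynomial time via Lemma~\ref{lem:fscore=0}, since there $\fscore(C) > 0$ reduces to a reachability condition on $\Cause$ and $\Effect$.

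The hard part is establishing the memoryless determinacy of $\cG$ that both certificates rely on. Because the reset transitions let the maximizer avoid $\effcov$ forever while still accumulating weight, some maximizer strategies are improper and the value may be $+\infty$ (matching the case $\fscore(C) = 0$); consequently the classical SSP-game results of \cite{patek1999stochastic}, which assume properness of the relevant strategies, do not apply off the shelf. I would therefore adapt the argument to the non-negative-weight reachability-plus-total-reward setting allowing value $+\infty$, arguing that both players still admit optimal MD strategies, so that each player's optimal strategy is a polynomial-size witness verifiable by solving the opponent's induced MDP in polynomial time, which is precisely what the $\NP$ and $\coNP$ algorithms above require.
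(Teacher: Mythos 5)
Your $\NP$ direction is sound, and in fact more direct than the paper's: guessing the cause set and verifying it with Theorem~\ref{thm:SPR-check-complexity} and Theorem~\ref{fscore-in-PTIME} needs no game machinery at all. The problem is the $\coNP$ direction, and the gap is exactly the one you flag yourself and then defer: memoryless determinacy of your game $\cG$. Because you carry the \emph{ratio} objective $X^{\sched}(C)$ into the game and encode it via reset transitions (Theorem~\ref{thm:comp-Q}), termination at $\effcov$ is \emph{not} guaranteed under all strategy pairs -- the scheduler-player can cycle through resets, accumulate weight forever, and realize value $+\infty$. The results of \cite{patek1999stochastic} that the paper invokes are stated for games in which a terminal state is reached almost surely under every pair of strategies (or where improperness penalizes the player who controls it); here improperness \emph{benefits} the maximizer, so those results do not apply, and no determinacy or MD-sufficiency theorem for this reachability-plus-total-reward game with value $+\infty$ is proved in your proposal. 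Since both of your certificates (the minimizer's strategy for $\NP$, the maximizer's strategy for $\coNP$) lean on exactly this unproved determinacy, the $\coNP$ argument does not go through as written. Note also that Theorem~\ref{thm:comp-Q} is only stated for EC-free MDPs, while your game is built over $\cM$ itself, so the end components of $\cM$ (the paper's Claim 1, via the MEC-quotient) and the schedulers with $\Pr^{\sched}_{\cM}(\Diamond\Effect)=0$, for which the f-score is undefined (the paper's Claim 2, via the MDP $\cK$), are additional untreated corner cases.

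The paper's way around all of this is a linearization step that your proposal is missing: Lemma~\ref{lem:reformulating_fscore} shows $\fscore(C)>\vartheta$ iff $2(1{-}\vartheta)\mathsf{tp}^\sched - \vartheta\,\mathsf{fp}^\sched - \vartheta\,\mathsf{fn}^\sched > 0$ for all relevant schedulers $\sched$. This folds the threshold $\vartheta$ into \emph{constant weights} on terminal states, so the resulting game is an ordinary EC-free stochastic shortest path game (no resets, no fractional objective), a terminal state is reached almost surely under every strategy pair, and MD determinacy comes off the shelf from \cite{patek1999stochastic}. If you want to rescue your route, the cleanest fix is to perform this linearization first and only then build the cause-selection game; otherwise you must prove the determinacy statement you currently assume, which is the entire technical burden of the theorem.
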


Recall that for a given SPR cause $C$ and scheduler $\sched$ we have 
\begin{center}
	$\fscore^\sched(C) > \vartheta$ iff $\frac{\displaystyle 2\mathsf{tp}^\sched}{\displaystyle 2\mathsf{tp}^\sched + \mathsf{fp}^\sched + \mathsf{fn}^\sched} > \vartheta$.
\end{center}
In order to proof the upper bound of SPR-f-score we reformulate the condition of SPR-f-score.

\begin{lem}\label{lem:reformulating_fscore}
	Let $\cM=(S,\Act,P,\init)$ be an MDP with a set of terminal states $\Effect$, let $C$ be an SPR cause for $\Effect$ in $\cM$, and let $\vartheta$ be a rational.
	Then, $\fscore(C)>\vartheta$ iff
	\begin{align}
		\label{reformulate-fscore}
		2(1{-}\vartheta)\mathsf{tp}^\sched - \vartheta \mathsf{fp}^\sched
		-\vartheta \mathsf{fn}^\sched  >  0
	\end{align}
	for all schedulers $\sched$ for $\cM$ with $\Pr^{\sched}_{\cM}(\Diamond \Effect)>0$.
\end{lem}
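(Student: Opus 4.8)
The plan is to reduce the threshold statement about $\fscore(C)=\inf_{\sched}\fscore^{\sched}(C)$ to a scheduler-wise algebraic equivalence, and then to bridge the gap between ``$\inf>\vartheta$'' and ``$\fscore^{\sched}(C)>\vartheta$ for all $\sched$'' by invoking the existence of a worst-case scheduler that attains the infimum. First I would pin down the class of relevant schedulers. Since $C$ is an SPR cause, every $c\in C$ satisfies $w_c=\Pr^{\min}_{\cM,c}(\Diamond \Effect)>0$ (otherwise $c$ would be rejected, cf.\ the soundness proof of Lemma~\ref{soundness-SPR-algo}, which answers ``no'' whenever $w_c=0$). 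Hence any scheduler $\sched$ with $\Pr^{\sched}_{\cM}(\Diamond C)>0$ also has $\Pr^{\sched}_{\cM}(\Diamond \Effect)>0$, so the schedulers for which $\fscore^{\sched}(C)$ is defined are exactly those with $\Pr^{\sched}_{\cM}(\Diamond \Effect)>0$.

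Next I would prove, for each such $\sched$, that $\fscore^{\sched}(C)>\vartheta$ holds if and only if \eqref{reformulate-fscore} holds. If $\Pr^{\sched}_{\cM}(\Diamond C)>0$ then $\mathsf{tp}^{\sched}+\mathsf{fp}^{\sched}=\Pr^{\sched}_{\cM}(\Diamond C)>0$, so the denominator $2\mathsf{tp}^{\sched}+\mathsf{fp}^{\sched}+\mathsf{fn}^{\sched}$ is strictly positive, and cross-multiplying $\frac{2\mathsf{tp}^{\sched}}{2\mathsf{tp}^{\sched}+\mathsf{fp}^{\sched}+\mathsf{fn}^{\sched}}>\vartheta$ yields precisely $2(1{-}\vartheta)\mathsf{tp}^{\sched}-\vartheta\mathsf{fp}^{\sched}-\vartheta\mathsf{fn}^{\sched}>0$. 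In the remaining case $\Pr^{\sched}_{\cM}(\Diamond C)=0$ we have $\mathsf{tp}^{\sched}=\mathsf{fp}^{\sched}=0$, $\mathsf{fn}^{\sched}=\Pr^{\sched}_{\cM}(\Diamond \Effect)>0$ and $\fscore^{\sched}(C)=0$ by definition; here \eqref{reformulate-fscore} becomes $-\vartheta\,\mathsf{fn}^{\sched}>0$ while $\fscore^{\sched}(C)>\vartheta$ becomes $0>\vartheta$, and these are equivalent for every rational $\vartheta$ because $\mathsf{fn}^{\sched}>0$. This establishes the scheduler-wise equivalence.

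Then I would record that the infimum defining $\fscore(C)$ is in fact attained by some scheduler $\sched^{\ast}$ with $\Pr^{\sched^{\ast}}_{\cM}(\Diamond \Effect)>0$. For $\fscore(C)>0$ this is exactly the existence of a worst-case scheduler asserted in Theorem~\ref{fscore-in-PTIME} (coming from the maximal-ratio reduction of Theorem~\ref{thm:comp-Q}); for $\fscore(C)=0$ it follows from Lemma~\ref{lem:fscore=0}, which provides a scheduler $\sched$ with $\Pr^{\sched}_{\cM}(\Diamond \Effect)>0$ and $\Pr^{\sched}_{\cM}(\Diamond C)=0$, so that $\fscore^{\sched}(C)=0=\fscore(C)$. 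Thus $\fscore(C)=\min_{\sched}\fscore^{\sched}(C)$ over admissible schedulers. Combining: for the forward direction, $\fscore(C)>\vartheta$ forces $\fscore^{\sched}(C)\geq\fscore(C)>\vartheta$ for every admissible $\sched$, hence \eqref{reformulate-fscore} for all of them; for the converse, if \eqref{reformulate-fscore} holds for all such $\sched$ then $\fscore^{\sched}(C)>\vartheta$ for all of them, and evaluating at the minimiser $\sched^{\ast}$ gives $\fscore(C)=\fscore^{\sched^{\ast}}(C)>\vartheta$.

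The step I expect to be the crux is exactly the passage from ``$\fscore^{\sched}(C)>\vartheta$ for all $\sched$'' to the \emph{strict} inequality $\fscore(C)>\vartheta$: an infimum of values each strictly above $\vartheta$ can still equal $\vartheta$, so the argument genuinely relies on the worst case being attained by a single scheduler. This is why the proof must lean on Theorem~\ref{thm:comp-Q} and Lemma~\ref{lem:fscore=0} rather than on purely algebraic manipulation; the per-scheduler equivalence alone would only give $\fscore(C)\geq\vartheta$.
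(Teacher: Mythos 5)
Your proof is correct and follows essentially the same route as the paper's: a per-scheduler cross-multiplication argument establishing $\fscore^{\sched}(C)>\vartheta \iff \eqref{reformulate-fscore}$, combined with the fact that the infimum defining $\fscore(C)$ is attained by a worst-case scheduler (via Theorem~\ref{fscore-in-PTIME}, with Lemma~\ref{lem:fscore=0} covering the $\fscore(C)=0$ case), which is exactly what bridges the strict inequality in the backward direction. Your treatment is in fact slightly more careful than the paper's, since you handle the corner cases $\Pr^{\sched}_{\cM}(\Diamond C)=0$ and $\vartheta<0$ explicitly rather than leaving them implicit.
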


\begin{proof}
	Assume that $\fscore(C)>\vartheta$ and let $\sched$ be a scheduler with $\Pr^{\sched}_{\cM}(\Diamond \Effect) > 0$.
	If $\Pr^{\sched}_{\cM}(\Diamond C)=0$, then $\fscore(C)$ would be $0$. So, $\Pr^{\sched}_{\cM}(\Diamond C) > 0$.
	Then,
	\begin{center}
		$\fscore^{\sched}(C) = 
		\frac{\displaystyle 2\mathsf{tp}^\sched}{\displaystyle 2\mathsf{tp}^\sched + \mathsf{fp}^\sched + \mathsf{fn}^\sched} > \vartheta, \ \ \
		$ which implies $ \ \ \ 2(1{-}\vartheta)\mathsf{tp}^\sched - \vartheta \mathsf{fp}^\sched
		-\vartheta \mathsf{fn}^\sched  >  0$
	\end{center}
	
	Now, suppose that \eqref{reformulate-fscore} holds for a scheduler $\sched$ with $\Pr^{\sched}_{\cM}(\Diamond \Effect) > 0$. Let $\sched$ be a scheduler that minimizes 
	$\fscore^{\sched}(C)$. Such a scheduler exists by Theorem \ref{fscore-in-PTIME}.
	From \eqref{reformulate-fscore}, we conclude 
	\begin{center}
		$2(1{-}\vartheta)\mathsf{tp}^\sched - \vartheta \mathsf{fp}^\sched
		-\vartheta \mathsf{fn}^\sched  >  0$
	\end{center}
	and hence that $\fscore^\sched(C)>\vartheta$ as above.
\end{proof}

\begin{proof}[Proof of Theorem \ref{fscore-threshold-poblem-via-stochMPgames}]
	Let $\cM=(S,\Act,P,\init)$ be an MDP, $\Effect\subseteq S$ a set of terminal states, and $\vartheta$ a rational.
	Consider $\cC$, the set of states $c\in S \setminus \Effect$ where $\{c\}$ is an SPR cause.
	If $\cC$ is empty then the threshold problem is trivial as there is no SPR cause at all.
	Thus, we suppose that $\cC$ is nonempty.
	
	Note that $\Pr^{\min}_{\cM,c}(\Diamond \Effect)>0$ for all $c\in \cC$. 
	As the states in $\Effect$ are not part of any end component of $\cM$,  no state $c\in \cC$ is contained in an end component of $\cM$ either.
	Let $\cN=(S_{\cN},\Act_{\cN},P_{\cN},\init_{\cN})$ be the MEC-quotient of $\cM$ with the new additional terminal state $\bot$. The MEC-quotient $\cN$ contains the states from $\Effect$ and $\cC$.
	
	\paragraph*{Claim 1: }
	There is an SPR cause $C$ for $\Effect$ in $\cM$ with $\fscore(C)>\vartheta$ if and only if there is an SPR cause $C^\prime$ for $\Effect$ in $\cN$ with $\fscore(C^\prime)>\vartheta$.
	
	\noindent
	{\it Proof of Claim 1.}
	We first observe that all reachability probabilities involved in the claim do not depend on the behavior during the traversal of MECs. Furthermore, staying inside a MEC in $\cM$ can be mimicked in $\cN$ by moving to $\bot$, and vice versa. 
	More precisely, let $C\subseteq \cC$. Then,
	analogously to Lemma \ref{lem:probabilities_MEC-quotient},  for each scheduler $\sched$ for $\cM$, there is a scheduler $\tsched$ for $\cN$, and vice versa,
	such that 
	\begin{itemize}
		\item
		$\Pr^{\sched}_{\cM}(\Diamond \Effect \mid (\neg C) \Until c) = \Pr^{\tsched}_{\cN}(\Diamond \Effect \mid (\neg C) \Until c)$ for all $c\in C$ where the values are defined, 
		\item
		$\Pr^{\sched}_{\cM}(\Diamond \Effect) = \Pr^{\tsched}_{\cN}(\Diamond \Effect)$,
		\item
		$\Pr^{\sched}_{\cM}(\Diamond \Effect\mid \Diamond C) = \Pr^{\tsched}_{\cN}(\Diamond \Effect \mid \Diamond C)$ if the values are defined, and 
		\item
		$\Pr^{\sched}_{\cM}(\Diamond C\mid \Diamond \Effect) = \Pr^{\tsched}_{\cN}(\Diamond C \mid \Diamond \Effect)$ if the values are defined.
	\end{itemize}
	Hence, $C$ is an SPR cause for $\Effect$ in $\cM$ if and only if it is in $\cN$ and furthermore, if it is an SPR cause, the f-score of $C$ in $\cM$ and in $\cN$ agree. This finishes the proof of Claim~1.

	\paragraph*{\it Model transformation 
		for ensuring positive effect probabilities.}
	Recall that the f-score is only defined for  schedulers reaching $\Effect$ with positive probability.
	Now, we will provide a further model transformation that will ensure that $\Effect$ is reached with positive probability under all schedulers. 
	If this is already the case, there is nothing to do. So, we assume now that $\Pr^{\min}_{\cN,\init_{\cN}}(\Diamond  \Effect)=0$.
	
	We define the subset of states from which $\Eff$ can be avoided as $D\subseteq S_{\cN}$ by
	\[
	D\eqdef \{s\in S_{\cN} \mid \Pr^{\min}_{\cN,s}(\Diamond \Effect) =0\}.
	\]
	Note that $\init_{\cN} \in D$. 
	For each $s\in D$, we further define the set of actions minimizing the reachability probability of $\Eff$ from $s$ by
	\[
	\Act^{\min}(s)=\{\alpha \in \Act_{\cN}(s) \mid P_{\cN}(s,\alpha,D)=1\}.
	\]
	Finally, let $E\subseteq D$ be the set of states that are reachable from $\init_{\cN}$ when only choosing actions from $\Act^{\min}(\cdot)$. Note that $E$ does not contain any states from $\cC$, meaning no state in $E$ constitutes a singleton SPR cause.
	
	All schedulers that reach $\Effect$ with positive probability in $\cN$ have to leave the sub-MDP consisting of $E$ and the actions in $\Act^{\min}(\cdot)$ at some point. Let us call this sub-MDP $\cN^{\min}_E$.
	We define the set of state-action pairs $\Pi$ that leave the sub-MDP $\cN^{\min}_E$:
	\[
	\Pi\eqdef \{ (s,\alpha) \mid s\in E\text{ and } \alpha\in \Act_{\cN}(s) \setminus \Act^{\min}(s)\}.
	\]
	We now construct a further MDP $\cK$. The idea is that $\cK$ behaves like the end-component free MDP $\cN$ after initially a scheduler is forced to  choose  a probability distribution over 
	state-action pairs from $\Pi$. In this way, $\Effect$ is reached with positive probability under all schedulers.
	Given an SPR cause, we will observe that for the f-score of this cause under a scheduler, it is only important how large the probabilities with which state action pairs from $\Pi$ are chosen are relative to each other while the absolute values are not important. Due to this observation, for each SPR cause $C$ and for each scheduler $\sched$ for $\cN$ that reaches $\Effect$ with positive probability, we can then construct a scheduler for $\cK$ that leads to the same recall and precision of $C$.
	
	Formally, $\cK$ is defined as follows: The state space is $S_{\cN}\cup \{\init_{\cK}\}$ where $\init_{\cK}$ is a fresh initial state.
	For all states in $S_{\cN}$, the same actions as in $\cN$ are available with the same transition probabilities. I.e., for all $s,t\in S_{\cN}$,
	\[
	\Act_{\cK}(s)\eqdef \Act_{\cN}(s) \quad \text{and} \quad P_{\cK}(s,\alpha,t)\eqdef P_{\cN}(s,\alpha,t) \text{ for all }\alpha \in \Act_{\cK}(s).
	\]
	For each state-action pair $(s,\alpha)$ from $\Pi$, we now add a new action $\beta_{(s,\alpha)}$ that is enabled only in $\init_{\cK}$. 
	These are all actions enabled in $\init_{\cK}$, i.e., 
	\[
	\Act_{\cK}(\init_{\cK}) \eqdef \{\beta_{(s,\alpha)}\mid (s,\alpha)\in \Pi\}.
	\]
	For each state $t\in S_{\cN}$, we define the transition probabilities under $\beta_{(s,\alpha)}$ by
	\[
	P_{\cK}(\init_{\cK},\beta_{(s,\alpha)}, t) \eqdef P_{\cN}(s,\alpha,t).
	\]
	
	\paragraph*{Claim 2: }
	A subset $C\subseteq \cC$ such that for all $c \in C : \Pr^{\max}_\cN(\neg C \Until c) > 0$ is an SPR cause  for $\Effect$ in $\cN$ with $\fscore(C)>\vartheta$
	if and only if 
	for all schedulers $\tsched$ for $\cK$, we have
	\begin{align}
		\label{decision-fscore-claim2}
		2(1{-}\vartheta)\mathsf{tp}^\tsched_\cK - \vartheta \mathsf{fp}^\tsched_\cK
		-\vartheta \mathsf{fn}^\tsched_\cK  >  0.
	\end{align}
	
	\noindent
	{\it Proof of Claim 2.}
	We first prove the direction ``$\Rightarrow$''. So, let $C$ be an SPR cause for $\Effect$ in end-component free MDP $\cN$ with $\fscore(C)>\vartheta$.
	As first observation we have that in order to prove \eqref{decision-fscore-claim2} for all schedulers $\tsched$ for $\cK$, it suffices to consider schedulers $\tsched$ that start with a deterministic choice for state $\init_{\cK}$ and then behave in an arbitrary way. 
	\begin{itemize}
		\item []
		To see this, we consider the MDP $\cK_C$ which consists of two copies of $\cK$: ``before $C$'' and ``after $C$''.
		When $\cK_C$ enters a $C$-state in the first copy (``before $C$''), it switches to the second copy (``after $C$'') and stays there forever.
		Let us write $(s,1)$ for state $s$ in the first copy and $(s,2)$ for the copy of state $s$ in the second copy.
		Thus, in $\cK_C$ the event $\Diamond C \wedge \Diamond \Effect$ is equivalent to reaching a state $(\eff,2)$ where $\eff\in \Effect$, while $\Diamond C \wedge \neg \Diamond \Effect$ is equivalent to reaching a non-terminal state in the second copy, while $\neg \Diamond C \wedge \Diamond \Effect$ corresponds to the event reaching an effect state in the first copy.
		
		Obviously, there is a one-to-one-correspondence of the schedulers of $\cK$ and $\cK_C$. 
		As $\cK$ has no end components so does $\cK_C$.
		Therefore, a terminal state will be reached almost surely under every scheduler.
		Furthermore, we equip $\cK_C$ with a weight function on states which assigns 
		\begin{itemize}
			\item
			weight $2(1{-}\vartheta)$ to the states $(\eff,2)$ where $\eff \in \Effect$,
			\item
			weight $-\vartheta$ to the states $(\eff,1)$ where $\eff \in \Effect$ and
			to the states $(s,2)$ where $s$ is a terminal non-effect state in $\cK$ 
			(and $\cK_C$), and
			\item
			weight 0 to all other states.
		\end{itemize}
		Let $V$ denote the set of all terminal states in $\cK_C$.
		Then, the expression on the left hand side of \eqref{decision-fscore-claim2} equals $\mathrm{E}^{\tsched}_{\cK_C}(\boxplus V)$, the expected accumulated weight until reaching a terminal state under scheduler $\tsched$. 
		Hence,  \eqref{decision-fscore-claim2} holds for all schedulers $\tsched$ in $\cK$ if and only if $\mathrm{E}^{\min}_{\cK_C}(\boxplus V) >0$. 
		
		It is well-known that the minimal expected accumulated weight in EC-free MDPs is achieved by an MD-scheduler \cite{BaierK2008}. 
		Thus, there is an MD-scheduler $\tsched$ of $\cK_C$ with $\mathrm{E}^{\min}_{\cK_C}(\boxplus V) = \mathrm{E}^{\tsched}_{\cK_C}(\boxplus V)$. 
		When viewed as a scheduler of $\cK$, $\tsched$ behaves memoryless deterministic before reaching $C$.
		In particular, $\tsched$'s initial choice in $\init_{\cK}$ is deterministic.
	\end{itemize}
	Recall the set $\Act^{\min}(s)$ of actions minimizing the reachability probability of $\Eff$ from $s$.
	Consider a scheduler $\tsched$ for $\cK$ with a deterministic choice $\tsched(\init_{\cK})(\beta_{(s,\alpha)})=1$ where $(s,\alpha)\in \Pi$. 
	To construct an analogous scheduler $\sched$ of $\cN$, we pick an MD-scheduler $\usched$ of the sub-MDP $\cN^{\min}_E$ of $\cN$ induced by the state-action pairs $(u,\beta)$ where $u\in E$ and $\beta \in \Act^{\min}(u)$ such that there is a $\usched$-path from $\init_{\cN}$ to state $s$.
	
	Scheduler $\sched$ of $\cN$ operates with the mode $\mathfrak{m}_1$ and the modes $\mathfrak{m}_{2,t}$ for $t\in S_{\cN}$.
	In its initial mode $\mathfrak{m}_1$, scheduler $\sched$ behaves as $\usched$ as long as state $s$ has not been visited. 
	When having reached state $s$ in mode $\mathfrak{m}_1$, then $\sched$ schedules the action $\alpha$ with probability 1. 
	Let $t \in S_{\cN}$ be the state that $\sched$ reaches via the $\alpha$-transition from $s$. 
	Then, $\sched$ switches to mode $\mathfrak{m}_{2,t}$ and behaves from then on as the residual scheduler $\residual{\tsched}{\varpi}$ of $\tsched$ for the $\tsched$-path $\varpi = \init_{\cK} \, \beta_{(s,\alpha)} \, t$ in $\cK$. 
	That is, after having scheduled the action $\beta_{(s,\alpha)}$, scheduler $\sched$ behaves exactly as $\tsched$.
	
	Let $\lambda$ denote $\sched$'s probability to leave mode $\mathfrak{m}_1$, which equals $\usched$'s probability to reach $s$ from $\init_{\cN}$. 
	Thus, $\lambda = \Pr_{\cN}^{\usched}(\Diamond s)$ when $\usched$ is viewed as a scheduler of $\cN$. 
	As $E$ is disjoint from $C$ and $\Effect$, scheduler $\sched$ stays forever in mode $\mathfrak{m}_1$ and never reaches a state in $C \cup \Effect$ with probability $1{-}\lambda$. 
	
	$\sched$ and $\tsched$ behave identically after choosing the state-action pair $(s,\alpha) \in \Pi$ or the corresponding action $\beta_{(s,\alpha)}$, respectively, which implies that 
	\begin{itemize}
		\item 
		$\Pr^{\sched}_{\cN}(\Diamond C \land \Diamond \Effect) = \lambda \cdot \Pr^{\tsched}_{\cK}(\Diamond C \land \Diamond \Effect)$,
		\item
		$\Pr^{\sched}_{\cN}(\Diamond \Effect) = \lambda \cdot \Pr^{\tsched}_{\cK}(\Diamond \Effect)$ and 
		\item
		$\Pr^{\sched}_{\cN}(\Diamond C \land \neg \Diamond \Effect)  = \lambda \cdot \Pr^{\tsched}_{\cK}(\Diamond C \land \neg \Diamond \Effect)$.
	\end{itemize}
	
	Recall the sub-MDP $\cN^{\min}_E$ consisting of $E$ and the actions in $\Act^{\min}(\cdot)$.
	As $\sched$ leaves the sub-MDP $\cN^{\min}_E$ with probability $\lambda >0$, we have $\Pr^{\sched}_{\cN}(\Diamond \Effect)>0$.
	By Lemma \ref{lem:reformulating_fscore}, we can conclude that 
	\[
	2(1{-}\vartheta)\mathsf{tp}^\sched_\cN - \vartheta \mathsf{fp}^\sched_\cN
	-\vartheta \mathsf{fn}^\sched_\cN  >  0.
	\]
	By the equations above, this in turn implies that 
	\[
	2(1{-}\vartheta)\mathsf{tp}^\tsched_\cK - \vartheta \mathsf{fp}^\tsched_\cK
	-\vartheta \mathsf{fn}^\tsched_\cK  >  0.
	\]
		
	For the direction ``$\Leftarrow$'', first recall that any subset of $\cC$ satisfying (M) is an SPR cause for $\Effect$ in $\cN$ by definition of $\cC$. Now, let $\sched$ be a scheduler for $\cN$ with $\Pr^{\sched}_{\cN}(\Diamond \Effect)>0$.
	Let $\Gamma$ be the set of finite $\sched$-paths $\gamma$ in the sub-MDP $\cN^{\min}_E$  such that 
	$\sched$ chooses an action in $\cA = \Act_{\cN}(\mathit{last}(\gamma))\setminus \Act^{\min}(\mathit{last}(\gamma))$ with positive probability after $\gamma$ where $\mathit{last}(\gamma)$ denotes the last state of $\gamma$.
	Let
	\[
	q \eqdef \sum_{\gamma \in \Gamma}  \sum_{\alpha\in \cA} P_{\cN}(\gamma)\cdot \sched(\gamma)(\alpha).
	\]
	So, $q$ is the overall probability that a state-action pair from $\Pi$ is chosen under $\sched$.
	We now define a scheduler $\tsched$ for $\cK$: For each action $\gamma\in\Gamma$ ending in a state $s$ and each action $\alpha\in \Act_{\cN}(s)\setminus \Act^{\min}(s)$, the scheduler $\tsched$ chooses action $\beta_{(s,\alpha)}$ in $\init_{\cK}$ with probability $P_{\cN}(\gamma)\cdot \sched(\gamma)(\alpha) / q$.
	When reaching a state $t$ afterwards, $\tsched$ behaves like $\residual{\sched}{\gamma\, \alpha \, t}$ afterwards.
	Note that by definition this indeed defines a probability distribution over the actions in the initial state $\init_{\cK}$.
	
	By assumption, we know that now
	\[
	2(1{-}\vartheta)\mathsf{tp}^\tsched_\cK - \vartheta \mathsf{fp}^\tsched_\cK
	-\vartheta \mathsf{fn}^\tsched_\cK  >  0.
	\]
	As the probability with which an action $\beta_{(s,\alpha)}$ is chosen by $\tsched$ for a $(s,\alpha) \in \Pi$ is $1/q$ times the probability that $\alpha$ is chosen in $s$ to leave the sub-MDP $\cN^{\min}_E$  under $\sched$ in $\cN$ and as the residual behavior is identical, we conclude that 
	\begin{align*}
		2(1{-}\vartheta)\mathsf{tp}^\sched_\cN - \vartheta \mathsf{fp}^\sched_\cN
		-\vartheta \mathsf{fn}^\sched_\cN
		= \,\, 	q\cdot ( 2(1{-}\vartheta)\mathsf{tp}^\tsched_\cK - \vartheta \mathsf{fp}^\tsched_\cK
		-\vartheta \mathsf{fn}^\tsched_\cK)  >  0.
	\end{align*}
	By Lemma \ref{lem:reformulating_fscore}, this shows that $\fscore(C)>\vartheta$ in $\cN$ and finishes the proof of Claim 2.
	
	\paragraph*{\it Construction of a game structure.}
	Recall the set of singleton SPR causes $\cC$.
	We now construct a \emph{stochastic shortest path game} (see \cite{patek1999stochastic}) to check whether there is a subset $C\subseteq \cC$ such that \eqref{decision-fscore-claim2} holds in the EC-free MDP $\cK$ in which visiting effect states always has a non-zero probability.
	Such a game is played on an MDP-like structure with the only difference that the set of states is partitioned into two sets indicating which player controls which states.
	
	The game $\cG$ has states $(S_{\cK}\times \{\yes,\no\}) \cup \cC\times\{\choice\}$.
	On the subset $S_{\cK}\times\{\yes\}$, all actions and transition probabilities are just as in $\cK$ and this copy of $\cK$ cannot be left.
	Formally, for all $s,t\in S_{\cK}$ and $\alpha \in \Act_{\cK}(s)$, we have $\Act_{\cG}((s,\yes))=\Act_{\cK}(s)$ and $P_{\cG}((s,\yes),\alpha,(t,\yes ))=P_{\cK}(s,\alpha,t)$.
	
	In the ``$\no$''-copy, the game also behaves like $\cK$ but when a state in $\cC$ would be entered, the game moves to a state in $\cC\times \{\choice\}$ instead.
	In a state of the form $(c,\choice)$ with $c\in \cC$, two actions $\alpha$ and $\beta$ are available. Choosing $\alpha$ leads to the state $(c,\yes)$ while choosing $\beta$ leads to $(c,\no)$ with probability $1$.
	
	Formally, this means that for all state $s\in S_{\cK}$, we define $\Act_{\cG}((s,\no))=\Act_{\cK}(s)$ and for all actions $\alpha\in \Act_{\cK}(s)$:
	\begin{itemize}
		\item
		$P_{\cG}((s,\no),\alpha,(t,\no))=P_{\cK}(s,\alpha,t)$
		for all states $t\in S_{\cK}\setminus \cC$ 
		\item
		$P_{\cG}((s,\no),\alpha,(c,\choice))=P_{\cK}(s,\alpha,c)$
		for all states $c\in \cC$
	\end{itemize}
	For states $s\in S_{\cK}$, $c\in \cC$, and $\alpha\in \Act_{\cK}(s)$,
	we furthermore define: 
	\begin{center}
		$P_{\cG}((c,\choice),\alpha,(c,\yes))=P_{\cG}((c,\choice),\beta,(c,\no))=1$.
	\end{center}
	Intuitively speaking, whether a state $c\in \cC$ should belong to the cause set can be decided in the state $(c,\choice)$. 
	The ``$\yes$''-copy encodes that an effect state has been selected. 
	More concretely, the ``$\yes$-copy'' is entered as soon as $\alpha$ has been chosen in some state $(c,\choice)$ and will never be left from then on.
	The ``$\no$''-copy of $\cK$ then encodes that no state $c\in \cC$ which has been selected to become a cause state has been visited so far.  
	That is, if the current state of a play in $\cG$ belongs to the $\no$-copy then	in all previous decisions in the states $(c,\choice)$, action $\beta$ has been chosen.
	
	Finally, we equip the game with a weight structure.
	All states in $\Effect\times\{\yes\}$ get weight $2(1-\vartheta)$. All remaining terminal states in $S_{\cK}\times \{\yes\}$ get weight $-\vartheta$. 
	All states in $\Effect \times \{\no\}$ get weight $-\vartheta$. 
	All remaining states have weight $0$.

	The game is played between two players $0$ and $1$. Player $0$ controls all states in $\cC\times \{\choice\}$ while player $1$ controls the remaining states.
	The goal of player $0$ is to ensure that the expected accumulated weight is $>0$.

	\paragraph*{Claim 3: }
	Player $0$ has a winning strategy ensuring that the expected accumulated weight is $>0$ in the game $\cG$ if and only if there is a subset $C\subseteq \cC$ in $\cK$ which satisfies for all $c \in C: \Pr^{\max}_\cK(\neg C \until c) > 0$ and for all schedulers $\tsched$ for $\cK$ we have
	\begin{align}
		\label{decision-fscore-claim3}
		2(1{-}\vartheta)\mathsf{tp}^\tsched_\cK - \vartheta \mathsf{fp}^\tsched_\cK
		-\vartheta \mathsf{fn}^\tsched_\cK  >  0.
	\end{align}
	
	\noindent
	{\it Proof of Claim 3.}
	As $\cK$ has no end components, also in the game $\cG$ a terminal state is reached almost surely under any pair of strategies. Hence, we can rely on the results of \cite{patek1999stochastic} that state that both players have an optimal memoryless deterministic strategy.
	
	We start by proving direction ``$\Rightarrow$'' of Claim 3. Let $\zeta$ be a memoryless deterministic winning strategy for player $0$. I.e., $\zeta$ assigns to each state in $\cC\times \{\choice\}$ an action from $\{\alpha, \beta\}$.
	We define 
	\[
	\cC_{\alpha} \eqdef \{c\in \cC \mid \zeta((c,\choice))=\alpha \}.
	\]
	Note that $\cC_{\alpha}$ is not empty as otherwise a positive expected accumulated weight in the game is not possible. 
	(Here we use the fact that only the effect states in the $\yes$-copy have positive weight and that the $\yes$-copy can only be entered by taking $\alpha$ in one of the states $(c,\choice)$.)
	To ensure for all $c \in C: \Pr^{\max}_\cK(\neg C \until c) > 0$, we remove states that cannot be visited as the first state of this set:
	\[
	C\eqdef \{c\in \cC_{\alpha} \mid \cK,c \models \exists (\neg \cC_{\alpha})\Until c\}.
	\]
	Note that the strategies for player $0$ in $\cG$ which correspond to the sets $\cC_{\alpha}$ and $C$ lead to exactly the same plays.
	
	Let $\tsched$ be a scheduler for $\cK$. This scheduler can be used as a strategy for player $1$ in $\cG$.
	Let us denote the expected accumulated weight when player $0$ plays according to $\zeta$ and player $1$ plays according to $\tsched$ by $w(\zeta,\tsched)$.
	As $\zeta$ is winning for player 0 we have
	$
	w(\zeta,\tsched) > 0.
	$
	By the construction of the game, it follows directly that 
	\[
	w(\zeta,\tsched) = 2(1{-}\vartheta)\mathsf{tp}^\tsched_\cK - \vartheta \mathsf{fp}^\tsched_\cK
	-\vartheta \mathsf{fn}^\tsched_\cK. 
	\]
	Putting things together yields:
	\begin{align}
		\label{decision-fscore-claim3+}
		2(1{-}\vartheta)\mathsf{tp}^\tsched_\cK - \vartheta \mathsf{fp}^\tsched_\cK
		-\vartheta \mathsf{fn}^\tsched_\cK  >  0
	\end{align}
	
	For the other direction, suppose there is a set $C\subseteq \cC$ that satisfies $\Pr^{\max}_\cK(\neg C \until c) > 0$ for all $c \in C$ and \eqref{decision-fscore-claim3} for all schedulers $\tsched $ for $\cK$.
	We define the MD-strategy $\zeta$ from $C$ by letting $\zeta((c,\choice))=\alpha $ if and only if $c\in C$. For any strategy $\tsched$ for player $1$, we can again view $\tsched$ also as a scheduler for $\cK$. Equation \eqref{decision-fscore-claim3+} holds again and shows that the expected accumulated weight in $\cG$ is positive if player $0$ plays according to $\zeta$ against any strategy for player $1$. This finishes the proof of Claim 3.
	
	\paragraph*{Putting together Claims 1-3.} 
	We conclude that there is an SPR cause $C$ in the original MDP $\cM$ with $\fscore(C)>\vartheta$ if and only if player $1$ has a winning strategy in the constructed game $\cG$.
	As both players have optimal MD-strategies in $\cG$ \cite{patek1999stochastic}, the decision problem is in $\NP \cap \coNP$: We can guess the MD-strategy for player $0$ and solve the resulting stochastic shortest path problem in polynomial time \cite{BT91} to obtain an $\NP$-upper bound. Likewise, we can guess the MD-strategy for player $1$ and solve the resulting stochastic shortest path problem to obtain the $\coNP$-upper bound. 
\end{proof}

%%%%%%%%%%%%%%%%%%%%%%%%%%%%%%%%%%%%%%%%%%%%%%%%%%%%%%%%%%%%%%%%%%%%%%%%%%%%%%%

\paragraph*{\bf Optimality and threshold constraints for GPR causes.}
Computing optimal GPR causes for either quality measure can be done in polynomial space by considering all cause candidates, checking (G) in $\coNP$ and computing the corresponding quality measure in polynomial time (Section~\ref{sec:comp-acc-measures-fixed-cause}).
As the space can be reused after each computation, this results in polynomial space.
However, we show that no polynomial-time algorithms can be expected as the corresponding threshold problems are $\NP$-hard.
Let GPR-covratio (resp. GPR-recall, GPR-f-score) denote the decision
problems: 
Given $\cM,\Effect$ and $\vartheta \in \Rat$, decide
whether there exists a GPR cause with
coverage ratio (resp. recall, f-score) at least $\vartheta$.

\begin{thm}
	\label{thm:GPR-recall-NPhard-and-in-PSPACE}
	The problems GPR-covratio, GPR-recall and GPR-f-score
	are $\NP$-hard and belong to $\Sigma^P_2$.
	For Markov chains, all three problems are $\NP$-complete.
	$\NP$-hardness even holds
	for tree-like Markov chains.
\end{thm}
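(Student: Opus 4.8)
The plan is to separate upper bounds from hardness, and to exploit that the three measures are tightly linked so that essentially one reduction does the whole job. First I would record that the threshold problems for recall and coverage ratio are polynomially inter-reducible: for a fixed cause evaluated under its worst-case scheduler the relation $\covrat(\Cause) = 1/(\tfrac{1}{\recall(\Cause)}-1)$ extracted from Corollary~\ref{cor:covratio-in-PTIME} (together with the fact, shown in Lemma~\ref{lemma:recall-opt=ratio-opt}, that a single scheduler simultaneously minimizes recall and coverage ratio) means that $x \mapsto x/(1-x)$ is a strictly increasing bijection between the attainable recall and coverage-ratio values. Hence a recall threshold $\vartheta$ corresponds to the coverage-ratio threshold $\vartheta/(1-\vartheta)$ and conversely, and it suffices to treat GPR-recall and GPR-f-score.

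For the upper bounds I would use a guess-and-check scheme. The algorithm nondeterministically guesses a candidate set $\Cause \subseteq S \setminus \Effect$ and then (i) verifies that $\Cause$ is a GPR cause and (ii) computes the relevant quality value and compares it with $\vartheta$. Step (ii) runs in polynomial time by Corollary~\ref{cor:covratio-in-PTIME} and Theorem~\ref{fscore-in-PTIME}, while step (i) is a $\coNP$ predicate by Theorem~\ref{thm:checking-GPR-in-poly-space}. The overall predicate therefore has the shape ``$\exists\,\Cause$ such that (a $\forall$-certificate for the GPR condition holds) and (a poly-time check succeeds)'', which places all three problems in $\NP^{\coNP}=\Sigma^P_2$. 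In the Markov-chain case the GPR condition is decidable in polynomial time, so step (i) becomes deterministic and the whole scheme collapses to a plain $\NP$ procedure; combined with the hardness below this yields $\NP$-completeness for Markov chains.

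The main work is the $\NP$-hardness, which I would establish already for tree-like Markov chains by a reduction from a subset-selection problem such as Knapsack or (weighted) subset sum. Given the source items I would build a tree-like chain in which $\init$ branches, with suitable probabilities $p_i$, to pairwise incomparable candidate states $c_i$, each reaching a fixed effect state with probability $w_i$ and a non-effect state otherwise, while an additional fixed sub-chain pins down the baseline effect probability $E$ and supplies a constant offset. Because the $c_i$ lie on disjoint branches, every subset $\Cause\subseteq\{c_1,\dots,c_n\}$ automatically satisfies the minimality condition, and in a Markov chain the quantities of interest become $\mathsf{tp}=\sum_{c_i\in\Cause}p_iw_i$, $\mathsf{fp}=\sum_{c_i\in\Cause}p_i(1{-}w_i)$ and $\mathsf{fn}=E-\mathsf{tp}$. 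The GPR condition then reads $\sum_{c_i\in\Cause}p_i(w_i-E)>0$, the recall threshold reads $\sum_{c_i\in\Cause}p_iw_i\geq\vartheta E$, and, using $\mathsf{tp}+\mathsf{fp}=\Pr_\cM(\Diamond\Cause)$, the f-score threshold reads $\sum_{c_i\in\Cause}p_i(2w_i-\vartheta)\geq\vartheta E$; all three are linear constraints over the chosen subset. I would calibrate $p_i,w_i,E,\vartheta$ so that the GPR condition captures one constraint of the source instance (a capacity/upper bound) and the recall, respectively f-score, threshold captures the other (a profit/lower bound), making GPR causes that meet the threshold correspond exactly to the feasible solutions of the source instance.

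The hard part will be the correctness of this reduction rather than its setup. I expect two obstacles. First, the GPR condition is a \emph{strict} inequality, whereas the source constraints are typically non-strict; I would resolve this by perturbing the target values by a sufficiently small rational margin so that no borderline equality can occur along any selection. Second, the sums $\sum_i p_i(w_i-E)$ and the threshold sums aggregate over the selected items with item-dependent coefficients, so to encode an \emph{equality} subset-sum one must route the missing constant term through the fixed offset gadget and express ``$=t$'' as the conjunction of the recall (or f-score) lower bound and the GPR upper bound. A further delicacy specific to GPR-f-score is that the selected probability mass $\Pr_\cM(\Diamond\Cause)$ appears in the f-score denominator, so its calibration must be carried out separately. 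Verifying that exactly the intended subsets survive both constraints, and that one calibrated family of instances simultaneously witnesses hardness of GPR-recall, GPR-covratio and GPR-f-score, is where the bulk of the bookkeeping lies.
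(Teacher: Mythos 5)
Your proposal is correct and follows essentially the same route as the paper: a guess-and-check scheme (guess $\Cause$, verify the GPR condition in $\coNP$, evaluate the measure in polynomial time) for the $\Sigma^P_2$ and Markov-chain $\NP$ upper bounds, the recall/coverage-ratio inter-reducibility to collapse two of the three problems, and a knapsack reduction into a tree-like Markov chain in which the strict GPR inequality encodes the capacity constraint and the recall (resp.\ f-score) threshold encodes the profit constraint, with a forced offset state pinning down the baseline effect probability. The only cosmetic difference is that the paper sidesteps your perturbation step by stating the knapsack instance with a strict capacity inequality (equivalent over the integers), and it indeed carries out a separate, more delicate calibration for the f-score denominator, exactly the bookkeeping you anticipated.
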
  

\begin{proof}
	\textit{$\Sigma^P_2$-membership.}
	The algorithms for GPR-covratio, GPR-recall and GPR-f-score rely on the guess-and-check principle: 
	they start by non-deterministically guessing a set $\Cause \subseteq S$,
	then check in $\coNP$ whether
	$\Cause$ constitutes a GPR cause
	(see Section~\ref{sec:check}) and finally check $\relcov(\Cause) \leq \vartheta$ (with standard techniques),
	resp. $\ratiocov(\Cause) \leq \vartheta$,
	resp. $\fscore(\Cause) \leq \vartheta$
	(Theorem \ref{fscore-in-PTIME})
	in polynomial time.
	The alternation between the existential quantification for guessing $\Cause$ and the universal quantification for the $\coNP$ check of the GPR condition results in the complexity $\Sigma^P_2$ of the polynomial-time hierarchy.
	
	\textit{$\NP$-membership for Markov chains.}
	$\NP$-membership for all three problems within Markov chains is straightforward as we
	may non-deterministically guess a cause and check in
	polynomial time whether it constitutes a GPR cause and satisfies the threshold
	condition for the recall, coverage ratio or f-score.
	
	\textit{$\NP$-hardness of GPR-recall and GPR-covratio.}
	With arguments as in the proof of Lemma \ref{lemma:recall-opt=ratio-opt},
	the problems GPR-recall and GPR-covratio are polynomially interreducible
	for Markov chains.
	Thus, it suffices to prove NP-hardness of GPR-recall.
	For this, we provide a polynomial reduction from the knapsack problem.
	The input of the latter are sequences
	$A_1,\ldots,A_n,A$ and $B_1,\ldots,B_n,B$ of positive natural numbers
	and the task is to decide whether there exists a subset $I$ of $\{1,\ldots,n\}$ such that
	\begin{align}
		\label{knapsack1}
		\sum_{i\in I} A_i \ < \ A \qquad \text{and} \qquad
		\sum_{i\in I} B_i \ \geqslant \ B
	\end{align}
	Let $K$ be the maximum of $A,A_1,\ldots,A_n,B,B_1,\ldots,B_n$ and $N = 8(n{+}1)\cdot (K{+}1)$.
	We then define
	\begin{align*}
		a_i&=\frac{A_i}{N},&
		a&=\frac{A}{N},&
		b_i&=\frac{B_i}{N},&
		b=&\frac{B}{N}.
	\end{align*}
	Then, $a, a_1,\ldots,a_n,b,b_1,\ldots,b_n$
	are positive rational numbers strictly smaller than $\frac{1}{8(n{+}1)}$,
	and \eqref{knapsack1} can be rewritten as:
	\begin{align}
		\label{knapsack2}
		\sum_{i\in I} a_i \ < \ a \qquad \text{and} \qquad
		\sum_{i\in I} b_i \ \geqslant \ b
	\end{align}
	For $i\in \{1,\ldots,n\}$, let 
	$p_i=2(a_i+b_i)$
	and
	$w_i=\frac{b_i}{p_i}=\frac{1}{2}\cdot \frac{b_i}{a_i+b_i}$.
	Then,
	$0 < p_i < \frac{1}{2(n{+}1)}$ and $0< w_i < \frac{1}{2}$.
	Moreover,$p_i \bigl(\frac{1}{2}-w_i \bigr) = a_i$ and $p_i \cdot w_i =  b_i$.
	Hence, \eqref{knapsack2} can be rewritten as:  
	\begin{align*}
		\sum\limits_{i\in I} p_i \bigl(\frac{1}{2}-w_i\bigr) \ < \ a
		\qquad \text{and} \qquad \sum\limits_{i\in I} p_i w_i \ \geqslant \ b
	\end{align*}
	which again is equivalent to:
	\begin{align}
		\label{knapsack3}
		\frac{\sum\limits_{i\in I_0} p_iw_i}{\sum\limits_{i\in I_0} p_i} \ > \
		\frac{1}{2}
		\qquad \text{and} \qquad
		\sum\limits_{i\in I_0} p_i w_i \ \geqslant \ \ p_0+b
	\end{align}
	where $p_0 = 2a$, $w_0=1$ and $I_0=I\cup\{0\}$.
	Note that $a<\frac{1}{8(n{+}1)}$ and hence $p_0 <\frac{1}{4(n{+}1)}$.
	
	Define a tree-shape Markov chain $\cM$ with non-terminal states
	$\init$, $s_0,s_1,\ldots,s_n$, 
	and terminal states
	$\eff_0,\ldots,\eff_n$, $\effuncov$ and
	$\noeff,\noeff_1,\ldots,\noeff_n$.
	Transition probabilities are as follows:
	\begin{itemize}
		\item
		$P(\init,s_i)=p_i$ for $i=0,\ldots,n$
		\item
		$P(\init,\effuncov) \ = \
		\frac{1}{2}-\sum\limits_{i=0}^n p_iw_i$
		
		\item   
		$P(\init,\noeff)=1-\sum\limits_{i=0}^n p_i - P(\init,\effuncov)$,
		\item
		$P(s_i,\eff_i)=w_i$, $P(s_i,\noeff_i)=1{-}w_i$ for $i=1,\ldots,n$
		\item
		$P(s_0,\eff_0)=1=w_0$.
	\end{itemize}
	
	Note that $p_0+p_1+\ldots+p_n  < \frac{1}{2}$ as all $p_i$'s are
	strictly smaller
	than $\frac{1}{2(n{+}1)}$. As the $w_i$'s are bounded by 1,
	this yields $0 < P(\init,\effuncov) <\frac{1}{2}$ and
	$0< P(\init,\noeff) < 1$.
	
	The graph structure of $\cM$ is indeed a tree and $\cM$ can be constructed from the values
	$A,A_1,\ldots,A_n$ and $B,B_1,\ldots,B_n$ in polynomial time.
	For $\Effect = \{\effuncov\}\cup\{\eff_i : i=0,1,\ldots,n\}$
	we have:
	\[
	\Pr_{\cM}(\Diamond \Effect)
	\ \ = \ \
	\sum_{i=0}^n p_iw_i + P(\init,\effuncov) \ \ = \ \ \frac{1}{2}.
	\]
	As the values $w_1,\ldots,w_n$ are strictly smaller than $\frac{1}{2}$,
	we have
	$\Pr_{\cM}(\ \Diamond \Effect \ | \ \Diamond C \ ) < \frac{1}{2}$
	for each nonempty subset $C$ of $\{s_1,\ldots,s_n\}$.
	Thus, the only candidates for GPR causes are the sets
	$C_I =\{s_i : i\in I_0\}$ where $I \subseteq \{1,\ldots,n\}$
	where as before $I_0=I\cup\{0\}$.
	Note that for all states $s\in C_I$ there is a path satisfying
	$(\neg C_I)\Until s$. Thus, $C_I$ is a GPR cause iff $C_I$ satisfies (G).
	We have:
	\[
	\Pr_{\cM}(\ \Diamond \Effect \ | \ \Diamond C_I \ ) \ \ = \ \
	\frac{\sum\limits_{i\in I_0} p_i w_i}{\sum\limits_{i\in I_0} p_i}
	\]
	and
	\[
	\recall(C_I) \ \ = \ \
	\Pr_{\cM}
	(\ \Diamond (C_I \wedge \Diamond \Effect) \ | \ \Diamond \Effect \ )
	\ \ = \ \
	2 \cdot \sum_{i\in I_0} p_iw_i.
	\]
	Thus, $C_I$ is a GPR cause with recall at least $2(p_0+b)$
	if and only if the two conditions in \eqref{knapsack3} hold,
	which again is equivalent to the satisfaction of the conditions
	in \eqref{knapsack1}.
	But this yields that $\cM$ has a GPR cause with recall at least
	$2(p_0+b)$ if and only if the knapsack problem is solvable
	for the input $A,A_1,\ldots,A_n,B,B_1,\ldots,B_n$.
	
	\textit{$\NP$-hardness of GPR-f-score.}
	Using similar ideas, we also provide a polynomial reduction
	from the knapsack problem.
	Let $A,A_1,\ldots,A_n,B,B_1,\ldots,B_n$ be an input for the knapsack
	problem.
	We replace the $A$-sequence with $a,a_1,\ldots,a_n$ where
	$a=\frac{A}{N}$ and $a_i=\frac{A_i}{N}$
	where $N$ is as before.
	The topological structure of the Markov chain that we are going to construct
	is the same as in the NP-hardness proof for GPR-recall.
	
	Next, we will consider the polynomial-time computable values
	$p_0,p_1,\ldots,p_n \in \ ]0,1[$ (where $p_i=P(\init,s_i)$),
	$w_1,\ldots,w_n \in \ ]0,1[$ (where $w_i=P(s_i,\eff_i)$)
	and auxiliary variables $\delta \in \ ]0,1[$ and $\lambda > 1$    
	such that:
	\begin{enumerate}
		\item [(1)] $p_0+p_1 + \ldots + p_n < \frac{1}{2}$
		\item [(2)] $\lambda = \frac{p_0 + \frac{1}{2} - \delta}{p_0}$ 
		\item [(3)] for all $i\in \{1,\ldots,n\}$:
		\begin{enumerate}
			\item [(3.1)]
			$a_i \ = \ p_i \bigl(\frac{1}{2}-w_i)$
			(in particular $w_i < \frac{1}{2}$)
			\item [(3.2)]
			$B_i \ = \ \frac{1}{\delta} B p_i \bigl( \lambda w_i -1)$
			(in particular $w_i > \frac{1}{\lambda}$)
		\end{enumerate}
	\end{enumerate} 
	Assuming such values have been defined, we obtain:
	\begin{eqnarray*}
		\sum_{i\in I} B_i \ \geqslant \ B
		& \ \ \text{iff} \ \ & 
		\frac{1}{\delta} B \sum_{i \in I} p_i (\lambda  w_i -1) \ \geqslant \ B
		\\
		\\[0ex]
		& \text{iff} &
		\sum_{i \in I} p_i (\lambda  w_i -1) \ \geqslant \ \delta
		\\
		\\[0ex]
		& \text{iff} &
		\lambda \sum_{i\in I} p_i w_i
		\ \geqslant \ \delta + \sum_{i\in I} p_i
	\end{eqnarray*}
	Hence:
	\begin{eqnarray*}
		\sum_{i\in I} B_i \ \geqslant \ B
		& \ \ \text{iff} \ \ & 
		\frac{\displaystyle \sum\limits_{i\in I} p_i w_i}
		{\displaystyle \delta + \sum\limits_{i\in I} p_i}
		\ \geqslant \ 
		\frac{1}{\lambda}
	\end{eqnarray*}
	For all positive real numbers $x,y,u,v$ with $\frac{x}{y}=\frac{1}{\lambda}$
	we have:
	\[
	\frac{x+u}{y+v} \geqslant \frac{1}{\lambda}
	\ \ \ \ \ \text{iff} \ \ \ \ \
	\frac{u}{v}\geqslant \frac{1}{\lambda}
	\]
	By the constraints for $\lambda$ (see (2)), we have
	$\frac{p_0}{p_0 + \frac{1}{2} - \delta}=\frac{1}{\lambda}$.
	Therefore:
	\begin{eqnarray*}
		\frac{\displaystyle \sum\limits_{i\in I} p_i w_i}
		{\displaystyle \delta + \sum\limits_{i\in I} p_i}
		\ \geqslant \ 
		\frac{1}{\lambda}
		& \ \text{iff} \ &
		\frac{\displaystyle p_0 + \sum\limits_{i\in I} p_i w_i}
		{\displaystyle (p_0 + \frac{1}{2} - \delta)
			+ \delta + \sum\limits_{i\in I} p_i}
		\ \ = \ \   
		\frac{\displaystyle p_0 + \sum\limits_{i\in I} p_i w_i}
		{\displaystyle p_0 + \frac{1}{2} + \sum\limits_{i\in I} p_i}
		\ \geqslant \ 
		\frac{1}{\lambda}    
	\end{eqnarray*}  
	As before let $w_0=1$ and  $I_0=I\cup \{0\}$. Then,
	the above yields:
	\begin{eqnarray*}
		\sum_{i\in I} B_i \ \geqslant \ B
		& \ \ \ \text{iff} \ \ \ & 
		\frac{\displaystyle  \sum\limits_{i\in I_0} p_i w_i}
		{\displaystyle   \frac{1}{2} + \sum\limits_{i\in I_0} p_i}
		\ \geqslant \ 
		\frac{1}{\lambda}    
	\end{eqnarray*}
	As in the NP-hardness proof for GPR-recall and using (3.1):
	\[
	\Pr_{\cM}(\Diamond \Effect) \ = \ \frac{1}{2} \ > \ w_i
	\qquad \text{for $i=1,\ldots,n$}
	\]
	Thus,
	each GPR cause must have the form $C_I=\{s_i : i \in I_0\}$
	for some subset $I$ of $\{1,\ldots,n\}$.
	Moreover:
	\[
	\Pr_{\cM}(\Diamond C_I) \ = \ \sum_{i\in I_0} p_i
	\qquad \text{and} \qquad
	\Pr_{\cM}(\Diamond (C_I \wedge \Diamond \Effect))
	\ = \
	\sum_{i\in I_0} p_i w_i
	\]
	So, the f-score of $C_I$ is:
	\[
	\fscore(C_I) \ \ = \ \
	2 \cdot \frac{\Pr_{\cM}(\Diamond (C_I \wedge \Diamond \Effect))}
	{\Pr_{\cM}(\Diamond \Effect) + \Pr_{\cM}(\Diamond C_I)}
	\ \ = \ \
	2 \cdot \frac{\sum\limits_{i\in I_0} p_i w_i}
	{\frac{1}{2}+\sum\limits_{i\in I_0} p_i}
	\]
	This implies:               
	\begin{eqnarray*}
		\sum_{i\in I} B_i \ \geqslant \ B
		& \ \text{iff} \ &
		\fscore(C_I) \ \geqslant \
		\frac{2}{\lambda}    
	\end{eqnarray*}
	With $p_0=2a$ and using (3.1) and arguments as in the NP-hardness proof
	for GPR-recall, we obtain:
	\begin{eqnarray*}
		\sum_{i\in I} A_i \ < \ A
		& \ \ \text{iff} \ \ &
		\text{$C_I$ is a GPR cause}
	\end{eqnarray*}
	Thus, the constructed Markov chain has a GPR cause with f-score at least
	$\frac{2}{\lambda}$ if and only if the knapsack problem is solvable
	for the input $A,A_1,\ldots,A_n,B,B_1,\ldots,B_n$.
	
	It remains to define the values $p_1,\ldots,p_n,w_1,\ldots,w_n$ and
	$\delta$. (The value of $\lambda$ is then obtained by (2).)
	(3.1) and (3.2) can be rephrased as equations for $w_i$:
	\begin{description}
		\item [\text{\rm (3.1')}] $w_i=\frac{1}{2}-\frac{a_i}{p_i}$
		\item [\text{\rm (3.2')}]
		$w_i=\frac{1}{\lambda}\bigl( \delta \frac{B_i}{Bp_i}+1 \bigr)$
	\end{description}
	This yields an equation for $p_i$:
	\[
	\frac{1}{2}-\frac{a_i}{p_i} \ \ = \ \ 
	\frac{1}{\lambda}\Bigl( \delta \frac{B_i}{Bp_i}+1 \Bigr)
	\]
	and leads to:
	\begin{align}
		\label{equation-for-pi}
		p_i \ \ = \ \
		\frac{2\lambda}{\lambda-2} a_i \ + \
		\frac{2\delta}{\lambda-2} \cdot \frac{B_i}{B}
	\end{align}
	We now substitute $\lambda$ by (2) and arrive at
	\[
	p_i \ \ = \ \ 
	\frac{p_0}{\frac{1}{2}-\delta}a_i \ + \ a_i + \ \frac{\delta p_0}{\frac{1}{2}-\delta}\frac{B_i}{B}.
	\]
	By choice of $N$, all $a_i$'s and $a$ are smaller than $\frac{1}{8(n{+}1)}$.
	Using this together with $p_0 = 2a$, we get:
	\begin{align}
		\label{five-star}
		p_i \ < \ \frac{1}{4(n{+}1)(\frac{1}{2}-\delta)}\frac{1}{8(n{+}1)} \ + \ \frac{1}{8(n{+}1)} \ + \ \frac{\delta}{4(n{+}1)(\frac{1}{2}-\delta)}\frac{B_i}{B}
	\end{align}
	Let now $\delta =\frac{1}{8K}$ (where $K$ is as above, i.e.,
	the maximum of the values $A,A_1,\ldots,A_n$, $B$, $B_1,\ldots,B_n$).
	Then, $p_1,\ldots,p_n$  are computable
	in polynomial time, and so are the values $w_1,\ldots,w_n$ 
	(by (3.1')).  
	As $\frac{2\lambda}{\lambda-2}>2$ and using \eqref{equation-for-pi},
	we obtain 
	$p_i > 2a_i$.
	So, by (3.1') we get $0 < w_i < \frac{1}{2}$.
	
	It remains to prove (1). 
	Using $\delta=\frac{1}{8K}$, we obtain from \eqref{five-star}:
	\[
	p_i \ < \ \frac{1}{4(n{+}1)(\frac{1}{2}-\frac{1}{8K})}\frac{1}{8(n{+}1)} + \frac{1}{8(n{+}1)}+\frac{1}{32(n{+}1)(\frac{1}{2}-\frac{1}{8K})K}\frac{B_i}{B}  \ \eqdef \ x
	\]
	As $\frac{1}{2}-\frac{1}{8K}\geq \frac{1}{4}$ and $\frac{B_i}{B} < K$, this yields:
	\[
	p_i \ < \ x \ < \ \frac{1}{8(n{+}1)^2}+ \frac{1}{8(n{+}1)}+\frac{1}{8(n{+}1)} \ < \ \frac{1}{2(n{+}1)}.
	\]
	But then condition (1) holds.
\end{proof}

\paragraph*{Arbitrary quality measures}

Consider any algebraic function $f(\mathsf{tp}, \mathsf{tn}, \mathsf{fp}, \mathsf{fn})$.
That is $f$ satisfies some polynomial equation where the coefficients are polynomials in $\mathsf{tp}, \mathsf{tn}, \mathsf{fp}$ and $\mathsf{fn}$.
Almost every quality measure for binary classifiers (see \cite{Powers-fscore}) is such a function.
Taking the worst case scheduler for such a function we define
\[
f(\Cause) = \inf_{\sched} f^\sched(\mathsf{tp}_\Cause^\sched, \mathsf{tn}_\Cause^\sched, \mathsf{fp}_\Cause^\sched, \mathsf{fn}_\Cause^\sched),
\]
where $\sched$ ranges over all schedulers such that $f^\sched$ is well defined.
Given a PR cause $\Cause$ and a rational $\vartheta \in \Rat$, deciding whether $f(\Cause) \leq \vartheta$ can be done in $\PSPACE$ as a satisfiability problem in the existential theory of the reals \cite{ETH-Canny88}.

As we can decide for a given cause candidate $\Cause$ whether it is a SPR cause in $\PTIME$ or a GPR cause in $\coNP$, this also yields an algorithm for finding optimal causes for $f$.
Given an MDP $\cM$ with terminal effect set $\Eff$ and quality measure $f$ as an algebraic function we consider each cause candidate $\Cause$, check whether it is a PR cause (SPR or GPR) and consider the decision problem $f(\Cause) \leq \vartheta$.
As all of these steps have a complexity upper bound of $\PSPACE$ and we only need to save the best cause candidate so far with its value $f(\Cause)$, this results in polynomial space as well.

%%%%%%%%%%%%%%%%%%%%%%%%%%%%%%%%%%%%%%%%%%%%%%%%%%%%%%%%%%%%%%%%%%%%%%%%%%%%%%%%%%%%%%%%%%%%%

\section{\texorpdfstring{$\omega$}{ω}-regular effect scenarios}
\label{sec:regular}

In this section, we turn to an extension of the previous definition of PR causes.
So far, we considered both the cause and the effect as sets of states in an MDP $\cM$ with state space $S$. We will refer to this setting as the \emph{state-based} setting from now on.
In a more general approach, we  now consider the effect to be an $\omega$-regular language $\rEff \subseteq S^{\omega}$ over the state space $S$. 
Note that we  denote regular events as effects mainly by $\rEff$ to avoid confusion with effects as sets of states.

In a first step, we  still consider sets of states $\Cause \subseteq S$ as causes, which we call \emph{reachability causes} (Section~\ref{sub:states_as_causes}). 
For reachability GPR causes, the techniques from the previous section are  mostly still applicable. For reachability SPR causes, on the other hand, we observe that they take on the flavor of state-based GPR causes as well.
Afterwards, we  generalize the definition further to allow
$\omega$-regular co-safety properties over the state space $S$ as causes, which we call \emph{co-safety causes} (Section~\ref{sub:co-safety_as_causes}). While this allows us to express much more involved cause-effect relationships, we will see that attempts of checking co-safety SPR causality or of finding good causes for a given effect encounter major new difficulties.

\subsection{Sets of states as causes}
\label{sub:states_as_causes}
Throughout this section, let $\cM=(S,\Act,P,\init)$ be an MDP.
As long as we use sets of states as causes, the definition of GPR and SPR causes can easily be adapted to $\omega$-regular effects:
\begin{defi}[Reachability GPR/SPR causes]
	\label{def:regular-effect-PR-causes} 
	Let $\cM$ be as above. Let $\rEffect\subseteq S^\omega$ be an $\omega$-regular language over $S$ and $\Cause$ a nonempty subset of $S$ such that for each $c\in \Cause$, there is a scheduler $\sched$ with $\Pr^{\sched}_{\cM}( (\neg \Cause) \Until c ) >0$.
	Then, $\Cause$ is said to be a \emph{reachability GPR cause}  for
	$\rEffect$ iff the following condition (rG) holds:
	\begin{description}
		\item [(rG)]
		For each scheduler $\sched$ where $\Pr^{\sched}_{\cM}( \Diamond \Cause) >0$:
		\begin{align*}
			\label{rGPR}  
			\Pr^{\sched}_{\cM}(\rEffect \ | \ \Diamond \Cause)
			\ > \ \Pr^{\sched}_{\cM}(\rEffect).
			\tag{\text{rGPR}}
		\end{align*}
	\end{description}
	$\Cause$ is called a \emph{reachability SPR cause} for $\rEffect$ iff the following condition (rS) holds:
	\begin{description}
		\item [(rS)]
		For each state $c\in \Cause$ and each scheduler $\sched$ where $\Pr^{\sched}_{\cM}( (\neg \Cause) \Until c ) >0$:
		\begin{align*}
			\label{rSPR}  
			\Pr^{\sched}_{\cM}(\rEffect \mid	(\neg \Cause) \Until c)
			\ > \ \Pr^{\sched}_{\cM}(\rEffect).
			\tag{\text{rSPR}}
		\end{align*}   
	\end{description}
\end{defi}
There is one small caveat that we want to mention here: If the effect $\rEff$ is a reachability property $\lozenge \Eff$ for a set of states $\Eff\subseteq S$, 
then this new definition allows for GPR/SPR causes $\Cause$ not disjoint from the set of states $\Eff$.
If two sets $\Cause,\Eff\subseteq S$ are disjoint, however, then $\Cause$ is a GPR/SPR cause for $\Eff$ according to Definition \ref{def:causes} iff
$\Cause$ is a reachability GPR/SPR cause for the $\omega$-regular event $\lozenge\Eff$ according to the new definition.
As we now view the effect as the $\omega$-regular property on infinite executions, one can, nevertheless, argue that
the \emph{temporal priority} \textbf{(C2)} is captured by the new definition since the cause will be reached after finitely many steps if it is reached.
We will address problems with this interpretation and a stronger notion of temporal priority in Section~\ref{sec:finding_rcauses}.

A first simple observation that follows as in the state-based setting is that a reachability SPR cause for $\rEff$ is also a reachability GPR cause for $\rEff$

\subsubsection{Checking causality and existence of reachability PR causes}

To explore how this change of definition influences the previously established results for GPR and SPR causes, we have to clarify how effects will be represented.
We use deterministic Rabin automata (DRAs) as they are expressive enough to capture all $\omega$-regular languages and they are deterministic, which will allow us to form well-behaved products of the automata with MDPs.
Let $\cM$ be an MDP, $\rEff$ an effect given by the DRA $\cA_{\rEff}$ and $\Cause \subseteq S$ a cause candidate.
As a special case we again have Markov chains with no non-deterministic choices.
Then, the conditions (rG) and (rS) can easily be checked by computing the corresponding probabilities in polynomial time (see \cite{TACAS14-condprob} for algorithms to compute conditional probabilities in MCs for path properties).
We now consider the case where non-deterministic choices exist.
We will provide a model transformation of $\cM$ using the DRA such that the resulting MDP has no end components and the effect is a reachability property again.

\begin{nota}
	[Removing end components]
	\label{notation:MDP-mit-min-prob-ab-cause-candidates-regular-effect}  
	Let $\cM$ and $\cA_{\rEff}$ be as above.
	Consider the product MDP $\cN \eqdef \cM \otimes \cA_{\rEff}$.
	This product is an MDP equipped with a Rabin acceptance condition found in the second component of each state in the product.
	
	We now take two copies of $\cN$ representing a mode before $\Cause$ has been reached and one mode after $\Cause$ has been reached.
	So, each state $\mathfrak{s}$ is equipped with one extra bit $0$ or $1$.
	Initially, the MDP starts in the copy labeled with $0$ and behaves like $\cN$ until a state with its first component in $\Cause$ is reached.
	From there, the process moves to the corresponding successor states in the second copy labeled with $1$.
	We call the resulting MDP $\cN^\prime$ and denote the set of states with their first component in $\Cause$ in the first copy that are reachable in $\cN^\prime$ by $\Cause_{\cA_{\rEff}}$, in particular, to express that these states are enriched with states of the automaton $\cA_{\rEff}$.
	
	Next, we consider the MECs $\cE_1,\dots, \cE_k$ of $\cN^\prime$.
	Note that the states in $\Cause_{\cA_{\rEff}}$ are not contained in any MEC.
	Furthermore, all MECs consist either only of states from the first copy labeled $0$, or only of states from the second copy labeled with $1$.
	For each MEC $\cE_i$, we determine whether there is a scheduler for $\cE_i$ that ensures the event $\mathit{Acc}(\cA_{\rEff})$ that the acceptance condition of $\cA_{\rEff}$ is met with probability $1$ and whether there is a scheduler that ensures this event with probability $0$.
	With the techniques of \cite{deAlfaro1997} and \cite{ATVA2009} this can be done in polynomial time.
	We then add four new terminal states $\effcov$, $\noeffc$, $\effunc$, and $\noeffbot$ and construct the MEC-quotient of $\cN^\prime$ while, for each $i\leq k$, enabling a new action  in the state $s_{\cE_i}$ obtained from $\cE_i$ leading to 
	\begin{itemize}
		\item
		$\effunc$ if $\mathit{Acc}(\cA_{\rEff})$ can be ensured with probability $1$ in $\cE_i$ and $\cE_i$ is contained in copy $0$, and
		\item
		another new action leading to $\noeffbot$ if $\mathit{Acc}(\cA_{\rEff})$ can be ensured with probability $0$ in $\cE_i$ and $\cE_i$ is contained in copy $0$;
		\item
		$\effcov$ if $\mathit{Acc}(\cA_{\rEff})$ can be ensured with probability $1$ in $\cE_i$ and $\cE_i$ is contained in copy $1$, and
		\item
		another new action leading to $\noeffc$ if $\mathit{Acc}(\cA_{\rEff})$ can be ensured with probability $0$ in $\cE_i$ and $\cE_i$ is contained in copy $1$.
	\end{itemize}
	Finally, we remove all states which are not reachable (from the initial state).
	We call the resulting MDP $\wminMDP{\cM}{\rEff,\Cause}$ and emphasize that this MDP contains all states in $\Cause_{\cA_{\rEff}}$, has no end components, and has the four terminal states $\effcov$, $\noeffc$, $\effunc$, and $\noeffbot$. Furthermore, for each $c\in \Cause$, we denote the subset of states in $\Cause_{\cA_{\rEff}}$ with $c$ in their first component by $c_{\cA_{\rEff}}$.
	\Ende
\end{nota}

\begin{lem}
	\label{lem:regular-effect-wmin}
	Let $\cM$, $\Cause$, and $\cA_{\rEff}$ be as above and let $\wminMDP{\cM}{\rEff,\Cause}$ be the constructed MDP in Notation \ref{notation:MDP-mit-min-prob-ab-cause-candidates-regular-effect} which contains the set $\Cause_{\cA_{\rEff}}$.
	Then, $\Cause$ is a reachability GPR cause for $\rEff$ in $\cM$ if and only if $\Cause_{\cA_{\rEff}}$ is a GPR cause for $\{\effcov,\effunc\}$ in $\wminMDP{\cM}{\rEff,\Cause}$.
	Furthermore, $\Cause$ is a reachability SPR cause for $\rEff$ in $\cM$ if and only if, for each $c\in \Cause$, the set  $c_{\cA_{\rEff}}$ of states in $\Cause_{\cA_{\rEff}}$ with $c$ in their first component is a GPR cause for $\{\effcov,\effunc\}$ in $\wminMDP{\cM}{\rEff,\Cause}$.
\end{lem}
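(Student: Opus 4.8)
The plan is to reduce both equivalences to the state-based GPR characterisation already proved in Corollary~\ref{cor:GPR_MEC} and Lemma~\ref{lemma:wmin-criterion-PR-causes}, by exhibiting a scheduler correspondence between $\cM$ and the transformed MDP $\wminMDP{\cM}{\rEff,\Cause}$ that matches up the four ``confusion-matrix'' events. Concretely, I would write $\cN'$ for the two-copy product $\cM \otimes \cA_{\rEff}$ carrying the bit that records whether $\Cause$ has been visited. Since $\cA_{\rEff}$ is deterministic, this product preserves all path measures, and a run of $\cM$ lies in $\rEff$ iff the corresponding run of $\cN'$ satisfies the Rabin condition $\mathit{Acc}(\cA_{\rEff})$; adding the tracking bit does not change transition probabilities. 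Under this identification $\rEff \wedge \Diamond\Cause$ becomes ``$\mathit{Acc}(\cA_{\rEff})$ while eventually in copy~$1$'', $\rEff \wedge \neg\Diamond\Cause$ becomes ``$\mathit{Acc}(\cA_{\rEff})$ while staying in copy~$0$'', and likewise for the two non-effect events. The whole content is thus the passage from $\cN'$ to its end-component-free form $\wminMDP{\cM}{\rEff,\Cause}$.

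The heart of the argument is a refinement of the MEC-quotient correspondence of Lemma~\ref{lem:MEC-contraction-terminal-states-final}. I would show that for every scheduler $\sched$ of $\cN'$ there is a scheduler $\tsched$ of $\wminMDP{\cM}{\rEff,\Cause}$, and vice versa, such that (writing $\Pr^{\tsched}$ for probabilities in $\wminMDP{\cM}{\rEff,\Cause}$) $\Pr^{\sched}_{\cN'}(\rEff \wedge \Diamond\Cause)=\Pr^{\tsched}(\Diamond\effcov)$, $\Pr^{\sched}_{\cN'}(\rEff \wedge \neg\Diamond\Cause)=\Pr^{\tsched}(\Diamond\effunc)$, $\Pr^{\sched}_{\cN'}(\neg\rEff \wedge \Diamond\Cause)=\Pr^{\tsched}(\Diamond\noeffc)$ and $\Pr^{\sched}_{\cN'}(\neg\rEff \wedge \neg\Diamond\Cause)=\Pr^{\tsched}(\Diamond\noeffbot)$, while the probabilities of reaching $\Cause_{\cA_{\rEff}}$ (resp.\ each $c_{\cA_{\rEff}}$) agree. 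As in Lemma~\ref{lem:MEC-contraction-terminal-states-final}, a run splits into ``leaving a MEC'' (mimicked by the retained outgoing actions) and ``staying forever in a MEC $\cE_i$''; the novelty is that the stay-forever probability mass must be split according to whether $\mathit{Acc}(\cA_{\rEff})$ holds in the limit. If $\sched$ stays in $\cE_i$ forever and accepts with positive probability, the set of states visited infinitely often is an accepting sub-end-component, so by \cite{deAlfaro1997,ATVA2009} acceptance can be ensured with probability~$1$ in $\cE_i$ and the new action to the accepting terminal ($\effunc$ in copy~$0$, $\effcov$ in copy~$1$) is present; symmetrically, a positive rejecting mass yields a rejecting sub-end-component and the action to $\noeffbot$ resp.\ $\noeffc$. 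Hence $\tsched$ can realise any stay-forever split by randomising over these two actions at $s_{\cE_i}$, and conversely these actions are enabled exactly when the corresponding extremal behaviour is realisable inside $\cE_i$. This is the step I expect to be the main obstacle, as it requires the $\omega$-regular limit analysis rather than the plain reachability bookkeeping of the state-based proofs.

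Granting the correspondence, both equivalences are bookkeeping. For the GPR claim, note that $\effcov$ is reachable only through $\Cause_{\cA_{\rEff}}$ and $\effunc$ only without visiting it, so $\Pr^{\tsched}(\Diamond\{\effcov,\effunc\}\mid\Diamond\Cause_{\cA_{\rEff}})=\Pr^{\tsched}(\Diamond\effcov)/\Pr^{\tsched}(\Diamond\Cause_{\cA_{\rEff}})$ and $\Pr^{\tsched}(\Diamond\{\effcov,\effunc\})=\Pr^{\tsched}(\Diamond\effcov)+\Pr^{\tsched}(\Diamond\effunc)$; substituting the matched quantities turns condition~(G) for $\Cause_{\cA_{\rEff}}$ and $\{\effcov,\effunc\}$ into exactly \eqref{rGPR} for $\Cause$ and $\rEff$, and the correspondence makes the two universal quantifications over schedulers coincide. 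For the SPR claim I would use that a run visits at most one state of $\Cause$ in copy~$0$, so reaching $c_{\cA_{\rEff}}$ corresponds precisely to the event $(\neg\Cause)\Until c$; once $c_{\cA_{\rEff}}$ is entered the run is in copy~$1$, where $\effunc$ is unreachable, whence $\Pr^{\tsched}(\Diamond\{\effcov,\effunc\}\mid\Diamond c_{\cA_{\rEff}})=\Pr^{\sched}_{\cM}(\rEff\mid(\neg\Cause)\Until c)$. Thus (G) for $c_{\cA_{\rEff}}$ is equivalent to \eqref{rSPR} for the state $c$, and ranging over all $c\in\Cause$ yields the claim. The minimality side-condition for $c_{\cA_{\rEff}}$ is automatic, since no run visits two states of $\Cause_{\cA_{\rEff}}$ in copy~$0$, so every state of $c_{\cA_{\rEff}}$ is reached by a path avoiding $c_{\cA_{\rEff}}$ beforehand.
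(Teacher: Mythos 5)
Your proposal is correct and follows essentially the same route as the paper's proof: the two-copy product with $\cA_{\rEff}$, a MEC-quotient scheduler correspondence that redirects the ``stay forever and accept/reject'' probability mass to the four fresh terminal states, the resulting identities $\Pr^{\sched}_{\cM}(\Diamond\Cause)=\Pr^{\tsched}(\Diamond\Cause_{\cA_{\rEff}})$, $\Pr^{\sched}_{\cM}(\rEff)=\Pr^{\tsched}(\Diamond\{\effcov,\effunc\})$, $\Pr^{\sched}_{\cM}(\Diamond\Cause\wedge\rEff)=\Pr^{\tsched}(\Diamond\effcov)$ (and their per-state analogues for $c_{\cA_{\rEff}}$), and the observation that minimality of $\Cause_{\cA_{\rEff}}$ holds automatically because no run visits two of its states. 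Your explicit justification that the required fresh actions at $s_{\cE_i}$ exist whenever positive stay-forever accepting/rejecting mass occurs (via the accepting or rejecting sub-end-component inside $\cE_i$) is a point the paper leaves implicit, but it is the same argument in substance.
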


\begin{proof}
	The set $\Cause_{\cA_{\rEff}}$ in $\wminMDP{\cM}{\rEff,\rCause}$ satisfies $\Pr^{\max}_{\wminMDP{\cM}{\rEff,\rCause}}(\neg \Cause_{\cA_{\rEff}} \Until d)>0$ for each state $d\in \Cause_{\cA_{\rEff}}$ by construction, since all states in $\Cause_{\cA_{\rEff}}$ are reachable and a run cannot reach two different states in $\Cause_{\cA_{\rEff}}$.
	So, this minimality condition is satisfied in any case and, of course, $\{\effcov,\effunc\}$  and $\Cause_{\cA_{\rEff}}$ are disjoint.
	
	Now, let $\tsched$ be a scheduler for $\wminMDP{\cM}{\rEff,\Cause}$.
	The scheduler $\tsched$ can be mimicked by a scheduler $\sched$ for $\cM$: 
	As long as $\tsched$ moves through the MEC-quotient of $\cN^\prime$, the scheduler $\sched$ follows this behavior by leaving MECs through the corresponding actions in $\cM$.
	Whenever $\tsched$ moves to one of the states $\effcov$ and $\effunc$, the last step begins in a state $s_{\cE}$ obtained from a MEC $\cE$ of $\cM$.
	In this case, $\sched$ will stay in $\cE$ while ensuring with probability $1$ that the resulting run is accepted by $\cA_{\rEff}$.
	Similarly, if $\tsched$ moves to $\noeffbot$ or $\noeffc$, the scheduler $\sched$ for $\cM$ stays in the corresponding MEC and realizes the acceptance condition of $\cA_{\rEff}$ with probability $0$.
	Vice versa, a scheduler $\sched$ for $\cM$ can be mimicked by a scheduler $\tsched$ for $\wminMDP{\cM}{\rEff,\Cause}$ analogously.
	So in an end component $\cE$, if $\sched$ stays in $\cE$ and ensures that the resulting run is accepted by $\cA_{\rEff}$ with probability $p_1$, 
	stays in $\cE$ and ensures that the resulting run is not accepted by $\cA_{\rEff}$ with probability $p_2$,
	and leaves $\cE$ with probability $p_3$,
	$\tsched$ will move to the corresponding state $\effcov$ or $\effunc$  with probability $p_1$, to $\noeffbot$ or $\noeffc$ with probability $p_2$, and take actions leaving $s_{\cE}$ to other states  with the same probability distribution with which $\sched$ takes the leaving actions of $\cE$.
	
	For such a pair of schedulers $\sched$ and $\tsched$, we observe that 
	\begin{align}
		\label{eqn:schedulers1}
		\Pr^{\sched}_{\cM}(\lozenge \Cause)& = \Pr^{\tsched}_{\wminMDP{\cM}{\rEff,\Cause}}(\lozenge \Cause_{\cA_{\rEff}}), \\
		\label{eqn:schedulers2}
		\Pr^{\sched}_{\cM}(\rEff)& = \Pr^{\tsched}_{\wminMDP{\cM}{\rEff,\Cause}}(\lozenge E), \text{ and} \\
		\label{eqn:schedulers3}
		\Pr^{\sched}_{\cM}(\lozenge \Cause \wedge \rEff)& = \Pr^{\tsched}_{\wminMDP{\cM}{\rEff,\Cause}}(\lozenge \effcov) = \Pr^{\tsched}_{\wminMDP{\cM}{\rEff,\Cause}}(\lozenge \Cause_{\cA_{\rEff}} \wedge \lozenge E),
	\end{align}
	where $E = \{\effcov, \effunc\}$.
	Hence, $\Cause_{\cA_{\rEff}}$ is a GPR cause for $\{\effcov, \effunc\}$ in $\wminMDP{\cM}{\rEff,\Cause}$ if and only if $\Cause$ is a reachability GPR cause for $\rEff$ in $\cM$.
	
	Now consider each element $c \in \Cause$ individually.
	We can use the same argumentation to see that a scheduler $\tsched$ in $\wminMDP{\cM}{\rEff,\Cause}$  can be mimicked by a scheduler $\sched$ for $\cM$, and vice versa,
	such that
	\begin{align*}
		\Pr^\sched_\cM( \neg \Cause \until c)&= \Pr^\tsched_{\wminMDP{\cM}{\rEff,\Cause}}(\lozenge c_{\cA_{\rEff}}), \text{ and} \\
		\Pr^\sched_\cM(\rEff \mid \neg \Cause \until c)& = \Pr^\tsched_{\wminMDP{\cM}{\rEff,\Cause}}(\lozenge \effcov \mid \lozenge c_{\cA_{\rEff}}).
	\end{align*}
	Thus,   $\Cause$ is a reachability SPR cause for $\rEff$ in $\cM$ if and only if $c_{\cA_{\rEff}}$ is a GPR cause for $\{\effcov,\effunc\}$ in $\wminMDP{\cM}{\rEff,\Cause}$ for all $c\in \Cause$.
\end{proof}
This Lemma \ref{lem:regular-effect-wmin} shows that reachability SPR causality shares similarities with GPR causality. 
Our algorithmic results for reachability PR causes stem from the reduction provided by Notation \ref{notation:MDP-mit-min-prob-ab-cause-candidates-regular-effect} and Lemma \ref{lem:regular-effect-wmin}.
As an immediate consequence we can check conditions (rG) and (rS) in $\wminMDP{\cM}{\rEff, \Cause}$ by using the already established algorithms for GPR causes.
This results in the following complexity upper bounds.

\begin{cor}
	\label{cor:complexities-check-regular}
	Let $\cM$ be an MDP and $\rEff \subseteq S^\omega$ an $\omega$-regular language given as DRA $\cA_{\rEff}$.
	Given a set $\Cause \subseteq S$ we can decide
	whether $\Cause$ is a reachability SPR/GPR cause for $\rEff$ in $\coNP$.
\end{cor}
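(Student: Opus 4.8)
The plan is to reduce the decision problem for reachability SPR/GPR causes to the corresponding decision problems for state-based GPR causes in the transformed MDP $\wminMDP{\cM}{\rEff,\Cause}$, and then invoke the already-established $\coNP$ upper bound from Theorem~\ref{thm:checking-GPR-in-poly-space}. The heavy lifting has essentially been done in Notation~\ref{notation:MDP-mit-min-prob-ab-cause-candidates-regular-effect} and Lemma~\ref{lem:regular-effect-wmin}, so the corollary is mostly a matter of assembling the pieces and accounting for the sizes of the objects involved.

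First I would recall the characterizations provided by Lemma~\ref{lem:regular-effect-wmin}: the set $\Cause$ is a reachability GPR cause for $\rEff$ in $\cM$ if and only if $\Cause_{\cA_{\rEff}}$ is a (state-based) GPR cause for $\{\effcov,\effunc\}$ in $\wminMDP{\cM}{\rEff,\Cause}$, and $\Cause$ is a reachability SPR cause for $\rEff$ if and only if, for each $c \in \Cause$, the set $c_{\cA_{\rEff}}$ is a GPR cause for $\{\effcov,\effunc\}$ in the same transformed MDP. Both statements convert a question about $\omega$-regular effects into finitely many instances of the state-based GPR-checking problem, whose complexity is already settled.

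The algorithm I would then describe is: build the product MDP $\cN = \cM \otimes \cA_{\rEff}$, take the two copies (``before $\Cause$''/``after $\Cause$''), perform the MEC analysis using the polynomial-time techniques of \cite{deAlfaro1997} and \cite{ATVA2009} to determine for each MEC whether $\mathit{Acc}(\cA_{\rEff})$ can be enforced with probability $1$ or $0$, and form the MEC-quotient with the four fresh terminal states, yielding $\wminMDP{\cM}{\rEff,\Cause}$. This whole construction runs in polynomial time in $|\cM|$ and $|\cA_{\rEff}|$, and the resulting MDP has size polynomial in these inputs. For the GPR case, I would then run the $\coNP$ check of Theorem~\ref{thm:checking-GPR-in-poly-space} once on the instance $(\wminMDP{\cM}{\rEff,\Cause}, \Cause_{\cA_{\rEff}}, \{\effcov,\effunc\})$. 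For the SPR case, I would run that $\coNP$ check once for each $c \in \Cause$ on the instance $(\wminMDP{\cM}{\rEff,\Cause}, c_{\cA_{\rEff}}, \{\effcov,\effunc\})$, and accept iff all of these return ``yes''; since $\coNP$ is closed under polynomially many conjunctively combined queries (each is a $\coNP$ predicate and a conjunction of polynomially many $\coNP$ predicates is still in $\coNP$), the overall procedure remains in $\coNP$.

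The step I expect to require the most care is the bookkeeping of membership in $\coNP$ rather than any mathematical subtlety. One must verify that the polynomial-time preprocessing (the product construction, the MEC classification, and the extraction of $\Cause_{\cA_{\rEff}}$ and the $c_{\cA_{\rEff}}$) precedes the nondeterministic co-verification, so that the total resource bound is genuinely $\coNP$ and not merely ``$\coNP$ after exponential setup''; this is fine because all preprocessing is deterministic polynomial time and the automaton is part of the input. I would also note the Markov chain special case separately: when $\cM$ has no nondeterministic choices, conditions (rG) and (rS) reduce to evaluating conditional reachability probabilities along the product, which is polynomial-time computable by the conditional-probability algorithms of \cite{TACAS14-condprob}, giving the claimed $\PTIME$ bound for MCs.
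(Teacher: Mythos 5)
Your proposal is correct and follows essentially the same route as the paper: apply Lemma~\ref{lem:regular-effect-wmin} to reduce to the state-based GPR check of Theorem~\ref{thm:checking-GPR-in-poly-space} on $\wminMDP{\cM}{\rEff,\Cause}$, once for the GPR case and once per $c\in\Cause$ (using closure of $\coNP$ under polynomial conjunctions) for the SPR case. The extra bookkeeping you spell out (polynomial-time preprocessing, the Markov chain special case) is consistent with the paper, just stated more explicitly.
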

\begin{proof}
	By Lemma \ref{lem:regular-effect-wmin}, we can use the construction of $\wminMDP{\cM}{\rEff, \Cause}$, which takes polynomial time.
	Then, Theorem \ref{thm:checking-GPR-in-poly-space} allows us to check whether $\Cause$ is a reachability GPR cause in $\coNP$ directly, while we can apply the GPR check to each set $c_{\cA_{\rEff}}$ and the effect $\{\effcov,\effunc\}$ in $\wminMDP{\cM}{\rEff, \Cause}$ for $c\in \Cause$ in order to determine whether $\Cause$ is a reachability SPR $\Cause $ for $\rEff$.
\end{proof}

As in the state-based setting, we can argue that there is a reachability GPR cause iff there is a reachability SPR cause iff there is a singleton reachability SPR cause.
Consequently, the existence of a reachability GPR/SPR cause can be checked by checking for each state $c$ of the MDP whether $\{c\}$ constitutes a reachability SPR cause. 
We conclude:

\begin{cor}
	The existence of a reachability GPR/SPR cause can be decided in $\coNP$.
\end{cor}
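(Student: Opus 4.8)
The plan is to establish the chain of equivalences ``there is a reachability GPR cause iff there is a reachability SPR cause iff there is a singleton reachability SPR cause'' in exact analogy to Lemma~\ref{lem:existence-check} from the state-based setting, and then to combine this with the $\coNP$ check of Corollary~\ref{cor:complexities-check-regular} to obtain the decision procedure. The key structural fact powering everything is Lemma~\ref{lem:regular-effect-wmin}, which reduces reachability PR causality for $\rEff$ to ordinary state-based GPR causality for the reachability effect $\{\effcov,\effunc\}$ in the end-component-free MDP $\wminMDP{\cM}{\rEff,\Cause}$.

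First I would argue the easy implications. Every reachability SPR cause is a reachability GPR cause by the observation already recorded after Definition~\ref{def:regular-effect-PR-causes} (the argument is identical to Lemma~\ref{lemma:strict-implies-global}, since $\Diamond \Cause$ decomposes as the disjoint union of the events $(\neg \Cause)\Until c$ over $c\in \Cause$ and $\Pr^{\sched}_{\cM}(\rEff \mid \Diamond \Cause)$ is the corresponding weighted average). A singleton SPR cause is trivially an SPR cause, hence also a GPR cause. So it remains to prove the one substantive implication, namely that the existence of a reachability \emph{GPR} cause forces the existence of a \emph{singleton} reachability SPR cause.

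For this direction I would mimic the proof of ``(b)$\Longrightarrow$(c)'' in Lemma~\ref{lem:existence-check}. Suppose $\Cause$ is a reachability GPR cause for $\rEff$ in $\cM$. Via Lemma~\ref{lem:regular-effect-wmin} this is equivalent to $\Cause_{\cA_{\rEff}}$ being a state-based GPR cause for $\{\effcov,\effunc\}$ in the transformed MDP $\cN' \eqdef \wminMDP{\cM}{\rEff,\Cause}$, which has no end components. Now I would apply Lemma~\ref{lem:existence-check} directly inside $\cN'$: since $\{\effcov,\effunc\}$ has a GPR cause there, it has a singleton GPR cause, i.e.\ there is a single state $d_0 \in \Cause_{\cA_{\rEff}}$ such that $\{d_0\}$ is an SPR (hence GPR) cause for $\{\effcov,\effunc\}$ in $\cN'$. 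Writing $d_0$ as an enriched state with first component $c_0 \in S$, the set $\{d_0\}$ is contained in $(c_0)_{\cA_{\rEff}}$; I would then check, using the minimality bookkeeping in Notation~\ref{notation:MDP-mit-min-prob-ab-cause-candidates-regular-effect} and the second half of Lemma~\ref{lem:regular-effect-wmin}, that the singleton state set $\{c_0\}$ in the \emph{original} MDP is a reachability SPR cause for $\rEff$. This translates the singleton cause back through the product-and-quotient construction.

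The main obstacle I anticipate is precisely this last translation step: the enriched state $d_0$ lives in $(c_0)_{\cA_{\rEff}}$, but $(c_0)_{\cA_{\rEff}}$ may contain several automaton-decorated copies of $c_0$, so a singleton GPR cause $\{d_0\}$ downstairs does not on its face correspond to the full set $(c_0)_{\cA_{\rEff}}$ that Lemma~\ref{lem:regular-effect-wmin} associates with the candidate $\{c_0\}$. I expect to resolve this exactly as in Lemma~\ref{lem:existence-check}: choose $c_0$ (or rather the enriched state $d_0$) to be a state maximizing $w = \Pr^{\min}_{\cN',d}(\Diamond \{\effcov,\effunc\})$ over the GPR cause, so that the conditional effect probability from $(c_0)_{\cA_{\rEff}}$ is a weighted average bounded by this maximal $w$; Lemma~\ref{lemma:wmin-criterion-PR-causes} then guarantees it suffices to compare against the minimal effect probabilities, yielding that $\{c_0\}$ satisfies (rS). Once the equivalences hold, membership in $\coNP$ is immediate: iterate over the at most $|S|$ candidate states $c$, and for each invoke the $\coNP$ SPR check of Corollary~\ref{cor:complexities-check-regular}; since $\coNP$ is closed under polynomially many conjunctions/disjunctions of this form (the outer loop is a polynomial-size disjunction over a fixed $\coNP$ predicate, and the overall decision reduces to a single such $\coNP$ query on a polynomially larger instance), the existence problem lies in $\coNP$.
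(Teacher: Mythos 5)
Your overall plan coincides with the paper's: the paper, too, settles this corollary by asserting the chain ``a reachability GPR cause exists iff a reachability SPR cause exists iff some singleton is a reachability SPR cause'' (``as in the state-based setting'') and then decides existence by running the $\coNP$ check of Corollary~\ref{cor:complexities-check-regular} on each of the $|S|$ singleton candidates, using closure of $\coNP$ under polynomial-size disjunctions. Your treatment of the easy implications and of that complexity bookkeeping is fine.

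The genuine gap is in the one substantive implication (GPR $\Rightarrow$ singleton SPR), precisely at the obstacle you flag, and your proposed fix does not close it. First, Lemma~\ref{lem:regular-effect-wmin} ties ``$\{c_0\}$ is a reachability SPR cause in $\cM$'' to the set $(c_0)_{\cA_{\rEff}}$ computed in $\wminMDP{\cM}{\rEff,\{c_0\}}$, which is a \emph{different} MDP from $\wminMDP{\cM}{\rEff,\Cause}$: the before/after mode switch now happens at $\{c_0\}$, so $(c_0)_{\cA_{\rEff}}$ contains automaton-decorated copies of $c_0$ reachable only through $\Cause\setminus\{c_0\}$. Such states do not occur in $\Cause_{\cA_{\rEff}}$ at all, so the hypothesis that $\Cause_{\cA_{\rEff}}$ is a GPR cause gives no bound on their minimal effect probabilities, and your weighted-average step has nothing to average against. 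Second, even the state-based template you invoke (the proof of Lemma~\ref{lem:existence-check}) only yields $w_{c_0}>\Pr^{\sched}_{\cM}(\Diamond\Effect)$ for schedulers that minimize the effect probability after \emph{every} state of $\Cause$, whereas singleton SPR causality (via Lemma~\ref{lemma:wmin-criterion-PR-causes}) needs this bound for schedulers that minimize only after $c_0$ and may maximize after the other cause states; converting such a scheduler into a fully minimizing one only lowers $\Pr^{\sched}_{\cM}(\Diamond\Effect)$, so the bound does not transfer. This mismatch is fatal for the sketched argument: take an MDP where $\init$ moves to $c_0$, $c_1$ and a terminal non-effect state with probabilities $1/10$, $6/10$, $3/10$, where $c_0$ reaches $\eff$ with fixed probability $1/2$, and where $c_1$ reaches $\eff$ with scheduler-controlled probability in $[1/10,9/10]$. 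Every effect path passes through $\{c_0,c_1\}$, so (G) holds for the pair under every scheduler, yet $\{c_0\}$ fails (S) under the $9/10$-scheduler, $\{c_1\}$ fails under the $1/10$-scheduler, and no other singleton disjoint from the effect works. In the $\omega$-regular reading this example is rescued only by candidates inside the ``effect territory'' (here $\{\eff\}$ itself is a reachability SPR cause for $\lozenge\eff$, since $\Pr^{\sched}_{\cM}(\lozenge\eff)\le 59/100<1$ for all schedulers), i.e., by causes that the state-based mimicry never produces. So the implication you need requires an argument genuinely different from the one you sketch --- and you should be aware that the paper's own one-sentence justification glosses over exactly the same difficulty.
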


\subsubsection{Computing quality measures of reachability PR causes}
\label{sec:quality_reachability_causes}

As in Section~\ref{sec:acc-measures}, we can view reachability PR causes as binary classifiers.
This leads to the confusion matrix as before (Figure \ref{fig: confusion matrix}) with the difference that this time the path does not ``hit'' the effect set, but rather $\rEff$ holds on the path.
Hence, we define the following entries of the confusion matrix:
Given $\rEff$, $\Cause$ and a scheduler $\sched$, we let
\[
\begin{array}{rclrcl}
	\mathsf{tp}^\sched & \eqdef & \Pr_\cM^\sched(\lozenge \Cause \wedge \rEff), & \mathsf{tn}^\sched & \eqdef & \Pr_\cM^\sched(\neg \lozenge \Cause \wedge \neg \rEff),
	\\[1ex]
	\mathsf{fp}^\sched & \eqdef & \Pr_\cM^\sched(\lozenge \Cause \wedge \neg \rEff), &\mathsf{fn}^\sched & \eqdef & \Pr_\cM^\sched(\neg \lozenge \Cause \wedge \rEff).
\end{array}
\]

\begin{lem}
	\label{lem:preserve_confusion}
	Let $\cM$ be as above, $\rEff$ an $\omega$-regular effect given by the DRA $\cA_{\rEff}$ and let $\Cause\subseteq S$ be a reachability GPR/SPR cause. Further, let 
	$\wminMDP{\cM}{\rEff, \Cause}$ be as constructed above with the set of cause states $\Cause_{\cA_\rEff}$ and the set of effect states $\{\effcov,\effunc\}$.
	Then, for each scheduler $\sched$ for $\cM$, there is a scheduler $\tsched$ for $\wminMDP{\cM}{\rEff, \Cause}$, and vice versa, such that 
	\[
	\mathsf{tp}^\sched = \mathsf{tp}^\tsched, \qquad \mathsf{tn}^\sched = \mathsf{tn}^\tsched, \qquad
	\mathsf{fp}^\sched = \mathsf{fp}^\tsched, \qquad \mathsf{fn}^\sched = \mathsf{fn}^\tsched,
	\]
	where the values for $\tsched$ are defined in $\wminMDP{\cM}{\rEff, \Cause}$ as for the state-based setting (cf. Section~\ref{sec:acc-measures}).
\end{lem}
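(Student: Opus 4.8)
The plan is to reuse verbatim the scheduler correspondence already established in the proof of Lemma~\ref{lem:regular-effect-wmin} and to read off the four confusion-matrix entries from the three identities proved there, rather than constructing any new schedulers. Recall that for a scheduler $\sched$ of $\cM$ and the mimicking scheduler $\tsched$ of $\wminMDP{\cM}{\rEff,\Cause}$ (and symmetrically in the other direction), equations \eqref{eqn:schedulers1}--\eqref{eqn:schedulers3} give
\[
\Pr^{\sched}_{\cM}(\lozenge \Cause) = \Pr^{\tsched}(\lozenge \Cause_{\cA_{\rEff}}), \quad
\Pr^{\sched}_{\cM}(\rEff) = \Pr^{\tsched}(\lozenge E), \quad
\Pr^{\sched}_{\cM}(\lozenge \Cause \wedge \rEff) = \Pr^{\tsched}(\lozenge \effcov),
\]
where $E = \{\effcov,\effunc\}$ and all right-hand probabilities are taken in $\wminMDP{\cM}{\rEff,\Cause}$. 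So the same pair $(\sched,\tsched)$ will witness the claimed equalities.

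Two structural facts about $\wminMDP{\cM}{\rEff,\Cause}$ do the remaining work, and I would state them up front. First, since this MDP has no end components, a terminal state is reached almost surely under every $\tsched$, so the four terminal states $\effcov,\noeffc,\effunc,\noeffbot$ partition the path space and their reachability probabilities sum to $1$. Second, by the two-copy construction, reaching $\Cause_{\cA_{\rEff}}$ is exactly the event of entering copy~$1$, from which the only reachable terminal states are $\effcov$ and $\noeffc$; hence $\Pr^{\tsched}(\lozenge \Cause_{\cA_{\rEff}}) = \Pr^{\tsched}(\lozenge \effcov) + \Pr^{\tsched}(\lozenge \noeffc)$.

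With these in hand the four identities follow by elementary arithmetic. The true-positive case $\mathsf{tp}^\sched = \Pr^{\sched}_{\cM}(\lozenge \Cause \wedge \rEff) = \Pr^{\tsched}(\lozenge \effcov) = \mathsf{tp}^\tsched$ is immediate. For the false negative I would write $\mathsf{fn}^\sched = \Pr^{\sched}_{\cM}(\rEff) - \Pr^{\sched}_{\cM}(\lozenge \Cause \wedge \rEff)$ and substitute the second and third identities together with $\Pr^{\tsched}(\lozenge E) = \Pr^{\tsched}(\lozenge \effcov) + \Pr^{\tsched}(\lozenge \effunc)$ to obtain $\Pr^{\tsched}(\lozenge \effunc) = \mathsf{fn}^\tsched$. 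Symmetrically, $\mathsf{fp}^\sched = \Pr^{\sched}_{\cM}(\lozenge \Cause) - \Pr^{\sched}_{\cM}(\lozenge \Cause \wedge \rEff)$ becomes $\Pr^{\tsched}(\lozenge \noeffc) = \mathsf{fp}^\tsched$ via the first and third identities and the copy-$1$ partition. Finally $\mathsf{tn}^\sched = 1 - \mathsf{tp}^\sched - \mathsf{fp}^\sched - \mathsf{fn}^\sched = \Pr^{\tsched}(\lozenge \noeffbot) = \mathsf{tn}^\tsched$ follows from the four-state partition.

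Since all the genuine work is imported from Lemma~\ref{lem:regular-effect-wmin}, the only real content here is bookkeeping: matching each confusion-matrix cell to the correct terminal state and confirming that ``$\rEff$ holds'' corresponds to reaching $E$ even for runs that realize the acceptance condition inside a MEC. This correspondence --- including the handling of acceptance achieved with probability $0$ or $1$ within an end component --- is precisely what the construction of $\wminMDP{\cM}{\rEff,\Cause}$ and the scheduler translation in Lemma~\ref{lem:regular-effect-wmin} were designed to guarantee, so I expect no substantial obstacle beyond writing out these substitutions carefully. Accordingly, the actual proof can be kept short by simply invoking Lemma~\ref{lem:regular-effect-wmin} and performing the arithmetic above.
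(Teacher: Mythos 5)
Your proposal is correct and takes essentially the same route as the paper: the paper's own proof likewise just invokes the scheduler correspondence and Equations (\ref{eqn:schedulers1})--(\ref{eqn:schedulers3}) established in the proof of Lemma~\ref{lem:regular-effect-wmin} and states that these imply the claimed equalities. Your extra bookkeeping---identifying $\mathsf{fn}$, $\mathsf{fp}$, $\mathsf{tn}$ with reaching $\effunc$, $\noeffc$, $\noeffbot$ via the two-copy structure and the absence of end components---merely spells out what the paper leaves implicit.
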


\begin{proof}
	In the proof of Lemma \ref{lem:regular-effect-wmin}, we have seen that for each scheduler $\sched$ for $\cM$, we can find a scheduler $\tsched$ for $\wminMDP{\cM}{\rEff, \Cause}$ and vice versa
	such that Equations (\ref{eqn:schedulers1}) - (\ref{eqn:schedulers3}) hold. This implies the equalities claimed here.
\end{proof}

Analogously to Section~\ref{sec:acc-measures}, we can now define $\relcov$, $\covratio$, and $\fscore$ of a reachability PR cause as the infimum of these values in terms of 
$\mathsf{tp}^\sched$, $\mathsf{fp}^\sched$, $\mathsf{tn}^\sched$, and $\mathsf{fn}^\sched$ over all schedulers $\sched$ for which the respective quality measures are defined.
The computation of these values can then be done with the methods from the state-based setting:

\begin{cor}
	\label{cor:reachability-comp-quality}
	Let $\cM$ be an MDP and $\rEff \subseteq S^\omega$ an $\omega$-regular language given as DRA $\cA_{\rEff}$.
	Given a reachability SPR/GPR cause $\Cause \subseteq S$ we can compute $\recall(\Cause), \covrat(\Cause)$ and $\fscore(\Cause)$ in polynomial time.
\end{cor}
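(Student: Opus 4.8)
The plan is to push the entire computation through the end-component-free MDP $\wminMDP{\cM}{\rEff,\Cause}$ and then reuse verbatim the polynomial-time algorithms developed for the state-based setting in Section~\ref{sec:comp-acc-measures-fixed-cause}. First I would note that $\wminMDP{\cM}{\rEff,\Cause}$ is constructed in polynomial time by Notation~\ref{notation:MDP-mit-min-prob-ab-cause-candidates-regular-effect}: it is the product of $\cM$ with the DRA $\cA_{\rEff}$, doubled into a ``before $\Cause$'' and an ``after $\Cause$'' copy, with its maximal end components collapsed and equipped with the four fresh terminal states $\effcov$, $\noeffc$, $\effunc$ and $\noeffbot$. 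By construction this MDP has no end components, and its four terminal states correspond precisely to the four cells of the confusion matrix of Section~\ref{sec:acc-measures}; in other words, $\wminMDP{\cM}{\rEff,\Cause}$ is already in the canonical form assumed by (A1)--(A3) in the state-based case.

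The key reduction step is Lemma~\ref{lem:preserve_confusion}: for every scheduler $\sched$ of $\cM$ there is a scheduler $\tsched$ of $\wminMDP{\cM}{\rEff,\Cause}$ with $\mathsf{tp}^\sched=\mathsf{tp}^\tsched$, $\mathsf{fp}^\sched=\mathsf{fp}^\tsched$, $\mathsf{tn}^\sched=\mathsf{tn}^\tsched$ and $\mathsf{fn}^\sched=\mathsf{fn}^\tsched$, and vice versa. Since $\recall$, $\covrat$ and $\fscore$ are by definition infima over all admissible schedulers of one fixed function of these four entries, this value-preserving correspondence in both directions immediately yields $\recall_{\cM}(\Cause)=\recall_{\wminMDP{\cM}{\rEff,\Cause}}(\Cause_{\cA_{\rEff}})$ together with the analogous identities for $\covrat$ and $\fscore$. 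This is exactly the role that Lemmata~\ref{lemma:accuracy-measures-M-and-Mcause} and~\ref{lemma:accuracy-measures-M-and-Mcause2} play in the state-based proofs, so the only thing to verify is that the admissibility side-conditions (positive effect, positive cause, respectively positive uncovered-effect probability) are respected by the bijection; they are, because the events $\Diamond\Cause$, $\rEff$ and their combinations are mapped to the reachability events for $\effcov$, $\effunc$, $\noeffc$ and $\noeffbot$.

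It then remains to compute the three measures inside $\wminMDP{\cM}{\rEff,\Cause}$, where each reduces to a known polynomial-time problem. The recall is a worst-case conditional reachability probability $\Pr^{\min}(\Diamond\Cause_{\cA_{\rEff}}\mid\Diamond\{\effcov,\effunc\})$, computable in polynomial time by \cite{TACAS14-condprob,Maercker-PhD20}; the coverage ratio follows from the same conditional-probability computation via the elementary identity of Corollary~\ref{cor:covratio-in-PTIME}; and the f-score is obtained, after the corner case $\fscore(\Cause)=0$ is handled as in Lemma~\ref{lem:fscore=0}, from the extremal ratio $\sup_\sched\bigl(\Pr^\sched(\Diamond\noeffc)+\Pr^\sched(\Diamond\effunc)\bigr)/\Pr^\sched(\Diamond\effcov)$, which is computable in polynomial time by Theorem~\ref{thm:comp-Q} exactly as in the proof of Theorem~\ref{fscore-in-PTIME}. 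I do not expect a genuine obstacle here: all the heavy lifting already sits in Lemma~\ref{lem:preserve_confusion} and in the state-based theorems, and the only point to get right is that the transformation output meets their hypotheses, namely absence of end components and disjointness of the terminal target sets, both of which hold by design.
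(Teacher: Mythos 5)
Your proposal is correct and takes essentially the same route as the paper's own (very terse) proof, which likewise transfers the confusion-matrix entries to $\wminMDP{\cM}{\rEff,\Cause}$ via Lemma~\ref{lem:preserve_confusion} and then invokes Corollary~\ref{cor:covratio-in-PTIME} and Theorem~\ref{fscore-in-PTIME} (whose ingredients, Theorem~\ref{thm:comp-Q} and Lemma~\ref{lem:fscore=0}, you spell out explicitly). One cosmetic inaccuracy: $\wminMDP{\cM}{\rEff,\Cause}$ does not literally satisfy (A2) --- the states of $\Cause_{\cA_{\rEff}}$ keep their original actions into the ``after-cause'' copy rather than a single $\gamma$-transition --- but this is immaterial, since the state-based computations you invoke only need the absence of end components and the correspondence between the four terminal states and the confusion-matrix entries, both of which hold by construction.
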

\begin{proof}
	Use Lemma \ref{lem:preserve_confusion}, Corollary \ref{cor:covratio-in-PTIME} and Theorem \ref{fscore-in-PTIME}.
\end{proof}

\subsubsection{Finding quality optimal reachability PR causes}
\label{sec:finding_rcauses}

When trying to find good causes for an $\omega$-regular effect $\rEff$, we cannot say that effects and causes should be disjoint as in the setting where effects and causes were sets of states. 
This leads to the possibility that causes might exist that do not capture the intuition behind temporal priority: E.g., if the effect $\rEff$ is a reachability property $\lozenge E$ for a set of states $E$, the set of states $E$ itself will be a reachability PR cause if for each state $c\in E$, $\Pr^{\max}_{\cM}(\neg E \Until c)>0$ holds. Furthermore, there might be causes $C$ that can only be reached after $E$ has already been reached.

In order to account for the temporal priority of causes, we will include the following condition when trying to find good causes: 
We require for a cause $\Cause\subseteq S$ that
\begin{align}
	\Pr^{\min}_{\cM}(\rEff\mid \neg \Cause \Until c)<1 \text{ for all $c \in  \Cause$.} \label{eq:TempPrio}\tag{TempPrio}
\end{align}
Intuitively, this states that it is never already certain that the execution will belong to $\rEff$ when the cause is reached.
A variation of this criterium has been proposed also in \cite{ICALP21}.

\begin{rem}
	\label{rem:tempprio}
	The condition \eqref{eq:TempPrio} could be added to the definition of reachability PR causes. After the product construction in Notation \ref{notation:MDP-mit-min-prob-ab-cause-candidates-regular-effect}, the condition can easily be checked for a given cause candidate $\Cause\subseteq S$: For each $c\in \Cause$, there must be at least one state with $c$ in the first component in $\wminMDP{\cM}{\rEff, \Cause}$ such that $\noeffc$ is reachable from this state.
	
	Furthermore, this condition is stronger than the requirement that causes and effects are disjoint in the state-based setting. In the state-based setting, however, the analogue of condition \eqref{eq:TempPrio} could also be included easily. Instead of having to be disjoint from a set of effect states $\Eff$, a cause $\Cause$ would then simply not be allowed to contain any state $s$ with $\Pr^{\min}_{\cM,s}(\Diamond \Eff)=1$. \Ende
\end{rem}

Now, we want to find recall- and coverage ratio-optimal reachability SPR causes.
As in the state-based setting, we define the set $\cC$ of all possible singleton reachability SPR causes for $\rEff$
that in addition satisfy \eqref{eq:TempPrio} as explained in Remark \ref{rem:tempprio}.
By Corollary \ref{cor:complexities-check-regular}, we can check whether a state $c\in S$ is a singleton reachability SPR cause in coNP; whether there is at least one state with $c$ in the first component in $\wminMDP{\cM}{\rEff, \Cause}$ such that $\noeffc$ is reachable from this state is checkable in polynomial time.
Thus, we can again define the set of singleton reachability SPR causes 
\[\cC = \{s \in S \mid s \text{ is reachability SPR cause satisfying } \Pr^{\min}_{\cM}(\rEff\mid \neg \Cause \Until c)<1 \}.\]
As before, the canonical cause $\CanCau$ is now the set of states $c \in \cC$ for which there is a scheduler $\sched$ with $\Pr^\sched(\neg \cC \until c)$.

For the complexity of the computation of the recall- and coverage ratio-optimal canonical cause and its values, the observations above lead us to the complexity class
$\PFNP$ as defined in \cite{Selman1994}. It consists of all functions that can be computed in polynomial time with access to an $\NP$-oracle, or equivalently a $\coNP$-oracle.

\begin{prop}
	If $\cC \neq \emptyset$ then $\CanCau$ is a ratio- and recall optimal reachability SPR cause for $\rEff$.
	The threshold problem for the coverage ratio and the recall can be decided in $\coNP$.
	The optimal values $\recall(\CanCau)$ and $\covrat(\CanCau)$ can be computed in $\PFNP$.
\end{prop}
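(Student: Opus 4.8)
The plan is to transfer the reasoning behind Theorem~\ref{thm:optimality-of-canonical-SPR-cause} to the $\omega$-regular setting and then to exploit the monotonicity of conditional reachability in the cause set, which keeps the complexity of the \emph{threshold} problem below that of actually computing the optimal values.

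First I would establish that $\CanCau$ is a reachability SPR cause satisfying \eqref{eq:TempPrio}. As in Theorem~\ref{thm:optimality-of-canonical-SPR-cause}, every subset of $\cC$ obeying the front/minimality requirement is again a reachability SPR cause, where the translation of reachability SPR causality into a state-based GPR condition given by Lemma~\ref{lem:regular-effect-wmin} is what makes the state-based argument carry over; and since every $c\in\CanCau\subseteq\cC$ satisfies \eqref{eq:TempPrio} by definition of $\cC$, so does $\CanCau$. For optimality I would first note that $\CanCau$ being the front of $\cC$ makes the path events $\Diamond\CanCau$ and $\Diamond\cC$ equivalent (a path reaching $\cC$ visits a $\CanCau$-state at its first $\cC$-visit, using that all states are reachable), and that every reachability SPR cause $C$ satisfying \eqref{eq:TempPrio} is contained in $\cC$. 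Hence $\Diamond C\Rightarrow\Diamond\CanCau$, which yields the path-wise coverage inequalities $\mathsf{tp}^{\sched}_{C}\le\mathsf{tp}^{\sched}_{\CanCau}$ and $\mathsf{fn}^{\sched}_{C}\ge\mathsf{fn}^{\sched}_{\CanCau}$ for every scheduler $\sched$. Since $\Pr^{\sched}_{\cM}(\rEff)=\mathsf{tp}^{\sched}+\mathsf{fn}^{\sched}$ does not depend on the cause, this gives $\recall^{\sched}(C)\le\recall^{\sched}(\CanCau)$ and $\covrat^{\sched}(C)\le\covrat^{\sched}(\CanCau)$ pointwise, and taking the infimum over $\sched$ shows that $\CanCau$ is recall- and ratio-optimal, consistently with Lemma~\ref{lemma:recall-opt=ratio-opt}.

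For the $\coNP$ bound on the threshold problem I would use that, by the equivalence $\Diamond\CanCau\equiv\Diamond\cC$, the optimal recall equals $\Pr^{\min}_{\cM}(\Diamond\cC\mid\rEff)$, which is monotone non-decreasing in the set $\cC$. The complement of the threshold problem (``no reachability SPR cause attains recall $\ge\vartheta$'', i.e.\ $\Pr^{\min}_{\cM}(\Diamond\cC\mid\rEff)<\vartheta$) then lies in $\NP$: a certificate consists of a superset $\hat\cC\supseteq\cC$ together with, for each state in $S\setminus\hat\cC$, an $\NP$-witness that it is \emph{not} a singleton reachability SPR cause (non-membership in $\cC$ is in $\NP$ by Corollary~\ref{cor:complexities-check-regular} together with the polynomial check of \eqref{eq:TempPrio}). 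The verifier checks these non-membership witnesses, which guarantees $\cC\subseteq\hat\cC$, and then computes $\Pr^{\min}_{\cM}(\Diamond\hat\cC\mid\rEff)$ in polynomial time (Corollary~\ref{cor:reachability-comp-quality}); by monotonicity $\Pr^{\min}_{\cM}(\Diamond\cC\mid\rEff)\le\Pr^{\min}_{\cM}(\Diamond\hat\cC\mid\rEff)$, so $\Pr^{\min}_{\cM}(\Diamond\hat\cC\mid\rEff)<\vartheta$ already certifies the complement, and taking $\hat\cC=\cC$ shows that such a certificate always exists when the optimal recall is below $\vartheta$. The coverage-ratio threshold is handled identically through the relation $\covrat^{\sched}(C)=1/(1/\recall^{\sched}(C)-1)$ from Corollary~\ref{cor:covratio-in-PTIME}, which reduces it to a recall threshold with an adjusted bound.

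Finally, computing the exact optimal values in $\PFNP$ is a direct oracle computation: determine $\cC$ by asking the ($\coNP$, hence $\NP$-oracle-answerable) membership query of Corollary~\ref{cor:complexities-check-regular} for each of the polynomially many states, extract the front $\CanCau$ by a polynomial-time graph search, and then compute $\recall(\CanCau)=\Pr^{\min}_{\cM}(\Diamond\CanCau\mid\rEff)$ and $\covrat(\CanCau)$ in polynomial time via Corollaries~\ref{cor:reachability-comp-quality} and~\ref{cor:covratio-in-PTIME}. The step I expect to be most delicate is the optimality correspondence of the first paragraph: verifying, as in the state-based case, that enlarging a cause set, and thereby the effect-minimizing states of the $\Cause$-dependent transformation $\wminMDP{\cM}{\rEff,\Cause}$, preserves reachability SPR causality, so that every reachability SPR cause satisfying \eqref{eq:TempPrio} is indeed a subset of $\cC$ and every suitable subset of $\cC$ is a cause. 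The genuinely new ingredient, compared with the polynomial-time state-based result, is the monotone \emph{superset} certificate, which keeps the threshold problem in $\coNP$ even though membership in $\cC$ is only a $\coNP$ property.
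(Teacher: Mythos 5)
Your proposal follows the paper's own proof essentially step for step: optimality via the path-wise coverage argument transferred from Theorem~\ref{thm:optimality-of-canonical-SPR-cause}, the $\coNP$ bound via a certificate listing non-members of $\cC$ (checkable in polynomial time by Corollary~\ref{cor:complexities-check-regular}) whose complement is a superset of $\CanCau$ with recall computable in polynomial time and monotonically bounding the optimum, and the $\PFNP$ bound via per-state oracle queries followed by a polynomial-time front extraction and value computation. The only difference is presentational: you make explicit the monotonicity of $\Pr^{\min}_{\cM}(\Diamond\cdot\mid\rEff)$ and the equivalence $\Diamond\CanCau\equiv\Diamond\cC$, which the paper leaves implicit.
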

\begin{proof}
	The optimality of $\CanCau$ follows by the arguments used for Theorem \ref{thm:optimality-of-canonical-SPR-cause}.
	In order to compute $\CanCau$ we check for each state $s$ whether (rS) does not hold in $\NP$ and then take the remaining states $c \in S$ to define $\cC$ by checking \eqref{eq:TempPrio} for each $c$.
	If \eqref{eq:TempPrio} holds then $c \in \cC$ and $\CanCau = \{c \in \cC \mid \Pr^{\max}_\cM(\neg \cC \until c) > 0\}$.
	This allows to compute the recall- and coverage ratio-optimal cause $\CanCau$ in $\PFNP$.
	
	For the threshold problem whether there is a reachability SPR cause with recall at least a given $\vartheta \in \Rat$, the $\coNP$ upper bound can be shown as follows:
	For each state $c$ that does not belong to $\cC$, i.e., that is not a singleton reachability SPR cause, there is a polynomial size certificate for this, as it can be checked in $\coNP$.
	The collection of all states that do not belong to $\cC$ together with these certificates (they do not belong to $\cC$) can now serve as a certificate that the optimal recall is less than $\vartheta$ in this the case.
	Given such a collection of certificates, we can check in polynomial time that indeed all provided states do not belong to $\cC$.
	The complement of the provided states forms a super set $\cD$ of  $\CanCau$.
	Computing the recall of the set $\cD$ can then be done in polynomial time as in Corollary \ref{cor:reachability-comp-quality}. This value is an upper bound for the recall of $\CanCau$. Note that if all states that do not belong to  $\cC$ are given in the certificate, the value even equals the recall of $\CanCau$. 
	
	So, if the optimal value is less than $\vartheta$, this is witnessed by the described certificate containing all states not in $\cC$.
	Vice versa, if a certificate is given resulting in a super set $\cD$ of $\CanCau$ such that the recall of $\cD$ is less than $\vartheta$, then there is no reachability SPR cause with a recall of at least $\vartheta$. So, the threshold problem lies in $\coNP$. For the coverage ratio, the analogous argument works.
\end{proof}

For the threshold problems for f-score-optimal reachability SPR causes and   reachability GPR causes optimal with respect to recall, coverage ration or f-score that satisfy \eqref{eq:TempPrio},
we rely on the guess-and-check approach used for optimal GPR causes in the state-based setting.
We guess a subset $\Cause$ of states of the MDP, check whether we found a reachability SPR/GPR cause in $\coNP$, and compute the quality measure under consideration in polynomial time as explained in the previous section.
For the computation of an optimal cause, we obtain a polynomial-space algorithm.
\begin{cor}
	\label{cor:complexities-regular-quality-GPR-threshold}
	Let $\cM$ be an MDP and $\rEff$ be an $\omega$-regular language given by $\cA_{\rEff}$.
	Given $\vartheta \in \Rat$, deciding whether there exists a reachability GPR cause $\Cause$ with $\recall(\Cause) \geq \vartheta$ (resp. $\covrat(\Cause) \geq \vartheta, \fscore(\Cause) \geq \vartheta$) can be done in $\Sigma^P_2$ and is $\NP$-hard.
	$\NP$-hardness even holds for Markov chains.
	
	Deciding whether there is a reachability SPR cause  $\Cause$ with $\fscore(\Cause) \geq \vartheta$ can be done in $\Sigma^P_2$.
	A recall-, covratio-, or f-score-optimal reachability GPR cause as well as an f-score-optimal reachability SPR cause can be computed in polynomial space.
\end{cor}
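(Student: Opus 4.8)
The plan is to prove the three complexity statements separately, in each case reusing the machinery of Notation~\ref{notation:MDP-mit-min-prob-ab-cause-candidates-regular-effect} together with the checking and computation results Corollary~\ref{cor:complexities-check-regular} and Corollary~\ref{cor:reachability-comp-quality}. Every upper bound follows the guess-and-check pattern already used for the state-based GPR threshold problems in Theorem~\ref{thm:GPR-recall-NPhard-and-in-PSPACE}: a candidate set $\Cause \subseteq S$ is guessed existentially, its status as a reachability GPR (resp.\ SPR) cause satisfying \eqref{eq:TempPrio} is verified, and the relevant quality value is then computed in polynomial time and compared with $\vartheta$.

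For $\Sigma^P_2$-membership of GPR-covratio, GPR-recall, and GPR-f-score I would non-deterministically guess $\Cause \subseteq S$, then (i) verify in $\coNP$ that $\Cause$ is a reachability GPR cause via Corollary~\ref{cor:complexities-check-regular}, (ii) check \eqref{eq:TempPrio} in polynomial time as in Remark~\ref{rem:tempprio}, and (iii) compute $\covrat(\Cause)$, $\recall(\Cause)$, or $\fscore(\Cause)$ in polynomial time by Corollary~\ref{cor:reachability-comp-quality} and compare it with $\vartheta$. The existential guess combined with the $\coNP$ verification lands at the second level $\Sigma^P_2$ of the polynomial hierarchy, exactly as in the state-based argument. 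The same recipe gives the SPR-f-score bound: guess $\Cause$, verify the reachability SPR condition in $\coNP$ (Corollary~\ref{cor:complexities-check-regular}), check \eqref{eq:TempPrio}, and compute $\fscore(\Cause)$ in polynomial time. I would point out \emph{why} this only yields $\Sigma^P_2$ and not the $\NP\cap\coNP$ bound of the state-based Theorem~\ref{fscore-threshold-poblem-via-stochMPgames}: there, SPR-ness of a single state is decidable in polynomial time by Algorithm~\ref{alg:SPR-check}, which fed the stochastic-shortest-path game reduction, whereas here SPR-ness of $\Cause$ reduces by Lemma~\ref{lem:regular-effect-wmin} to a GPR check for each slice $c_{\cA_{\rEff}}$, an inherently $\coNP$ test, so the game construction no longer applies.

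For $\NP$-hardness of the GPR threshold problems, already over Markov chains, the plan is to adapt the knapsack-encoding tree-like Markov chains of Theorem~\ref{thm:GPR-recall-NPhard-and-in-PSPACE} and take the $\omega$-regular effect to be the reachability property $\rEff = \lozenge \Effect$. By the remark following Definition~\ref{def:regular-effect-PR-causes}, a set $\Cause$ disjoint from $\Effect$ is a reachability GPR cause for $\lozenge \Effect$ exactly when it is a state-based GPR cause for $\Effect$, so the encodings carry over at the level of candidate causes $C_I$. The hard part will be the interaction with \eqref{eq:TempPrio}, which is in fact indispensable: without it the reduction collapses, since $\Effect$ itself becomes a reachability GPR cause of recall $1$ and trivializes the threshold question, while \emph{with} it the terminal effect states and every state reaching $\Effect$ almost surely are excluded. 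In those constructions each candidate state $s_i$ with $i \geq 1$ already satisfies \eqref{eq:TempPrio}, since it reaches a non-effect state with positive probability, but the distinguished state $s_0$ with $P(s_0,\eff_0)=w_0=1$ does not. I would therefore perturb the construction by lowering $w_0$ slightly below $1$, routing the missing probability mass to a fresh non-effect terminal state, and re-derive $p_0$ so that the encoded knapsack (in)equalities are preserved; the integrality slack of the knapsack instance guarantees that a sufficiently small rational perturbation leaves every strict threshold comparison intact. This makes $s_0$ admissible under \eqref{eq:TempPrio} while keeping the equivalence between solvability of the knapsack instance and the existence of a \eqref{eq:TempPrio}-admissible reachability GPR cause meeting the threshold.

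Finally, the polynomial-space computation of a recall-, covratio-, or f-score-optimal reachability GPR cause, and of an f-score-optimal reachability SPR cause, follows by enumerating all candidate sets $\Cause \subseteq S$ one at a time: for each candidate I would run the $\coNP$ causality check of Corollary~\ref{cor:complexities-check-regular} together with the polynomial-time check of \eqref{eq:TempPrio} and the polynomial-time quality computation of Corollary~\ref{cor:reachability-comp-quality}, retaining only the best candidate encountered so far. Since each individual test lies in $\coNP \subseteq \PSPACE$ and the working space can be reused between candidates, the overall procedure runs in polynomial space.
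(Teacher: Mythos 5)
Your upper-bound arguments are exactly the paper's: the same guess-and-check scheme (guess $\Cause\subseteq S$, verify reachability GPR causality in $\coNP$ via Corollary~\ref{cor:complexities-check-regular}, compute the quality value in polynomial time via Corollary~\ref{cor:reachability-comp-quality}, compare with $\vartheta$) gives $\Sigma^P_2$, and the same exhaustive enumeration with reused space gives the polynomial-space bound for optimal causes. The only cosmetic divergence is SPR-f-score: you guess $\Cause$ and invoke the $\coNP$ check for reachability SPR causes directly, whereas the paper routes through Lemma~\ref{lem:regular-effect-wmin}, viewing the reachability SPR cause as the GPR cause $\bigcup_{c\in\Cause} c_{\cA_{\rEff}}$ in $\wminMDP{\cM}{\rEff,\Cause}$ and citing the state-based bound of Theorem~\ref{thm:GPR-recall-NPhard-and-in-PSPACE}; both are valid and land in the same class, and your side remark on why the $\NP\cap\coNP$ game argument of Theorem~\ref{fscore-threshold-poblem-via-stochMPgames} does not transfer is consistent with the paper's own narrative.

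The substantive difference is the $\NP$-hardness part, where you are more careful than the paper. The paper dismisses the lower bounds in one sentence --- state-based GPR causes for $\Effect$ embed into reachability GPR causes for $\lozenge\Effect$ --- but this embedding alone is not a reduction, for precisely the reason you identify. Without \eqref{eq:TempPrio} the threshold problems trivialize on the knapsack chains of Theorem~\ref{thm:GPR-recall-NPhard-and-in-PSPACE}, since $\Effect$ itself is then a reachability GPR cause with recall, precision and f-score equal to $1$; with \eqref{eq:TempPrio} (which the text preceding the corollary intends), those chains admit no cause at all, because every state-based GPR cause there must contain the state $s_0$ with $P(s_0,\eff_0)=w_0=1$, and that state violates \eqref{eq:TempPrio}. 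Your repair is sound: lowering $w_0$ to $1-\varepsilon$ and routing the mass $\varepsilon$ to a fresh non-effect terminal makes $s_0$ admissible while keeping the sets $C_I$ as the only cause candidates (the $s_i$ with $i\geq 1$ still have $w_i<\tfrac12$, and neither $\init$ nor non-effect terminals can raise the probability); the GPR condition for $C_I$ then reads $\sum_{i\in I}A_i < A - 2A\varepsilon$, which by integrality of the knapsack data coincides with $\sum_{i\in I}A_i < A$ whenever $\varepsilon < 1/(2A)$, and the recall threshold can either be adjusted to $2\bigl(p_0(1-\varepsilon)+b\bigr)$ or left untouched using the same slack. So your proposal not only matches the paper's proof where it is complete, but closes a genuine gap that the paper's ``obviously the lower bounds extend'' leaves open.
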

\begin{proof}
	Obviously the lower bounds extend to this setting as we can interpret GPR causes for $\Eff$ as reachability GPR causes for $\lozenge \Eff$.
	The upper bounds extend to this setting by using the construction from Notation \ref{notation:MDP-mit-min-prob-ab-cause-candidates-regular-effect}.
	Since for Theorem \ref{thm:GPR-recall-NPhard-and-in-PSPACE} we relied on guess-and-check algorithms to establish the upper bounds for the threshold problems, we can use analogous algorithms in setting of $\omega$-regular effects.
	We guess a set $\Cause \subseteq S$, check the reachability GPR causality in $\coNP$ (Corollary \ref{cor:complexities-check-regular}) and compute the value of the quality measure in polynomial time (Corollary \ref{cor:reachability-comp-quality}).
	Again, the alternation between the existential quantification for guessing $\Cause$ and the universal quantification for the $\coNP$ check results in the complexity $\Sigma^P_2$ of the polynomial-time hierarchy.
	
	In order to show that the decision problem for $\fscore(\Cause) \geq \vartheta$ for SPR causes $\Cause$ is in $\Sigma^P_2$ we resort to the constructed MDP $\wminMDP{\cM}{\rEff, \Cause}$.
	By Lemma \ref{lem:regular-effect-wmin} a reachability SPR cause $\Cause$ in $\cM$ corresponds to a set of GPR causes $\{c_{\cA_{\rEff}} \mid c \in \Cause\}$ in $\wminMDP{\cM}{\rEff, \Cause}$, which can be interpreted as GPR cause $C = \bigcup_{c \in \Cause} c_{\cA_{\rEff}}$.
	This way we can encode the property $\fscore(\Cause) \geq \vartheta$ in $\cM$ by $\fscore(C) \geq \vartheta$ in $\wminMDP{\cM}{\rEff, \Cause}$.
	This results in a decision problem GPR-f-score for GPR causes which is in $\Sigma^P_2$ by Theorem \ref{thm:GPR-recall-NPhard-and-in-PSPACE}.
	
	For computing optimal GPR causes as well as f-score-optimal SPR causes we can try all cause candidates by computing the related value ($\recall, \covrat$ or $\fscore$) and always store the best one so far.
	As the space for the cause can be reused, this results in a polynomial space algorithm.
\end{proof}

\subsection{\texorpdfstring{$\omega$}{ω}-regular co-safety properties as causes}
\label{sub:co-safety_as_causes}

We now want to discuss an extension of the previous framework when we also consider causes to be regular sets of executions.
However, in order to account for the \emph{temporal priority} of causes, i.e., the fact that causes should occur before their effects, it makes sense to restrict causes to $\omega$-regular co-safety properties. 
The reason is that an $\omega$-regular co-safety property $\mathcal{L}$ is uniquely determined by the regular set of minimal \emph{good} prefixes of words in $\mathcal{L}$.
Recall that a good prefix $\pi$ for $\mathcal{L}$ is a finite word such that all infinite extensions of $\pi$ belong to $\mathcal{L}$ and that  all infinite words in the co-safety language $\mathcal{L}$ have a good prefix.
Hence, we can say that a cause $\rCause$ occurred as soon as a {good} prefix for $\rCause$ has been produced.
For this subsection we will denote regular effects and causes mainly by $\rEff$ and $\rCause$ to avoid confusion with effects and causes as sets of states.
In the following formal definition, we use finite words $\sigma\in S^\ast$ to denote the event $\sigma S^\omega$.
\begin{defi}[co-safety GPR/SPR causes]
	\label{def: regular PR causes}
	Let $\cM$ be an MDP with state space $S$ and let $\rEff \subseteq S^\omega$ be an $\omega$-regular language.
	An $\omega$-regular co-safety language $\rCause \subseteq S^\omega$ is a \emph{co-safety GPR cause} for $\rEff$ if the following condition (coG) holds:
	\begin{description}
		\item [(coG)]
		For each scheduler $\sched$ where
		$\Pr^{\sched}_{\cM}(\rCause) >0$:
		\begin{align*}
			\label{cosafeGPR}  
			\Pr^{\sched}_{\cM}(\rEffect \mid \rCause)
			\ > \ \Pr^{\sched}_{\cM}(\rEffect).
			\tag{\text{cosafeGPR}}
		\end{align*}
	\end{description}
	The event $\rCause$ is called a \emph{co-safety SPR cause} for
	$\rEffect$ if the following condition (coS) hold:
	\begin{description}
		\item [(coS)]
		For each minimal good prefix $\sigma$ for $\rCause$ and each scheduler $\sched$	where $\Pr^{\sched}_{\cM}(  \sigma) > 0$:
		\begin{align*}
			\label{cosafeSPR}  
			\Pr^{\sched}_{\cM}(\rEffect \mid	 \sigma)
			\ > \ \Pr^{\sched}_{\cM}(\rEffect).
			\tag{\text{cosafeSPR}}
		\end{align*}   
	\end{description}
\end{defi}

As in the state-based setting it follows that co-safety SPR cause are also co-safety GPR causes.

\subsubsection{Checking co-safety causality}

We will represent co-safety PR causes as DFAs which accept good prefixes of the represented $\omega$-regular event.
Note that, for any $\omega$-regular co-safety property, there is a DFA accepting exactly the minimal good prefixes.
So, we will restrict to such DFAs that accept the minimal good prefixes of an $\omega$-regular co-safety property.
Such a DFA can never accept a word $w$ as well as a proper prefix $v$ of $w$.

Let now $\cM$ be an MDP, $\rEff$ an effect given by the DRA $\cA_{\rEff}$ and $\rCause$ a cause candidate given by a DFA $\cA_{\rCause}$ as above. 
So, in particular, $\cA_{\rCause}$ accepts exactly the minimal good prefixes for $\rCause$.
We now want to check, whether $\rCause$ is a co-safety SPR cause (resp. co-safety GPR cause). 
For the special case of Markov chains the check can be done in polynomial time analogously to reachability PR causes by computing the corresponding conditional probabilities.
We can provide a model transformation of $\cM$ using both automata such that the resulting MDP has no end components and the effect is a reachability property again similar to Notation \ref{notation:MDP-mit-min-prob-ab-cause-candidates-regular-effect}.

For this consider the product $\cN \eqdef \cM\otimes \cA_{\rEff} \otimes \cA_{\rCause}$.
This product is an MDP equipped with two kinds of acceptance conditions.
The Rabin acceptance of $\cA_{\rEff}$ in the second component of each state and the acceptance condition of $\cA_{\rCause}$ in the third component.
Now let $\rCause_{\cA_{\rEff}}$ be the set of all states of $\cN$ whose third component is accepting in $\cA_{\rCause}$ and which are reachable from the initial state.

As in Notation \ref{notation:MDP-mit-min-prob-ab-cause-candidates-regular-effect}, we construct an MDP $\cN^\prime$ by introducing a mode before $\rCause_{\cA_{\rEff}}$ and a mode after $\rCause_{\cA_{\rEff}}$. We then take the MEC-quotient with the four terminal states $\effcov$, $\noeffc$, $\effunc$, and $\noeffbot$, which are reachable from states $s_{\cE}$ that result from collapsing the MEC $\cE$ depending on whether $\cE$ is contained in the before- or after-$\rCause_{\cA_{\rEff}}$ mode and whether the acceptance condition of $\cA_{\rEff}$ can be realized with probability $0$ and $1$, respectively in $\cE$, analogously to Notation \ref{notation:MDP-mit-min-prob-ab-cause-candidates-regular-effect}.
We call the resulting MDP $\wminMDP{\cM}{\rEff,\rCause}$ and emphasize that this MDP still contains all states in $\rCause_{\cA_{\rEff}}$ as they are not contained in any end component.

We start with the observation, that for co-safety GPR causes this reduction characterizes the condition (coG) completely.
\begin{lem}
	\label{lem:regular-wmin-GPR}
	Let $\cM$ be and MDP, $\cA_{\rEff}$ an DRA, and $\cA_{\rCause}$ a DFA be as above and let $\wminMDP{\cM}{\rEff,\rCause}$ be the constructed MDP that contains the set $\rCause_{\cA_{\rEff}}$  of reachable states that have an accepting $\cA_{\rCause}$-component.
	Then, $\rCause$ is a co-safety GPR cause for $\rEff$ in $\cM$ if and only if the set of states $\rCause_{\cA_{\rEff}}$  is a GPR cause for $\{\effcov,\effunc\}$ in $\wminMDP{\cM}{\rEff,\rCause}$.
\end{lem}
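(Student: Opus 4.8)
The plan is to follow the template of the proof of Lemma~\ref{lem:regular-effect-wmin}, since the construction of $\wminMDP{\cM}{\rEff,\rCause}$ is the exact co-safety analogue of Notation~\ref{notation:MDP-mit-min-prob-ab-cause-candidates-regular-effect}: we form the product $\cN = \cM \otimes \cA_{\rEff} \otimes \cA_{\rCause}$, split it into a ``before $\rCause$'' and an ``after $\rCause$'' copy, take the MEC-quotient, and redirect collapsed MECs to one of the four terminal states $\effcov, \noeffc, \effunc, \noeffbot$ according to the copy and to whether $\mathit{Acc}(\cA_{\rEff})$ is realizable with probability $1$ or $0$. The one genuinely new ingredient is to relate the co-safety event $\rCause$ to a reachability event in the product. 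Here I would first argue that, because $\cA_{\rCause}$ accepts exactly the minimal good prefixes of $\rCause$, a run of $\cM$ satisfies $\rCause$ if and only if it produces a good prefix, which happens if and only if the $\cA_{\rCause}$-component of the product becomes accepting, i.e.\ iff the run reaches the set $\rCause_{\cA_{\rEff}}$. Since $\cA_{\rCause}$ never accepts both a word and a proper prefix of it, this accepting state is reached at most once along any run, so in $\cN^\prime$ the switch into the ``after'' copy is triggered exactly when $\rCause$ first occurs, and each run visits $\rCause_{\cA_{\rEff}}$ at most once.

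With this identification, I would transfer the scheduler correspondence verbatim from Lemma~\ref{lem:regular-effect-wmin}: for every scheduler $\sched$ of $\cM$ there is a scheduler $\tsched$ of $\wminMDP{\cM}{\rEff,\rCause}$ and vice versa, where $\tsched$ mimics $\sched$ step-by-step through the MEC-quotient and emulates the decision of $\sched$ to remain inside a MEC $\cE$ and to realize $\mathit{Acc}(\cA_{\rEff})$ with probability $1$ (resp.\ $0$) by moving to $\effcov/\effunc$ (resp.\ $\noeffc/\noeffbot$), depending on the copy containing $\cE$. Using the reachability reading of $\rCause$ from the previous paragraph, the analogues of the identities (\ref{eqn:schedulers1})--(\ref{eqn:schedulers3}) hold: $\Pr^{\sched}_{\cM}(\rCause) = \Pr^{\tsched}_{\wminMDP{\cM}{\rEff,\rCause}}(\lozenge \rCause_{\cA_{\rEff}})$, $\Pr^{\sched}_{\cM}(\rEff) = \Pr^{\tsched}_{\wminMDP{\cM}{\rEff,\rCause}}(\lozenge \{\effcov,\effunc\})$, and $\Pr^{\sched}_{\cM}(\rCause \wedge \rEff) = \Pr^{\tsched}_{\wminMDP{\cM}{\rEff,\rCause}}(\lozenge \effcov)$. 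Substituting these into the condition \eqref{cosafeGPR} for $\rCause$ turns it literally into condition (G) for the state set $\rCause_{\cA_{\rEff}}$ and effect $\{\effcov,\effunc\}$, which yields both directions of the equivalence.

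Finally I would discharge the minimality side condition required of a GPR cause: for each $d \in \rCause_{\cA_{\rEff}}$ one needs $\Pr^{\max}_{\wminMDP{\cM}{\rEff,\rCause}}(\neg \rCause_{\cA_{\rEff}} \Until d) > 0$. This holds by construction because all states of $\rCause_{\cA_{\rEff}}$ are reachable and, by the ``at most once'' observation above, a run cannot pass through two distinct states of $\rCause_{\cA_{\rEff}}$. I expect the main obstacle to be precisely the good-prefix bookkeeping of the first paragraph, namely making sure that completion of a good prefix is faithfully captured by a single visit to $\rCause_{\cA_{\rEff}}$ even when it occurs while cycling inside what is a MEC of the product before the mode split; this is where the determinism of $\cA_{\rCause}$ and the minimal-good-prefix assumption are essential, since they guarantee that the $\cA_{\rCause}$-component reaches an accepting state at the unique step where $\rCause$ becomes satisfied and that such a state lies on the boundary between the two copies and hence in no MEC of $\cN^\prime$. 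The remaining MEC-quotient and acceptance-realizability arguments are then identical to the reachability case and can be cited from Lemma~\ref{lem:regular-effect-wmin} and Notation~\ref{notation:MDP-mit-min-prob-ab-cause-candidates-regular-effect}.
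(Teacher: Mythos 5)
Your proposal is correct and follows essentially the same route as the paper's proof: the mimicking-scheduler correspondence between $\cM$ and $\wminMDP{\cM}{\rEff,\rCause}$, the three probability identities $\Pr^{\sched}_{\cM}(\rCause)=\Pr^{\tsched}(\lozenge \rCause_{\cA_{\rEff}})$, $\Pr^{\sched}_{\cM}(\rEff)=\Pr^{\tsched}(\lozenge \{\effcov,\effunc\})$, $\Pr^{\sched}_{\cM}(\rCause\wedge\rEff)=\Pr^{\tsched}(\lozenge \effcov)$, and the observation that the minimality condition holds by construction since the states of $\rCause_{\cA_{\rEff}}$ are reachable and visited at most once. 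Your explicit good-prefix bookkeeping (determinism, minimal good prefixes, and the fact that $\rCause_{\cA_{\rEff}}$ lies outside all MECs of $\cN^\prime$) only spells out what the paper leaves implicit in its construction.
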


\begin{proof}
	The set $\rCause_{\cA_{\rEff}}$ in $\wminMDP{\cM}{\rEff,\rCause}$ satisfies $\Pr^{\max}_{\wminMDP{\cM}{\rEff,\rCause}}(\neg \rCause_{\cA_{\rEff}} \Until c)>0$ for each $c\in \rCause_{\cA_{\rEff}}$ by construction, since all states in $\rCause_{\cA_{\rEff}}$  are reachable and a run cannot reach two different states in $\rCause_{\cA_{\rEff}}$.
	Thus, the minimality condition is satisfied in any case.
	
	Now, let $\sched$ be a scheduler for $\wminMDP{\cM}{\rEff,\rCause}$.
	The scheduler $\sched$ can be mimicked by a scheduler $\tsched$ for $\cM$: 
	As long as $\sched$ moves through the MEC-quotient of $\cN^\prime$, the scheduler $\tsched$ follows this behavior by leaving MECs through the corresponding actions in $\cM$.
	Whenever $\sched$ moves to one of the states $\effcov$ and $\effunc$, the last step begins in a state $s_{\cE}$ obtained from a MEC $\cE$ of $\cM$. In this case, $\tsched$ will stay in $\cE$ while ensuring with probability $1$ that the resulting run is accepted by $\cA_{\rEff}$.
	Similarly, if $\sched$ moves to $\noeffbot$ or $\noeffc$, the scheduler $\tsched$ for $\cM$ stays in the corresponding MEC and realizes the acceptance condition of $\cA_{\rEff}$ with probability $0$. 
	Vice versa, a scheduler $\tsched$ for $\wminMDP{\cM}{\rEff,\rCause}$ can be mimicked by a scheduler $\sched$ for $\cM$ analogously (see also the proof of Lemma \ref{lem:regular-effect-wmin}).
	
	For such a pair of schedulers $\sched$ and $\tsched$, we observe that 
	\begin{align}
		\label{eqn:GPRschedulers1}
		\Pr^{\sched}_{\cM}(\rCause)&=\Pr^{\tsched}_{\wminMDP{\cM}{\rEff,\rCause}}(\lozenge \rCause_{\cA_{\rEff}}), \\
		\label{eqn:GPRschedulers2}
		\Pr^{\sched}_{\cM}(\rEff)&=\Pr^{\tsched}_{\wminMDP{\cM}{\rEff,\rCause}}(\lozenge E), \text{ and} \\
		\label{eqn:GPRschedulers3}
		\Pr^{\sched}_{\cM}(\rCause\land \rEff) &=\Pr^{\tsched}_{\wminMDP{\cM}{\rEff,\rCause}}(\lozenge \effcov)=\Pr^{\tsched}_{\wminMDP{\cM}{\rEff,\rCause}}(\lozenge \rCause_{\cA_{\rEff}} \land \lozenge E),
	\end{align}
	where $E = \{\effcov, \effunc\}$.
	Hence, $\rCause_{\cA_{\rEff}}$ is a GPR cause for $E$ in $\wminMDP{\cM}{\rEff,\rCause}$ if and only if $\rCause$ is a regular GPR cause for $\rEff$ in $\cM$.
\end{proof}

For co-safety SPR causes this is not a full characterization but only holds in one direction:
\begin{lem}
	\label{lem:regular-wmin-SPR}
	Let $\cM$, $\cA_{\rEff}$, and $\cA_{\rCause}$ be as above and let $\wminMDP{\cM}{\rEff,\rCause}$ be the constructed MDP that contains the set $\rCause_{\cA_{\rEff}}$ of reachable states that have an accepting $\cA_{\rCause}$-component.
	If $\rCause$ is a co-safety SPR cause for $\rEff$ in $\cM$, then the set of states $\rCause_{\cA_{\rEff}}$ is an SPR cause for $\{\effcov,\effunc\}$ in $\wminMDP{\cM}{\rEff,\rCause}$.
\end{lem}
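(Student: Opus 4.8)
The plan is to derive the state-based SPR condition (S) for the set $\rCause_{\cA_{\rEff}}$ with effect $E \eqdef \{\effcov,\effunc\}$ in $\wminMDP{\cM}{\rEff,\rCause}$ from the co-safety condition (coS) for $\rCause$ in $\cM$, exploiting that conditioning on reaching a single product state $d\in\rCause_{\cA_{\rEff}}$ aggregates several minimal good prefixes and that a strict inequality holding for each of these prefixes is inherited by any convex combination. First I would record the minimality condition: as already observed in the proof of Lemma \ref{lem:regular-wmin-GPR}, every $d\in\rCause_{\cA_{\rEff}}$ satisfies $\Pr^{\max}_{\wminMDP{\cM}{\rEff,\rCause}}((\neg\rCause_{\cA_{\rEff}})\Until d)>0$, and since $\rCause_{\cA_{\rEff}}$ lives only in the before-copy, the events $(\neg\rCause_{\cA_{\rEff}})\Until d$ and $\Diamond d$ coincide. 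It then remains to verify the strict inequality (S) for each such $d$ and each scheduler $\sched$ of $\wminMDP{\cM}{\rEff,\rCause}$ with $\Pr^\sched(\Diamond d)>0$.

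The central step is a correspondence between reaching a fixed product state $d$ and producing a minimal good prefix compatible with $d$. Since both $\cA_{\rEff}$ and $\cA_{\rCause}$ are deterministic, the triple recorded in $d$ (the $\cM$-state together with the $\cA_{\rEff}$- and the accepting $\cA_{\rCause}$-component) is uniquely determined by the finite word read so far; let $G_d$ be the set of minimal good prefixes $\sigma$ of $\rCause$ whose product run ends in $d$. Because distinct minimal good prefixes are pairwise prefix-incomparable, the cylinders $\sigma S^\omega$ for $\sigma\in G_d$ are pairwise disjoint, and reaching $\rCause_{\cA_{\rEff}}$ for the first time is exactly the production of a minimal good prefix. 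I would then transport $\sched$ to a scheduler $\tsched$ for $\cM$ via the same correspondence used in Lemma \ref{lem:regular-wmin-GPR} (built on Lemma \ref{lem:MEC-contraction-terminal-states-final}), now read off at the level of reaching $d$, to obtain
\[
\Pr^\sched_{\wminMDP{\cM}{\rEff,\rCause}}(\Diamond E \mid \Diamond d) \ = \ \Pr^\tsched_{\cM}\Bigl(\rEff \ \Big| \ \bigcup_{\sigma\in G_d}\sigma\Bigr), \qquad \Pr^\sched_{\wminMDP{\cM}{\rEff,\rCause}}(\Diamond E) \ = \ \Pr^\tsched_{\cM}(\rEff).
\]
Here the first identity uses that, by the construction, reaching $E=\{\effcov,\effunc\}$ encodes realization of $\rEff$ while conditioning on $\Diamond d$ encodes conditioning on $\bigcup_{\sigma\in G_d}\sigma$; the second is (\ref{eqn:GPRschedulers2}) (up to the naming of the two schedulers).

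Finally I would run the weighted-average argument exactly as in Lemma \ref{lemma:strict-implies-global}. Writing the conditional probability as a convex combination,
\[
\Pr^\tsched_{\cM}\Bigl(\rEff \ \Big| \ \bigcup_{\sigma\in G_d}\sigma\Bigr) \ = \ \frac{\sum_{\sigma\in G_d}\Pr^\tsched_{\cM}(\rEff\wedge\sigma)}{\sum_{\sigma\in G_d}\Pr^\tsched_{\cM}(\sigma)} \ = \ \sum_{\sigma\in G_d,\ \Pr^\tsched_{\cM}(\sigma)>0} \lambda_\sigma\cdot \Pr^\tsched_{\cM}(\rEff\mid\sigma),
\]
with convex weights $\lambda_\sigma$ summing to $1$. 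Condition (coS) applied to each such $\sigma$ gives $\Pr^\tsched_{\cM}(\rEff\mid\sigma) > \Pr^\tsched_{\cM}(\rEff)$, so the combination strictly exceeds $\Pr^\tsched_{\cM}(\rEff)=\Pr^\sched(\Diamond E)$, which is precisely (S) for $d$. I expect the main obstacle to be the bookkeeping for the state-level scheduler correspondence: Lemma \ref{lem:regular-wmin-GPR} is phrased in terms of the aggregate events $\rCause$, $\rEff$, $\rCause\wedge\rEff$, whereas here I need it refined to single product states $d$ and to the conditioning event $\bigcup_{\sigma\in G_d}\sigma$, which requires re-examining the MEC-quotient step to confirm that collapsing before-copy end components does not disturb the probability of first reaching $d$. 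The averaging step, and the asymmetry with the converse (why (S) for $d$ does \emph{not} recover (coS) for each individual $\sigma$), are then immediate.
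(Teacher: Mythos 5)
Your proposal is correct and follows essentially the same route as the paper's proof: transport the scheduler to $\cM$ via the correspondence underlying Lemma \ref{lem:regular-wmin-GPR}, decompose the conditioning event $\Diamond d$ (equivalently $(\neg\rCause_{\cA_{\rEff}})\Until d$) into the disjoint cylinders of the minimal good prefixes ending at $d$, and apply (coS) term by term in the resulting weighted average. The only difference is bookkeeping: the paper enumerates finite paths $\pi$ in the quotient together with their corresponding $\cM$-paths $\Sigma_\pi$, whereas you group directly by the good-prefix words in $G_d$ — which, if anything, matches the statement of (coS) more literally.
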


\begin{proof}
	Suppose $\rCause$ is a regular SPR cause for $\rEff$ in $\cM$.
	Let $c\in \rCause_{\cA_{\rEff}}$ and let $\tsched$ be a scheduler for $\wminMDP{\cM}{\rEff,\rCause}$ such that $\Pr^{\tsched}_{\wminMDP{\cM}{\rEff,\rCause}}(\neg \rCause_{\cA_{\rEff}} \Until c)>0$.
	Let
	\[
	\Pi=\{\pi  \text{ a  path in $\wminMDP{\cM}{\rEff,\rCause}$} \mid \pi \vDash\neg \rCause_{\cA_{\rEff}} \Until c \text{ and  $\Pr^{\tsched}_{\wminMDP{\cM}{\rEff,\rCause}}(\pi)>0$}\}.
	\]
	Let $\sched$ be a scheduler for $\cM$ mimicking $\tsched$ as described in the proof of Lemma \ref{lem:regular-wmin-GPR}.
	Let $\Sigma_\pi$ be the set of $\tsched$-paths in $\cM$ that correspond to the path $\pi$ in $\Pi$.
	I.e., a path $\sigma$ belongs to $\Sigma_\pi$ if it moves through the MECs of $\cM$ in the same way as the path $\pi$ moves through the MEC-quotient until $\sigma$ and $\pi$ reach the state $c$ and if furthermore, $\sigma$ has positive probability under $\sched$.
	Now,
	\[
	\Pr^{\tsched}_{\wminMDP{\cM}{\rEff,\rCause}}(\lozenge E\mid \neg \rCause_{\cA_{\rEff}} \Until c) = \frac{\sum_{\pi\in \Pi}\sum_{\sigma \in \Sigma_\pi} \Pr^{\sched}_{\cM}(\rEff\mid \sigma)\cdot \Pr^{\sched}_{\cM}(\sigma)}{\sum_{\pi\in \Pi}\sum_{\sigma \in \Sigma_\pi} \Pr^{\sched}_{\cM}(\sigma)},
	\]
	where $E = \{\effcov, \effunc\}$.
	All the terms $\Pr^{\sched}_{\cM}(\rEff\mid \sigma)$ are greater than the value  of the term $\Pr^{\sched}_{\cM}(\rEff)=\Pr^{\tsched}_{\wminMDP{\cM}{\rEff,\rCause}}(\lozenge E)$.
	Hence, we conclude that 
	\[
	\Pr^{\tsched}_{\wminMDP{\cM}{\rEff,\rCause}}(\lozenge E\mid \neg \rCause_{\cA_{\rEff}} \Until c) > \Pr^{\tsched}_{\wminMDP{\cM}{\rEff,\rCause}}(\lozenge E). \qedhere
	\]
\end{proof}

For checking co-safety SPR causality this is not sufficient.
The underlying problem, in which co-safety SPR causes and SPR causes differ, can be seen in the following example:

\begin{figure}[t]
		\centering
		\resizebox{.3\textwidth}{!}{
			\begin{tikzpicture}
	[scale=1,->,>=stealth',auto ,node distance=0.5cm, thick]
	
	\node[scale=1, state] (init) {$\init$};
	\node[scale=1, state, below=1 of init] (b) {$b$};
	\node[scale=1, state, right=1 of init] (d) {$d$};
	\node[scale=1, state, right=1 of b] (c) {$c$};
	\node[scale=1, state, left=1 of b] (t) {$t$};
	\node[scale=1, state, below=2 of t] (noeff) {$\noeff$};
	\node[scale=1, state, accepting, below=2 of c] (eff) {$\eff$};
	
	\draw[<-] (init) --++(-0.55,0.55);
	\draw[color=black,->] (init) edge node[pos=0.5, left] {$1/2$} (t);
	\draw[color=black,->] (init) edge node [pos=0.5] {$1/4$} (b);
	\draw[color=black,->] (init) edge node [pos=0.5] {$1/4$} (d);
	\draw[color=black,->] (b) edge (c);
	\draw[color=black,->] (d) edge (c);
	\draw[color=black,->] (t) edge node [pos=0.3] {$3/4$} (noeff);
	\draw[color=black,->] (t) edge node [pos=0.3, above] {$1/4$} (eff);
	\draw[color=black,->] (c) edge node [pos=0.5, right] {$\beta$} (eff);
	\draw[color=black,->] (c) edge[out=250, in=110] node[pos=0.3, anchor=center] (b1) {} node [pos=0.5, left] {$1/2$} (eff);
	\draw[color=black,->] (c) edge node[pos=0.2, anchor=center] (b2) {} node [pos=0.4, above] {$1/2$} (noeff);
	
	\draw[color=black, very thick, -] (b1.center) edge [bend left=45] node [pos=0.4, above ] {$\alpha$} (b2.center);
	%		\draw[color=black, very thick, -] (m2.center) edge [bend right=45] node [pos=0.3] {$\beta$} (n2.center);
\end{tikzpicture}
		}
		\caption{An MDP $\cM$ for which the reduction from co-safety to state-based PR causes fails}
		\label{fig:no-co-safety-SPR-check}
	\end{figure}
\begin{exa}
	\label{ex:no-co-safety-SPR-check}
	Consider the MDP $\cM$ from Figure \ref{fig:no-co-safety-SPR-check} with $\rEff = \lozenge \eff$.
	For every scheduler $\sched$ with $\Pr^\sched_\cM(\lozenge c) > 0$ we have
	\[
	\Pr^\sched_\cM(\lozenge \eff \mid \lozenge c) > \Pr^\sched_\cM(\lozenge \eff)
	\]
	and thus $c$ is a state-based SPR cause for $\eff$.
	On the other hand for the scheduler $\tau$, which chooses $\alpha$ after the path $\pi = \init \; b \; c$ and $\beta$ otherwise ,we have 
	\[
	\Pr^\tau_\cM(\lozenge \eff \mid \pi) = \frac{1}{2} =  \Pr^\tau_\cM(\lozenge \eff).
	\]
	Therefore, the desired reduction as in Lemma \ref{lem:regular-wmin-GPR} does not work for $\cM$.
	Note that the violation of the condition (coS) is only possible in this example if the scheduler behaves differently depending on how state $c$ is reached. This different behavior, however, does not have anything to do with the effect, and potentially different residual properties that have to be satisfied to achieve the effect; in the example, the effect is just a reachability property.
	Furthermore, we want to emphasize that the concrete probabilities of the individual  paths leading to $c$ are important for the violation. In general, this imposes a major challenge for checking the condition (coS), for which we do not know a solution.
	Similar problems arise when trying to check the existence of a co-safety SPR cause. A witness might be just one individual path, potentially only together with a scheduler that realizes this path with very low probability.
	\Ende
\end{exa}

\subsubsection{Computation of quality measures of co-safety causes}
Analogously to Section~\ref{sec:quality_reachability_causes}, we can define recall, coverage ratio, and f-score of co-safety PR causes.
With the construction of $\wminMDP{\cM}{\rEff,\rCause}$ and the correspondence between schedulers $\sched$ for $\cM$ and $\tsched$ for $\wminMDP{\cM}{\rEff,\rCause}$
satisfying Equations (\ref{eqn:GPRschedulers1})-(\ref{eqn:GPRschedulers3}) established in the proof of Lemma  \ref{lem:regular-wmin-GPR}, we obtain analogously to the setting of reachability PR causes:
\begin{cor}
	\label{cor:co-safety-comp-quality}
	Let $\cM$ be an MDP and $\rEff \subseteq S^\omega$ an $\omega$-regular language given as DRA $\cA_{\rEff}$.
	Given a co-safety SPR/GPR cause $\rCause \subseteq S^\omega$ by a DFA $\cA_{\rCause}$, we can compute $\recall(\Cause)$, $\covrat(\Cause)$ and $\fscore(\Cause)$ in polynomial time.
\end{cor}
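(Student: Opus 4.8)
The plan is to reduce the computation to the already-solved state-based problem via the model transformation $\wminMDP{\cM}{\rEff,\rCause}$ and the scheduler correspondence from the proof of Lemma \ref{lem:regular-wmin-GPR}, mirroring the argument for reachability causes in Corollary \ref{cor:reachability-comp-quality}. Recall that $\wminMDP{\cM}{\rEff,\rCause}$ is obtained from the product $\cN = \cM \otimes \cA_{\rEff} \otimes \cA_{\rCause}$ by splitting into a before- and an after-$\rCause_{\cA_{\rEff}}$ mode and then passing to the MEC-quotient with the four fresh terminal states $\effcov$, $\noeffc$, $\effunc$, $\noeffbot$; this construction runs in time polynomial in the sizes of $\cM$, $\cA_{\rEff}$ and $\cA_{\rCause}$, and the resulting MDP has no end components.

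First I would establish the co-safety analogue of Lemma \ref{lem:preserve_confusion}: for each scheduler $\sched$ for $\cM$ there is a scheduler $\tsched$ for $\wminMDP{\cM}{\rEff,\rCause}$, and vice versa, preserving all four entries of the confusion matrix. This follows directly from the bidirectional correspondence and Equations \eqref{eqn:GPRschedulers1}--\eqref{eqn:GPRschedulers3} in the proof of Lemma \ref{lem:regular-wmin-GPR}, which already give
\[
\mathsf{tp}^\sched = \Pr^{\sched}_{\cM}(\rCause \wedge \rEff) = \Pr^{\tsched}_{\wminMDP{\cM}{\rEff,\rCause}}(\Diamond \effcov),
\quad
\Pr^{\sched}_{\cM}(\rCause) = \Pr^{\tsched}_{\wminMDP{\cM}{\rEff,\rCause}}(\Diamond \rCause_{\cA_{\rEff}}),
\]
together with $\Pr^{\sched}_{\cM}(\rEff) = \Pr^{\tsched}_{\wminMDP{\cM}{\rEff,\rCause}}(\Diamond \{\effcov,\effunc\})$. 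Since the transformed model reaches exactly one of the four terminal states almost surely, and these states are reached precisely on the runs classified as true positive, false negative, false positive and true negative, the remaining entries $\mathsf{fn}^\sched = \Pr^{\tsched}(\Diamond \effunc)$, $\mathsf{fp}^\sched = \Pr^{\tsched}(\Diamond \noeffc)$ and $\mathsf{tn}^\sched = \Pr^{\tsched}(\Diamond \noeffbot)$ follow by taking differences. As $\recall$, $\covrat$ and $\fscore$ are defined purely in terms of these entries and the infimum is taken over the same set of schedulers on both sides, all three quality measures of $\rCause$ in $\cM$ coincide with those of the state-based cause $\rCause_{\cA_{\rEff}}$ for the effect set $\{\effcov,\effunc\}$ in $\wminMDP{\cM}{\rEff,\rCause}$.

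It then remains to compute these values in the transformed model, where $\rCause_{\cA_{\rEff}}$ is a state-based cause for $\{\effcov,\effunc\}$ in an EC-free MDP whose four terminal states match the confusion matrix. For $\recall$ and $\covrat$ this is exactly the situation handled by Corollary \ref{cor:covratio-in-PTIME}, and for $\fscore$ by Theorem \ref{fscore-in-PTIME}, both of which run in polynomial time. The SPR/GPR distinction plays no role here: as observed at the start of Section \ref{sec:opt-PR-causes}, the quality measures depend only on the confusion-matrix entries of the set, so the one-directional Lemma \ref{lem:regular-wmin-SPR} suffices and no separate treatment of co-safety SPR causes is needed. I do not expect a genuine obstacle; the only point requiring care is the bookkeeping that the scheduler correspondence of Lemma \ref{lem:regular-wmin-GPR} really preserves \emph{all four} confusion-matrix entries (and not merely the three probabilities appearing in the GPR condition), which is what justifies reusing the state-based polynomial-time procedures verbatim.
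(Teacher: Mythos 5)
Your proposal is correct and follows essentially the same route as the paper: the paper likewise invokes the construction of $\wminMDP{\cM}{\rEff,\rCause}$ together with the scheduler correspondence of Equations \eqref{eqn:GPRschedulers1}--\eqref{eqn:GPRschedulers3} from Lemma \ref{lem:regular-wmin-GPR}, and then argues ``analogously to the setting of reachability PR causes'' (i.e., via the co-safety analogue of Lemma \ref{lem:preserve_confusion}, Corollary \ref{cor:covratio-in-PTIME} and Theorem \ref{fscore-in-PTIME}). Your added bookkeeping that all four confusion-matrix entries are recovered from the three preserved probabilities (and correspond to the four terminal states) is exactly the implicit content of that ``analogously,'' so there is no gap.
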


\subsubsection{Finding optimal co-safety PR causes}
Already for reachability PR causes, we have seen that without further restrictions on the causes we allow, causes might be trivial and intuitively violate the idea of temporal priority (cf. Section~\ref{sec:finding_rcauses}).
Hence, also here, we impose an additional condition, a variation of the condition \eqref{eq:TempPrio} used above:
In line with the difference between the definitions of reachability PR causes and co-safety PR causes, we require that after any good prefix $\sigma$ of a co-safety cause $\rCause$, the probability that effect $\rEff$ will occur is not guaranteed to be $1$, i.e.,  we require that 
\begin{align}
	\label{eq:TempPrio2}
	\Pr^{\min}_{\cM}(\rEff\mid \sigma)<1 \text{ for all good prefixes $\sigma$ of $\rCause$. } \tag{TempPrio2}
\end{align}
Unfortunately, we will observe that there are some obstacles in the way when trying to find optimal co-safety PR causes. 

Following the observations from Theorem \ref{thm:optimality-of-canonical-SPR-cause} we can define a canonical co-safety PR cause which is an optimal co-safety SPR cause for both recall and coverage ratio.
In this fully path-based setting this canonical cause consists of all minimal paths which are singleton co-safety SPR causes.
However, as we are not aware of a feasible way to check (coS), the computation of this cause is unclear.

For co-safety GPR causes, the following example illustrates that there might be no recall-optimal causes that respect \eqref{eq:TempPrio2}.
Intuitively, the reason is that causes can be pushed arbitrarily close towards a violation of the probability raising condition while increasing the recall:

\begin{figure}[t]
	\centering
	\centering
	\resizebox{0.45\textwidth}{!}{
		%Non-strict global PR cause

\begin{tikzpicture}[->,>=stealth',shorten >=1pt,auto ,node distance=0.5cm, thick]
	\node[scale=1, state] (s0) {$\init$};
	\node[scale=1, state] (a) [right = 1 of s0, yshift=-2cm] {$a$};
	\node[scale=1, state] (b) [right =1 of s0, yshift=2cm] {$b$};
	\node[scale=1, state] (c) [right =1 of a] {$c$};
	\node[scale=1, state] (e) [right = 1 of c, yshift=2cm] {$e$};
	\node[scale=1, state] (ne) [right = 1 of c] {$f$};
			\node[scale=1, state] (ne2) [above = 1 of e] {$f$};
	
	\draw[<-] (s0) --++(-0.55,-0.55);
	\draw (s0) -- (a) node[left, pos=0.5,scale=1] {$1/3$};
	\draw (s0) -- (b) node[left, pos=0.5,scale=1] {$1/3$};
	\draw (s0) -- (e) node[below=2pt, pos=0.2,scale=1] {$1/3$};
	\draw (a) -- (c) node[below, pos=0.5,scale=1] {$1/2$};
	\draw (a) edge [loop left] node[left, pos=0.5,scale=1] {$1/2$} (a);
	\draw (b) -- (e) node[below=2pt,pos=0.3,scale=1] {$3/4$};
	\draw (b) -- (ne2) node[above=2pt,pos=0.3,scale=1] {$1/4$};
	\draw (c) -- (e) node[above=2pt, pos=0.2,scale=1] {$1/4$};
	\draw (c) -- (ne) node[below=2pt,pos=0.3, scale=1] {$3/4$};
	\draw (e) edge [loop right]  (e);
	\draw (ne) edge [loop right]  (ne);
	\draw (ne2) edge [loop right]  (ne2);

\end{tikzpicture}
	}
	\caption{Markov chain $\cM$ where the regular effect $\lozenge e$ has no recall-optimal regular PR-cause.}
	\label{fig:no-recall-optimal-cause}
\end{figure}

\begin{exa}
	This example will show that there is a Markov chain $\cM$ with a state $e$ such that the effect $\rEff=\lozenge e$ has regular GPR causes that respect the condition \eqref{eq:TempPrio2}, but no recall-optimal co-safety GPR cause that respects \eqref{eq:TempPrio2}.
	
	Consider the Markov chain $\cM$ depicted in Figure \ref{fig:no-recall-optimal-cause} with states $S$ and the effect $\rEff=\lozenge e$. 
	First of all, we have that $\Pr_{\cM}(\rEff)=2/3$. Furthermore, clearly the cause $\init\, b\, S^\omega$ with the unique minimal good prefix $\init \, b$ is a regular GPR cause for example
	as $\Pr_{\cM}(\rEff \mid \init \, b)=3/4$.
	
	Next, note that there cannot be a regular GPR cause $\rCause^\prime$ that does not have $\init \, b$ as a minimal good prefix. By (TempPrio2) the minimal good prefixes are not allowed to end in $e$.
	Furthermore $\init$ is clearly also no candidate for a minimal good prefix of $\rCause^\prime$ as that would imply that $\rCause$ consists of all paths of $\cM$ and cannot satisfy the condition \eqref{rGPR}.
	If $\init \, b$ is  not a minimal good prefix, hence all minimal good prefixes have to end in $a$, $c$, or $f$. Afterwards, the probability to reach $e$ is at most $1/4$ and hence also the probability $\Pr_{\cM}(\rEff\mid \rCause^\prime)\leq 1/4$ because it is a weighted average of the probabilities $\Pr_{\cM}(\rEff\mid \sigma)\leq 1/4$ of the minimal good prefixes $\sigma$ of $\rCause^\prime$.
	
	So, all regular GPR causes have the minimal good prefix $\init \, b$ together with potentially further minimal good prefixes. 
	Let now 
	\[
	\rCause_p \eqdef \init \, b \, S^\omega \cup A_p
	\]
	where $A_p$ is a regular subset of the paths $\{\init \, a^k \, c \mid k\geq 1\}$ such that all the paths in $A_p$ together have probability mass $\frac{1}{3}p$.
	Note that we can find such a set $A_p$ for a dense set of values $p\in [0,1]$.
	We compute
	\[
	\Pr_{\cM}(\rCause_p)= \frac{1}{3}(1+p)
	\]
	and 
	\[
	\Pr_{\cM}(\rCause_p \land \rEff)=\frac{1}{4}+\frac{1}{12}p.
	\]
	So, we obtain that $\rCause_p$ is a regular GPR cause if
	\[
	\Pr_{\cM}(\rEff\mid \rCause_p) = \frac{\frac{1}{4}+\frac{1}{12}p}{\frac{1}{3}(1+p)}>\frac{2}{3} =\Pr_{\cM}(\rEff).
	\]
	After multiplying the inequality with $12(1+p)$, we see that this holds iff $9+3p>8+8p$ iff $p<\frac{1}{5}$.
	The recall of $\rCause_p$ is now
	\[
	\Pr_{\cM}(\rCause_p \mid \rEff) = \frac{\frac{1}{4}+\frac{1}{12}p}{2/3} = \frac{3}{8}+\frac{1}{8}p.
	\]
	So, among the co-safety GPR causes of the form $\rCause_p$, there is no recall-optimal one. For  $p$ tending to $1/5$ from below, the recall always increases.
	Note also that an $\varepsilon$-recall-optimal co-safety GPR cause for $\varepsilon>0$ must take a very complicated form. It has to select paths of the form $\init \, a^k \, c$ that have probability $\frac{1}{3\cdot 2^k}$ such that there probability adds up to a value less than, but close to $\frac{1}{15}$.
	\Ende
\end{exa}

We have seen that for recall-optimal (and hence coverage ratio-optimal) SPR causes for an effect given by $\cA_{\rEff}$, we can provide a characterization of the canonical cause.
How to compute this cause, however, is unclear as we do not know how to check the co-safety SPR condition and as there might be some paths ending in a given state in the product of $\cM$ and $\cA_{\rEff}$ that belong to the canonical cause while other paths ending in that state do not.
For co-safety GPR causes, we have even seen that there might be no (non-trivial) optimal causes and that causes close to the optimum can be required to take a very complicated shape.
As the f-score is a more involved quality measure than the recall, we cannot expect that the search for f-score optimal causes is simpler. 
It seems to be likely that the situation is at least as bad as for recall-optimal causes if not worse.

\section{Conclusion}

\label{sec:conclusion}

In this work we formalized the probability-raising principle in MDPs and studied several quality notions for probability-raising causes.
We covered fundamental algorithmic problems for both the strict (local) and global view, where we considered a basic state-based setting in which cause and effect are given as sets of states.
We extended this setting to $\omega$-regular path properties as effects in two ways.
In a more simple setting we kept causes as sets of states and in a more general approach considered co-safety path properties as causes.

\subsection*{Strict vs. Global probability raising}
In our basic setting of state-based cause-effect relations, our results indicate that GPR causes are more general overall by leaving more flexibility to achieve better quality measures, while algorithmic reasoning on SPR causes is simpler.
This changed when extending the framework by considering $\omega$-regular effects given by a deterministic Rabin automaton.
Our results mainly stem from a polynomial reduction from $\omega$-regular effects to reachability effects (Lemma \ref{lem:regular-effect-wmin}).
The caveat here is that the strict PR condition translates to a global PR condition after this transformation, which increases the algorithmic complexity of reachability SPR causes to the level of reachability GPR causes.
Thus, the strict probability-raising loses its advantage over the global perspective.
Furthermore, when considering causes as co-safety path properties we observe increasing difficulties to handle strict probability-raising.
This stems from an underlying problem in the approach of strict probability-raising applied to path properties.
As we consider cause-effect relations between these properties, it is somewhat unnatural to require each individual path to raise the probability of the effect property.
Rather, it is more natural to say a path property as a whole causes another one, instead of saying all possible realizations of a path property cause another one.
This means that co-safety GPR causes also seem  more natural than co-safety SPR causes  from a philosophical standpoint.

\subsection*{Non-strict inequality in the PR conditions}
The approach of probability-raising within this work is in line with the classical notion in literature that uses a strict inequality in the PR condition.
As a consequence causes might not exist (see Example \ref{ex:no-prob-raising-cause}).
However, relaxing the PR condition by only requiring a non-strict inequality would apparently be a minor change that broadens the choice of causes.
Indeed, the proposed algorithms for checking the SPR and GPR condition for reachability effects (Section~\ref{sec:check}) can easily be modified for the relaxed definition.
As the algorithms of both extended settings discussed in Section~\ref{sec:regular} stem mainly from a reduction to reachability effects this also holds for reachability and co-safety causes of regular effects.
However, a non-strict inequality in the PR condition would lead to a questionable notion of causality, as e.g. $\{\init\}$ would always be a recall- and ratio-optimal cause.
Thus, other side constraints are needed in order to make use of the relaxed PR condition.
E.g., requiring the non-strict inequality for all schedulers that reach a cause with positive probability and also requiring the existence of a witnessing scheduler for the PR condition with strict inequality might be a useful alternative definition which agrees with Def.~\ref{def:GPR} for Markov chains.

\subsection*{Relaxing the minimality condition}
As many causality notions in the literature include some minimality constraint, we included the condition $\Pr^{\max}_\cM(\neg \Cause \until c)>0$ for all states of $\Cause$ in the state-based setting and for reachability PR causes of regular effects.
However, this requirement could be dropped without affecting the algorithmic results presented here.
This can be useful when the task is to identify components or agents which are responsible for the occurrences of undesired effects.
In these cases the cause candidates are fixed (e.g., for each agent $i$, the set of states controlled by agent $i$), but some of them might violate the minimality condition.

\subsection*{Future directions}
In this work we considered type-like causality where cause-effect relations are defined within the model without needing an actual execution that shows the effect.
Hence, causes are considered in a forward-looking manner.
Notions of probabilistic backward causality that take a concrete execution of the system into account  and considerations on PR causality with external interventions as in Pearl's do-calculus \cite{Pearl09} are left for future work.

\section*{Acknowledgments}
\noindent We would like to thank Florian Funke and Clemens Dubslaff for their helpful comments and feedback on the topic of causality in MDPs. 
We especially thank Simon Jantsch for pointing out improvements of prior results (wrt. \cite{fossacs2022}).

\bibliographystyle{alphaurl}
\bibliography{lit}

\end{document}